\theoremstyle{plain}
\newtheorem{theorem}{Theorem}[section]
\newtheorem{lemma}[theorem]{Lemma}
\newtheorem{proposition}[theorem]{Proposition}
\newtheorem{corollary}[theorem]{Corollary}
\theoremstyle{definition}
\newtheorem{remark}[theorem]{Remark}
\newtheorem{example}[theorem]{Example}
\numberwithin{equation}{section}
\DeclareMathOperator*{\Med}{Med}
\DeclareMathOperator*{\maj}{maj}
\newcommand{\lMedw}{\ensuremath{\Med^{\rm{loc}}_w}\xspace}
\newcommand{\G}{\ensuremath{\mathcal{G}}\xspace}
\newcommand{\hG}{\ensuremath{\widehat{\mathcal{G}}}\xspace}
\newcommand{\cN}{\ensuremath{\mathcal{N}}\xspace}
\newcommand{\wM}{\ensuremath{\widehat{M}}\xspace}
\newcommand{\ch}{\ensuremath{\mathfrak{h}}\xspace}
\newcommand{\cH}{\ensuremath{\mathcal{H}}\xspace}
\newcommand{\hB}{\ensuremath{\widehat{Q}}\xspace}
\newcommand{\hE}{\ensuremath{\widehat{E}}\xspace}
\newcommand{\hepsilon}{\ensuremath{\widehat{\epsilon}}\xspace}
\newcommand{\wG}{\ensuremath{\widehat{G}}}
\newcommand{\hm}{\ensuremath{\widehat{m}}}
\newcommand{\tG}{\ensuremath{\widetilde{G}}}
\newcommand{\tE}{\ensuremath{\widetilde{E}}}
\newcommand{\tH}{\ensuremath{\widetilde{H}}}
\newcommand{\tw}{\ensuremath{\widetilde{w}}}
\newcommand{\oG}{\ensuremath{\overrightarrow{G}}}
\newcommand{\owG}{\ensuremath{\overrightarrow{\widehat{G}}}}
\newcommand{\uw}{\ensuremath{{w}_*}\xspace}
\definecolor{darkgray}{rgb}{0.33, 0.33, 0.33}
\newcommand{\R}{\ensuremath{\mathbb{R}}\xspace}
\newcommand{\N}{\ensuremath{\mathbb{N}}\xspace}
\newcommand\E{\mathcal{E}}
\newcommand\cD{\mathcal{D}}
\title[Medians in median graphs and their cube complexes in linear time]{Medians in median graphs and their cube complexes in linear time$^1$}\thanks{$^1$An extended abstract of
  this paper appeared in the proceedings of ICALP 2020.}
\author[L. B\'en\'eteau]{Laurine B\'en\'eteau}
\address{Laboratoire d'Informatique et Syst\`emes, Aix-Marseille Universit\'e and CNRS}
\email{laurine.beneteau@lis-lab.fr}
\author[J.\ Chalopin]{J\' er\'emie Chalopin}
\address{Laboratoire d'Informatique et Syst\`emes, Aix-Marseille Universit\'e and CNRS}
\email{jeremie.chalopin@lis-lab.fr}
\author[V.\ Chepoi]{Victor Chepoi}
\address{Laboratoire d'Informatique et Syst\`emes, Aix-Marseille Universit\'e and CNRS}
\email{victor.chepoi@lis-lab.fr}
\author[Y.\ Vax\`es]{Yann Vax\`es}
\address{Laboratoire d'Informatique et Syst\`emes, Aix-Marseille Universit\'e and CNRS}
\email{yann.vaxes@lis-lab.fr}
\begin{document}

\begin{abstract}
  The median of a set of vertices $P$ of a graph $G$ is the set of all
  vertices $x$ of $G$ minimizing the sum of distances from $x$ to all
  vertices of $P$.  In this paper, we present a linear time algorithm
  to compute medians in median graphs, improving over the existing
  quadratic time algorithm. We also present a linear time algorithm to
  compute medians in the $\ell_1$-cube complexes associated with
  median graphs.  Median graphs constitute the principal class of
  graphs investigated in metric graph theory and have a rich geometric
  and combinatorial structure, due to their bijections with CAT(0)
  cube complexes and domains of event structures.  Our algorithm is
  based on the majority rule characterization of medians in median
  graphs and on a fast computation of parallelism classes of edges
  ($\Theta$-classes or hyperplanes) via Lexicographic Breadth First
  Search (LexBFS).  To prove the correctness of our algorithm, we show
  that any LexBFS ordering of the vertices of $G$ satisfies the
  following \emph{fellow traveler property} of independent interest:
  the parents of any two adjacent vertices of $G$ are also
  adjacent. Using the fast computation of the $\Theta$-classes, we
  also compute the Wiener index (total distance) of $G$ in linear time
  and the distance matrix in optimal quadratic time.
\end{abstract}

\maketitle	
	
\section{Introduction}
	
The median problem (also called the Fermat-Torricelli problem or the
Weber problem) is one of the oldest optimization problems in Euclidean
geometry~\cite{LoMoWe}.  The \emph{median problem} can be defined for
any metric space $(X,d)$: given a finite set $P\subset X$ of points
with positive weights, compute the points $x$ of $X$ minimizing the
sum of the distances from $x$ to the points of $P$ multiplied by their
weights. The median problem in graphs is one of the principal models
in network location theory~\cite{Hakimi,TaFrLa} and is equivalent to
finding nodes with largest closeness centrality in network
analysis~\cite{Bav,Beau,Sabi}. It also occurs in social group choice
as the Kemeny median. In the consensus problem in social group choice,
given individual rankings of $d$ candidates one has to compute a
consensual group decision. By the classical Arrow's impossibility
theorem, there is no consensus function satisfying natural
``fairness'' axioms. It is also well-known that the majority rule
leads to Condorcet's paradox, i.e., to the existence of cycles in the
majority relation. In this respect, the Kemeny median~\cite{Ke,KeSn}
is an important consensus function and corresponds to the median
problem in graphs of permutahedra (the graph whose vertices are all
$d!$ permutations of the candidates and whose edges are the pairs of
permutations differing by adjacent transpositions).  Other classical
algorithmic problems related to distances are the diameter and center
problems. Yet another such problem comes from chemistry and consists
in the computation of the Wiener index of a graph. This is a
topological index of a molecule, defined as the sum of the distances
between all pairs of vertices in the associated chemical
graph~\cite{Wie}.  The Wiener index is closely related to the
closeness centrality of a vertex in a graph, a quantity inversely
proportional to the sum of all distances between the given vertex and
all other vertices that has been frequently used in sociometry and the
theory of social networks.
	
The median problem in Euclidean spaces cannot be solved in symbolic
form~\cite{Baj}, but can be solved numerically by Weiszfeld's
algorithm~\cite{Weis} and its convergent modifications (see e.g.\
\cite{Os}), and can be approximated in nearly linear time with
arbitrary precision~\cite{CoLeMiPaSi}.  For the $\ell_1$-metric the
median problem becomes easier and can be solved by the majority rule
on coordinates, i.e., by taking as median a point whose $i$th
coordinate is the median of the list of $i$th coordinates of the
points of $P$.  This kind of rule was used in~\cite{Jordan} to define
centroids of trees (which coincide with their
medians~\cite{GoWi,TaFrLa}) and can be viewed as an instance of the
majority rule in social choice theory.  For graphs with $n$ vertices,
$m$ edges, and standard graph distance, the median problem can be
trivially solved in $O(nm)$ time by solving the All Pairs Shortest
Paths (APSP) problem.  One may ask if APSP is necessary to compute the
median. However, by~\cite{AbGrVa} APSP and median problem are
equivalent under subcubic reductions.  It was also shown
in~\cite{AbVaWa} that computing the medians of sparse graphs in
subquadratic time refutes the HS (Hitting Set) conjecture. It was also
noted in~\cite{Cab} that computing the Wiener index of a sparse graph
in subquadratic time will refute the Exponential time (SETH)
hypothesis. Note also that computing the Kemeny median is
NP-hard~\cite{DwKuNaSi} if the input is the list of individual
preferences.
	
In this paper, we show that the medians in median graphs can be
computed in optimal $O(m)$ time (i.e., without solving APSP). Median
graphs are the graphs in which each triplet $u,v,w$ of vertices has a
unique median, i.e., a vertex metrically lying between $u$ and $v$,
$v$ and $w$, and $w$ and $u$.  They originally arise in universal
algebra~\cite{Av,BiKi} and their properties have been first
investigated in~\cite{Mu,Ne}.  It was shown in~\cite{Ch_CAT,Ro} that
the cube complexes of median graphs are exactly the CAT(0) cube
complexes, i.e., cube complexes of global non-positive
curvature. CAT(0) cube complexes, introduced and nicely characterized
in~\cite{Gromov} in a local-to-global way, are now one of the
principal objects of investigation in geometric group
theory~\cite{Sa_survey}. Median graphs also occur in Computer Science:
by~\cite{ArOwSu,BaCo} they are exactly the domains of event structures
(one of the basic abstract models of concurrency)~\cite{NiPlWi} and
median-closed subsets of hypercubes are exactly the solution sets of
2-SAT formulas~\cite{MuSch,Schaefer}. The bijections between median
graphs, CAT(0) cube complexes, and event structures have been used
in~\cite{CC-ICALP17,CC-MSO,Ch_nice} to disprove three conjectures in
concurrency~\cite{RoTh,Thi_conjecture,ThiaYa}.  Finally, median
graphs, viewed as median closures of sets of vertices of a hypercube,
contain all most parsimonious (Steiner) trees~\cite{BaFoRo} and as
such have been extensively applied in human genetics.  For a survey on
median graphs and their connections with other discrete and geometric
structures, see the books~\cite{HaImKla,Knuth}, the
surveys~\cite{BaCh_survey,KlMu_survey}, and the paper~\cite{ChChHiOs}.
	
As we noticed, median graphs have strong geometric and metric
properties.  For the median problem, the concept of $\Theta$-classes
is essential. Two edges of a median graph $G$ are called opposite if
they are opposite in a common square of $G$. The equivalence relation
$\Theta$ is the reflexive and transitive closure of this oppositeness
relation.  Each equivalence class of $\Theta$ is called a
$\Theta$-class ($\Theta$-classes correspond to hyperplanes in CAT(0)
cube complexes~\cite{Sa} and to events in event
structures~\cite{NiPlWi}).  Removing the edges of a $\Theta$-class,
the graph $G$ is split into two connected components which are convex
and gated. Thus they are called halfspaces of $G$.  The convexity of
halfspaces implies via~\cite{Dj} that any median graph $G$
isometrically embeds into a hypercube of dimension equals to the
number $q$ of $\Theta$-classes of $G$.
	
\subsection*{Our results and motivation}
In this paper, we present a linear time algorithm to compute medians
in median graphs and in associated $\ell_1$-cube complexes. Our main
motivation and technique stem from the majority rule characterization
of medians in median graphs and the unimodality of the median
function~\cite{BaBa,SoCh_Weber}. Even if the majority rule is simple
to state and is a commonly approved consensus method, its algorithmic
implementation is less trivial if one has to avoid the computation of
the distance matrix.  On the other hand, the unimodality of the median
function implies that one may find the median set by using local
search. More generally, consider a partial orientation of the input
median graph $G$, where an edge $uv$ is transformed into the arc
$\overrightarrow{uv}$ iff the median function at $u$ is larger than
the median function at $v$ (in case of equality we do not orient the
edge $uv$).  Then the median set is exactly the set of all the sinks
in this partial orientation of $G$. Therefore, it remains to compare
for every edge $uv$ the median function at $u$ and at $v$ in constant
time. For this we use the partition of the edge-set of a median graph
$G$ into $\Theta$-classes and; for every $\Theta$-class, the partition
of the vertex-set of $G$ into complementary halfspaces.  It is easy to
notice that all edges of the same $\Theta$-class are oriented in the
same way because for any such edge $uv$ the difference between the
median functions at $u$ and at $v$, respectively, can be expressed as
the sum of weights of all vertices in the same halfspace as $v$ minus
the sum of weights of all vertices in the same halfspace as $u$.
	
Our main technical contribution is a new method for computing the
$\Theta$-classes of a median graph $G$ with $n$ vertices and $m$ edges
in linear $O(m)$ time. For this, we prove that Lexicographic Breadth
First Search (LexBFS)~\cite{RoTaLu} produces an ordering of the
vertices of $G$ satisfying the following \emph{fellow traveler
  property}: for any edge $uv$, the parents of $u$ and $v$ are
adjacent.  With the $\Theta$-classes of $G$ at hand and the majority
rule for halfspaces, we can compute the weights of halfspaces of $G$
in optimal time $O(m)$, leading to an algorithm of the same complexity
for computing the median set.  We adapt our method to compute in
linear time the median of a finite set of points in the $\ell_1$-cube
complex associated with $G$. We also show that this method can be
applied to compute the Wiener index in optimal $O(m)$ time and the
distance matrix of $G$ in optimal $O(n^2)$ time.

In all previous results we assumed that the input of the problem is
given by the median graph or its cube complex, together with the set
of terminals and their weights.  However, analogously to the Kemeny
median problem, the median problem in a median graph $G$ can be
defined in a more compact way. We mentioned above that median graphs
are exactly the domains of configurations of event structures and the
solution sets of 2-SAT formulas (with no equivalent variables).  The
underlying event structure or the underlying 2-SAT formula provide a
much more compact (but implicit) description of the median graph
$G$. Therefore, we can formulate a median problem by supposing that
the input is a set of configurations of an event structure and their
weights. The goal is to compute a configuration minimizing the sum of
the weighted (Hamming) distances to the terminal-configurations.
Thanks to the majority rule, we show that this median problem can be
efficiently solved in the size of the input.  Finally, we suppose that
the input is an event structure and the goal is to compute a
configuration minimizing the sum of distances to \emph{all}
configurations.
We show that this problem is $\#$P-hard. For this we establish a
direct correspondence between event structures and 2-SAT formulas and
we use a result by Feder~\cite{Fe} that an analogous median problem
for 2-SAT formulas is $\#$P-hard.

\subsection*{Related work}
The study of the median problem in median graphs originated
in~\cite{BaBa,SoCh_Weber} and continued
in~\cite{BaBrChKlKoSu,McMuRo,Mu_Exp,MuNo,PuSl}.  Using different
techniques and extending the majority rule for trees~\cite{GoWi}, the
following \emph{majority rule} has been established
in~\cite{BaBa,SoCh_Weber}: a halfspace $H$ of a median graph $G$
contains at least one median iff $H$ contains at least one half of the
total weight of $G$; moreover, the median of $G$ coincides with the
intersection of halfspaces of $G$ containing strictly more than half
of the total weight. It was shown in~\cite{SoCh_Weber} that the median function of a median
graph is weakly convex (an analog of a discrete convex function).
This convexity property characterizes all graphs in which all local
medians are global~\cite{BaCh_median}.  A nice axiomatic
characterization of medians of median graphs via three basic axioms
has been obtained in~\cite{MuNo}. More recently, \cite{PuSl}
characterized median graphs as \emph{closed Condorcet domains},
i.e., as sets of linear orders with the property that, whenever the
preferences of all voters belong to the set, their majority relation
has no cycles and also belongs to the set. Below we will show that the
median graphs are the bipartite graphs in which the medians are
characterized by the majority rule.

Prior to our work, the best algorithm to compute the $\Theta$-classes
of a median graph $G$ has complexity $O(m\log n)$~\cite{HagImKl}.  It
was used in~\cite{HagImKl} to recognize median graphs in subquadratic
time. The previous best algorithm for the median problem in a median
graph $G$ with $n$ vertices and $q$ $\Theta$-classes has complexity
$O(qn)$~\cite{BaBrChKlKoSu} which is quadratic in the worst case.
Indeed $q$ may be linear in $n$ (as in the case of trees) and is
always at least $d(\sqrt[d]{n} - 1)$ as shown below ($d$ is the
largest dimension of a hypercube which is an induced subgraph of $G$).
Additionally, \cite{BaBrChKlKoSu} assumes that an isometric embedding
of $G$ in a $q$-hypercube is given. The description of such an
embedding has already size $O(qn)$.  The $\Theta$-classes of a median
graph $G$ correspond to the coordinates of the smallest hypercube in
which $G$ isometrically embeds (this is called the \emph{isometric
  dimension} of $G$~\cite{HaImKla}). Thus one can define
$\Theta$-classes for all partial cubes, i.e., graphs isometrically
embeddable into hypercubes.  An efficient computation (in $O(n^2)$
time) of all $\Theta$-classes was the main step of the $O(n^2)$
algorithm of~\cite{Epp} for recognizing partial cubes.  The
fellow-traveler property (which is essential in our computation of
$\Theta$-classes) is a notion coming from geometric group
theory~\cite{ECHLPT} and is a main tool to prove the (bi)automaticity
of a group.  In a slightly stronger form it allows to establish the
dismantlability of graphs (see~\cite{BrChChGoOs,Ch_dism,Ch_CAT} for
examples of classes of graphs in which a fellow traveler order was
obtained by BFS or LexBFS). LexBFS has been used to solve optimally
several algorithmic problems in different classes of graphs, in
particular for their recognition (for a survey, see~\cite{Co}).
	
Cube complexes of median graphs with $\ell_1$-metric have been
investigated in~\cite{vdV}.  The same complexes but endowed with the
$\ell_2$-metric are exactly the CAT(0) cube complexes.  As we noticed
above, they are of great importance in geometric group
theory~\cite{Sa_survey}.  The paper~\cite{BiHoVo} established that the
space of trees with a fixed set of leaves is a CAT(0) cube complex. A
polynomial-time algorithm to compute the $\ell_2$-distance between two
points in this space was proposed in~\cite{OwPr}.  This result was
recently extended in~\cite{Hayashi} to all CAT(0) cube complexes.  A
convergent numerical algorithm for the median problem in CAT(0) spaces
was given in~\cite{Bacak}.

Finally, for an extensive bibliography on Wiener index in graphs,
see~\cite{HaImKla,Kl_wiener}. The Wiener index of a tree can be
computed in linear time~\cite{MoPi}.  Using this and the fact that
benzenoids (i.e., subgraphs of the hexagonal grid bounded by a simple
curve) isometrically embed in the product of three trees,~\cite{ChKl}
proposed a linear time algorithm for the Wiener index of
benzenoids. Finally, in a recent breakthrough~\cite{Cab}, a
subquadratic algorithm for the Wiener index and the diameter of planar
graphs was presented.

\begin{figure}[t]
  \captionsetup[subfigure]{singlelinecheck=true}
  \centering
  \qquad \qquad \qquad
  \subcaptionbox{\label{fig:ExGrMed1}}
  {\includegraphics[scale=0.40]{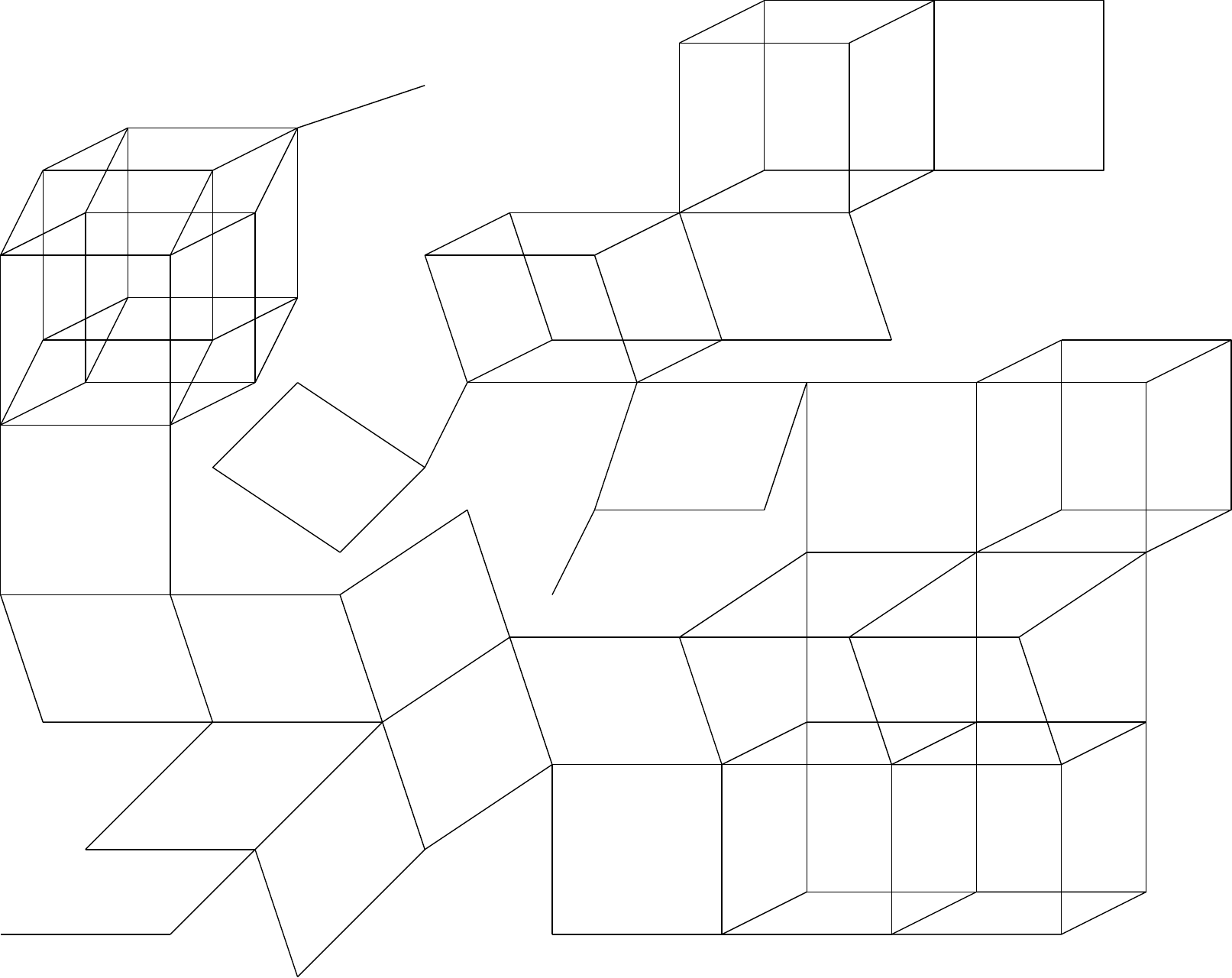}}
  \hfill \subcaptionbox{\label{fig:ExGrMed2}}
  {\includegraphics[scale=0.35,page=1]{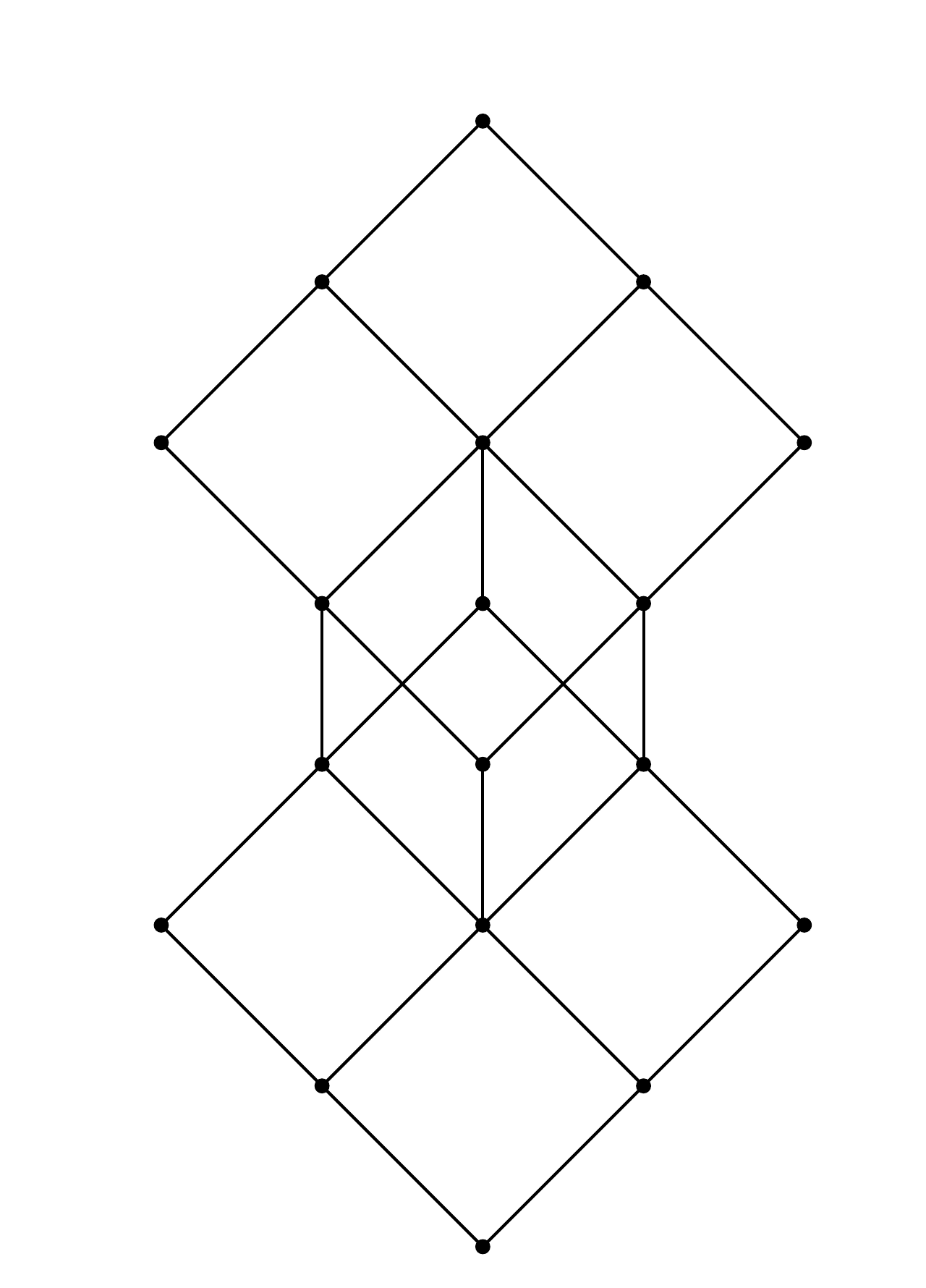}}
  \qquad \qquad \qquad
  \caption{Two median graphs, the second one (denoted by $D$) will be
    our running example.}
\end{figure}

\section{Preliminaries}
	
All graphs $G=(V,E)$ in this paper are finite, undirected, simple, and
connected; $V$ is the vertex-set and $E$ is the edge-set of $G$.  We
write $u\sim v$ if $u,v\in V$ are adjacent.  The \emph{distance}
$d(u,v)=d_G(u,v)$ between two vertices $u$ and $v$ is the length of a
shortest $(u,v)$-path, and the \emph{interval}
$I(u,v)=\{ x \in V : d(u,x) + d(x,v) = d(u,v) \}$ consists of all the
vertices on shortest $(u,v)$--paths.  A set $H$ (or the subgraph
induced by $H$) is \emph{convex} if $I(u,v)\subseteq H$ for any two
vertices $u,v$ of $H$; $H$ is a \emph{halfspace} if $H$ and
$V\setminus H$ are convex.  Finally, $H$ is \emph{gated} if for every
vertex $v \in V$, there exists a (unique) vertex $v' \in V(H)$ (the
\emph{gate} of $v$ in $H$) such that for all $u \in V(H)$,
$v' \in I(u,v)$.  The \emph{$k$-dimensional hypercube $Q_k$} has all
subsets of $\{1,\ldots,k\}$ as the vertex-set and $A \sim B$ iff
$|A \triangle B| = 1$. A graph $G$ is called \emph{median} if
$I(x,y) \cap I(y,z) \cap I(z,x)$ is a singleton for each triplet
$x,y,z$ of vertices; this unique intersection vertex $m(x,y,z)$ is
called the \emph{median} of $x,y,z$.  Median graphs are bipartite and
do not contain induced $K_{2,3}$.  The \emph{dimension} $d=\dim(G)$ of
a median graph $G$ is the largest dimension of a hypercube of $G$.  In
$G$, we refer to the $4$-cycles as \emph{squares}, and the hypercube
subgraphs as \emph{cubes}.
	
A map $w:V\rightarrow \R^+\cup \{ 0\}$ is called a \emph{weight
  function}. For a vertex $v\in V$, $w(v)$ denotes the weight of $v$
(for a set $S\subseteq V$, $w(S)=\sum_{x\in S} w(x)$ denotes the
weight of $S$).  Then $F_w(x)=\sum_{v\in V} w(v)d(x,v)$ is called the
\emph{median function} of the graph $G$ and a vertex $x$ minimizing
$F_w$ is called a \emph{median vertex} of $G$.  Finally,
$\Med_w(G)=\{x\in V : x \mbox{ is a median of } G\}$ is called the
\emph{median set} (or simply, the \emph{median}) of $G$ with respect
to the weight function $w$.  The \emph{Wiener index} $W(G)$ (called
also the \emph{total distance}) of a graph $G$ is the sum of all
pairwise distances between the vertices of $G$.  For a weight function
$w$, the \emph{Wiener index} of $G$ is the sum $W_w(G)=\sum_{u,v\in V}
w(u)w(v)d(u,v)$.

\section{Facts about median graphs}\label{sec:properties}

We recall the principal properties of median graphs used in the
algorithms.  Some of those results are a part of folklore for the
people working in metric graph theory and some other results can be
found in the papers \cite{Mu,Mu_Exp} by Mulder. For readers
convenience, we provide the proofs (sometimes different from the
original proofs) of all those results in the Appendix.
	
From now on, $G=(V,E)$ is a median graph with $n$ vertices and $m$
edges.  The first three properties follow from the definition.
	
\begin{lemma}[Quadrangle Condition]\label{quadrangle}
  For any vertices $u,v,w,z$ of $G$ such that $v, w\sim z$ and
  $d(u,v) = d(u,w) = d(u,z)-1 = k$, there is a unique vertex
  $x\sim v,w$ such that $d(u,x) = k-1$.
\end{lemma}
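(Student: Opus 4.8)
The plan is to prove existence and uniqueness of the vertex $x$ separately, relying only on the definition of a median graph (unique medians) and on the fact that median graphs are bipartite and $K_{2,3}$-free.

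First I would establish existence. Since $v,w\sim z$ and $d(u,z)=k+1$ while $d(u,v)=d(u,w)=k$, both $v$ and $w$ lie on shortest $(u,z)$-paths, i.e.\ $v,w\in I(u,z)$. Consider the median $x:=m(u,v,w)$, which exists and is unique because $G$ is a median graph. Then $x\in I(v,w)$ and $x\in I(u,v)\cap I(u,w)$, so $d(u,x)+d(x,v)=k$ and $d(u,x)+d(x,w)=k$, hence $d(x,v)=d(x,w)$. Next I would argue $d(v,w)=2$: vertices $v$ and $w$ are distinct (since $d(u,v)=d(u,w)$ but they are not equal — if $v=w$ the statement degenerates; one checks $v\neq w$ because otherwise $z$ has only one neighbor at distance $k$, contradicting $d(u,z)=k+1$ only if\dots actually one must be slightly careful here), they are non-adjacent because $G$ is bipartite and $d(u,v)=d(u,w)$, and they have the common neighbor $z$, so $d(v,w)=2$. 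Therefore $d(x,v)=d(x,w)=1$, that is, $x\sim v,w$. Finally, from $d(u,x)+d(x,v)=k$ and $d(x,v)=1$ we get $d(u,x)=k-1$, as required.

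For uniqueness, suppose $x'$ is another vertex with $x'\sim v,w$ and $d(u,x')=k-1$. Then the five vertices $x,x',z$ together with $v,w$ would form a subgraph in which $v$ and $w$ are each adjacent to all three of $x,x',z$; since $v\not\sim w$ (bipartiteness) and $x,x',z$ are pairwise non-adjacent (again bipartiteness, as all three are at distance either $k-1$ or $k+1$ from $u$, and in fact $x,x'$ at distance $k-1$ are non-adjacent to each other while $z$ at distance $k+1$ is non-adjacent to both), this yields an induced $K_{2,3}$ with parts $\{v,w\}$ and $\{x,x',z\}$, contradicting the fact that median graphs contain no induced $K_{2,3}$. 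Hence $x'=x$.

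The main obstacle I anticipate is the bookkeeping in the uniqueness argument: one must verify that the purported $K_{2,3}$ is genuinely \emph{induced}, i.e.\ that there are no unexpected edges among $\{x,x',z\}$ or between $v$ and $w$. All of these follow from bipartiteness once the distances from $u$ are pinned down ($v,w$ at distance $k$; $x,x'$ at distance $k-1$; $z$ at distance $k+1$), so the argument is routine but requires care. A secondary subtlety is ruling out degenerate coincidences (e.g.\ $v=w$, or $x$ coinciding with $z$), which are handled by the parity/distance constraints: $z$ is at distance $k+1$ from $u$ whereas $x$ is at distance $k-1$, so $x\neq z$, and similarly $v\neq w$ follows because a single vertex cannot account for $d(u,z)=k+1$ together with the hypothesis that both $v$ and $w$ realize $d(u,z)-1$ via a square — though if one prefers, one can simply note that the statement is vacuous or trivial when $v=w$ and proceed under $v\neq w$.
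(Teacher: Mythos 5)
Your proof is correct and takes essentially the same route as the paper's: existence by taking $x=m(u,v,w)$ and checking it is adjacent to $v,w$ at distance $k-1$ from $u$, and uniqueness via the forbidden $K_{2,3}$ induced by $\{v,w\}$ and $\{x,x',z\}$. The paper's (much terser) proof notes the same two ingredients, observing in addition that uniqueness also follows directly from the uniqueness of the median of the triplet $u,v,w$.
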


\begin{lemma}[Cube Condition]\label{cube}
  Any three squares of $G$, pairwise intersecting in three edges and
  all three intersecting in a single vertex, belong to a 3-dimensional
  cube of $G$.
\end{lemma}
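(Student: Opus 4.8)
The plan is to reconstruct the full $3$-cube. Seven of its eight vertices are already visible among the three squares; the eighth corner will be produced as a median, and then a short check --- using only that $G$ is bipartite and has no induced $K_{2,3}$ --- confirms that the eight vertices, together with twelve edges already present in $G$, span a hypercube having the three given squares as $2$-faces.

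First, fix the common vertex $v \in S_1 \cap S_2 \cap S_3$. Each pairwise intersection $S_i \cap S_j$ is an edge through $v$, hence one of the two edges of the $4$-cycle $S_i$ at $v$ (and likewise for $S_j$). Write $S_1 \cap S_2 = v z_3$, $S_1 \cap S_3 = v z_2$, $S_2 \cap S_3 = v z_1$. The vertices $z_1, z_2, z_3$ are three distinct neighbours of $v$: if, say, $z_2 = z_3$, then the edge $v z_2 = v z_3$ would lie in $S_1 \cap S_2$, in $S_1 \cap S_3$, and therefore also in $S_2 \cap S_3$, forcing $S_1 \cap S_2 \cap S_3$ to contain an edge rather than a single vertex. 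Having pinned down the two edges of each $S_i$ at $v$, we get $S_1 = v\,z_2\,w_1\,z_3$ with $w_1 \sim z_2, z_3$, and likewise $S_2 = v\,z_1\,w_2\,z_3$ and $S_3 = v\,z_1\,w_3\,z_2$. Since $G$ has no induced $K_{2,3}$, the $w_i$ are pairwise distinct (if $w_i = w_j$, this vertex would be adjacent to all of $z_1, z_2, z_3$, which together with $v$ induces a $K_{2,3}$) and, for the same reason, $z_i \not\sim w_i$ for each $i$. As $G$ is bipartite, it follows that $d(v, w_i) = 2$ (opposite corners of the square $S_i$) and $d(w_i, w_j) = 2$ (distinct vertices of $v$'s colour class sharing the common neighbour $z_k$, where $\{i,j,k\} = \{1,2,3\}$).

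Now the key step: let $u := m(w_1, w_2, w_3)$, which exists and is unique since $G$ is median. From $u \in I(w_i, w_j)$ and $d(w_i, w_j) = 2$ for all three pairs we get $d(u,w_1)+d(u,w_2) = d(u,w_1)+d(u,w_3) = d(u,w_2)+d(u,w_3) = 2$, hence $d(u, w_i) = 1$ for every $i$; that is, $u$ is a common neighbour of $w_1, w_2, w_3$. This $u$ is a genuinely new vertex: $u \neq v$ (else $v \sim w_1$, contradicting $d(v,w_1)=2$), and $u \neq z_i$ for each $i$ because $u \sim w_i$ while $z_i \not\sim w_i$. Moreover $u \not\sim v$: bipartiteness makes $d(v,u)$ odd, and $d(v,u) = 1$ would force $u \in I(v, w_1) = \{v, z_2, z_3, w_1\}$ (the interval between two vertices at distance $2$ consists of them and their at most two common neighbours, the excess being an induced $K_{2,3}$), which is impossible.

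Finally, the eight distinct vertices $v, z_1, z_2, z_3, w_1, w_2, w_3, u$ together with the twelve edges $v z_i$, the edges $z_i w_j$ for $i \neq j$, and $u w_i$ --- all edges of $G$ by the above --- span a subgraph isomorphic to $Q_3$ via $v \mapsto \emptyset$, $z_i \mapsto \{i\}$, $w_i \mapsto \{1,2,3\} \setminus \{i\}$, $u \mapsto \{1,2,3\}$ (and bipartiteness together with the absence of induced $K_{2,3}$ rules out any further edge among these vertices, so this cube is even induced). The three squares $S_1 = v\,z_2\,w_1\,z_3$, $S_2 = v\,z_1\,w_2\,z_3$, $S_3 = v\,z_1\,w_3\,z_2$ are three of its $2$-faces, so all three belong to a $3$-cube of $G$. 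I expect the only delicate part to be the bookkeeping that no two of the eight vertices coincide and that no spurious edge appears --- all of it forced by bipartiteness and the forbidden $K_{2,3}$ --- while the sole geometric input is the clean application of the uniqueness of medians. (One could instead try to produce $u$ from the Quadrangle Condition of Lemma~\ref{quadrangle}, but that lemma builds the \emph{near} corner of a quadrangle from the far one, the opposite of what is needed here, so the median argument is the natural route.)
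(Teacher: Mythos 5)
Your proof is correct and takes essentially the same route as the paper's: both produce the eighth corner as the median of the three far corners of the squares (pairwise at distance $2$ by bipartiteness and $K_{2,3}$-freeness) and check that it is a new common neighbour completing a $3$-cube. Your write-up merely spells out the distinctness and no-spurious-edge bookkeeping that the paper leaves implicit.
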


\begin{lemma}[Convex=Gated]\label{convex-gated}
  A subgraph of $G$ is convex if and only if it is gated. 
\end{lemma}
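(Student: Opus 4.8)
The plan is to prove the two implications separately: \emph{gated $\Rightarrow$ convex} holds in any graph, while \emph{convex $\Rightarrow$ gated} is where the median structure is used (via the existence of medians $m(x,y,z)$).

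\emph{Gated implies convex.} Suppose $H$ is gated and let $u,v\in V(H)$ and $x\in I(u,v)$; I want $x\in H$. Let $x'$ be the gate of $x$ in $H$. By definition of the gate, $x'\in I(u,x)$ and $x'\in I(v,x)$, so $d(u,x)=d(u,x')+d(x',x)$ and $d(v,x)=d(v,x')+d(x',x)$. Adding these, using $x\in I(u,v)$ and the triangle inequality $d(u,x')+d(x',v)\ge d(u,v)$, we get
\[
 d(u,v)=d(u,x)+d(x,v)=d(u,x')+d(x',v)+2d(x',x)\ge d(u,v)+2d(x',x),
\]
which forces $d(x',x)=0$, i.e. $x=x'\in H$. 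No property of median graphs is needed here.

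\emph{Convex implies gated.} Let $H$ be convex and $v\in V$. If $v\in V(H)$ then $v$ is its own gate, so assume $v\notin V(H)$ and choose $v'\in V(H)$ minimizing $d(v,\cdot)$ over $V(H)$; I claim $v'$ is the gate of $v$ in $H$, i.e. $v'\in I(v,u)$ for every $u\in V(H)$. Fix such a $u$ and consider the median $m=m(v,v',u)$, which exists because $G$ is a median graph. Since $m\in I(v',u)$ and $H$ is convex, $m\in V(H)$. Since $m\in I(v,v')$ we have $d(v,m)\le d(v,v')$; since $m\in V(H)$, minimality of $v'$ gives $d(v,m)\ge d(v,v')$. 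Hence $d(v,m)=d(v,v')$, so $d(m,v')=0$ and $m=v'$; as $m\in I(v,u)$ this yields $v'\in I(v,u)$, as desired. Uniqueness of the gate is then automatic: if $v_1,v_2$ were both gates of $v$ in $H$, then $v_1\in I(v,v_2)$ and $v_2\in I(v,v_1)$, forcing $d(v,v_1)=d(v,v_2)$ and thus $v_1=v_2$.

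The argument is short, so there is no serious obstacle; the only point requiring care is that the reverse direction must invoke \emph{only} the convexity of $H$ (never its gatedness) to avoid circularity, and the sole nonelementary ingredient is the existence of medians — which is exactly the defining property being exploited. One may also note that, in the same spirit, the definition's parenthetical uniqueness of the gate never needs to be assumed, since it follows from the interval structure of $G$.
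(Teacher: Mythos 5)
Your proof is correct and follows essentially the same route as the paper's: the gated-implies-convex direction is the standard metric argument valid in any graph, and the convex-implies-gated direction takes a closest vertex $v'$ of $H$ and uses the median $m(v,v',u)$ together with convexity and minimality to conclude $m=v'\in I(v,u)$, exactly as in the paper. The only addition is your explicit verification of uniqueness of the gate, which the paper leaves implicit.
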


Two edges $uv$ and $u'v'$ of $G$ are in relation $\Theta_0$ if
$uvv'u'$ is a square of $G$ and $uv$ and $u'v'$ are opposite edges of
this square. Let $\Theta$ denote the reflexive and transitive closure
of $\Theta_0$. Denote by $E_1,\ldots,E_q$ the equivalence classes of
$\Theta$ and call them \emph{$\Theta$-classes} (see
Fig.~\ref{fig-halfspaces}(a)).
	
\begin{lemma}[\!\!\cite{Mu})(Halfspaces and
  $\Theta$-classes]\label{halfspaces} For any $\Theta$-class $E_i$ of
  $G$, the graph $G_i=(V,E\setminus E_i)$ consists of exactly two
  connected components $H'_i$ and $H''_i$ that are halfspaces of $G$;
  all halfspaces of $G$ have this form.  If $uv\in E_i$, then $H'_i$
  and $H''_i$ are the subgraphs of $G$ induced by
  $W(u,v)=\{ x\in V: d(u,x)<d(v,x)\}$ and
  $W(v,u)=\{ x\in V: d(v,x)<d(u,x)\}$.
\end{lemma}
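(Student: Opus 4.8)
The plan is to fix an edge $uv\in E_i$, write $W:=W(u,v)$ and $W':=W(v,u)$, and prove three facts about this pair, from which the statement assembles easily. Since $G$ is bipartite and $u\sim v$, every vertex $x$ satisfies $|d(u,x)-d(v,x)|=1$; hence $V=W\cupdot W'$ with $u\in W$, $v\in W'$, and in fact $d(v,x)=d(u,x)+1$ for every $x\in W$. One computation will be used constantly: if $xy$ is a \emph{crossing edge}, say with $x\in W$ and $y\in W'$, then $d(u,x)=d(v,y)$ and $d(u,y)=d(v,x)=d(u,x)+1$; this is immediate from bipartiteness. The three facts are: (a) $W$ and $W'$ induce connected subgraphs, via paths using only edges with both ends on one side; (b) the crossing edges are exactly the edges of $E_i$; (c) $W$ and $W'$ are convex.

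For (a), given $x\in W$ I would take a shortest $(x,u)$-path $x=x_0,\dots,x_k=u$ and note that $d(v,x_j)\ge d(v,x)-j=(k+1)-j>k-j=d(u,x_j)$, so each $x_j\in W$; hence $W$ — and symmetrically $W'$ — is connected through edges lying inside it. For (b), the inclusion $E_i\subseteq\{\text{crossing edges}\}$ follows by induction along $\Theta_0$-chains: if $x'y'\,\Theta_0\,xy$ via a square $xyy'x'$ and $xy$ is crossing while $x'y'$ has both ends on one side, then combining the crossing-edge distance relations with the quadrangle condition (Lemma~\ref{quadrangle}) forces a contradiction (an induced $K_{2,3}$, or an inconsistency in the median operator). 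The reverse inclusion goes by induction on $d(u,x)+d(v,y)$ for a crossing edge $xy$ with $x\in W$: if $x\ne u$, the neighbor $x'$ of $x$ with $d(u,x')=d(u,x)-1$ automatically lies in $W$ (its distances are forced), the quadrangle condition based at $v$ produces $y'\sim x',y$ completing a square $x'y'yx$ with $x'y'\,\Theta_0\,xy$, and $x'y'$ is again a crossing edge with strictly smaller parameter; the case $y\ne v$ is symmetric. Granting (a) and (b): $G_i=(V,E\setminus E_i)$ has no edge between $W$ and $W'$, each of which is connected in $G_i$, so the two components $H'_i,H''_i$ are precisely $W$ and $W'$, and the displayed formula is exactly the definition of $W$ and $W'$.

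The heart of the proof — and the step I expect to be the main obstacle — is (c), convexity of $W$ and $W'$; this is the only place where the full median-graph hypothesis is genuinely needed. I would argue by minimal counterexample: if $W$ is not convex, choose $a,b\in W$ with $d(a,b)$ minimal such that some $c\in I(a,b)$ lies in $W'$. Using minimality and the quadrangle condition (this reduction is standard but takes some care), one is brought to the local situation $d(a,b)=2$ with $c$ a common neighbor of $a$ and $b$. There the crossing-edge relations applied to $ac$ and $bc$ give $d(u,a)=d(u,b)=d(v,c)=:k$ and $d(u,c)=k+1$, and the quadrangle condition based at $u$ yields a common neighbor $w$ of $a,b$ with $d(u,w)=k-1$, so $w\in W$ and $w\ne c$. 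A short distance check then shows that \emph{both} $c$ and $w$ lie in $I(a,v)\cap I(b,v)\cap I(a,b)$, contradicting the uniqueness of the median $m(a,b,v)$. This settles convexity, so $W$ and $W'$ are halfspaces.

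It remains to see every halfspace has this form: if $H$ is a halfspace, pick an edge $uv$ with $u\in H$ and $v\notin H$; convexity of $H$ forces a shortest $(x,u)$-path to avoid $v$ for each $x\in H$, whence $d(x,u)<d(x,v)$ and $H\subseteq W(u,v)$, and symmetrically $V\setminus H\subseteq W(v,u)$, so $H=W(u,v)=H'_i$ for $E_i=\Theta(uv)$. In summary, (a), the crossing-edge computation, the "every halfspace" argument, and the final assembly are routine; the real work lies in the two inductions of (b) — technical but elementary, needing only the quadrangle condition and $K_{2,3}$-freeness — and above all in the local reduction and the median-uniqueness contradiction of (c).
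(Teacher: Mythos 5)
Your proposal is correct and follows essentially the same route as the paper: your ``crossing edges of the cut $(W(u,v),W(v,u))$'' are exactly the paper's auxiliary relation $\Psi$, your induction on $d(u,x)+d(v,y)$ is precisely the paper's proof that $\Psi\subseteq\Theta$, and your convexity argument reduces, as the paper's does (by invoking the local-to-global convexity criterion of~\cite{Ch_metric}), to the distance-$2$ configuration, where you contradict uniqueness of the median $m(a,b,v)$ instead of exhibiting a forbidden $K_{2,3}$. The only caveat is that the reduction to $d(a,b)=2$ which you label ``standard but takes some care'' is genuinely a citable local-to-global lemma rather than a routine minimal-counterexample argument, which is exactly how the paper treats it.
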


By~\cite{Dj}, $G$ is a partial cube, (i.e., isometrically embeds into
an hypercube) iff $G$ is bipartite and $W(u,v)$ is convex for any edge
$uv$ of $G$. Consequently, we obtain the following corollary.

\begin{corollary}\label{cor-halfspaces-convex}
  $G$ isometrically embeds into a hypercube of dimension equals to the
  number $q$ of $\Theta$-classes of $G$.
\end{corollary}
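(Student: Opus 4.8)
The plan is to deduce the statement from Djoković's criterion quoted just above, together with Lemma~\ref{halfspaces}. Since $G$ is a median graph, it is bipartite, so the only hypothesis of that criterion still to be checked is that $W(u,v)$ is convex for every edge $uv$ of $G$. But if $uv$ belongs to the $\Theta$-class $E_i$, then by Lemma~\ref{halfspaces} the set $W(u,v)$ is one of the two connected components $H'_i,H''_i$ of $G_i=(V,E\setminus E_i)$, and both of these are halfspaces of $G$, hence in particular convex. Thus $G$ is a partial cube, and it remains only to match the dimension of a smallest hypercube into which it isometrically embeds with the number $q$ of $\Theta$-classes.

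For this I would make the embedding explicit. Fix a base vertex $v_0\in V$, and for each $i\in\{1,\dots,q\}$ let $H_i\in\{H'_i,H''_i\}$ be the halfspace determined by $E_i$ that does \emph{not} contain $v_0$. Define $\varphi:V\to 2^{\{1,\dots,q\}}$ by $\varphi(v)=\{i : v\in H_i\}$, so that $\varphi(v_0)=\emptyset$; this is a map into the vertex set of $Q_q$. The claim to prove is that $\varphi$ is an isometry, i.e.\ $d(u,v)=|\varphi(u)\triangle\varphi(v)|$ for all $u,v\in V$, which amounts to saying that $d(u,v)$ equals the number of $\Theta$-classes $E_i$ that \emph{separate} $u$ and $v$, meaning that $u$ and $v$ lie in distinct components of $G_i$.

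The key step, and the only delicate one, is to show that a shortest $(u,v)$-path $P=(u=x_0,x_1,\dots,x_k=v)$ contains at most one edge of each $\Theta$-class $E_i$. Indeed, by Lemma~\ref{halfspaces} every edge of $E_i$ has one endpoint in $H'_i$ and one in $H''_i$, so if $P$ contained two edges of $E_i$, then as one walks along $P$ the component (among $H'_i,H''_i$) containing $x_j$ would switch at least twice; hence there would be indices $a<j<c$ with $x_a$ and $x_c$ in the same component $H$ but $x_j$ in the other one. Since the subpath $(x_a,\dots,x_c)$ of the shortest path $P$ is itself shortest, $x_j\in I(x_a,x_c)\subseteq H$ by convexity of $H$, a contradiction. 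Consequently every edge of $P$ lies in a distinct $\Theta$-class; moreover $E_i$ contributes an edge to $P$ if and only if $u$ and $v$ lie in distinct components of $G_i$ (if $P$ avoids $E_i$ it stays inside one component, carrying $u$ and $v$ along; if $P$ uses exactly one edge of $E_i$ the component switches exactly once). Therefore $d(u,v)=k$ is exactly the number of $\Theta$-classes separating $u$ and $v$, which is $|\varphi(u)\triangle\varphi(v)|$, so $\varphi$ isometrically embeds $G$ into $Q_q$. Finally, one cannot lower $q$: in any isometric embedding into a hypercube, the edges crossing a given coordinate, if nonempty, form a single $\Theta$-class, so at least $q$ coordinates are needed; this last remark is the standard fact that the isometric dimension of a partial cube is the number of its $\Theta$-classes and may simply be cited.
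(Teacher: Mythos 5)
Your proof is correct and follows essentially the same route as the paper: the paper derives the corollary directly from Djokovi\'c's characterization of partial cubes together with Lemma~\ref{halfspaces} (convexity of the sets $W(u,v)$), which is exactly your first paragraph. The remainder of your argument simply makes explicit the standard fact, left implicit in the paper, that the coordinates of the isometric embedding correspond to the $\Theta$-classes, via the observation that a shortest path crosses each $\Theta$-class at most once.
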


\begin{lemma}\label{convex-int-halfspaces}
  Each convex subgraph $S$ of $G$ is the intersection of all
  halfspaces containing $S$.
\end{lemma}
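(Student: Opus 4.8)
The plan is as follows. One inclusion is trivial: $S$ is contained in the intersection of all halfspaces that contain it. So the real task is the reverse inclusion, for which it suffices to show that every vertex $v\notin S$ is excluded by some halfspace that still contains $S$.

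Fix such a $v$. First I would use Lemma~\ref{convex-gated} to replace convexity of $S$ by gatedness, and let $v'$ be the gate of $v$ in $S$. Since $v'\in S$ but $v\notin S$, we have $v'\neq v$, so a shortest $(v',v)$-path begins with an edge $v'z$ for which $d(z,v)=d(v',v)-1$. Next I would let $E_i$ be the $\Theta$-class of $v'z$ and let $H'=W(v',z)$ and $H''=W(z,v')$ be the two halfspaces obtained by removing $E_i$ (Lemma~\ref{halfspaces}). Since $z$ is strictly closer to $v$ than $v'$ is, $v$ lies in $H''$ and hence $v\notin H'$. So everything comes down to proving $S\subseteq H'$.

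This last inclusion is the heart of the argument: I claim $d(u,v')<d(u,z)$ for every $u\in S$. I would prove this by contradiction. Since $G$ is bipartite and $v'\sim z$, a failure of the claim forces $d(u,z)=d(u,v')-1$, i.e.\ $z\in I(u,v')$. Combining this with $v'\in I(u,v)$ (which holds because $v'$ is the gate of $v$ in $S$) and with $z\in I(v',v)$ (the choice of $z$), and adding up the corresponding distance equalities, I would get $d(u,z)+d(z,v)=d(u,v)-2$, contradicting the triangle inequality. Hence $u\in W(v',z)=H'$ for all $u\in S$, so $H'$ is a halfspace containing $S$ but not $v$, which is exactly what we wanted. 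The only point needing care is the bookkeeping with the three interval relations; I do not anticipate any serious obstacle beyond that.
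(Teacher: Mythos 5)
Your proof is correct, but it takes a genuinely different route from the one in the paper. The paper argues by contradiction on the intersection $S'$ of all halfspaces containing $S$: since $S'$ is convex (hence connected), if $S\subsetneq S'$ one can pick \emph{adjacent} vertices $u\in S$ and $v\in S'\setminus S$, and then the convexity of $S$ together with bipartiteness shows directly that $S\subseteq W(u,v)$ (if some $x\in S$ were closer to $v$, then $v\in I(x,u)\subseteq S$), contradicting $v\in S'$. You instead separate each individual vertex $v\notin S$ from $S$ by invoking Lemma~\ref{convex-gated} to get the gate $v'$ of $v$ in $S$, taking the first edge $v'z$ of a shortest $(v',v)$-path, and proving $S\subseteq W(v',z)$ by a purely metric computation: the three relations $z\in I(u,v')$, $v'\in I(u,v)$, $z\in I(v',v)$ sum to $d(u,z)+d(z,v)=d(u,v)-2$, violating the triangle inequality. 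Your argument uses one extra ingredient (gatedness of convex sets) but in exchange avoids any appeal to the connectivity/convexity of the intersection $S'$ and to convexity of $S$ beyond the gate property; the separating edge is found near $S$ rather than near the boundary of $S'$, and the verification that $S$ lies on the correct side is a self-contained distance calculation rather than an appeal to $I(x,u)\subseteq S$. Both proofs ultimately rest on Lemma~\ref{halfspaces} to identify $W(v',z)$ and $W(z,v')$ as complementary halfspaces, and both are complete; yours is a fully valid alternative.
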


Two $\Theta$-classes $E_i$ and $E_j$ are \emph{crossing} if each
halfspace of the pair $\{ H'_i,H''_i\}$ intersects each halfspace of
the pair $\{ H'_j,H''_j\}$; otherwise, $E_i$ and $E_j$ are called
\emph{laminar}.
	
\begin{lemma}[Crossing $\Theta$-classes]\label{crossing}
  Any vertex $v \in V(G)$ and incident edges
  $vv_1\in E_1, \ldots, vv_k \in E_k$ belong to a single cube of $G$
  if and only if $E_1, \ldots, E_k$ are pairwise crossing.
\end{lemma}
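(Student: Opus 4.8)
I would prove the two implications separately; the forward one is bookkeeping with halfspaces, and the backward one rests entirely on one use of the median property (the case $k=2$) followed by a propagation of cubes.

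For the forward implication, suppose $v,v_1,\dots,v_k$ all lie in a common cube $Q$ of $G$ and fix $i\ne j$. In the hypercube $Q$ the vertices $v_i,v_j$ are two distinct neighbours of $v$ (distinct because $E_i\ne E_j$), hence they have a second common neighbour $u\ne v$ in $Q$; since $G$ is bipartite, and therefore triangle-free, $v,v_i,u,v_j$ is an induced $4$-cycle, i.e.\ a square of $G$. As $vv_i$ and $uv_j$ are opposite edges of this square they are $\Theta$-equivalent, so $uv_j\in E_i$, and likewise $v_iu\in E_j$. Let $H_i$ be the halfspace bounded by $E_i$ that contains $v$ and $\overline{H_i}$ the complementary halfspace, and similarly for $E_j$. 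By Lemma~\ref{halfspaces}, $v_i\in\overline{H_i}$ and $v_j\in\overline{H_j}$; since $d(v_i,v_j)=2$ while $d(v,v_i)=d(v,v_j)=1$, the same lemma gives $v_i\in H_j$ and $v_j\in H_i$; and from $uv_j\in E_i$ and $v_iu\in E_j$ we get $u\in\overline{H_i}\cap\overline{H_j}$. Together with $v\in H_i\cap H_j$, this places a vertex in each of $H_i\cap H_j$, $H_i\cap\overline{H_j}$, $\overline{H_i}\cap H_j$ and $\overline{H_i}\cap\overline{H_j}$, so $E_i$ and $E_j$ cross. As $i,j$ were arbitrary, $E_1,\dots,E_k$ pairwise cross.

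For the converse, the decisive case is $k=2$, and it is the only point at which median-ness is used. Assume $E_1,E_2$ cross, $vv_1\in E_1$ and $vv_2\in E_2$; letting $H_1,H_2$ be the halfspaces containing $v$, the argument above gives $v_1\in\overline{H_1}\cap H_2$ and $v_2\in H_1\cap\overline{H_2}$, and since $E_1,E_2$ cross we may pick $z\in\overline{H_1}\cap\overline{H_2}$. Put $u:=m(v_1,v_2,z)$. Because $\overline{H_1}$ is convex and contains $v_1$ and $z$, the interval $I(v_1,z)$, and hence $u$, lies in $\overline{H_1}$; symmetrically $u\in\overline{H_2}$. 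On the other hand $d(v_1,v_2)=2$, so $I(v_1,v_2)$ consists only of $v_1$, $v_2$ and the common neighbours of $v_1$ and $v_2$; since $u$ lies in $\overline{H_1}\cap\overline{H_2}$ it is none of $v_1,v_2,v$, so it is a common neighbour of $v_1$ and $v_2$ distinct from $v$. Hence $v,v_1,u,v_2$ is a square of $G$ containing $v,v_1,v_2$.

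To pass from $k=2$ to arbitrary $k$, I would prove the following extension step by induction on $d$: if $Q$ is a $d$-cube of $G$ through a vertex $v$ and $E_*$ is a $\Theta$-class that is distinct from all classes of $Q$ but crosses each of them, and $vv_*\in E_*$, then $Q\cup\{v_*\}$ is contained in a $(d+1)$-cube of $G$. The case $d=1$ is precisely the case $k=2$. For $d\ge 2$, choose one class $E_d$ of $Q$, split $Q$ along $E_d$ into two opposite facets $Q'$ and $Q''$ of dimension $d-1$ with $v\in Q'$, call $v_d$ the neighbour of $v$ along $E_d$ (so $v_d\in Q''$), use the case $k=2$ on the edges $vv_d\in E_d$ and $vv_*\in E_*$ to obtain an edge of $E_*$ at $v_d$, and apply the induction hypothesis to $Q'$ at $v$ and to $Q''$ at $v_d$ to get $d$-cubes $R'$ and $R''$ that are parallel copies of one another along $E_*$; then $R'\cup R''$ is the required $(d+1)$-cube. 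It is cleanest to carry this out inside the isometric embedding of $G$ into a $q$-cube furnished by Corollary~\ref{cor-halfspaces-convex}, each $\Theta$-class being a coordinate, so that ``cube of $G$'' means ``a combinatorial subcube all of whose vertices lie in $G$'' and the fusion of $R'$ and $R''$ is immediate; alternatively the whole induction can be phrased in terms of repeated applications of the Cube Condition, Lemma~\ref{cube}. Iterating the extension step, beginning with the edge $vv_1$ and adjoining $E_2,\dots,E_k$ one at a time (legitimate since the classes pairwise cross and each $vv_i$ is incident to $v$), one obtains a $k$-cube whose neighbours of $v$ are exactly $v_1,\dots,v_k$, since a $\Theta$-class is incident to $v$ via at most one edge; this cube contains $v,v_1,\dots,v_k$.

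The only genuinely delicate point is the case $k=2$: upgrading the mere non-emptiness of $\overline{H_1}\cap\overline{H_2}$ to an actual square incident to $v$. The median computation does this in a line, but it uses median-ness essentially (convexity of all halfspaces together with existence and uniqueness of the median of a triple); for a general partial cube, two crossing $\Theta$-classes need not span a square at a prescribed incident vertex. Everything after that is a routine induction, the only care needed being to verify that the squares (respectively facets) fed into the Cube Condition (respectively the embedding argument) really do meet pairwise in an edge and share a common vertex.
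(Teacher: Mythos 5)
Your proof is correct and follows essentially the same route as the paper's: the forward direction by exhibiting a vertex in each of the four quadrants determined by $E_i$ and $E_j$, the case $k=2$ by producing the fourth vertex of the square from a vertex of $\overline{H_1}\cap\overline{H_2}$ (you via the median $m(v_1,v_2,z)$ and convexity of halfspaces, the paper via the quadrangle condition applied to the gates), and the general case by an induction that glues two parallel codimension-one cubes. The only cosmetic difference is that you organize the induction as adding one crossing class at a time and justify the gluing through the isometric hypercube embedding, whereas the paper inducts on $k$ directly and matches facets using convexity of the two subcubes.
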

	
The \emph{boundary} $\partial H'_i$ of a halfspace $H'_i$ is the
subgraph of $H'_i$ induced by all vertices $v'$ of $H'_i$ having a
neighbor $v''$ in $H''_i$.  A halfspace $H'_i$ of $G$ is
\emph{peripheral} if $\partial H'_i=H'_i$ (See
Fig.~\ref{fig-halfspaces}(b)).
	
\begin{lemma}[Boundaries]\label{boundary}
  For any $\Theta$-class $E_i$ of $G$, $\partial H'_i$ and
  $\partial H''_i$ are isomorphic and gated.
\end{lemma}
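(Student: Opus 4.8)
The plan is to identify the two boundaries via the gate map between the (gated) halfspaces, and then to prove that each boundary is convex, so that it is gated by Lemma~\ref{convex-gated}. Abbreviate $H':=H'_i$ and $H'':=H''_i$; by Lemma~\ref{halfspaces} they are the two connected components of $G_i=(V,E\setminus E_i)$, so every edge of $G$ joining $H'$ to $H''$ lies in $E_i$, and by Lemma~\ref{convex-gated} both $H'$ and $H''$ are gated. For $v'\in V(H')$ write $\varphi(v')$ for the gate of $v'$ in $H''$, and for $v''\in V(H'')$ write $\psi(v'')$ for the gate of $v''$ in $H'$.

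First I would set up the local dictionary at the boundary. Since a gate realizes the distance to the gated set, $v'\in\partial H'$ holds precisely when $d(v',H'')=1$, and in that case $\varphi(v')$ is a neighbor of $v'$; moreover it is the \emph{only} neighbor of $v'$ in $H''$, since any such neighbor $u$ satisfies $\varphi(v')\in I(v',u)$, forcing $d(\varphi(v'),u)=d(v',u)-1=0$. The edge $v'\varphi(v')$ joins $H'$ to $H''$, hence lies in $E_i$, so $\varphi(v')\in\partial H''$; running the same reasoning on $\psi$ gives $\psi(\varphi(v'))=v'$, and symmetrically $\varphi(\psi(v''))=v''$. Thus $\varphi$ restricts to a bijection $\partial H'\to\partial H''$ with inverse $\psi$, and for each $v'\in\partial H'$ the vertex $v'$ is the gate of $\varphi(v')$ in $H'$.

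Next I would upgrade this to a graph isomorphism by proving $d(a,b)=d(\varphi(a),\varphi(b))$ for all $a,b\in\partial H'$. From $\varphi(a)\in I(a,\varphi(b))$ one gets $d(a,\varphi(b))=1+d(\varphi(a),\varphi(b))$, while the triangle inequality through $b$ gives $d(a,\varphi(b))\le d(a,b)+d(b,\varphi(b))=d(a,b)+1$; hence $d(\varphi(a),\varphi(b))\le d(a,b)$, and the symmetric estimate for $\psi$ yields equality. Taking the case of distance $1$ (and using that $\varphi$ is injective) shows $a\sim b\iff\varphi(a)\sim\varphi(b)$, so $\varphi\colon\partial H'\to\partial H''$ is an isomorphism of induced subgraphs.

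Finally I would prove that $\partial H'$ is convex; the same follows for $\partial H''$ by symmetry, and convexity gives gatedness by Lemma~\ref{convex-gated}. Let $a,b\in\partial H'$ and $c\in I(a,b)$; convexity of $H'$ gives $c\in V(H')$. Since $a$ is the gate of $\varphi(a)$ in $H'$ we have $d(\varphi(a),c)=1+d(a,c)$, and likewise $d(\varphi(b),c)=1+d(b,c)$. Let $g$ be the gate of $c$ in $H''$; as $\varphi(a),\varphi(b)\in V(H'')$ we have $d(c,\varphi(a))=d(c,g)+d(g,\varphi(a))$ and $d(c,\varphi(b))=d(c,g)+d(g,\varphi(b))$. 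Equating the two expressions for $d(c,\varphi(a))+d(c,\varphi(b))$ and using $d(a,c)+d(c,b)=d(a,b)=d(\varphi(a),\varphi(b))\le d(g,\varphi(a))+d(g,\varphi(b))$ forces $2\,d(c,g)\le 2$; since $c\in V(H')$ and $g\in V(H'')$ are distinct, $d(c,g)=1$, so $c$ has a neighbor in $H''$ and thus $c\in\partial H'$. The step I expect to be the real work is the first one — checking that the gate of a boundary vertex is exactly its $E_i$-partner and that these gates are mutual between $H'$ and $H''$; once that dictionary is in place, both the isomorphism and the convexity bound reduce to short manipulations of gate identities and the triangle inequality.
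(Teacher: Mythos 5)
Your proof is correct, but it takes a genuinely different route from the one in the paper. The paper's proof first establishes convexity of $\partial H'_i$ by a local argument: the boundary is connected (this is extracted from the proof of Lemma~\ref{halfspaces}), so by the local-to-global convexity criterion for median graphs it suffices to check local convexity, which is done via the cube condition (Lemma~\ref{cube}) applied to two squares sharing an edge; the bijection and the isomorphism are then deduced afterwards from convexity and bipartiteness. You instead work in the opposite order and purely metrically: you first show that the mutual gate maps $\varphi,\psi$ between the two gated halfspaces restrict to inverse bijections between the boundaries, prove $d(a,b)=d(\varphi(a),\varphi(b))$ by a two-sided triangle-inequality estimate (whence the isomorphism), and then use this distance identity together with the gate $g$ of an intermediate vertex $c\in I(a,b)$ to force $d(c,g)=1$, giving global convexity directly; gatedness then follows from Lemma~\ref{convex-gated} exactly as in the paper. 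Your argument avoids both the cube condition and the connected-plus-locally-convex criterion, using only gatedness of halfspaces and general gate/triangle-inequality manipulations, so it is more self-contained and in fact shows the general phenomenon that the two ``projections'' between a pair of complementary gated halfspaces are isometric; the paper's proof is shorter once the median-graph machinery (cube condition, local convexity criterion) is taken for granted. All the individual steps of your argument check out, including the key inequality $2+d(a,b)\ge 2\,d(c,g)+d(a,b)$ in the convexity step.
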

	
From now on, we suppose that $G$ is rooted at an arbitrary vertex
$v_0$ called the \emph{basepoint}. For any $\Theta$-class $E_i$, we
assume that $v_0$ belongs to the halfspace $H''_i$.  Let
$d(v_0,H'_i)=\min \{ d(v_0,x): x\in H'_i\}$.  Since $H'_i$ is gated,
the gate of $v_0$ in $H'_i$ is the unique vertex of $H'_i$ at distance
$d(v_0,H'_i)$ from $v_0$. Since median graphs are bipartite, the
choice of $v_0$ defines a canonical orientation of the edges of $G$:
$uv\in E$ is oriented from $u$ to $v$ (notation $\overrightarrow{uv}$)
if $d(v_0,u)<d(v_0,v)$. Let $\overrightarrow{G}_{v_0}$ denote the
resulting oriented pointed graph.
	
\begin{lemma}[\!\!\cite{Mu_Exp})(Peripheral Halfspaces]\label{peripheral}
  Any halfspace $H'_i$ maximizing $d(v_0,H'_i)$ is peripheral.
\end{lemma}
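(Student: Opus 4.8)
The plan is to argue by contradiction. Suppose $E_i$ is a $\Theta$-class whose halfspace $H'_i$ (the one not containing $v_0$) maximizes $d(v_0,H'_i)=:k$ among all halfspaces of $G$, but $H'_i$ is not peripheral, so that some vertex $z\in H'_i\setminus\partial H'_i$ exists. I will exhibit a $\Theta$-class $E_p$ with $d(v_0,H'_p)\ge k+1$, contradicting the choice of $E_i$. Two facts are used throughout. Since $H'_i$ is convex, hence gated (Lemma~\ref{convex-gated}), the gate $g$ of $v_0$ in $H'_i$ is the unique vertex of $H'_i$ at distance $k$ from $v_0$, and $d(v_0,w)=k+d(g,w)$ for every $w\in H'_i$; moreover $g\in\partial H'_i$, since the predecessor of $g$ on a geodesic to $v_0$ cannot lie in $H'_i$. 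Also, the edges of $E_i$ are exactly those joining $H'_i$ to $H''_i$ (Lemma~\ref{halfspaces}), $\partial H'_i$ is gated (Lemma~\ref{boundary}) hence convex and connected, and the isomorphism $\varphi\colon\partial H'_i\to\partial H''_i$ of Lemma~\ref{boundary} sends each $x$ to its neighbour $\varphi(x)$ in $H''_i$, so that $x\varphi(x)\in E_i$.

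Next I would locate $E_p$. Let $z^*$ be the gate of $z$ in $\partial H'_i$; as $z\notin\partial H'_i$, $z\ne z^*$, so there is a neighbour $u$ of $z^*$ with $u\in I(z^*,z)$ and $d(z^*,u)=1$. Convexity of $H'_i$ gives $u\in H'_i$, and the gate property of $z^*$ forces $u\notin\partial H'_i$ (otherwise $z^*\in I(z,u)$, giving $d(z,u)=d(z,z^*)+1$, while $u\in I(z^*,z)$ gives $d(z,u)=d(z,z^*)-1$). Since both ends of the edge $z^*u$ lie in $H'_i$, it belongs to a $\Theta$-class $E_p\ne E_i$. Combining the gate property of $g$ in $H'_i$ (so $g\in I(v_0,z)$) with $z^*\in I(z,g)$ (gate property of $z^*$, as $g\in\partial H'_i$) yields $z^*\in I(v_0,z)$, whence $d(v_0,u)=d(v_0,z^*)+1$; therefore $z^*$ and $v_0$ lie in the same halfspace of $E_p$, which I denote $H''_p$, while $u\in H'_p$.

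The core of the proof is to establish that $E_i$ and $E_p$ are laminar, that $H'_p\subseteq H'_i$, and that $g\notin H'_p$. If $E_i$ and $E_p$ crossed, Lemma~\ref{crossing} would place the incident edges $z^*\varphi(z^*)\in E_i$ and $z^*u\in E_p$ in a common square $z^*,\varphi(z^*),w,u$; then $uw$, opposite to $z^*\varphi(z^*)$, lies in $E_i$, forcing $w\in H''_i$ and hence $u\in\partial H'_i$ --- a contradiction. So $E_i$ and $E_p$ are laminar; since $u\in H'_i\cap H'_p$, $z^*\in H'_i\cap H''_p$, and $v_0\in H''_i\cap H''_p$ are all nonempty, the empty intersection among the four must be $H''_i\cap H'_p$, i.e.\ $H'_p\subseteq H'_i$. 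Finally, no edge of $E_p$ joins two vertices of $\partial H'_i$: if $a,b\in\partial H'_i$ and $ab\in E_p$, then $a,b,\varphi(b),\varphi(a)$ is a $4$-cycle, i.e.\ a square, with $ab$ and $\varphi(a)\varphi(b)$ in $E_p$ and $a\varphi(a),b\varphi(b)$ in $E_i$, so Lemma~\ref{crossing} (applied at $a$) would make $E_i$ and $E_p$ cross, contradicting laminarity. Since $\partial H'_i$ is connected and contains $z^*\in H''_p$, it lies entirely in $H''_p$; in particular $g\notin H'_p$.

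To conclude: every $w\in H'_p\subseteq H'_i$ is distinct from $g$ (because $g\notin H'_p$), so $d(v_0,w)=k+d(g,w)\ge k+1$; hence $d(v_0,H'_p)\ge k+1>d(v_0,H'_i)$, contradicting the maximality of $d(v_0,H'_i)$. I expect the main obstacle to be the two applications of Lemma~\ref{crossing} in the third paragraph --- constructing from the boundary isomorphism $\varphi$ the squares that witness a crossing of $E_i$ and $E_p$ --- together with the bookkeeping of which halfspace of each $\Theta$-class contains $v_0$, $g$, $z^*$, and $u$.
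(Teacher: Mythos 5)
Your proof is correct, and although it opens the same way as the paper's --- assume $H'_i$ is not peripheral, produce an edge with one end in $\partial H'_i$ and the other in $H'_i\setminus\partial H'_i$, and consider its $\Theta$-class --- the way you reach the contradiction with maximality is genuinely different. The paper supposes the new halfspace is no farther than $H'_i$, deduces that the gate of $v_0$ in it must lie in $H''_i$, and then contradicts the convexity of that halfspace; it never needs to decide whether the two $\Theta$-classes cross. You instead prove the stronger structural fact that $E_i$ and $E_p$ are laminar with $H'_p\subseteq H'_i$ and $\partial H'_i\subseteq H''_p$, so that $H'_p$ avoids the gate $g$ of $v_0$ in $H'_i$ and $d(v_0,H'_p)\ge d(v_0,H'_i)+1$ follows by direct computation from the gate identity $d(v_0,w)=d(v_0,g)+d(g,w)$. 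This costs you two extra tools --- Lemma~\ref{crossing} to rule out crossings (once at $z^*$ to show $E_i$ and $E_p$ are laminar, and once more to show no $E_p$-edge lies inside $\partial H'_i$) and the boundary isomorphism of Lemma~\ref{boundary} to build the witnessing squares --- but it buys an explicit strictly deeper halfspace rather than an indirect contradiction. All the steps check out: the placement of $v_0$, $g$, $z^*$ and $u$ in the various halfspaces is handled correctly, the three nonempty quadrants do force $H''_i\cap H'_p=\emptyset$, and the connectivity of the gated set $\partial H'_i$ is legitimately used to push the entire boundary into $H''_p$. The paper's argument is shorter; yours is more informative about where the deeper halfspace sits.
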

	
\begin{figure}[t]
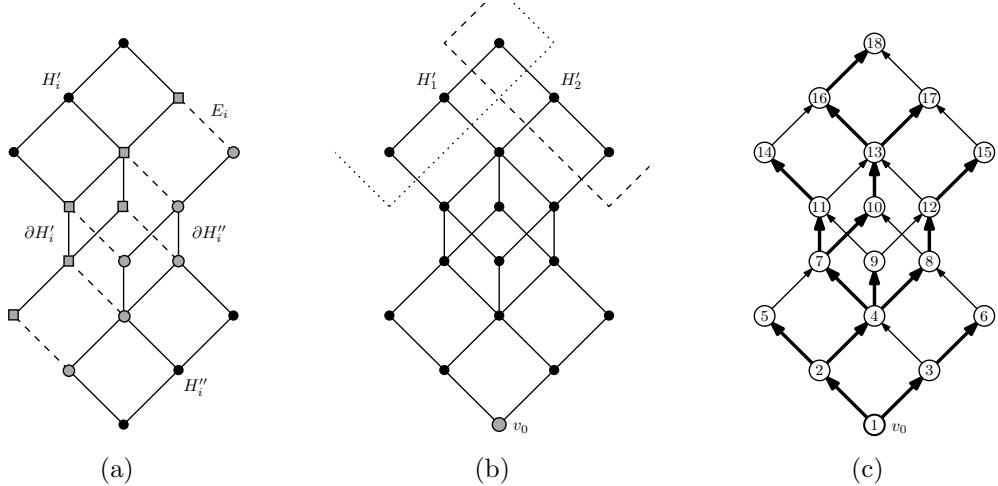

  \captionsetup[subfigure]{singlelinecheck=true}
  \centering
  \subcaptionbox{\label{fig:ExGrMedGate}}{\includegraphics[scale=0.64,page=7]{Images/Graphe2.pdf}}\qquad \subcaptionbox{\label{fig:ExGrMedBoun}}{\includegraphics[scale=0.64,page=6]{Images/Graphe2.pdf}}\qquad \subcaptionbox{\label{fig:ExGrMedPer}}{\includegraphics[scale=0.64,page=5]{Images/Graphe2.pdf}}
  \caption{~(a) In dashed, the $\Theta$-class $E_i$ of $D$, its two
    complementary halfspaces $H'_i$ and $H''_i$ and their boundaries
    $\partial H_i'$ and $\partial H_i''$,~(b) two peripheral
    halfspaces of $D$, and~(c) a LexBFS ordering of
    $D$.}\label{fig-halfspaces}
\end{figure}
	
For a vertex $v$, all vertices $u$ such that $\overrightarrow{uv}$ is
an edge of $\overrightarrow{G}_{v_0}$ are called \emph{predecessors}
of $v$ and are denoted by $\Lambda(v)$. Equivalently, $\Lambda(v)$
consists of all neighbors of $v$ in the interval $I(v_0,v)$. A median
graph $G$ satisfies the \emph{downward cube property} if any vertex
$v$ and all its predecessors $\Lambda(v)$ belong to a single cube of
$G$.
	
\begin{lemma}[\!\!\cite{Mu})(Downward Cube Property]\label{descendent_cube}
  $G$ satisfies the downward cube property.
\end{lemma}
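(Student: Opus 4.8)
The plan is to deduce this from Lemma~\ref{crossing} on crossing $\Theta$-classes. Write $\Lambda(v)=\{u_1,\dots,u_k\}$ and let $E_{i_j}$ be the $\Theta$-class containing the edge $vu_j$. It then suffices to show that $E_{i_1},\dots,E_{i_k}$ are pairwise crossing: Lemma~\ref{crossing} will give a single cube containing $v$ and the edges $vu_1,\dots,vu_k$, hence containing $v$ and all of $\Lambda(v)$. Since ``pairwise crossing'' automatically forces these classes to be distinct, I will not argue that separately. Thus I fix two distinct predecessors $u_a,u_b\in\Lambda(v)$ and aim to prove that $E_{i_a}$ and $E_{i_b}$ cross.

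The first step is to build a square through the two edges $vu_a$ and $vu_b$. Since $u_a,u_b\sim v$ and $d(v_0,u_a)=d(v_0,u_b)=d(v_0,v)-1$, I apply the Quadrangle Condition (Lemma~\ref{quadrangle}) with apex $v_0$ to the pair $u_a,u_b$ and their common neighbour $v$; this produces a vertex $x\sim u_a,u_b$ with $d(v_0,x)=d(v_0,v)-2$. The four vertices $v,u_a,u_b,x$ lie at three different distances from $v_0$ and $u_a\neq u_b$, so they are pairwise distinct and $Q:=x\,u_a\,v\,u_b$ is a genuine $4$-cycle, i.e.\ a square of $G$. Its pairs of opposite edges are $\{vu_a,\,xu_b\}$ and $\{vu_b,\,xu_a\}$, and opposite edges of a square are $\Theta_0$-related; hence $xu_b\in E_{i_a}$ and $xu_a\in E_{i_b}$.

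The second step is to locate the four corners of $Q$ in the four ``quadrants'' cut out by $E_{i_a}$ and $E_{i_b}$. Recall the convention $v_0\in H''_i$ for every $i$, together with Lemma~\ref{halfspaces}: for an edge of $E_i$, the halfspace avoiding $v_0$ contains the endpoint that is farther from $v_0$. Along $vu_a\in E_{i_a}$ this farther endpoint is $v$, and along $xu_b\in E_{i_a}$ it is $u_b$, so $v,u_b\in H'_{i_a}$ and $u_a,x\in H''_{i_a}$. Symmetrically, along the edges $vu_b,xu_a\in E_{i_b}$ the farther endpoints are $v$ and $u_a$, so $v,u_a\in H'_{i_b}$ and $u_b,x\in H''_{i_b}$. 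Combining, $v\in H'_{i_a}\cap H'_{i_b}$, $u_a\in H''_{i_a}\cap H'_{i_b}$, $u_b\in H'_{i_a}\cap H''_{i_b}$ and $x\in H''_{i_a}\cap H''_{i_b}$; all four intersections are nonempty, so $E_{i_a}$ and $E_{i_b}$ are crossing. Since $a\neq b$ were arbitrary, Lemma~\ref{crossing} then yields the desired cube.

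I do not expect a genuine obstacle here. The only points requiring care are the orientation bookkeeping in the second step --- consistently reading off, from the basepoint $v_0$, which side of $E_{i_a}$ and of $E_{i_b}$ each corner of $Q$ lies on --- and the immediate verification that $Q$ is nondegenerate before invoking the definition of a square.
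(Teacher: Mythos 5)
Your proposal is correct and follows essentially the same route as the paper: apply the quadrangle condition to each pair of predecessors to produce a square through $v$, conclude that the corresponding $\Theta$-classes are pairwise crossing, and invoke Lemma~\ref{crossing}. The only difference is that you spell out the halfspace bookkeeping showing the four corners of the square witness the crossing, which the paper leaves implicit.
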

	
Lemma~\ref{descendent_cube} immediately implies the following upper
bound on the number of edges of $G$:
	
\begin{corollary}\label{upper_edges}
  If $G$ has dimension $d$, then $m\le dn\le n\log n$.
\end{corollary}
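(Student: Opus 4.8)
The plan is to read the bound off from the canonical orientation $\overrightarrow{G}_{v_0}$ of $G$ together with Lemma~\ref{descendent_cube}. First I would record an elementary counting identity. Since $G$ is bipartite, no edge has both endpoints at the same distance from $v_0$, so the orientation $\overrightarrow{G}_{v_0}$ (in which $uv$ is oriented from the endpoint closer to $v_0$ towards the one farther from $v_0$) is well defined, and every edge of $G$ has a unique head. Grouping the edges by their heads therefore gives
\[
  m=\sum_{v\in V}|\Lambda(v)|,
\]
where $\Lambda(v)$ denotes the set of predecessors of $v$.

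Next I would bound each term $|\Lambda(v)|$ using the downward cube property. By Lemma~\ref{descendent_cube}, the vertex $v$ together with all vertices of $\Lambda(v)$ lies in a single cube $Q$ of $G$. Such a cube has dimension at most $d=\dim(G)$, and in a $d$-dimensional cube every vertex has exactly $d$ neighbours; since each predecessor in $\Lambda(v)$ is in particular a neighbour of $v$ lying in $Q$, we get $|\Lambda(v)|\le d$. Substituting this into the identity above yields $m\le dn$.

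Finally, for the second inequality I would use that a median graph of dimension $d$ contains a $d$-dimensional cube, hence a subgraph with $2^d$ vertices, so $2^d\le n$, i.e.\ $d\le\log n$, and therefore $m\le dn\le n\log n$. I do not anticipate any genuine difficulty in this argument; the only point that merits a line of justification is the head-counting identity $m=\sum_{v\in V}|\Lambda(v)|$, which is immediate from the bipartiteness of $G$ as noted above.
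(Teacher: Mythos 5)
Your proof is correct and is exactly the argument the paper intends (the paper states the corollary follows "immediately" from Lemma~\ref{descendent_cube} without writing it out): count edges by their heads in $\overrightarrow{G}_{v_0}$, bound $|\Lambda(v)|\le d$ via the downward cube property, and use $2^d\le n$ to get $d\le\log n$. No gaps.
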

	
We give a sharp lower bound on the number $q$ of $\Theta$-classes,
which is new to our knowledge.
	
\begin{proposition}\label{prop-nbthetaclasses}
  If $G$ has $q$ $\Theta$-classes and dimension $d$, then
  $q \geq d(\sqrt[d]{n}-1)$. This lower bound is realized for products
  of $d$ paths of length $\sqrt[d]{n}-1$.
\end{proposition}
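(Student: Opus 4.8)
The bound is equivalent to $n\le(1+q/d)^d$, so the plan is to exhibit a partition $q=q_1+\dots+q_d$ of the $\Theta$-classes into $d$ groups with $n\le\prod_{s=1}^d(q_s+1)$: the (weighted) AM--GM inequality then gives $\prod_{s=1}^d(q_s+1)\le\bigl((q+d)/d\bigr)^d=(1+q/d)^d$, and this is tight exactly when all $q_s$ are equal, which will pin down the extremal case.

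To get such a partition I would use a \emph{dimensional filtration} by $\Theta$-contractions. Contracting all edges of a single $\Theta$-class $E_i$ (the inverse of a convex expansion in the sense of Mulder~\cite{Mu}) again produces a median graph, with one fewer $\Theta$-class and dimension at most that of $G$; moreover, since for $uv\in E_i$ the contraction identifies $\partial H'_i$ with $\partial H''_i$, exactly $|E_i|=|\partial H'_i|$ vertices disappear. Now iterate: given a median graph $G^{(s)}$ of dimension at most $s$ (with $G^{(d)}=G$), choose a \emph{minimum} set $T_s$ of $\Theta$-classes meeting every $s$-dimensional cube of $G^{(s)}$ — by Lemma~\ref{crossing} a minimum transversal of the maximal cliques of its crossing graph — and let $G^{(s-1)}$ be obtained by contracting all classes of $T_s$. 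Then $G^{(s-1)}$ is a median graph of dimension at most $s-1$; setting $q_s:=|T_s|$ we get $q=\sum_{s=1}^d q_s$, and it remains to prove
\[
 n \;=\; |V(G^{(d)})| \;\le\; \prod_{s=1}^{d}\bigl(q_s+1\bigr).
\]

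The heart of the argument — and the step I expect to be the main obstacle — is the single inequality $|V(G^{(s)})|\le(q_s+1)\,|V(G^{(s-1)})|$ for each $s$; multiplying these yields the displayed bound. Realizing the contraction of $T_s$ as a chain of $q_s$ elementary $\Theta$-contractions $G^{(s)}=K_0\to K_1\to\dots\to K_{q_s}=G^{(s-1)}$, each step removes $|\partial H|$ vertices for the relevant halfspace $H$, and since $|\partial H|\le\min(|H|,|V\setminus H|)$ a naive estimate gives only the far weaker $|V(G^{(s)})|\le 2^{q_s}\,|V(G^{(s-1)})|$. Turning the $2^{q_s}$ into $q_s+1$ is exactly where one must exploit the \emph{minimality} of $T_s$: because no proper subset of $T_s$ already destroys all $s$-cubes, the successively contracted halfspaces are forced to be thin relative to the shrinking graph, so the cumulative vertex loss telescopes to an additive (rather than multiplicative) effect. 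I would prove this by induction on $q_s$, controlling the boundaries $\partial H$ via how the maximal cubes through the not-yet-contracted classes of $T_s$ constrain them.

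Finally, the realization of equality: for $G=P_\ell\times\dots\times P_\ell$ ($d$ factors, each a path with $\ell=\sqrt[d]{n}-1$ edges) one has $n=(\ell+1)^d$, $\dim(G)=d$, and $q=d\ell$ $\Theta$-classes (the $\ell$ classes of each factor), whence $q=d\ell=d(\sqrt[d]{n}-1)$. In the filtration above each $T_s$ is precisely the set of $\ell$ classes of one path factor (this is a minimum transversal of the maximal cubes $K_2^{\,d}$, since each such cube uses one edge-class per factor), so all $q_s$ equal $\ell$, and both the telescoping inequality and AM--GM hold with equality — which is also the way to see that $q_s$ may take any prescribed value and hence that no constant in the bound can be improved.
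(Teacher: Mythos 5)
Your reduction to $n\le(1+q/d)^d$ and your verification of the extremal product-of-paths example are both correct, but the proof has a genuine gap exactly where you flag it: the inequality $|V(G^{(s)})|\le(q_s+1)\,|V(G^{(s-1)})|$ is asserted, not proved. Minimality of the transversal $T_s$ controls only its \emph{cardinality}; it gives no handle on the sizes of the boundaries $\partial H$ that are removed at each elementary contraction, and the intermediate graphs in the chain $K_0\to\cdots\to K_{q_s}$ are larger than $G^{(s-1)}$, so the termwise bound $|\partial H|\le|V(K_j)|$ does not telescope to $q_s\,|V(G^{(s-1)})|$. To see that something nontrivial is being asked for, specialize to $d=2$: writing $\Gamma$ for the crossing graph, your inequality becomes $n\le(\tau+1)(q-\tau+1)$ with $\tau$ a minimum vertex cover of $\Gamma$, and the natural way to verify it is via the identity $n=1+q+e(\Gamma)$ together with the bound $e(\Gamma)\le\tau\cdot\alpha(\Gamma)$ (valid because $\Gamma$ is triangle-free, so $\Delta\le\alpha$). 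But that identity --- the number of vertices of $G$ equals the number of simplices of the clique complex of $\Gamma(G)$ --- is precisely the tool you would need to import, and with it in hand the detour through contractions is unnecessary. You also conflate ``transversal of the $s$-cubes'' with ``transversal of the maximal cliques of the crossing graph'' (maximal cliques of size $<s$ need not be hit), and you use without justification that contracting $T_s$ cannot create new $s$-cubes; both are repairable but add to the burden.

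The paper's proof avoids all of this: it invokes the characterization of median graphs among ample set systems to get $n=|X(\Gamma(G))|$, notes that $\dim(G)=d$ means $\Gamma(G)$ is $K_{d+1}$-free, and applies Zykov's theorem to bound the number of $k$-cliques by $\binom{d}{k}(q/d)^k$, summing to $(1+q/d)^d$ directly. Your filtration idea could conceivably be turned into an alternative proof, but as written the central step is a conjecture supported only by the hope that ``the cumulative vertex loss telescopes,'' so the argument is incomplete.
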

	
\begin{proof}
  Let $\Gamma(G)$ be the crossing graph $\Gamma(G)$ of $G$:
  $V(\Gamma(G))$ is the set of $\Theta$-classes of $G$ and two
  $\Theta$-classes are adjacent in $\Gamma(G)$ if they are crossing.
  Note that $|V(\Gamma(G))| = q$.  Let $X(\Gamma(G))$ be the clique
  complex of $\Gamma(G)$. By the characterization of median graphs
  among ample classes~\cite[Proposition~4]{BaChDrKo}, the number of
  vertices of $G$ is equal to the number $|X(\Gamma(G))|$ of simplices
  of $X(\Gamma(G))$. Since $G$ is of dimension $d$,
  by~\cite[Proposition~4]{BaChDrKo}, $\Gamma(G)$ does not contain
  cliques of size $d+1$.  By Zykov's theorem~\cite{Zyk} (see
  also~\cite{Wood}), the number of $k$-simplices in $X(\Gamma(G))$ is
  at most $\binom{d}{k}\left(\frac{q}{d}\right)^k$. Hence
  $ n = |V(G)| = |X(\Gamma(G))| \leq \sum_{k=0}^d
  \binom{d}{k}\left(\frac{q}{d}\right)^k = \left(1+
    \frac{q}{d}\right)^d$ and thus $q \geq d(\sqrt[d]{n}-1)$. Let now
  $G$ be the Cartesian product of $d$ paths of length
  $(\sqrt[d]{n}-1)$. Then $G$ has $(\sqrt[d]{n}-1+1)^d = n$ vertices
  and $d(\sqrt[d]{n}-1)$ $\Theta$-classes (each $\Theta$-class of $G$
  corresponds to an edge of one of factors).
\end{proof}

\section{Computation of the $\Theta$-classes}\label{sec:theta}

In this section we describe two algorithms to compute the
$\Theta$-classes of a median graph $G$: one runs in time $O(dm)$ and
uses BFS, and the other runs in time $O(m)$ and uses LexBFS.
	
\subsection{$\Theta$-classes via BFS}
The \emph{Breadth-First Search (BFS)} refines the basepoint order and defines the same orientation
$\overrightarrow{G}_{v_0}$ of $G$.
BFS uses a queue $Q$ and the insertion in $Q$ defines a total order $<$ on the vertices of $G$: $x<y$ iff $x$ is
inserted in $Q$ before $y$. When a vertex $u$ arrives at the head of
$Q$, it is removed from $Q$ and all not yet discovered neighbors $v$
of $u$ are inserted in $Q$; $u$ becomes the \emph{parent} $f(v)$ of
$v$; for any vertex $v\neq v_0$, $f(v)$ is the smallest predecessor of
$v$. The arcs $\overrightarrow{f(v)v}$ define the \emph{BFS-tree} of
$G$.
For each vertex $v$, BFS produces the list $\Lambda(v)$ of
predecessors of $v$ ordered by $<$; denote this ordered list by
$\Lambda_{<}(v)$.
By Lemma~\ref{descendent_cube}, each list $\Lambda_{<}(v)$ has size at
most $d:=\dim(G)$.  Notice also that the total order $<$ on vertices
of $G$ give raise to a total order on the edges of $G$: for two edges
$uv$ and $u'v'$ with $u<v$ and $u'<v'$ we have $uv<u'v'$ if and only
if $u<u'$ or if $u=u'$ and $v<v'$.
	
Now we show how to use a BFS rooted at $v_0$ to compute, for each edge
$uv$ of a median graph $G$, the unique $\Theta$-class $E(uv)$
containing the edge $uv$. Suppose that $uv$ is oriented by BFS from
$u$ to $v$, i.e., $d(v_0,u)<d(v_0,v)$. There are only two
possibilities: either the edge $uv$ is the first edge of the
$\Theta$-class $E(uv)$ discovered by BFS or the $\Theta$-class of $uv$
already exists. The following lemma shows how to distinguish between
these two cases:

\begin{lemma}\label{prime_traces}
  An edge $uv\in E_i$ with $d(v_0,u)<d(v_0,v)$ is the first edge of a
  $\Theta$-class $E_i$ iff $u$ is the unique predecessor of $v$, i.e.,
  $\Lambda_<(v)=\{ u\}$.
\end{lemma}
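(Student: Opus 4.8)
The plan is to prove both directions by exploiting the canonical orientation $\overrightarrow{G}_{v_0}$ together with the halfspace description of $\Theta$-classes from Lemma~\ref{halfspaces}. Write $E_i = E(uv)$ and recall that, since $v_0 \in H''_i$, every edge of $E_i$ is oriented from its endpoint in $H''_i$ to its endpoint in $H'_i$; in particular $u \in H''_i$ and $v \in H'_i$. The key quantitative observation is that for every edge $xy \in E_i$ oriented $\overrightarrow{xy}$ we have $d(v_0,y) = d(v_0,x)+1$, and moreover $d(v_0,x) \ge d(v_0,H'_i)-1$ with equality exactly when $y$ is the gate $g$ of $v_0$ in $H'_i$ (this uses that $H'_i$ is gated, Lemma~\ref{convex-gated}, so that shortest paths from $v_0$ into $H'_i$ pass through $g$). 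Consequently the first edge of $E_i$ discovered by BFS is precisely an edge $\overrightarrow{u'g}$ where $g$ is the gate of $v_0$ in $H'_i$ and $u' \in \Lambda(g) \cap H''_i$.

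For the forward direction, suppose $uv$ is the first edge of $E_i$ discovered by BFS. By the remark above, $v$ is the gate $g$ of $v_0$ in $H'_i$, and $u \in \Lambda(v)$. I must show $\Lambda(v) = \{u\}$. Suppose not, and let $u_1 \in \Lambda(v)$ with $u_1 \ne u$. Then $u_1 v$ is an edge not in $E_i$ (two distinct edges through $v$ lie in distinct $\Theta$-classes, since $\Theta$-classes are matchings within the neighborhood of a vertex — this follows from the square/$K_{2,3}$-free structure), so $u_1$ also lies in $H'_i$, as does $v$. Since $u, u_1 \in \Lambda(v)$, by the quadrangle condition (Lemma~\ref{quadrangle}) applied with apex $v_0$ there is a common predecessor $z \sim u, u_1$ with $d(v_0,z) = d(v_0,v)-2$; thus $z u u_1$ together with $v$ forms a square, and the edge $zu$ is parallel to $u_1 v$, hence $zu \notin E_i$ and $z \in H'_i$. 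But then $z$ is a vertex of $H'_i$ with $d(v_0,z) < d(v_0,v) = d(v_0,H'_i)$, contradicting minimality of the gate. Hence $\Lambda(v) = \{u\}$.

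For the converse, suppose $\Lambda(v) = \{u\}$, so $u = f(v)$ is the unique predecessor of $v$. I claim $v$ is the gate of $v_0$ in $H'_i$, which by the opening remark forces $\overrightarrow{uv}$ to be the first edge of $E_i$ discovered. Let $g$ be the gate of $v_0$ in $H'_i$; since $v \in H'_i$, every shortest $(v_0,v)$-path passes through $g$, so $g \in I(v_0,v)$. If $g \ne v$, pick the neighbor $w$ of $v$ on a shortest $(v,g)$-path; then $w \in I(v_0,v)$, so $w \in \Lambda(v)$, forcing $w = u$ by hypothesis. But $w \in H'_i$ (it lies on a path inside the convex set $H'_i$ from $v$ to $g$), whereas $u \in H''_i$, a contradiction. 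Hence $g = v$. The main obstacle is the forward direction: one has to rule out extra predecessors of $v$, and the cleanest way is the quadrangle-condition argument above producing a vertex of $H'_i$ strictly closer to $v_0$ than the gate, contradicting gatedness; care is needed to verify that the parallel edges produced indeed stay out of $E_i$, which is where the basic structural facts about $\Theta$-classes (distinct edges at a vertex, convexity of halfspaces) are invoked.
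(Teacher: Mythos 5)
Your overall strategy is the same as the paper's: identify the head of the first $E_i$-edge with the gate of $v_0$ in $H'_i$, and use convexity/gatedness of the halfspaces in both directions. Your converse direction is essentially identical to the paper's (take a neighbor of $v$ on a shortest path to the gate $x\ne v$, observe it lies in $H'_i$ by convexity and is a second predecessor). Your preliminary remark that the first $E_i$-edge in the BFS edge order is the one incident to the gate is correct and worth making explicit; the paper asserts it more tersely.

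However, your forward direction contains a genuine error in the quadrangle-condition detour. In the square $u\,v\,u_1\,z$, the edge $zu$ is opposite to $vu_1$ and the edge $zu_1$ is opposite to $uv$; so indeed $zu\notin E_i$, but since $u\in H''_i$ this places $z$ in $H''_i$, not in $H'_i$ as you claim (equivalently, $zu_1\,\Theta_0\,uv$ forces $zu_1\in E_i$, so $z$ and $u_1$ are separated by $E_i$ and $z\in H''_i$). The contradiction you then draw from ``$z\in H'_i$ closer to $v_0$ than the gate'' therefore does not stand as written. The good news is that the detour is unnecessary: two sentences earlier you correctly established that $u_1\in H'_i$ (because $u_1v\notin E_i$ and $v\in H'_i$), and $u_1$ itself is a vertex of $H'_i$ with $d(v_0,u_1)=d(v_0,v)-1<d(v_0,H'_i)$, which already contradicts $v$ being the gate. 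Delete the quadrangle step and conclude directly from $u_1$; this also matches the paper's argument, which reaches the same contradiction from the complementary side (a second predecessor must lie in $H''_i$, contradicting the uniqueness of the neighbor of the gate in $H''_i$).
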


\begin{proof}
  First let $uv$ be the first edge of $E_i$ discovered by {BFS}. Since
  $H'_i$ is gated, $v$ is the gate of $v_0$ in $H'_i$ and $u$ is the
  unique neighbor of $v$ in $H''_i$. We assert that $u$ is the unique
  neighbor of $v$ in $I(v_0,v)$. Suppose $I(v_0,v)$ contains a second
  neighbor $u''$ of $v$. Since $v$ is the gate of $v_0$ in $H'_i$ and
  $u''$ is closer to $v_0$ than $v$, $u''$ necessarily belongs to
  $H''_i$, a contradiction with the uniqueness of $u$.  Conversely,
  suppose that $v$ has only $u$ as a neighbor in $I(v_0,v)$ but $uv$
  is not the first edge of $E_i$ with respect to the BFS. This implies
  that the gate $x$ of $v_0$ in $H'_i$ is different from $v$.  Let
  $u'$ be a neighbor of $v$ in $I(x,v)$ and note that
  $I(x,v)\subseteq I(v_0,v)$. Since $v,x\in H'_i$ and $H'_i$ is
  convex, $u'$ belongs to $H'_i$. Since $u$ belongs to $H''_i$, we
  conclude that $u$ and $u'$ are two different neighbors of $v$ in
  $I(v_0,v)$, a contradiction.
\end{proof}
	
If $uv$ is not the first edge of its $\Theta$-class, the following
lemma shows how to find its $\Theta$-class:

\begin{lemma}\label{nonprime_traces}
  Let $uv$ be an edge of a median graph with $u\in \Lambda_<(v)$. If
  $v$ has a second predecessor $v'$, then there exists a square
  $u'uvv'$ in which $uv$ and $u'v'$ are opposite edges and
  $u'\in \Lambda_<(u)\cap \Lambda_< (v')$.
\end{lemma}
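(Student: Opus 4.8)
The statement asserts that if $uv$ is an edge with $u \in \Lambda_<(v)$ and $v$ has a second predecessor $v'$ (so $v' \in \Lambda_<(v)$, $v' \neq u$), then there is a square $u'uvv'$ with $uv \parallel u'v'$ and $u' \in \Lambda_<(u) \cap \Lambda_<(v')$. The first thing I would do is set $k := d(v_0, v)$, so that $d(v_0, u) = d(v_0, v') = k-1$ since $u, v' \in \Lambda(v)$ are neighbors of $v$ lying in $I(v_0, v)$. Now $u, v' \sim v$ and $d(v_0,u) = d(v_0,v') = d(v_0,v) - 1$, which is precisely the setup of the Quadrangle Condition (Lemma~\ref{quadrangle}) with the roles $u \leftarrow v_0$, the three vertices $v, v', v$... wait, I need two distinct neighbors: apply Lemma~\ref{quadrangle} with basepoint $v_0$ playing the role of ``$u$'', and $v, v'$ playing the roles of the two vertices $v, w$ adjacent to $z := v$ at distance $k$, while $d(v_0, v)-1 = k-1$. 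This yields a unique vertex $u'$ with $u' \sim u, v'$ and $d(v_0, u') = k-2$.

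Next I would argue that $u'uvv'$ is actually a square (a 4-cycle), i.e. that $u' \neq v$ and the four vertices are distinct with exactly these adjacencies. Distinctness of $u, v'$ is given; $u', v \neq$ each of $u, v'$ follows from the distance levels $k-2, k, k-1, k-1$ respectively, which are pairwise distinct enough ($u'$ at level $k-2$, $v$ at level $k$, $u$ and $v'$ at level $k-1$, and $u \neq v'$ by hypothesis). The cycle $u' \sim u \sim v \sim v' \sim u'$ has length $4$; since median graphs are bipartite and contain no induced $K_{2,3}$, and this 4-cycle cannot have a chord (a chord $uv'$ or $u'v$ would create an odd cycle or is excluded by the distance levels — $d(u,v') = 2$ since both are at level $k-1$ and bipartiteness forbids an edge, and $d(u', v) = 2$ likewise), it is an induced square, so $uv$ and $u'v'$ are opposite edges of it and hence $uv \mathbin{\Theta_0} u'v'$.

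Finally I would verify the membership claims $u' \in \Lambda_<(u)$ and $u' \in \Lambda_<(v')$. Since $d(v_0, u') = k-2 = d(v_0, u) - 1$ and $u' \sim u$, $u'$ is a predecessor of $u$; similarly $d(v_0, u') = d(v_0, v') - 1$ and $u' \sim v'$, so $u'$ is a predecessor of $v'$. Membership in the $<$-ordered lists $\Lambda_<(u)$ and $\Lambda_<(v')$ is then automatic, since $\Lambda_<(x)$ is by definition just the list $\Lambda(x)$ sorted by the BFS order $<$. I do not expect a serious obstacle here: the Quadrangle Condition does essentially all the work, and the only care needed is the routine check that the four vertices are genuinely distinct and induce a square rather than a shorter walk — which is handled cleanly by tracking the BFS distance levels and invoking bipartiteness. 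The one subtlety worth stating explicitly is that the hypothesis $v' \neq u$ is what prevents the ``square'' from degenerating.
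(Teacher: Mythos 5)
Your proposal is correct and follows essentially the same route as the paper: both apply the Quadrangle Condition with $v_0$ as the basepoint and $u,v'$ as the two neighbors of $v$ at distance $d(v_0,v)-1$, obtaining the common lower neighbor $u'$, from which the square and the membership $u'\in\Lambda_<(u)\cap\Lambda_<(v')$ follow immediately. (Note the small slip where you wrote ``$v,v'$ playing the roles of $v,w$'' instead of ``$u,v'$''; your stated conclusion $u'\sim u,v'$ is nonetheless the correct one.)
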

	
\begin{proof}
  Indeed, by the quadrangle condition, the vertices $u$ and $v'$ have
  a unique common neighbor $u'$ such that $u'uvv'$ is a square of $G$
  and $u'$ is closer to $v_0$ than $u$ and $v'$. Consequently,
  $u'\in \Lambda_<(u)\cap \Lambda_<(v')$ and $uv$ and $u'v'$ are
  opposite edges of $u'uvv'$.
\end{proof}
	
From Lemmas~\ref{prime_traces} and~\ref{nonprime_traces} we deduce the
following algorithm for computing the $\Theta$-classes of $G$. First,
run a BFS and return a BFS-ordering of the vertices and edges of $G$
and the ordered lists $\Lambda_<(v), v\in V$. Then consider the edges
of $G$ in the BFS-order. Pick a current edge $uv$ and suppose that
$u\in \Lambda_<(v)$. If $\Lambda_<(v)=\{u\}$, by
Lemma~\ref{prime_traces} $uv$ is the first edge of its $\Theta$-class,
thus create a new $\Theta$-class $E_i$ and insert $uv$ in
$E_i$. Otherwise, if $v$ has a second predecessor $v'$, then traverse
the ordered lists $\Lambda_<(u)$ and $\Lambda_<(v')$ to find their
unique common predecessor $u'$ (which exists by
Lemma~\ref{nonprime_traces}). Then insert the edge $uv$ in the
$\Theta$-class of the edge $u'v'$. Since the two sorted lists
$\Lambda_<(u)$ and $\Lambda_<(v')$ are of size at most $d$, their
intersection (that contains only $u'$) can be computed in time $O(d)$,
and thus the $\Theta$-class of each edge $uv$ of $G$ can be computed
in $O(d)$ time. Consequently, we obtain:
	
\begin{proposition}\label{BFS}
  The $\Theta$-classes of a median graph $G$ with $n$ vertices, $m$
  edges, and dimension $d$ can be computed in $O(dm)=O(d^2n)$ time.
\end{proposition}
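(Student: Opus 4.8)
The plan is to turn Lemmas~\ref{prime_traces} and~\ref{nonprime_traces} into an algorithm and then to analyze it. First I would run a BFS from the basepoint $v_0$. In $O(n+m)$ time this produces the BFS-ordering $<$ of the vertices (hence of the edges), together with, for each vertex $v\neq v_0$, the list $\Lambda_<(v)$ of predecessors of $v$ \emph{already sorted by} $<$: each time BFS scans an edge $uv$ with $d(v_0,u)<d(v_0,v)$ it appends $u$ to $\Lambda_<(v)$, and since the vertices enter the queue in the order $<$, the predecessors are appended in increasing order, so no extra sorting is needed. By Lemma~\ref{descendent_cube} every such list has size at most $d$.

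Next I would scan the edges of $G$ in increasing BFS-order, keeping for each already-processed edge a pointer to its $\Theta$-class. Let $uv$ be the current edge, with $u\in\Lambda_<(v)$. If $\Lambda_<(v)=\{u\}$, then by Lemma~\ref{prime_traces} the edge $uv$ is the first edge of its $\Theta$-class, and I open a fresh class containing only $uv$. Otherwise $v$ has a further predecessor $v'\in\Lambda_<(v)\setminus\{u\}$; by Lemma~\ref{nonprime_traces} the vertices $u$ and $v'$ have a unique common predecessor $u'$ with $u'uvv'$ a square of $G$, and I find $u'$ by a simultaneous linear scan through the two sorted lists $\Lambda_<(u)$ and $\Lambda_<(v')$ in $O(d)$ time; I then assign $uv$ to the $\Theta$-class of the edge $u'v'$. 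Since $u'$ is strictly closer to $v_0$ than $u$, we have $u'<u$, hence $u'v'<uv$ in the edge order, so $u'v'$ has already been processed and carries a valid pointer.

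Correctness is an induction on the edge order. If $uv$ is handled in the first case, Lemma~\ref{prime_traces} certifies that it really is the first edge of its class. If it is handled in the second case, the converse implication of Lemma~\ref{prime_traces} certifies that $uv$ is not the first edge of its class, so a link-back is the right move; and as $uv, u'v'$ are opposite edges of a square we have $uv\,\Theta_0\,u'v'$, whence $E(uv)=E(u'v')$, correctly computed by the induction hypothesis. Thus each $\Theta$-class is opened exactly once --- when its BFS-smallest edge is processed, which necessarily falls in the first case --- and each edge receives its correct class.

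Finally, for the running time: BFS and the construction of the sorted predecessor lists cost $O(n+m)=O(m)$, and each of the $m$ edges is processed in $O(d)$ time (the scan of two lists of size at most $d$), so the algorithm runs in $O(dm)$; by Corollary~\ref{upper_edges}, $m\le dn$, so $O(dm)=O(d^2n)$. The individual steps are routine given the two lemmas; the one place to be careful is the bookkeeping --- that BFS hands back the lists $\Lambda_<(v)$ already sorted (so that the $O(d)$ list intersection needs no sorting overhead) and that in the non-prime case the edge $u'v'$ referenced has already been processed --- both of which follow at once from the facts that vertices enter the queue in BFS order and that $u'$ lies strictly below $u$.
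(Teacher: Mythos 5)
Your proposal is correct and follows essentially the same route as the paper: BFS from $v_0$ producing the sorted predecessor lists $\Lambda_<(v)$, processing edges in BFS order, using Lemma~\ref{prime_traces} to open a new class when $\Lambda_<(v)=\{u\}$ and Lemma~\ref{nonprime_traces} plus an $O(d)$ merge of the two sorted lists to link back to the class of $u'v'$ otherwise. The extra bookkeeping you spell out (that the lists come out sorted for free and that $u'v'<uv$ so its class pointer is already set) is exactly what the paper's argument implicitly relies on.
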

	
\subsection{$\Theta$-classes via LexBFS}
The \emph{Lexicographic Breadth-First Search (LexBFS)}, proposed
in~\cite{RoTaLu}, is a refinement of BFS.  In BFS, if $u$ and $v$ have
the same parent, then the algorithm order them arbitrarily. Instead,
the LexBFS chooses between $u$ and $v$ by considering the ordering of
their second-earliest predecessors. If only one of them has a
second-earliest predecessor, then that one is chosen. If both $u$ and
$v$ have the same second-earliest predecessor, then the tie is broken
by considering their third-earliest predecessor, and so on (See
Fig.~\ref{fig-halfspaces}(c)).  The LexBFS uses a set partitioning
data structure and can be implemented in linear time~\cite{RoTaLu}.
In median graphs, the next lemma shows that it suffices to consider
only the earliest and second-earliest predecessors, leading to a
simpler implementation of LexBFS:
	
\begin{lemma}\label{LexBFSmedian}
  If $u$ and $v$ are two vertices of a median graph $G$, then
  $|\Lambda (u)\cap \Lambda (v)|\le 1$.
\end{lemma}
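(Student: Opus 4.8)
The plan is to argue by contradiction: suppose $u$ and $v$ have two distinct common predecessors $a$ and $b$ in $\overrightarrow{G}_{v_0}$. By definition of predecessor, $a,b \in \Lambda(u)$ means $a,b \sim u$ with $d(v_0,a) = d(v_0,b) = d(v_0,u) - 1$, and similarly $a, b \sim v$ with $d(v_0,a) = d(v_0,b) = d(v_0,v) - 1$. In particular $d(v_0,u) = d(v_0,v)$, so $u$ and $v$ lie at the same BFS level, hence $u \not\sim v$ (median graphs are bipartite). But then $u$ and $v$ are two common neighbors of the non-adjacent pair $a,b$, and moreover $a$ and $b$ are themselves two common neighbors of $u$ and $v$. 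So $\{a,u,b,v\}$ induces (at least) a $4$-cycle $a u b v$. This is a legitimate square, but the issue is whether these four vertices, together with possibly more, force an induced $K_{2,3}$, which median graphs forbid.

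The key step is to locate a third common neighbor of $u$ and $v$ or of $a$ and $b$ to produce the forbidden $K_{2,3}$, or alternatively to exploit the quadrangle condition to derive a contradiction directly. Here is the cleaner route: apply the Quadrangle Condition (Lemma~\ref{quadrangle}) with the vertex $v_0$ playing the role of $u$ in that lemma, and with $a, b$ playing the roles of $v, w$ and $u$ (the current vertex) playing the role of $z$. Indeed $a \sim u$, $b \sim u$, and $d(v_0,a) = d(v_0,b) = d(v_0,u) - 1 = k$; the Quadrangle Condition then asserts that there is a \emph{unique} vertex $x \sim a, b$ with $d(v_0,x) = k - 1$. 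Now $u$ is \emph{not} that vertex since $d(v_0,u) = k+1$. But we also have $v \sim a, b$ with $d(v_0,v) = k+1 \ne k-1$, so $v$ is not the vertex $x$ either; that alone is not yet a contradiction. What we do get is that $a$ and $b$ have a common neighbor $x$ strictly closer to $v_0$, hence $x \notin \{u, v\}$. Thus $a$ and $b$ have three common neighbors: $u$, $v$, and $x$. Therefore $\{a, b\} \cup \{u, v, x\}$ induces a $K_{2,3}$ in $G$ (it is a complete bipartite subgraph because $a \not\sim b$ by bipartiteness at level $k$, and $u, v, x$ are pairwise non-adjacent: $u \sim v$ is excluded as above, while $x$ at level $k-1$ is non-adjacent to $u, v$ at level $k+1$). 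This contradicts the fact that median graphs contain no induced $K_{2,3}$.

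The main obstacle I anticipate is making sure the five vertices genuinely induce a $K_{2,3}$ and are pairwise distinct — specifically, ruling out the degenerate possibilities that $x$ coincides with $u$ or $v$ (handled by the distance bookkeeping: $x$ is at level $k-1$ while $u, v$ are at level $k+1$) and confirming the non-edges $u \not\sim v$ and $a \not\sim b$ (both immediate from bipartiteness, since each pair sits at a common BFS level). Once those are in place, the $K_{2,3}$-freeness of median graphs closes the argument. A minor alternative, avoiding the $K_{2,3}$ axiom, would be to use the uniqueness clause of the Quadrangle Condition directly — but since that lemma only gives a \emph{unique} closer common neighbor of $a, b$ and does not by itself forbid additional common neighbors at the same level as $u$, routing through the $K_{2,3}$ obstruction is the most economical path.
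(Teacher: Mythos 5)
Your proposal is correct and follows essentially the same route as the paper: both apply the quadrangle condition at $v_0$ to the two common predecessors and the vertex $u$ to produce a third common neighbor of the predecessors at distance $k-1$, and then conclude via the forbidden induced $K_{2,3}$. The extra bookkeeping you include (verifying the five vertices are distinct and the relevant non-adjacencies hold) is just a more explicit version of what the paper leaves implicit.
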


\begin{proof}
  Let $x,x'$ be two distinct predecessors of $u$ and $v$. Since
  $x,x'\in \Lambda (u)\cap \Lambda (v)$, we have
  $d(v_0,u)=d(v_0,v)=d(v_0,x)+1=d(v_0,x')+1=k+1$. By
  Lemma~\ref{quadrangle}, there is a vertex $y\sim x,x'$ at distance
  $k-1$ from $v_0$. But then $x,x',u,v,y$ induce a forbidden
  $K_{2,3}$.
\end{proof}

A graph $G$ satisfies the \emph{fellow-traveler property} if for any
LexBFS ordering of the vertices of $G$, for any edge $uv$ with
$v_0 \notin\{u,v\}$, the parents $f(u)$ and $f(v)$ are adjacent.
	
\begin{theorem}\label{fellow-traveler}
  Any median graph $G$ satisfies the fellow-traveler property.
\end{theorem}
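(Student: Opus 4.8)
The plan is to fix a LexBFS ordering $<$ of $G$ and an edge $uv$ with $v_0\notin\{u,v\}$, and prove that $f(u)\sim f(v)$ by a case analysis on how $u$ and $v$ sit relative to one another in the orientation $\overrightarrow{G}_{v_0}$. Since $G$ is bipartite, either $d(v_0,u)=d(v_0,v)$ or the two differ by exactly one; assume without loss of generality $d(v_0,u)\le d(v_0,v)$. The first and easy case is $\overrightarrow{uv}$, i.e.\ $d(v_0,u)<d(v_0,v)$. Then $u\in\Lambda(v)$, so $f(v)$ is the smallest predecessor of $v$. If $f(v)=u$ then $f(u)\sim f(v)$ reduces to $f(u)\sim u$, which holds since $f(u)\in\Lambda(u)$. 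If $f(v)\ne u$, then $v$ has at least two predecessors, and by Lemma~\ref{nonprime_traces} applied with $v'=f(v)$ there is a square $u'uvf(v)$ with $u'\in\Lambda(u)\cap\Lambda(f(v))$; I would then argue that $u'$ is forced to be $f(u)$ (it is the unique common predecessor, and LexBFS picks $f(v)$ over any other second-earliest predecessor of $v$, which should pin down that $f(v)$'s own parent $f(f(v))$ relates correctly) — more cleanly, $f(u)$ and $f(f(v))$ must coincide with $u'$ or be adjacent to it, and chasing the square plus Lemma~\ref{LexBFSmedian} gives $f(u)\sim f(v)$.

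The substantial case is when $u$ and $v$ are at the same distance from $v_0$, so the edge $uv$ is not oriented. Here neither is a predecessor of the other, and I would look at $f(u)$ and $f(v)$, the smallest predecessors of $u$ and $v$ respectively. If $f(u)=f(v)$ we are done trivially. Otherwise, since $d(v_0,u)=d(v_0,v)=k+1$, both $f(u)$ and $f(v)$ lie at distance $k$ from $v_0$. The key move is to consider the median $z=m(f(u),f(v),v_0)$, or equivalently to use the quadrangle condition on the configuration $f(u)\sim u\sim v\sim f(v)$: I would show that $u$ and $v$ have a common predecessor, OR that $f(u)\sim f(v)$ directly. Suppose $u$ and $v$ had a common predecessor $x$; by Lemma~\ref{LexBFSmedian} it is unique, and then LexBFS, having assigned $u$ and $v$ the same parent candidate... no — $f(u)$ and $f(v)$ are the *smallest* predecessors, so a common predecessor $x$ would be a candidate for both, forcing $f(u)=f(v)=x$, contradiction. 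Hence $u$ and $v$ have no common predecessor. Now apply the quadrangle condition to $f(u)\sim u$, $v\sim f(v)$: we cannot directly, since $uv$ is an edge not a distance-2 pair through a common vertex. Instead I would use that $d(f(u),v)\le d(f(u),u)+1$ and the bipartiteness to get $d(f(u),v)\in\{1,2\}$; if it is $1$ then $f(u)\in\Lambda(v)$, and since $f(u)<u$ this would make $f(u)$ a predecessor of $v$ smaller than... we'd need $f(v)\le f(u)$; a symmetric argument with $d(f(v),u)$ handles the other side, and I expect the LexBFS tie-breaking rule to force $f(u)=f(v)$ in the remaining subcase, contradiction — so in fact $d(f(u),v)=2=d(f(v),u)$, and then $f(u),u,v,f(v)$ together with the median of $f(u),f(v)$ and any vertex realizing the distances produces (via the quadrangle condition, Lemma~\ref{quadrangle}) a common neighbor $w$ of $f(u)$ and $f(v)$ with $d(v_0,w)=k-1$; combined with the square on $u,v$ and the absence of induced $K_{2,3}$ this should collapse to $f(u)\sim f(v)$.

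The main obstacle, and the place where the LexBFS (as opposed to plain BFS) hypothesis must be used in an essential way, is exactly this unoriented case: one has to rule out the "twisted" configuration where $f(u)$ and $f(v)$ are non-adjacent vertices at distance $2$ whose common neighbor is *not* on the $v_0$-side. Plain BFS does not prevent this; the resolution must invoke the lexicographic comparison of the predecessor lists of $u$ and $v$. Concretely, I would argue: let $g$ be the common neighbor of $f(u)$ and $f(v)$ produced by the quadrangle condition; if $g$ is closer to $v_0$ we get the desired square after noting $f(u),f(v),g$ forces a $4$-cycle $f(u)\,u'\,f(v)\,g$... — the honest point is that one compares the second-earliest predecessors: whichever of $u,v$ was enumerated first by LexBFS had a lexicographically smaller predecessor sequence, and transporting this inequality across the square $u',u,v,v'$ (where $u'$ is a hypothetical common earlier-neighbor) yields a contradiction unless $f(u)$ and $f(v)$ are adjacent. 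I anticipate needing the Cube Condition (Lemma~\ref{cube}) or the Downward Cube Property (Lemma~\ref{descendent_cube}) as well, to control the interaction of the several squares through $u$ and through $v$, since $u$ and $v$ may each have several predecessors and one must organize them into cubes before the lexicographic argument bites. I would structure the final write-up as: (1) reduce to $f(u)\ne f(v)$; (2) dispose of the oriented edge case; (3) in the unoriented case, show no common predecessor and pass to $d(f(u),f(v))=2$; (4) run the lexicographic comparison to derive the contradiction, thereby forcing $f(u)\sim f(v)$.
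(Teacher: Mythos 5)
Your case split rests on a false premise: in a bipartite graph an edge $uv$ can never have $d(v_0,u)=d(v_0,v)$, since the closed walk $v_0\to u\to v\to v_0$ has length $d(v_0,u)+d(v_0,v)+1$ and must be even, so the two distances differ in parity (and hence by exactly one). The ``substantial case'' on which you spend most of your effort --- the unoriented edge, the passage to $d(f(u),f(v))=2$, the lexicographic comparison there --- is therefore vacuous. Every edge falls into your ``first and easy case'', and within it the only nontrivial subcase is $f(v)\neq u$, which you dispose of in one sentence (``chasing the square plus Lemma~\ref{LexBFSmedian} gives $f(u)\sim f(v)$''). That sentence is the entire theorem. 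Lemma~\ref{nonprime_traces} does hand you a square $u'uvf(v)$ with $u'\in\Lambda(u)\cap\Lambda(f(v))$, so \emph{some} predecessor of $u$ is adjacent to $f(v)$; but nothing forces $u'$ to be the \emph{smallest} predecessor $f(u)$ of $u$, and when $u'\neq f(u)$ your proposal contains no argument that $f(u)\sim f(v)$. (Lemma~\ref{LexBFSmedian} only bounds the number of common predecessors of two vertices; it does not relate the common predecessor $u'$ to the parent $f(u)$.)

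That missing step is where the paper's proof lives, and it is not short: one takes the first edge $u_1v_3$ in the LexBFS edge order violating the property, with $v_3\in\Lambda(u_1)$ and $f(u_1)=v_1\neq v_3$, and derives a contradiction by repeatedly applying the quadrangle and cube conditions to assemble a configuration of roughly a dozen auxiliary vertices, using the induction hypothesis on earlier edges, the BFS monotonicity properties of the parent map, and --- at two separate, essential points --- the LexBFS tie-breaking rule on second-earliest predecessors; it further requires a technical auxiliary lemma proved by a minimal-counterexample argument. Your instinct that the lexicographic rule must be invoked on predecessor lists is correct, but you deploy it only in the case that cannot occur; in the case that does occur, the proposal restates the goal rather than proving it.
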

	
\begin{proof}
  Let $<$ be an arbitrary LexBFS order of the vertices of $G$ and $f$
  be its parent map.  Since any LexBFS order is a BFS order, $<$ and
  $f$ satisfy the following properties of BFS:
\begin{enumerate}[({BFS}1)]
  \item if $u<v$, then $f(u) \leq f(v)$;
  \item if $f(u)<f(v)$, then $u<v$;
  \item if $v\neq v_0$, then
    $f(v)=\min_<\{u: u\sim v\}$;
  \item if $u<v$ and $v \sim f(u)$, then $f(v)=f(u)$.
  \end{enumerate}

  Notice also the following simple but useful property:
		
  \begin{lemma}\label{claim1}
    If $abcd$ is a square of $G$ with $d(v_0,c)=k$,
    $d(v_0,b)=d(v_0,d)=k+1, d(v_0,a)=k+2$ and $f(a)=b$, and the edge
    $ad$ satisfies the fellow-traveler property, then $f(d)=c$.
  \end{lemma}

  \begin{proof}
    By the fellow traveler property, $f(d) \sim f(a) = b$. If
    $f(d) \neq c$, then $a, b, c, d, f(d)$ induce a forbidden
    $K_{2,3}$.
  \end{proof}

  \begin{figure}
    \captionsetup[subfigure]{singlelinecheck=true}
    \centering
    \subcaptionbox{\label{fig:lexcara}}
    {\includegraphics[scale=0.6,page=1]{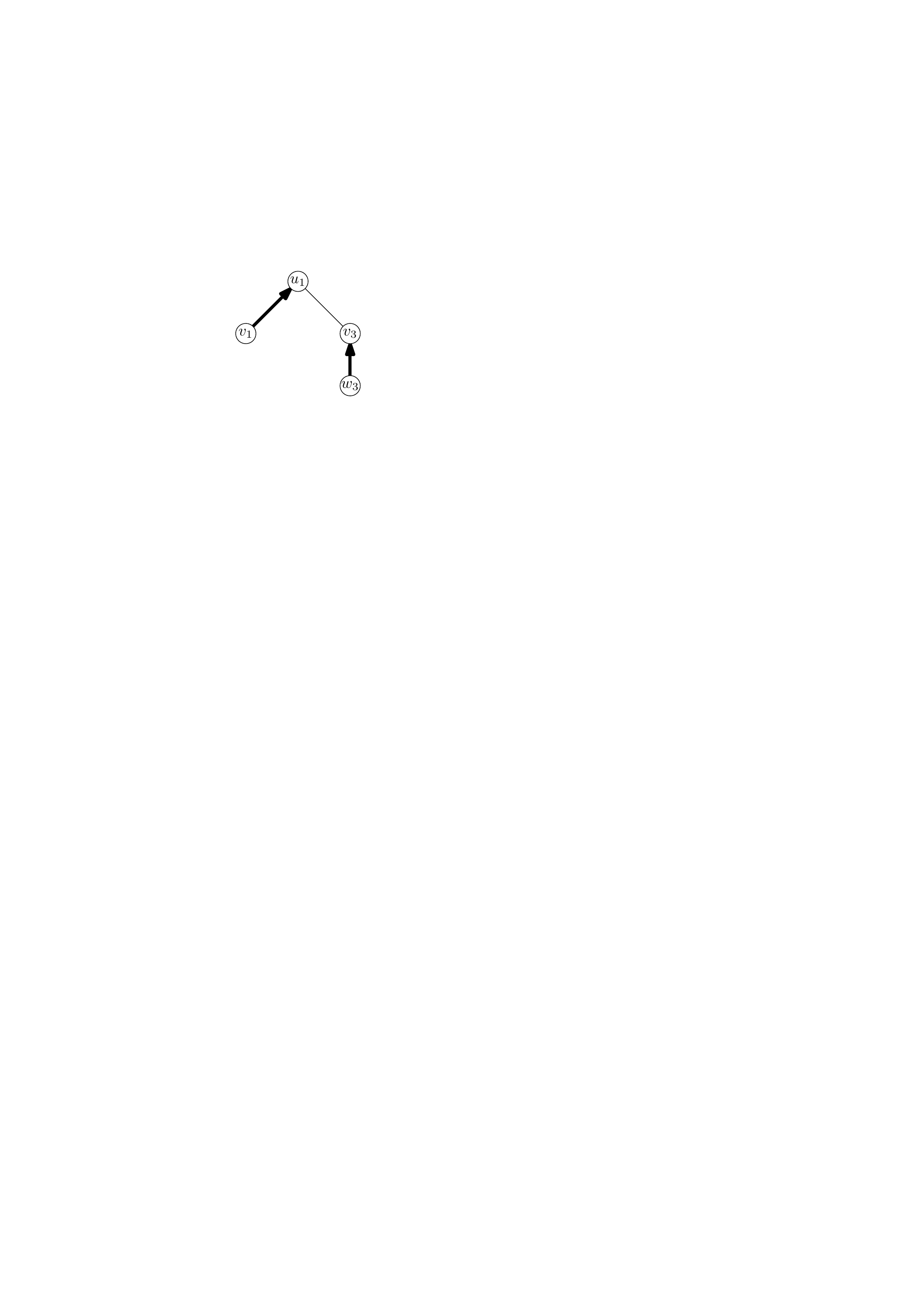}}
    \hfill \subcaptionbox{\label{fig:lexcarb}}
    {\includegraphics[scale=0.6,page=2]{Images/FT-Property.pdf}}
    \hfill \subcaptionbox{\label{fig:lexcarc}}
    {\includegraphics[scale=0.6,page=3]{Images/FT-Property.pdf}}
    \hfill \subcaptionbox{\label{fig:lexard}}
    {\includegraphics[scale=0.6,page=4]{Images/FT-Property.pdf}}
    \hfill \subcaptionbox{\label{fig:lexcare}}
    {\includegraphics[scale=0.6,page=5]{Images/FT-Property.pdf}}
    \vspace{3ex}
    
    \subcaptionbox{\label{fig:lexcarf}}
    {\includegraphics[scale=0.6,page=6]{Images/FT-Property.pdf}}
    \hfill \subcaptionbox{\label{fig:lexcarg}}
    {\includegraphics[scale=0.6,page=8]{Images/FT-Property.pdf}}
    \hfill \subcaptionbox{\label{fig:lexcarh}}
    {\includegraphics[scale=0.6,page=9]{Images/FT-Property.pdf}}
\hfill \subcaptionbox{\label{fig:lexcarj}}
    {\includegraphics[scale=0.6,page=11]{Images/FT-Property.pdf}}	
    \caption{Animated proof of
      Theorem~\ref{fellow-traveler}.}\label{FT-Property}
  \end{figure}
  
  We prove the fellow-traveler property by induction on the total
  order on the edges of $G$ defined by $<$. The proof is illustrated
  by several figures (the arcs of the parent map are represented in
  bold).  We use the following convention: all vertices having the
  same distance to the basepoint $v_0$ will be labeled by the same
  letter but will be indexed differently; for example, $w_1$ and $w_2$
  are two vertices having the same distance to $v_0$.
		
  Suppose by way of contradiction that $e=u_1v_3$ with $v_3<u_1$ is
  the first edge in the order $<$ such that the parents $f(u_1)$ and
  $f(v_3)$ of $u_1$ and $v_3$ are not adjacent. Then necessarily
  $f(u_1)\ne v_3$.  Set $v_1=f(u_1)$ and $w_3=f(v_3)$
  (Fig.~\ref{fig:lexcara}).  Since $d(v_0, v_1)=d(v_0,v_3)$ and
  $u_1\sim v_1,v_3$, by the quadrangle condition $v_1$ and $v_3$ have
  a common neighbor at distance $d(v_0,v_1)-1$ from $v_0$. This vertex
  cannot be $w_3$, otherwise $f(u_1)$ and $f(v_3)$ would be
  adjacent. Therefore there is a vertex $w_4\sim v_1,v_3$ at distance
  $d(v_0,v_1)-1$ from $v_0$ (Fig.~\ref{fig:lexcarb}). By induction
  hypothesis, the parent $x_3=f(w_4)$ of $w_4$ is adjacent to
  $w_3=f(v_3)$.  Since $u_1\sim v_1=f(u_1),v_3$ and
  $v_3\sim w_3=f(v_3),w_4$, by (BFS3) we conclude that $v_1 < v_3$ and
  $w_3<w_4$.  By (BFS2), $f(v_1)\leq f(v_3)$, whence $f(v_1)\leq w_3$
  and since $f(v_1)\neq f(v_3)$ (otherwise, $f(u_1)\sim f(v_3)$), we
  deduce that $f(v_1)< w_3<w_4$. Hence $f(v_1) \neq w_4$. Set
  $w_1=f(v_1)$. By the induction hypothesis, $f(v_1)=w_1$ is adjacent
  to $f(w_4)=x_3$ (Fig.~\ref{fig:lexcarc}). By the cube condition
  applied to the squares $w_4v_1w_1x_3$, $w_4v_1u_1v_3$, and
  $w_4v_3w_3x_3$ there is a vertex $v_2$ adjacent to $u_1$, $w_1$, and
  $w_3$.  Since $u_1\sim v_2$ and $f(u_1)=v_1$, by (BFS3) we obtain
  $v_1<v_2$. Since $v_2$ is adjacent to $w_1$ and $w_1=f(v_1)$, by
  (BFS4) we obtain $f(v_2)=f(v_1)=w_1$, and by (BFS2),
  $v_2<v_3$. Since $f(v_2)=w_1$, by Lemma~\ref{claim1} for $v_2w_1x_3w_3$, we obtain $f(w_3)=x_3$ (Fig.~\ref{fig:lexard}).
  Since $v_1<v_2$, $f(v_1)=f(v_2)=w_1$, and $v_2\sim w_1,w_3$, by
  LexBFS $v_1$ is adjacent to a predecessor different from $w_1$ and
  smaller than $w_3$. Since $w_3<w_4$, this predecessor cannot be
  $w_4$. Denote by $w_2$ the second smallest predecessor of $v_1$
  (Fig.~\ref{fig:lexcare}) and note that $w_1 < w_2 < w_3 <w_4$.
		
  By the quadrangle condition, $w_2$ and $w_4$ are adjacent to a
  vertex $x_5$, which is necessarily different from $x_3$ because $G$
  is $K_{2,3}$-free. By the induction hypothesis, $f(w_2)$ and
  $f(v_1)=w_1$ are adjacent.  Then $f(w_2)\ne x_3,x_5$, otherwise we
  obtain a forbidden $K_{2,3}$. Set $f(w_2)=x_2$. Analogously,
  $f(x_5)=y_5$ and $f(w_2)=x_2$ are adjacent as well as $f(x_5)=y_5$
  and $f(w_4)=x_3$ (Fig.~\ref{fig:lexcarf}). By (BFS1),
  $x_2 = f(w_2) < f(w_3) = x_3$ and by (BFS3), $x_3 = f(w_4) < x_5$.
  Since $w_3<w_4$ with $f(w_3)=f(w_4)$ and $w_4$ is adjacent to $x_5$,
  by LexBFS $w_3$ must have a predecessor different from $x_3$ and
  smaller than $x_5$. This vertex cannot be $x_2$ by (BFS3) since
  $f(w_3) = x_3$.  Denote this predecessor of $w_3$ by $x_4$ and
  observe that $x_2 <x_3<x_4<x_5$. By the induction hypothesis, the
  parent of $x_4$ is adjacent to $f(w_3)=x_3$. Let $y_4=f(x_4)$.
		
  If $y_4=y_5$, applying the cube condition to the squares
  $x_3w_3x_4y_5$, $x_3w_4x_5y_5$, and $x_3w_4v_3w_3$ we find a vertex
  $w$ adjacent to $x_4$, $v_3$, and $x_5$.  Applying the cube
  condition to the squares $w_4v_3wx_5$, $w_4v_1w_2x_5$, and
  $w_4v_1u_1v_3$ we find a vertex $v$ adjacent to $u_1$, $w_2$, and
  $w$.  Since $v\sim w_2$, by (BFS3) $f(v)\leq w_2<w_3=f(v_3)$, hence
  by (BFS2) we obtain $v<v_3$. Therefore we can apply the induction
  hypothesis, and by Lemma~\ref{claim1} for $u_1v_1w_2v$, we deduce that $f(v)=w_2$.  By Lemma~\ref{claim1}
for $v_3w_3x_4w$, we deduce that $f(w)=x_4$
  (Fig.~\ref{fig:lexcarg}). Applying the induction hypothesis to the
  edge $vw$ we have that $f(v)=w_2$ is adjacent to $f(w) = x_4$,
  yielding a forbidden $K_{2,3}$ induced by $v, x_5, x_4, w, w_2$
  (Fig.~\ref{fig:lexcarg}). All this shows that $y_4\neq y_5$. By the
  quadrangle condition, $y_5$ and $y_4$ have a common neighbor $z_3$
  (Fig.~\ref{fig:lexcarh}).
		
Recall that $x_2<x_3<x_4<x_5$, and note that by (BFS1),
  $y_4=f(x_4)<f(x_5)=y_5$.  We denote by $H$ the subgraph of $G$
  induced by the vertices
  $V'=\{w_1,x_2,x_3,x_4,x_5,y_4,y_5,z_3\}$. The set of edges of $H$ is
  $E'=\{z_3y_4, z_3y_5, y_4x_3, y_4x_4, y_5x_2, y_5x_3, y_5x_5,
  x_2w_1, x_3w_1\}$.  To conclude the proof, we use the following technical
  lemma.
  \begin{lemma}\label{claim2}
    Let $H=(V',E')$ (Fig.~\ref{fig-lem-aux-lexBFS-a}) be an induced
    graph of $G$, where
    $d(v_0, w_1)=d(v_0, x_2)+1=\cdots=d(v_0, x_5)+1=d(v_0,
    y_4)+2=d(v_0, y_5)+2=d(v_0, z_3)+3$ and $f(x_5)=y_5$ and
    $f(x_4)=y_4$, such that $x_2<x_3<x_4<x_5$ and $y_4<y_5$. If $G$
    satisfies the fellow-traveler property up to distance
    $d(v_0, w_1)$, then there exists a vertex $x_0$ such that
    $x_0<x_2$ and $x_0\sim w_1,y_4$ (Fig.~\ref{fig-lem-aux-lexBFS-b}).
  \end{lemma}

  \begin{figure}[h]
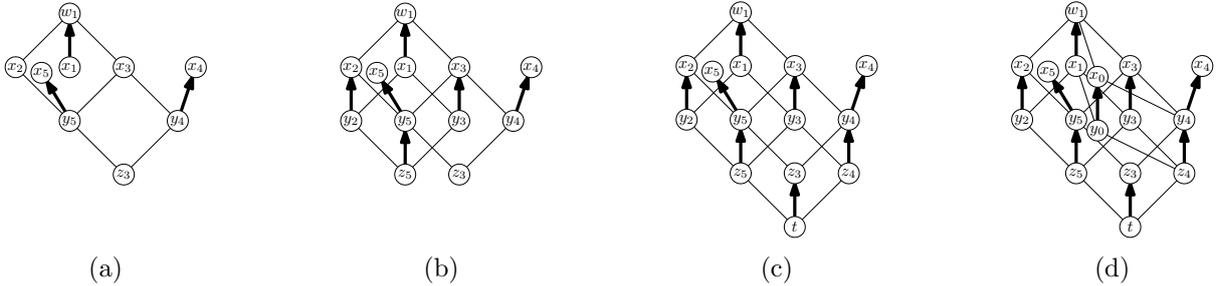

    \captionsetup[subfigure]{singlelinecheck=true}
    \centering
    \subcaptionbox{\label{fig-preuve-aux-lexBFS-c}}
    {\includegraphics[scale=0.6,page=16]{Images/FT-Property.pdf}}\hfill
    \subcaptionbox{\label{fig-preuve-aux-lexBFS-d}}
    {\includegraphics[scale=0.6,page=17]{Images/FT-Property.pdf}}\hfill
    \subcaptionbox{\label{fig-preuve-aux-lexBFS-e}}
    {\includegraphics[scale=0.6,page=18]{Images/FT-Property.pdf}}\hfill
    \subcaptionbox{\label{fig-preuve-aux-lexBFS-g}}
    {\includegraphics[scale=0.6,page=20]{Images/FT-Property.pdf}}
    \caption{To the proof of Lemma~\ref{claim2}.}\label{fig-preuve-aux-lexBFS}
  \end{figure}

  \begin{proof}[Proof of Lemma~\ref{claim2}]
Consider a median graph $G$ for which Lemma~\ref{claim2} does not
    hold. Among all induced subgraphs of $G$ satisfying the conditions
    of the lemma but for which there does not exist a vertex
    $x_0\ne x_3\sim w_1,y_4$ with $x_0<x_2$, we select a copy of $H$
    minimizing the distance $d(v_0,w_1)$. First, suppose that
    $f(w_1)=x_2$. By Lemma~\ref{claim1} for $w_1x_2y_5x_3$, we deduce $f(x_3)=y_5$. Then, by (BFS1), we
    get $y_5=f(x_3)\leq f(x_4) \leq f(x_5)=y_5$. Hence, $f(x_4)=y_5$,
    a contradiction. Therefore $f(w_1) \neq x_2$. Since $G$ satisfies
    the fellow-traveler property up to distance $d(v_0,w_1)$, we get
    $f(x_2)\sim f(w_1)$. Let $x_1$ be the parent of $w_1$
    (Fig.~\ref{fig-preuve-aux-lexBFS-c}) and let $y_2 = f(x_2)$ be the
    parent of $x_2$. To avoid an induced $K_{2,3}$, $y_2$ cannot
    coincide with $y_5$. Moreover, $y_2$ does not coincide with $y_4$
    because otherwise $x_1$ would be the common neighbor of $w_1$ and
    $y_4$ required by Lemma~\ref{claim2}.  Let $z_5$ be the parent of
    $y_5$. By the fellow-traveler property, $z_5 = f(y_5)$ is adjacent
    to $y_2=f(x_2)$.  By the cube condition applied to the squares
    $x_2w_1x_1y_2$, $x_2w_1x_3y_5$, and $x_2y_2z_5y_5$, we find a
    neighbor $y_3$ of $x_3$, $x_1$, and $z_5$. If $z_5 = z_3$, then
    $y_3=y_4$ (otherwise we get a $K_{2,3}$) and $x_1$ is the neighbor
    of $w_1$ and $y_4$ required by Lemma~\ref{claim2}, a
    contradiction. Thus $y_3 \neq y_4$ and $z_5 \neq z_3$. Moreover,
    by Lemma~\ref{claim1} for $w_1x_1y_3x_3$, $y_3=f(x_3)$ (see
    Fig~\ref{fig-preuve-aux-lexBFS-d}). Let $t$ be the parent of
    $z_3$. By induction hypothesis, $z_5=f(y_5)\sim
    t=f(z_3)$. Applying the cube condition to the squares
    $y_5z_3tz_5$, $y_5x_3y_3z_5$, and $y_5x_3y_4z_3$, we find a
    neighbor $z_4$ of $t$, $y_3$ and $y_4$. By Lemma~\ref{claim1}
    for $x_3y_3z_4y_4$, $f(y_4)=z_4$ (Fig.~\ref{fig-preuve-aux-lexBFS-e})
    and by (BFS1), $x_2<x_3<x_4<x_5$ implies
    $y_2=f(x_2)< y_3=f(x_3)<y_4=f(x_4)< y_5=f(x_5)$. Since
    $d(x_1, v_0)<d(w_1,v_0)$, our choice of $H$ implies the existence
    of a neighbor $y_0$ of $x_1$ and $z_4$ such that $y_0<y_2$
    (Fig.~\ref{fig-preuve-aux-lexBFS-g}). Applying the cube condition
    to the squares $y_3x_1y_0z_4$, $y_3x_1w_1x_3$ and $y_3x_3y_4z_4$,
    we find a neighbor $x_0$ of $w_1$, $y_4$, and $y_0$. By (BFS3),
    $f(x_0)\leq y_0 < y_2 = f(x_2)$ and thus, by (BFS2), $x_0<x_2$
    (Fig.~\ref{fig-preuve-aux-lexBFS-g}), a contradiction with the
    choice of $H$.
  \end{proof}
		
  Since $G$ contains a subgraph $H$ satisfying the conditions of
  Lemma~\ref{claim2}, there exists a vertex $x_0$ such that $x_0<x_2$
  and $x_0\sim w_1,y_4$ (Fig.~\ref{fig:lexcarj}).  By the cube
  condition applied to the squares $x_3w_1x_0y_4$, $x_3w_1v_2w_3$, and
  $x_3w_3x_4y_4$, there exists $w_0\sim x_0,v_2,x_4$
  (Fig.~\ref{fig:lexcarj}). Since $x_0$ is adjacent to $w_0$, by
  (BFS3) $f(w_0)\leq x_0<x_2=f(w_2)$. By (BFS2), $w_0<w_2$. Recall
  that $f(v_1)=w_1=f(v_2)$ and that $w_2$ is the second-earliest
  predecessor of $v_1$. Since $w_0<w_2$ and $w_0$ is a predecessor of
  $v_2$, by LexBFS we deduce that $v_2<v_1$. Since $v_1$ and $v_2$ are
  both adjacent to $u_1$ we obtain a contradiction with
  $f(u_1)=v_1$. This contradiction shows that any median graph $G$
  satisfies the fellow-traveller property. This finishes the proof of
  Theorem~\ref{fellow-traveler}.
\end{proof}

\begin{figure}
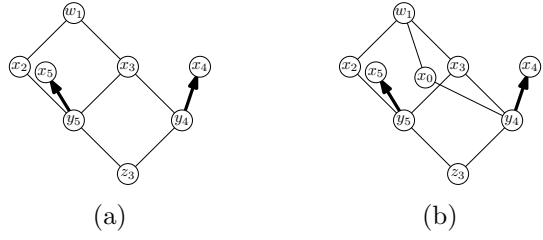

  \captionsetup[subfigure]{singlelinecheck=true}
  \centering
  \subcaptionbox{\label{fig-lem-aux-lexBFS-a}}
  {\includegraphics[scale=0.6,page=12]{Images/FT-Property.pdf}}
  \qquad\qquad  \subcaptionbox{\label{fig-lem-aux-lexBFS-b}}
  {\includegraphics[scale=0.6,page=13]{Images/FT-Property.pdf}}
  \caption{The induced subgraph $H$ in Lemma~\ref{claim2}.}\label{fig-lem-aux-lexBFS}
\end{figure}
	
We now explain how to implement LexBFS in a median graph $G$ in a
simpler way than in the general case.  By Lemma~\ref{LexBFSmedian}, it
suffices to keep for each vertex $v$ only its earliest and
second-earliest predecessors, i.e., if $v$ and $w$ have the same earliest predecessor, then LexBFS will
order $v$ before $w$ iff either the second-earliest predecessor of $v$
is ordered before the second earliest predecessor of $w$ or if $v$ has
a second-earliest predecessor and $w$ does not. Similarly to BFS,
LexBFS can be implemented using a single queue $Q$. Additionally to
BFS, each already labeled vertex $u$ must store the position $\pi(u)$
in $Q$ of the earliest vertex of $Q$ having $u$ as a single
predecessor.  In $Q$, all vertices having $u$ as their parent occur
consecutively. Additionally, among these vertices, the ones having a
second predecessor must occur before the vertices having only $u$ as a
predecessor and the vertices having a second predecessor must be
ordered according to that second predecessor. To ensure this property,
we use the following rule: if a vertex $v$ in $Q$, currently having
only $u$ as a predecessor, discovers yet another predecessor $u'$,
then $v$ is swapped in $Q$ with the vertex $\pi(u)$, and $\pi(u)$ is
updated. Clearly this is an $O(m)$ implementation.

\begin{algorithm}[ht]\caption{$\Theta$-classes via LexBFS}\label{alg:calcthlin}
  \DontPrintSemicolon
  \SetAlgoVlined
  \KwData{$G=(V,E)$, $v_0\in V$}
  \KwResult{The $\Theta$-classes $\Theta$ of $G$ ordered by
    increasing distance from $v_0$}
                      
  \Begin{
    $\Theta \leftarrow \emptyset$\;
    $(E, \Lambda, f) \leftarrow LexBFS (G,v_0)$\;
    \tcp{$E$ : the list of edges ordered by LexBFS}
    \tcp{$\Lambda:V\mapsto 2^V$ such that $\Lambda(v)$ is the set of
      predecessors of $v$}
    \tcp{$f:V\mapsto V$ such that $f(v)$ is the parent of $v$}
    \ForEach{$uv \in E$ }
    {
      \uIf {$|\Lambda[v]|=1$}
      {
        Add a new $\Theta$-class $\{uv\}$ to $\Theta$
        \tcp*{first edge in the $\Theta$-class}
      }
      \uElseIf{$f(v)\neq u$}{
        Add the edge $uv$ to the $\Theta$-class of the edge
        $f(u)f(v)$
      }
      \Else{
        Pick any $x$ in $\Lambda(v)\setminus\{u\}$\;
        Add the edge $uv$ to the $\Theta$-class of the edge
        $f(x)x$\;
      }
    }
    \Return{$\Theta$}
  }
\end{algorithm}

Now we use Theorem~\ref{fellow-traveler} to compute the
$\Theta$-classes of $G$. We run LexBFS and return a LexBFS-ordering of $V(G)$ and
$E(G)$ and the ordered lists $\Lambda_<(v), v\in V$. Then consider the edges
of $G$ in the LexBFS-order. Pick the first unprocessed edge $uv$ and
suppose that $u\in \Lambda_<(v)$.  If $\Lambda_<(v)=\{u\}$, by
Lemma~\ref{prime_traces}, $uv$ is the first edge of its
$\Theta$-class, thus we create a new $\Theta$-class $E_i$ and insert
$uv$ as the first edge of $E_i$. We call $uv$ the \emph{root} of $E_i$
and keep $d(v_0,v)$ as the distance from $v_0$ to $H'_i$. Now suppose
$|\Lambda_<(v)|\ge 2$.  We consider two cases: (i) $u\neq f(v)$ and
(ii) $u=f(v)$. For (i), by
Theorem~\ref{fellow-traveler},
$uv$ and $f(u)f(v)$ are opposite edges of a square. Therefore $uv$
belongs to the $\Theta$-class of $f(u)f(v)$ (which was already
computed because $f(u)f(v)<uv$). In order to recover the $\Theta$-class of the edge $f(u)f(v)$ in
constant time, we use a (non-initialized) matrix $A$ whose rows and
columns correspond to the vertices of $G$ such that $A[x,y]$ contains
the $\Theta$-class of the edge $xy$ when $x$ and $y$ are adjacent and
the $\Theta$-class of $xy$ has already been computed and $A[x,y]$ is
undefined if $x$ and $y$ are not adjacent or if the $\Theta$-class of
$xy$ has not been computed yet. For (ii), pick any
$x\in \Lambda_<(v), x\ne u$. By Theorem~\ref{fellow-traveler},
$uv=f(v)v$ and $f(x)x$ are opposite edges of a square. Since $f(x)x$
appears before $uv$ in the LexBFS order, the $\Theta$-class of $f(x)x$
has already been computed, and the algorithm inserts $uv$ in the
$\Theta$-class of $f(x)x$. Each $\Theta$-class $E_i$ is totally ordered by the order in which the
edges are inserted in $E_i$.  Consequently, we obtain:

\begin{theorem}\label{LexBFS}
  The $\Theta$-classes of a median graph $G$ can be computed in $O(m)$ time.
\end{theorem}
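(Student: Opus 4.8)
The plan is to show that the algorithm sketched above (Algorithm~\ref{alg:calcthlin}) is correct and runs in linear time, the correctness being essentially a packaging of Lemma~\ref{prime_traces} and Theorem~\ref{fellow-traveler}. First I would run LexBFS once; by the linear-time implementation described before the algorithm (keeping for each vertex only its earliest and second-earliest predecessors, which suffices by Lemma~\ref{LexBFSmedian}, and using the swap-with-$\pi(u)$ trick to keep $Q$ in the right order), this produces in $O(m)$ time the LexBFS ordering $<$ of the vertices and edges of $G$, the parent map $f$, and the (size-$\le d$) predecessor lists $\Lambda_<(v)$. I would then process the edges $uv$ (with $u\in\Lambda_<(v)$) in increasing LexBFS order and argue by induction that when edge $uv$ is processed, every edge strictly before it in the order has already been correctly assigned to its $\Theta$-class.

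For the correctness of a single step there are three cases, exactly as in the algorithm. If $\Lambda_<(v)=\{u\}$, then by Lemma~\ref{prime_traces} the edge $uv$ is the first edge of its $\Theta$-class $E_i$ to be discovered, so we open a new class; record $uv$ as its root and keep $d(v_0,v)$ as $d(v_0,H'_i)$. If $|\Lambda_<(v)|\ge 2$ and $u\ne f(v)$, then by Theorem~\ref{fellow-traveler} the parents $f(u)$ and $f(v)$ are adjacent; since $v\sim f(u)$ would force $f(v)=f(u)$ by (BFS4) (contradicting $d(v_0,f(u))=d(v_0,u)=d(v_0,v)-1$ wait---rather: $f(u)$ and $f(v)$ are both at distance $d(v_0,u)-1$, $u$ and $v$ at distance $d(v_0,u)$, and they form the square $f(u)\,u\,v\,f(v)$), the four vertices $f(u),u,v,f(v)$ induce a square in which $uv$ and $f(u)f(v)$ are opposite; since $f(u)f(v)<uv$ in the edge order, its class is already known and we add $uv$ to it. If $|\Lambda_<(v)|\ge 2$ and $u=f(v)$, pick any other predecessor $x\in\Lambda_<(v)\setminus\{u\}$; by Theorem~\ref{fellow-traveler} applied to the edge $xv$ (whose parents are $f(x)$ and $f(v)=u$), the vertices $f(x),x,v,u$ induce a square with $uv$ and $f(x)x$ opposite, and since $x<v$ and $f(x)<x$ the edge $f(x)x$ precedes $uv$, so its class is already computed and we add $uv$ to it. In all three cases $uv$ is placed in the class of an $\Theta_0$-related edge, so by transitivity it lands in the correct $\Theta$-class; conversely every edge of a given $\Theta$-class is reached this way since $\Theta$ is the transitive closure of $\Theta_0$ and each class, being connected along $\Theta_0$-chains rooted at its first discovered edge, is exhausted.

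For the running time, LexBFS costs $O(m)$. Processing the edges costs $O(1)$ per edge if we can recover the class of $f(u)f(v)$ (resp.\ $f(x)x$) in constant time: for this I would use a lazily-initialized $|V|\times|V|$ matrix $A$ with $A[x,y]$ holding the $\Theta$-class pointer of edge $xy$ once computed (lazy initialization via the standard time-stamp trick so that allocation is $O(1)$ and each access is $O(1)$), and in case (iii) scanning $\Lambda_<(v)$ for an element $\ne u$ costs $O(|\Lambda_<(v)|)=O(d)$, but $\sum_v |\Lambda_<(v)|=m$, so this is $O(m)$ overall. Hence the total is $O(m)$. The main obstacle---and the only genuinely nontrivial ingredient---is Theorem~\ref{fellow-traveler}, which guarantees in cases (ii) and (iii) that the parents in question really are adjacent and hence that $f(u)f(v)$ (resp.\ $f(x)x$) is a bona fide opposite edge of $uv$ in a square; everything else is bookkeeping. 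A secondary point to verify carefully is that the LexBFS implementation restricted to the two earliest predecessors genuinely reproduces the full LexBFS order on median graphs, which is exactly the content of Lemma~\ref{LexBFSmedian} together with the observation that ties in LexBFS on a bipartite graph are broken at the second predecessor once the first predecessors agree (since $|\Lambda(u)\cap\Lambda(v)|\le 1$ means two vertices sharing their earliest predecessor cannot also share a second one unless they have the same predecessor set of size $\le 2$, in which case LexBFS indeed decides on the second).
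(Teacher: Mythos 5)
Your proposal is correct and follows essentially the same route as the paper: a linear-time LexBFS (justified by Lemma~\ref{LexBFSmedian}), then a pass over the edges in LexBFS order with the three cases (new class via Lemma~\ref{prime_traces}; $u\neq f(v)$ handled by the square $f(u)\,u\,v\,f(v)$ from Theorem~\ref{fellow-traveler}; $u=f(v)$ handled by the square $f(x)\,x\,v\,u$ for another predecessor $x$), using the non-initialized matrix $A$ for constant-time class lookup. The only differences are presentational: you spell out the induction on the edge order and the exhaustion of each $\Theta$-class along $\Theta_0$-chains, which the paper leaves implicit.
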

	
\section{The median of $G$}\label{sec:medianWiener}
	
We use Theorem~\ref{LexBFS} to compute the median set
$\Med_w(G)$ of a median graph $G$ in $O(m)$ time.
We also use the existence of peripheral	halfspaces
and the majority rule. 
\subsection{Peripheral peeling} The order $E_1,E_2,\ldots, E_q$ in which the $\Theta$-classes $E_i$ of
$G$ are constructed corresponds to the order of the distances from
$v_0$ to $H'_i$: if $i<j$ then $d(v_0,H'_i)\le d(v_0,H'_j)$ (recall
that $v_0\in H''_i$). By Lemma~\ref{peripheral}, the halfspace $H'_q$
of $E_q$ is peripheral.  If we contract all edges of $E_q$ (i.e., we
identify the vertices of $H'_q=\partial H'_q$ with their neighbors in
$\partial H''_q$) we get a smaller median graph
$\tG=H''_q$; $\tG$ has $q-1$ $\Theta$-classes $\tE_1,\ldots,\tE_{q-1}$, where
$\tE_i$ consists of the edges of $E_i$ in $\tG$. The halfspaces of $\tG$ have the form $\tH'_i=H'_i\cap H''_q$
and $\tH''_i=H''_i\cap H''_q$. Then $\tE_1,\ldots,\tE_{q-1}$ corresponds to the ordering of the
halfspaces $\tH'_1,\ldots,\tH'_{q-1}$ of $\tG$ by their distances to
$v_0$. Hence the last halfspace $\tH'_{q-1}$ is peripheral in $\tG$. Thus the ordering $E_q,E_{q-1},\ldots,E_1$ of the $\Theta$-classes of $G$
provides us with a set $G_q=G,G_{q-1}=\tG,\ldots,G_0$ of median graphs
such that $G_0$ is a single vertex and for each $i\ge 1$, the
$\Theta$-class $E_i$ defines a peripheral halfspace in the graph $G_i$
obtained after the successive contractions of the peripheral
halfspaces of $G_q,G_{q-1},\ldots, G_{i+1}$ defined by
$E_q, E_{q-1}, \ldots, E_{i+1}$.  We call $G_q,G_{q-1},\ldots,G_0$ a
\emph{peripheral peeling} of $G$.
Since each vertex of $G$ and each $\Theta$-class is contracted only
once, we do not need to explicitly compute the restriction of each
$\Theta$-class of $G$ to each $G_i$. For this it is enough to keep for each vertex $v$ a variable
indicating whether this vertex belongs to an already contracted
peripheral halfspace or not.
Hence, when the $i$th $\Theta$-class must be contracted, we simply
traverse the edges of $E_i$ and select those edges whose both ends are
not yet contracted.

\subsection{Computing the weights of the halfspaces of
  $G$}\label{ssec:weight-halfpaces}
	
We use a peripheral peeling $G_q,G_{q-1},\ldots,G_0$ of $G$ to compute
the weights $w(H'_i)$ and $w(H''_i)$, $i=1,\ldots,q$ of all halfspaces
of $G$. As above, let $\tG$ be obtained from $G$ by contracting the
$\Theta$-class $E_q$. Consider the weight function $\tw$ on
$\tG=H''_q$ defined as follows:
\begin{equation}\label{weightsecond}
  \tw(v'') =
  \begin{cases}
    w(v'') + w(v') & \text{if } v''\in \partial H''_q, v'\in H'_q, \text{ and }
    v''\sim v',\\
    w(v'') & \text{if } v''\in H''_q \setminus \partial H''_q.\\
  \end{cases}
\end{equation}
	
\begin{algorithm}[]\caption{ComputeWeightsOfHalfspaces($G,w,\Theta$)}\label{alg:compweight} 
\SetKw{Add}{add}
  \SetKw{To}{to}
  \DontPrintSemicolon
  \SetAlgoVlined
  \KwData{A median graph $G=(V,E)$, a weight function $w : V\rightarrow
    \R^+n\cup \{0\}$, the $\Theta$-classes $\Theta = (E_1, \hdots,E_q)$ of
    $G$ ordered by increasing distance to the basepoint $v_0$.}
  \KwResult{The list of the pairs of weights
    $\left((w(H''_q),w(H'_q)), \ldots, (w(H''_1),w(H'_1))\right)$}
  \Begin{
\uIf{$|V| = 1$}{
      \Return the empty list
    }
    \Else
    {			
      Let $H'$ and $H''$ be the two complementary halfspaces defined
      byn$E_q$ ($v_0\in H''$)\;		
      $w(H') \leftarrow \sum_{v\in H'} w(v)$\;
      $w(H'') \leftarrow w(V) - w(H')$\;
\ForEach{$v'v''\in E_q$ with $v' \in H'$ and $v'' \in H''$}
      {
$w(v'')\leftarrow w(v')+w(v'')$ \;
      }
$L \leftarrow$ ComputeWeightsOfHalfspaces($H'', w,
      \Theta\setminus \{E_q\}$) \;
      \Add $(w(H''),w(H'))$ \To $L$\;
      \Return $L$\;
    }
  }
\end{algorithm}

\begin{lemma}\label{weight-halfspaces}
  For any $\Theta$-class $\tE_i$ of $\tG$, $\tw(\tH'_i)=w(H'_i)$ and
  $\tw(H''_i)=w(H''_i)$.
\end{lemma}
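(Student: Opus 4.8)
The plan is to analyze how the contraction of the peripheral $\Theta$-class $E_q$ affects each remaining halfspace. Fix a $\Theta$-class $\tE_i$ of $\tG$ with $i \leq q-1$, with complementary halfspaces $\tH'_i = H'_i \cap H''_q$ and $\tH''_i = H''_i \cap H''_q$. I want to show $\tw(\tH'_i) = w(H'_i)$; the claim for $\tH''_i$ will then follow by symmetry (or by subtracting from the total, noting that $\tw(V(\tG)) = w(V)$ since contraction simply adds each $w(v')$ for $v' \in H'_q$ to its neighbor in $\partial H''_q$).

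The key point is to understand which vertices of $H'_i$ survive the contraction and where the weights of the contracted vertices go. Since $E_i$ and $E_q$ are distinct $\Theta$-classes, they are either crossing or laminar. If they are laminar, then $H'_q$ (the peripheral halfspace) lies entirely inside one of $H'_i$ or $H''_i$; the interesting subcase is $H'_q \subseteq H'_i$, where all of $H'_q$ is contracted into $\partial H''_q \subseteq H''_i$, so no weight of $H'_i$ is lost — wait, this needs care: if $H'_q \subseteq H'_i$ then the vertices of $H'_q$ are in $H'_i$ but get identified with vertices of $\partial H''_q$ which lie in $H''_i$, so a priori weight leaves $H'_i$. The resolution is that $E_q$ is the \emph{last} $\Theta$-class, so $H'_q = \partial H'_q$ is peripheral, and one checks using Lemma~\ref{boundary} (boundaries are isomorphic and gated) together with the fact that contraction identifies $v' \in H'_q$ with its unique neighbor $v'' \in \partial H''_q$: the edge $v'v''$ belongs to $E_q$, hence $v'$ and $v''$ lie on the same side of every other $\Theta$-class $E_i$ (an edge in $E_q$ cannot cross $E_i$ in the sense of separating its endpoints). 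Therefore $v' \in H'_i \iff v'' \in H'_i$, so identifying $v'$ with $v''$ never moves weight across the cut defined by $E_i$. This is the crux of the argument.

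Concretely, I would argue: by Lemma~\ref{halfspaces}, for an edge $v'v'' \in E_q$ the two endpoints satisfy $d(v',x) < d(v'',x)$ for all $x \in H'_q$ and the reverse for $x \in H''_q$; in particular, for the $\Theta$-class $E_i$ with root edge, membership of a vertex in $H'_i$ versus $H''_i$ is determined by which side it is on, and since $v'$ and $v''$ differ only along $E_q$ (they are adjacent via an $E_q$-edge), they agree on the $E_i$-coordinate in the isometric hypercube embedding of Corollary~\ref{cor-halfspaces-convex}. Hence the map $v' \mapsto v''$ sends $H'_q \cap H'_i$ into $\partial H''_q \cap H'_i$ and $H'_q \cap H''_i$ into $\partial H''_q \cap H''_i$. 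Now compute: $\tw(\tH'_i) = \sum_{v'' \in \tH'_i \setminus \partial H''_q} w(v'') + \sum_{v'' \in \tH'_i \cap \partial H''_q} (w(v'') + w(v'_{v''}))$, where $v'_{v''} \in H'_q$ is the neighbor of $v''$ across $E_q$. By the observation, $v'_{v''} \in H'_i$ exactly when $v'' \in H'_i$, so the second sum collects precisely $\sum_{v'' \in \tH'_i \cap \partial H''_q} w(v'')$ plus $\sum_{v' \in H'_q \cap H'_i} w(v')$. Adding the first sum, the total is $\sum_{v \in H''_q \cap H'_i} w(v) + \sum_{v' \in H'_q \cap H'_i} w(v') = \sum_{v \in H'_i} w(v) = w(H'_i)$, since $\{H'_q, H''_q\}$ partitions $V$. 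The identical computation with $H''_i$ in place of $H'_i$ gives $\tw(\tH''_i) = w(H''_i)$, completing the proof.

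The main obstacle is making the "$v'$ and $v''$ lie on the same side of $E_i$" claim airtight: one must invoke that an edge of the $\Theta$-class $E_q$, being in a single class, cannot simultaneously be an $E_i$-edge (the classes are disjoint), so in the isometric hypercube embedding the endpoints of an $E_q$-edge have equal $i$-th coordinate; equivalently, using Lemma~\ref{halfspaces} directly, if $v'v'' \in E_q$ then neither $v' \in H'_i, v'' \in H''_i$ nor vice versa, because that would force $v'v'' \in E_i$. Everything else is bookkeeping with the partition $V = H'_q \cupdot H''_q$ and the definition \eqref{weightsecond} of $\tw$.
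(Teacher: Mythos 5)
Your proof is correct. The paper states Lemma~\ref{weight-halfspaces} without proof, treating it as immediate bookkeeping, and your argument supplies exactly the verification one would want: the crux is that an edge $v'v''\in E_q$ cannot have its endpoints separated by $E_i$ (since by Lemma~\ref{halfspaces} any edge joining $H'_i$ to $H''_i$ lies in $E_i$, and $E_i\cap E_q=\emptyset$), so the contraction map $v'\mapsto v''$ preserves the side of $E_i$ and the weight transfer in \eqref{weightsecond} never crosses the cut. Your initial digression into the laminar versus crossing cases is unnecessary — as you yourself notice, the single observation above handles both uniformly — but the final computation $\tw(\tH'_i)=w(H'_i\cap H''_q)+w(H'_i\cap H'_q)=w(H'_i)$ is complete and correct, using that $H'_q=\partial H'_q$ is peripheral so every vertex of $H'_q$ is matched to a unique neighbor in $\partial H''_q$.
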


By Lemma~\ref{weight-halfspaces}, to compute all $w(H'_i)$ and
$w(H''_i),$ it suffices to compute the weight of the peripheral
halfspace of $E_i$ in the graph $G_i$, set it as $w(H'_i)$, and set
$w(H''_i):=w(G)-w(H'_i)$.

Let $G$ be the current median graph, let $H'_q$ be a peripheral
halfspace of $G$, and $\tG=H''_q$ be the graph obtained from $G$ by
contracting the edges of $E_q$. To compute $w(H'_q)$, we traverse the
vertices of $H'_q$ (by considering the edges of $E_q$).  Set
$w(H''_q)=w(G)-w(H'_q)$. Let $\tw$ be the weight function on $\tG$
defined by Equation~\ref{weightsecond}. Clearly, $\tw$ can be computed
in $O(|V(H'_q)|)=O(|E_q|)$ time. Then by Lemma~\ref{weight-halfspaces}
it suffices to recursively apply the algorithm to the graph $\tG$ and
the weight function $\tw$. Since each edge of $G$ is considered only
when its $\Theta$-class is contracted, the algorithm has complexity
$O(m)$.

\subsection{The median $\Med_w(G)$}\label{s-algomed}
	
We start with a simple property of the median function $F_w$ that
follows from Lemma~\ref{halfspaces}: 
\begin{lemma}\label{F(x)-F(y)}
  If $xy\in E_i$ with $x\in H'_i$ and $y\in H''_i$, then
  $F_w(x)-F_w(y)=w(H''_i)-w(H'_i)$.
\end{lemma}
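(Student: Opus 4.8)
The plan is to write the difference of the two median functions as a single sum over all vertices and to evaluate the contribution of each vertex from the geometry of the $\Theta$-class $E_i$. First I would invoke Lemma~\ref{halfspaces} for the edge $xy\in E_i$ with $x\in H'_i$ and $y\in H''_i$: it gives $H'_i = W(x,y) = \{z\in V : d(x,z) < d(y,z)\}$ and $H''_i = W(y,x) = \{z\in V : d(y,z) < d(x,z)\}$, and in particular these two sets partition $V(G)$.

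Next I would observe that, since $G$ is bipartite and $xy$ is an edge, for every vertex $v$ the quantities $d(x,v)$ and $d(y,v)$ differ by exactly one: they differ by at most one by the triangle inequality along $xy$, and they cannot be equal since $G$ has no odd cycle. Combining this with the previous paragraph, $v\in H'_i$ forces $d(x,v) = d(y,v)-1$, while $v\in H''_i$ forces $d(x,v) = d(y,v)+1$.

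Then the computation is immediate:
\begin{equation*}
  F_w(x) - F_w(y) = \sum_{v\in V} w(v)\bigl(d(x,v) - d(y,v)\bigr) = \sum_{v\in H''_i} w(v) \;-\; \sum_{v\in H'_i} w(v) = w(H''_i) - w(H'_i),
\end{equation*}
which is the claimed identity. I do not expect any genuine obstacle here; the only points needing a word of justification are that $\{H'_i,H''_i\}$ is a partition of the vertex set and that along an edge of a bipartite graph every vertex is strictly closer to exactly one endpoint, both of which follow directly from Lemma~\ref{halfspaces} and the bipartiteness of median graphs.
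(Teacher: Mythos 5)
Your proof is correct and matches the paper's approach: the paper states this lemma as an immediate consequence of Lemma~\ref{halfspaces} and, in the proof of Proposition~\ref{majority}, proves the generalization (Lemma~\ref{majority-claim}) by exactly the same decomposition of $F_w(x)-F_w(y)$ over the two complementary halfspaces. Your use of bipartiteness to pin the distance difference to $\pm 1$ is the right justification for the adjacent-vertex case and is consistent with the characterization $H'_i=W(x,y)$, $H''_i=W(y,x)$ from Lemma~\ref{halfspaces}.
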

	
A halfspace $H$ of $G$ is \emph{majoritary} if $w(H)>\frac{1}{2}w(G)$,
\emph{minoritary} if $w(H)<\frac{1}{2}w(G)$, and \emph{egalitarian} if
$w(H)=\frac{1}{2}w(G)$. Let
$\lMedw(G) = \{v \in V : F_w(v)\le F_w(u), \forall u \sim v\}$ be the
set of local medians of $G$. We continue with the majority
rule: 
\begin{proposition}[\!\!\cite{BaBa,SoCh_Weber}]\label{majority}
  $\Med_w(G)$ is the intersection of all majoritary halfspaces and
  $\Med_w(G)$ intersects all egalitarian halfspaces. If $H'_i$ and
  $H''_i$ are egalitarian halfspaces, then $\Med_w(G)$ intersects both
  $H'_i$ and $H''_i$. Moreover, $\Med_w(G)=\lMedw(G)$.
\end{proposition}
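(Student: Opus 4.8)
The plan is to establish the four assertions of the proposition in the following order: first the characterization $\Med_w(G) = \bigcap\{H : H \text{ majoritary}\}$, then the statement about egalitarian halfspaces, and finally the identity $\Med_w(G) = \lMedw(G)$. The principal tool throughout is Lemma~\ref{F(x)-F(y)}: for an edge $xy \in E_i$ with $x \in H'_i$, $y \in H''_i$, we have $F_w(x) - F_w(y) = w(H''_i) - w(H'_i)$. This tells us that $F_w$ strictly decreases along $xy$ from the minoritary side to the majoritary side, and is constant along $xy$ precisely when $E_i$ is egalitarian. Thus every edge of $G$ is oriented by the ``descent'' relation of $F_w$ according to the majority status of its $\Theta$-class, and this orientation is consistent on each $\Theta$-class.

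First I would prove $\lMedw(G) \subseteq \bigcap\{H : H \text{ majoritary}\}$. Suppose $v$ is a local median and, for contradiction, $v \notin H'_i$ for some majoritary $H'_i$; then $v \in H''_i$, which is minoritary. Since $H'_i = \partial H'_i$ need not hold, I instead use that $H''_i$ is gated (Lemma~\ref{convex-gated} via Lemma~\ref{halfspaces}): let $v$'s neighbor toward $H'_i$ — more carefully, consider that $v$ lies in the convex halfspace $H''_i$, and pick an edge on a shortest path from $v$ to the gate argument is not needed; rather, since $v$ may be deep inside $H''_i$, I argue by moving one step: because $H''_i$ is minoritary, there is some edge $xy \in E_i$, and I want $v$ adjacent to such an edge. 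The clean route: among all majoritary halfspaces not containing $v$, the orientation induced by $F_w$ on every edge of the corresponding $\Theta$-class points away from $v$'s side; since $v$ is a local minimum, and $G$ is connected, a shortest path from $v$ into such a majoritary halfspace has its first edge $vy$ lying in some $\Theta$-class $E_j$ whose majoritary side excludes $v$ — then $F_w(y) < F_w(v)$ by Lemma~\ref{F(x)-F(y)}, contradicting local minimality. Conversely, for $\bigcap\{H : H \text{ majoritary}\} \subseteq \Med_w(G)$: take $v$ in all majoritary halfspaces; any neighbor $y$ of $v$ satisfies $vy \in E_i$ for some $i$, and since $v$ is on the majoritary-or-egalitarian side of $E_i$, Lemma~\ref{F(x)-F(y)} gives $F_w(v) \le F_w(y)$, so $v \in \lMedw(G)$; then I invoke the inclusion $\lMedw(G) \subseteq \Med_w(G)$, which itself follows from the unimodality/weak-convexity of $F_w$ — but since that is the content of ``$\Med_w(G) = \lMedw(G)$'', I will actually prove the chain $\bigcap\{H \text{ majoritary}\} \subseteq \lMedw(G) \subseteq \Med_w(G) \subseteq \bigcap\{H \text{ majoritary}\}$ cyclically, where the last inclusion is the first paragraph's argument applied to global (hence local) medians, and the middle inclusion is the real work.

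For the middle inclusion $\lMedw(G) \subseteq \Med_w(G)$, I would show that the set $\lMedw(G)$ is exactly the intersection $S := \bigcap\{H : H \text{ majoritary}\}$, which is convex (an intersection of convex halfspaces) and nonempty (by Helly's property for convex sets in median graphs, or directly: any two majoritary halfspaces intersect since their complements are both minoritary, and one uses the Helly number $2$). On $S$, the function $F_w$ is constant: for adjacent $v, v' \in S$ the edge $vv'$ lies in some $\Theta$-class $E_i$ which must be egalitarian (if it were non-egalitarian, the minoritary side of $E_i$ would be disjoint from one of $v,v'$ and the majoritary side from the other, contradicting $v,v' \in S$), so $F_w(v) = F_w(v')$ by Lemma~\ref{F(x)-F(y)}; since $S$ is connected, $F_w$ is constant on $S$, say equal to $c$. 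Finally, for any $v \in S$ and any $u \notin S$, a shortest $(v,u)$-path leaves $S$ through an edge in a majoritary $\Theta$-class $E_i$, along which $F_w$ strictly increases by Lemma~\ref{F(x)-F(y)}, and $F_w$ never decreases along the remaining edges crossing majoritary or egalitarian classes; hence $F_w(u) > c$, so $S \subseteq \Med_w(G)$. Combined with $\lMedw(G) \subseteq S$ (first paragraph) and the trivial $\Med_w(G) \subseteq \lMedw(G)$, all three sets coincide with $S$, giving $\Med_w(G) = \lMedw(G) = \bigcap\{H \text{ majoritary}\}$.

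It remains to handle egalitarian halfspaces: if $H'_i$ is egalitarian, then $H''_i$ is also egalitarian, hence neither is majoritary, so $S$ (lying in all majoritary halfspaces) is not forced into either one. Pick $v \in S$; the edges of $E_i$ all cross from $H'_i$ to $H''_i$, and along each such edge $F_w$ is constant by Lemma~\ref{F(x)-F(y)}. I would argue that $S$ itself meets $\partial H'_i$: take $v \in S$ and follow edges crossing egalitarian classes only — since $S = \bigcap\{H \text{ majoritary}\}$ and $H'_i, H''_i$ are egalitarian, $S$ is not separated by $E_i$, i.e. $S$ has vertices on both sides of $E_i$ (otherwise $S$ would lie in one of $H'_i, H''_i$, but then by symmetry of the argument $\Med_w$ would avoid the other egalitarian halfspace — which I rule out by noting $S$ is an intersection of a down-closed family of halfspaces under the crossing/containment order and any egalitarian $\Theta$-class is ``transverse'' to $S$). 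Concretely: since $F_w$ is constant $=c$ on $S$, and the two neighbors across any edge of $E_i$ with both endpoints... — the cleanest finish is that the minimum value $c$ of $F_w$ is attained on a convex set $S$ which, for each egalitarian $E_i$, contains an edge of $E_i$ (because contracting all non-egalitarian $\Theta$-classes collapses $G$ onto the subcube spanned by egalitarian classes, on which $F_w$ is constant, so $S$ maps onto that whole subcube), hence $S$ meets both $H'_i$ and $H''_i$.

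\medskip

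\noindent\textbf{The main obstacle} I anticipate is the passage from ``$F_w$ decreases along each edge into a majoritary halfspace'' (a purely local, per-$\Theta$-class statement from Lemma~\ref{F(x)-F(y)}) to the global conclusion that $S$ is precisely the set of minima — this requires knowing that a shortest path from any $v \in S$ to any $u \notin S$ crosses at least one majoritary $\Theta$-class and crosses no $\Theta$-class ``the wrong way'', which in turn uses the fact that a shortest path in a median (partial cube) graph crosses each $\Theta$-class at most once, together with convexity of $S$ to control the signs. The egalitarian-halfspace claim is the second delicate point, since it asserts $\Med_w(G)$ is ``large enough'' to meet both sides, and the natural proof is via contracting the non-egalitarian classes to reduce to a hypercube on which $F_w$ is constant; making that reduction rigorous (that contraction sends $S$ onto the full egalitarian subcube and preserves medians) is where care is needed.
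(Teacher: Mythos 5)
Your treatment of the majoritary part is essentially correct and takes a genuinely different route from the paper: instead of first proving the gate inequality (the paper's Lemma~\ref{majority-claim}, $F_w(x'')\ge F_w(x')+d(x'',x')(w(H'_i)-w(H''_i))$ for $x''$ and its gate $x'$ in $H'_i$) and deriving everything from it, you telescope the edge formula of Lemma~\ref{F(x)-F(y)} along shortest paths, using that a shortest path in a partial cube crosses each $\Theta$-class at most once. That does yield $S:=\bigcap\{H:\ H \text{ majoritary}\}\subseteq \Med_w(G)$ and, with the easy inclusions, $\Med_w(G)=\lMedw(G)=S$. One caveat: in the step $\lMedw(G)\subseteq S$ you explicitly declare the gate argument unnecessary, but it is exactly what is needed. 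If $v\notin H'_i$ with $H'_i$ majoritary and $vy$ is the first edge of a shortest path to an \emph{arbitrary} vertex of $H'_i$, then $W(y,v)$ need not be majoritary: in the $3$-cube with $w(100)=10$, $w(011)=5$, $w(111)=1$, the halfspace of first coordinate $1$ is majoritary (weight $11$ of $16$), yet the shortest path $000,001,011,111$ starts with an edge whose $y$-side has weight $6<8$, so $F_w$ increases along it. You must route the path to the gate of $v$ in $H'_i$, so that $y\in I(v,h)$ for \emph{every} $h\in H'_i$ and hence $H'_i\subseteq W(y,v)$ is majoritary --- which is precisely the paper's argument for $\Med_w(G)=\lMedw(G)$.

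The genuine gap is the egalitarian claim, and neither of your two justifications works. The remark about a ``down-closed family under the crossing/containment order'' is circular: it presupposes that $S$ is not contained in one side of an egalitarian class, which is the statement to be proved. The claim that contracting the non-egalitarian $\Theta$-classes ``collapses $G$ onto the subcube spanned by the egalitarian classes'' is false --- the quotient is a median graph but generally not a hypercube (two egalitarian classes can be laminar, e.g.\ the path $a$--$b$--$c$ with $w(a)=w(c)=1$) --- and the assertion that $S$ surjects onto that quotient is again exactly what you are trying to prove. The statement is true and is recoverable with your own tools: if $x'\in\Med_w(G)\cap H'_i$ with $H'_i$ egalitarian and $x''$ is the gate of $x'$ in $H''_i$, then every $\Theta$-class $E_j$ separating $x'$ from $x''$ has its $x'$-side contained in $H'_i$ (by the gate property and convexity of halfspaces), hence of weight at most $\frac{1}{2}w(G)$; since $x'$ lies in every majoritary halfspace, that side is exactly egalitarian, and telescoping Lemma~\ref{F(x)-F(y)} along a shortest $(x',x'')$-path gives $F_w(x'')=F_w(x')$, so $x''\in\Med_w(G)\cap H''_i$. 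The paper obtains this in one line from Lemma~\ref{majority-claim}; as written, your proposal does not close this case.
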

\begin{proof}
  Let us first prove a generalization of Lemma~\ref{F(x)-F(y)} from
  which the different statements of Proposition~\ref{majority} easily
  follow.
  \begin{lemma}\label{majority-claim}
    Let $E_i$ be a $\Theta$-class of a median graph $G$ and let
    $H'_i,H''_i$ be the two halfspaces defined by $E_i$. If
    $x''\in H''_i$ and $x'$ is its gate in $H'_i$, then
    $F_w(x'')\ge F_w(x')+d(x'',x')(w(H'_i)-w(H''_i))$.
  \end{lemma}
		
  \begin{proof}
    By definition of the median function,
    \[
      F_w(x'')-F_w(x') = \sum_{u\in V}{d(x'',u)w(u)}-\sum_{u\in V}
      {d(x',u)w(u)}=\sum_{u\in V}{(d(x'',u)-d(x',u))w(u)}.
    \]
    Then, we decompose the sum over the complementary halfspaces
    $H'_i$ and $H''_i:$
    \[
      F_w(x'')-F_w(x')=\sum_{u'\in
        H_i'}{(d(x'',u')-d(x',u'))w(u')}+\sum_{u''\in
        H_i''}{(d(x'',u'')-d(x',u''))w(u'')}.
    \]
    Since $x'$ is the gate of $x''$ on $H_i'$, for any $u'\in H_i'$,
    $d(x'',u')-d(x',u')= d(x'',x')$. By triangle inequality,
    $d(x'',u'')-d(x',u'')\geq -d(x'',x')$ for every $u''\in H_i''$.
    We get
    \[
      F_w(x'')-F_w(x') \geq \sum_{u \in H_i'}d(x'',x')w(u') -
      \sum_{u''\in H_i''}d(x'',x')w(u'')
    \]
    and conclude that
    $F_w(x'') \geq F_w(x') + d(x'',x')(w(H_i')-w(H_i''))$.
  \end{proof}
  
  Let $H''_i$ and $H'_i$ be two complementary halfspaces such that
  $w(H'_i)>w(H''_i)$. Pick any vertex $x''\in H''_i$ and its gate $x'$
  in $H'_i$. By Lemma~\ref{majority-claim}, $F_w(x'')>F_w(x')$ and
  therefore $x''$ cannot be a median. This shows that the complement
  of a majoritary halfspace does not contain any median vertex. This
  implies that $\Med_w(G)$ is contained in the intersection $M$ of the
  majoritary halfspaces.  If $\Med_w(G)$ is a proper subset of $M$,
  since $M$ is convex we can find two adjacent vertices
  $x\in \Med_w(G)$ and $y\in M\setminus \Med_w(G)$. Let $xy\in E_i$
  with $x\in H'_i$ and $y\in H''_i$. Since $y\in M$, $H'_i$ cannot be
  a majoritary halfspace.  Since $x\in \Med_w(G)$, $H'_i$ cannot be a
  minoritary halfspace. Thus $H'_i$ and $H''_i$ are egalitarian
  halfspaces. Since $F_w(x)-F_w(y)=w(H''_i)-w(H'_i)=0$, we deduce that
  $y$ is a median vertex, thus $\Med_w(G)=M$.
Now, consider two egalitarian complementary halfspaces $H''_i$ and
  $H'_i$. Suppose that a median vertex $x'$ belongs to $H'_i$ and let
  $x''$ be its gate on $H''_i$. By Lemma~\ref{majority-claim},
  $F_w(x'')\le F_w(x')$. Therefore, $x''$ is also median.  By
  symmetry, we conclude that both $H'_i$ and $H''_i$ contain a median
  vertex.

  We now show that any local median is a median. Pick any
  vertex $v\notin \Med_w(G)$. Since $\Med_w(G)$ is the intersection of all majority halfspaces of
  $G$, there exists a majority halfspace $H$ containing $\Med_w(G)$
  and not containing $v$. Let $v'$ be the gate of $v$ in $H$ and $u$
  be a neighbor of $v$ in $I(v,v')$. Then necessarily
  $H\subseteq W(u,v)$, thus $W(u,v)$ is a majoritary halfspace.  This
  implies that $F_w(u)<F_w(v)$, i.e., $v$ is not a local median. This
  concludes the proof of Proposition~\ref{majority}.
\end{proof}

We use Proposition~\ref{majority} and the weights of halfspaces
computed above to derive $\Med_w(G)$. For this, we direct the edges $v'v''$ of each
$\Theta$-class $E_i$ of $G$ as follows. If $v'\in H'_i$ and
$v''\in H''_i$, then we direct $v'v''$ from $v'$ to $v''$ if $w(H''_i)>w(H'_i)$ and
from $v''$ to $v'$ if $w(H'_i)>w(H''_i)$. If $w(H'_i)=w(H''_i)$, then
the edge $v'v''$ is not directed.  We denote this partially directed
graph by $\oG$. A vertex $u$ of $G$ is a \emph{sink} of $\oG$ if there is no edge $uv$
directed in $\oG$ from $u$ to $v$. From Lemma~\ref{F(x)-F(y)}, $u$ is
a sink of $\oG$ if and only if $u$ is a local median of $G$. By
Proposition~\ref{majority}, $\lMedw(G)=\Med_w(G)$ and
thus $\Med_w(G)$ coincides with the set $S(\oG)$ of sinks of $\oG$. Note
that in the graph induced by $\Med_w(G)$, all edges are non-oriented
in $\oG$.
Once all $w(H'_i)$ and $w(H''_i)$ have been computed, the orientation
$\oG$ of $G$ can be constructed in $O(m)$ by traversing all
$\Theta$-classes $E_i$ of $G$. The graph induced by $S(\oG)$ can then
be found in $O(m)$.
\begin{theorem}\label{median}
  The median $\Med_w(G)$ of a median graph $G$ can be computed in
  $O(m)$ time.
\end{theorem}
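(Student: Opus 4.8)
The plan is to assemble the ingredients developed in the preceding subsections into a four-phase algorithm, each phase running in $O(m)$ time. \textbf{Phase 1.} Run Algorithm~\ref{alg:calcthlin} to obtain, in $O(m)$ time by Theorem~\ref{LexBFS}, the list $E_1,\ldots,E_q$ of $\Theta$-classes of $G$, together with the LexBFS ordering; crucially, the classes are produced in order of increasing distance $d(v_0,H'_i)$, which is precisely the order needed for a peripheral peeling $G=G_q,G_{q-1},\ldots,G_0$. By Lemma~\ref{peripheral} and the discussion of peripheral peeling, for each $i$ the class $E_i$ defines a peripheral halfspace in $G_i$, and since each vertex and each $\Theta$-class is contracted exactly once, the peeling itself can be maintained with a single per-vertex ``already contracted'' flag at no extra asymptotic cost.

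\textbf{Phase 2.} Compute the weights $w(H'_i),w(H''_i)$ of all $2q$ halfspaces using Algorithm~\ref{alg:compweight}. Correctness is Lemma~\ref{weight-halfspaces}: after contracting $E_q$ and reassigning weights by Equation~\eqref{weightsecond}, every halfspace of $\tG$ has the same weight as the corresponding halfspace of $G$, so it suffices to recursively record the weight of the peripheral halfspace of $E_i$ inside $G_i$ as $w(H'_i)$ and put $w(H''_i)=w(G)-w(H'_i)$. \textbf{Phase 3.} Build the partial orientation $\oG$: for each $\Theta$-class $E_i$, compare $w(H'_i)$ with $w(H''_i)$ and orient every edge of $E_i$ towards the heavier halfspace, leaving it undirected in the egalitarian case; traversing all classes is $O(m)$. \textbf{Phase 4.} Compute the set $S(\oG)$ of sinks and the subgraph it induces: scan all arcs, mark every tail as a non-sink, collect the unmarked vertices, and restrict $G$ to them, all in $O(m)$. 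By Lemma~\ref{F(x)-F(y)}, the arcs of $\oG$ point in the direction of non-increasing $F_w$, so $u$ is a sink iff $F_w(u)\le F_w(v)$ for all neighbours $v$, i.e.\ $u\in\lMedw(G)$; and Proposition~\ref{majority} gives $\lMedw(G)=\Med_w(G)$. Hence $S(\oG)=\Med_w(G)$, and the total running time is $O(m)$.

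The step requiring the most care is the running-time analysis of Phase~2: naively the $q$ nested recursive calls could cost $\Theta(qn)$, which is quadratic in the worst case (e.g.\ trees, where $q=n-1$). The point is that each edge of $G$ is examined exactly once over the whole recursion — either when its own $\Theta$-class $E_i$ is contracted (to fold the weight of its peripheral endpoint onto its neighbour via Equation~\eqref{weightsecond}) or as an edge of the peripheral halfspace being removed — and each vertex likewise, so the cost telescopes to $O(\sum_{i=1}^q|E_i|)=O(m)$. Everything else is a direct invocation of results already established above, so once this amortized bound is in place the theorem follows.
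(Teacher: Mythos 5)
Your proposal is correct and follows essentially the same route as the paper: LexBFS-based computation of the $\Theta$-classes, the peripheral-peeling computation of halfspace weights with the same amortized $O(\sum_i |E_i|)=O(m)$ accounting, and the identification of $\Med_w(G)$ with the sinks of the partial orientation via Lemma~\ref{F(x)-F(y)} and Proposition~\ref{majority}. No gaps.
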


The next remark follows immediately from the majority rule and the
fast computation of the $\Theta$-classes:

\begin{remark}\label{medmaj}
  Given the median set $\Med_w(G)$ of a median graph $G$, one can find
  all majoritary halfspaces of $G$ in linear time $O(m)$.
\end{remark}

\subsection*{Computing a diametral pair of $\Med_w(G)$}
	
The article~\cite{BaBa} proved that in a median graph, the median set
coincide with the interval between two diametral pairs of its
vertices.  We show how to find this pair in $O(m)$ time, using a
corollary of Proposition~\ref{majority} :
	
\begin{corollary}\label{disjoint-halfspaces}
  If two disjoint halfspaces $H'$ and $H''$ defined by two laminar
  $\Theta$-classes of $G$ both intersect $\Med_w(G)$,
then $w(v)=0$ for any vertex $v\in V\setminus(H'\cup H'')$.
\end{corollary}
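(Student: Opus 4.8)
The plan is to derive the conclusion purely from the majority rule of Proposition~\ref{majority} together with an elementary weight count. Write $H'$ and $H''$ for the two disjoint halfspaces in question. Since both are halfspaces of $G$ and $H'\cap H''=\emptyset$, we have $w(H')+w(H'')=w(H'\cup H'')\le w(V)=w(G)$; call this the \emph{disjointness inequality}. The strategy is to produce the matching lower bound $w(H')+w(H'')\ge w(G)$ from the hypothesis that both halfspaces meet $\Med_w(G)$, and then read off the claim.

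The key observation I would establish first is that a halfspace $H$ with $H\cap\Med_w(G)\neq\emptyset$ cannot be minoritary. Indeed, if $w(H)<\frac12 w(G)$ then the complementary halfspace $V\setminus H$ satisfies $w(V\setminus H)>\frac12 w(G)$, i.e.\ it is majoritary; by Proposition~\ref{majority}, $\Med_w(G)$ is contained in the intersection of all majoritary halfspaces, so $\Med_w(G)\subseteq V\setminus H$, contradicting $H\cap\Med_w(G)\neq\emptyset$. Applying this to $H'$ and to $H''$, both of which meet $\Med_w(G)$ by hypothesis, gives $w(H')\ge\frac12 w(G)$ and $w(H'')\ge\frac12 w(G)$, hence $w(H')+w(H'')\ge w(G)$.

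Combining with the disjointness inequality forces $w(H')+w(H'')=w(G)$, so $w\bigl(V\setminus(H'\cup H'')\bigr)=w(G)-w(H')-w(H'')=0$. Since $w$ takes only non-negative values, this gives $w(v)=0$ for every $v\in V\setminus(H'\cup H'')$, which is the assertion.

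I do not anticipate a real obstacle here: the only substantive point is the first observation, and it is an immediate consequence of the description of $\Med_w(G)$ as the intersection of all majoritary halfspaces in Proposition~\ref{majority}. The laminarity hypothesis on the two $\Theta$-classes is used only implicitly — it is exactly what allows two halfspaces from distinct $\Theta$-classes to be disjoint, since crossing would mean all four pairwise intersections of their halfspaces are nonempty — and otherwise plays no role in the computation.
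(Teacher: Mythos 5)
Your proof is correct and follows essentially the same route as the paper: the paper likewise invokes Proposition~\ref{majority} to conclude that neither $H'$ nor $H''$ can be minoritary, deduces from disjointness that both are egalitarian (i.e., each has weight exactly $\frac{1}{2}w(G)$), and then reads off $w(V\setminus(H'\cup H''))=0$. Your write-up merely makes explicit the weight-counting step that the paper compresses, and your closing remark about laminarity serving only to permit disjointness is accurate.
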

	
\begin{proof}
  Since $\Med_w(G)\cap H'\ne \emptyset$ and
  $\Med_w(G)\cap H''\ne \emptyset$, and since $H'$ and $H''$ are
  disjoint, by Proposition~\ref{majority}, both $H'$ and $H''$ are
  egalitarian halfspaces. Since $H'$ and $H''$ are disjoint,
  $w(H')+w(H'')=w(V)$, hence $w(V\setminus (H'\cup H''))=0$.
\end{proof}

\begin{corollary}\label{cor-interval}
  If $w(G)> 0$, we can find $u,v \in V(G)$ in $O(m)$ such that
  $\Med_w(G) = I(u,v)$.
\end{corollary}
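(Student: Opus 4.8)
Write $M=\Med_w(G)$. The plan is to reduce the statement to producing two vertices of $M$ that are separated by \emph{every} egalitarian $\Theta$-class, to show that the gate in $M$ of any positive-weight vertex is one endpoint of such a pair (this is where Corollary~\ref{disjoint-halfspaces} enters), and to recover the other endpoint by a single breadth-first search.

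First I would record the structural facts underlying the reduction. Since $M$ is an intersection of halfspaces (Proposition~\ref{majority}) it is convex, hence gated (Lemma~\ref{convex-gated}) and a median graph in its own right, and by Proposition~\ref{majority} a $\Theta$-class $E_i$ of $G$ crosses $M$ exactly when it is egalitarian (a non-egalitarian class keeps all of $M$ on its majoritary side, while every egalitarian class meets both of its halfspaces). Thus the $\Theta$-classes of $M$ are precisely the $k$ egalitarian $\Theta$-classes of $G$. It follows that for $u,v\in M$ one has $I(u,v)=M$ if and only if every egalitarian $\Theta$-class separates $u$ from $v$: indeed $I(u,v)\subseteq M$ by convexity, and conversely every $x\in M$ is separated from $u$ only by egalitarian classes (all others keep $u$ and $x$ on their majoritary side), so $x\in I(u,v)$ once all egalitarian classes separate $u$ from $v$. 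Consequently, if $u\in M$ has an \emph{antipode} in $M$, i.e.\ a vertex at distance $k$ from $u$ (equivalently, one separated from $u$ by all $k$ egalitarian classes), then any vertex of $M$ at maximum distance from $u$ is a valid partner.

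The heart of the proof is the claim: \emph{if $w(s)>0$ and $u$ is the gate of $s$ in $M$, then $u$ has an antipode in $M$.} I would argue by contradiction using the Helly property of halfspaces in the median graph $M$: if $u$ had no antipode, then among the halfspaces of $M$ lying, for each egalitarian class, on the side opposite to $u$, two would be disjoint, say $M\cap H'$ and $M\cap H''$ with $H',H''$ the halfspaces of $G$ coming from egalitarian classes $E_i,E_j$ and not containing $u$. Picking $a\in M\cap H'$ and $b\in M\cap H''$ (both nonempty since $E_i,E_j$ cross $M$ and $M\cap H'\cap H''=\emptyset$), and noting that $u$ lies in the quadrant of $(E_i,E_j)$ opposite to both, three of the four quadrants meet $M$; if some vertex $c$ of $G$ lay in the fourth quadrant $H'\cap H''$, then $m(a,b,c)$ would lie in $I(a,b)\subseteq M$ and also in $H'\cap H''$, a contradiction. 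Hence $H'\cap H''=\emptyset$ in $G$, so $E_i,E_j$ are laminar and $H',H''$ are disjoint; and since $u$ is the gate of $s$ in $M$, $s$ lies on the $u$-side of both $E_i$ and $E_j$, i.e.\ $s\notin H'\cup H''$. As $H'$ and $H''$ both meet $M$, Corollary~\ref{disjoint-halfspaces} forces $w(s)=0$, contradicting $w(s)>0$.

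Granting the claim, the algorithm and its analysis are immediate. Since $w(G)>0$, pick a vertex $s$ with $w(s)>0$ in $O(n)$ time; compute $M=\Med_w(G)$ in $O(m)$ by Theorem~\ref{median}; find the gate $u$ of $s$ in $M$ as the (unique) vertex of $M$ closest to $s$ via one BFS from $s$, in $O(m)$; finally run a BFS inside the induced subgraph $M$ from $u$ and let $v$ be a farthest vertex, in $O(m)$. By the claim $u$ has an antipode in $M$, hence $d(u,v)=k$, which means all $k$ egalitarian $\Theta$-classes separate $u$ from $v$, so $I(u,v)=M$ by the second paragraph; in particular $(u,v)$ is a diametral pair of $M$, recovering the description of \cite{BaBa}. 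The total running time is $O(m)$. I expect the main obstacle to be the passage from ``disjoint inside $M$'' to ``disjoint inside $G$'' in the claim — the median trick with $m(a,b,c)$ bridging these — together with the observation that it is the gate of a \emph{positive-weight} vertex, not an arbitrary median vertex, that carries an antipode (an interior median vertex may have no antipode in $M$ at all), which is exactly the role of the hypothesis $w(G)>0$.
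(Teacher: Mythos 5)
Your proof is correct, and it reaches the key intermediate statement (the paper's Proposition~\ref{interval}) by a noticeably different mechanism, although the two arguments converge on the same punchline. The paper fixes $v$ as a farthest-from-$u$ vertex of $M$, supposes some $z\in M\setminus I(u,v)$ chosen closest to $v$, and performs a local analysis at $z'=m(u,v,z)$ (via Lemma~\ref{crossing} and the minimality of $z$) to manufacture two disjoint halfspaces meeting $M$; you instead recast ``$M=I(u,v)$'' as ``$u$ and $v$ are separated by every egalitarian $\Theta$-class,'' and obtain the two disjoint halfspaces globally, from the Helly property of the halfspaces of $M$ lying opposite $u$, followed by your quadrant/median argument lifting disjointness from $M$ to $G$. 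From that point on the two proofs coincide: Corollary~\ref{disjoint-halfspaces} kills the weight outside the two halfspaces, and the gate condition places the positive-weight vertex (the paper: the positive fiber $P_M(u)$) outside both, a contradiction. Your route is arguably cleaner conceptually --- the ``antipode'' formulation makes transparent exactly which vertices of $M$ can serve as endpoints, and your closing remark correctly identifies why an arbitrary median vertex will not do --- at the cost of invoking one fact the paper never states: the Helly property for (finitely many) convex sets, hence halfspaces, of a median graph. That fact is standard and follows from ingredients already in the paper (convexity is closed under intersection, and $m(a,b,c)\in I(a,b)\cap I(b,c)\cap I(c,a)$ settles the three-set case, with an easy induction for more sets --- indeed your own three-quadrant argument is exactly this median trick), but to make the write-up self-contained you should state and prove it, or at least cite it. The algorithmic part is identical to the paper's (one BFS from the positive-weight vertex suffices, since maximizing $d(s,\cdot)$ over $M$ is the same as maximizing $d(u,\cdot)$ once $u$ is the gate of $s$), and the $O(m)$ bound is right.
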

	
The proof of Corollary~\ref{cor-interval} is based on a result
of~\cite[Proposition~6]{BaBa} stating that in a median graph, the
median set is an interval.  Let $H$ be a gated subgraph of $G$ and $u$
be a vertex of $H$. The set
$P_{H}(u) = \{ v\in V : u \text{ is the gate of } v \mbox{ in } H \}$
is called the \emph{fiber} of $u$ with respect to $H$. We say that a
fiber $P_{H}(u)$ is \emph{positive} if $w(P_H(u))>0$.  The fibers
$\{P_{H}(u): u\in H\}$ define a partition of $V(G)$. We give below a
non lattice-based proof of the result of Bandelt and
Barth\'el\'emy~\cite[Proposition~6]{BaBa}.
	
\begin{proposition}\label{interval}
  Let $M = \Med_w(G)$ and let $u$ be a vertex of $M$ with a positive
  fiber $P_{M}(u)$.  Then $M=I(u,v)$ for the vertex $v \in M$
  maximizing $d(u,v)$.
\end{proposition}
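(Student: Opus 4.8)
The idea is to pass to the median graph $M$ equipped with the fiber weights, to observe that there every $\Theta$-class is perfectly balanced, and then to identify the distance-maximizing vertex $v$ as an \emph{antipode} of $u$.

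First, by the majority rule (Proposition~\ref{majority}), $M=\Med_w(G)$ is the intersection of the majoritary halfspaces of $G$, hence convex and, by Lemma~\ref{convex-gated}, gated; in particular $I(u,v)\subseteq M$, so only the inclusion $M\subseteq I(u,v)$ remains. Now $M$ is itself a median graph, its $\Theta$-classes are exactly the $\Theta$-classes $E_i$ of $G$ for which both halfspaces $H'_i,H''_i$ meet $M$, and its halfspaces are the traces $M\cap H'_i$, $M\cap H''_i$. For such an $E_i$ neither $H'_i$ nor $H''_i$ can be majoritary, for otherwise $M$ (contained in every majoritary halfspace) would lie entirely on one side of $E_i$; hence $w(H'_i)=w(H''_i)=\tfrac{1}{2}w(G)$. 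Setting $\tilde w(u')=w(P_M(u'))$ for $u'\in M$, and using that a vertex $p$ of $G$ has its gate in $M$ inside $H'_i$ precisely when $p\in H'_i$, the fibers over the vertices of $M\cap H'_i$ cover $H'_i$, so $\tilde w(M\cap H'_i)=w(H'_i)=\tfrac{1}{2}w(G)=\tfrac{1}{2}\tilde w(M)$. Thus inside $M$ every $\Theta$-class is egalitarian for $\tilde w$, we have $\tilde w(u)=w(P_M(u))>0$ by hypothesis, and intervals computed in $M$ coincide with those in $G$; so it suffices to prove the statement for $M$ with the weight $\tilde w$.

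Working from now on in $M$, I claim $u$ has an antipode, i.e.\ a vertex separated from $u$ by every $\Theta$-class of $M$. For each $\Theta$-class $E_i$ of $M$ write $H_i^{-}$ for the halfspace of $E_i$ not containing $u$ and $H_i^{+}$ for the other one; the sets $H_i^{-}$ are convex. They pairwise intersect: if $H_i^{-}\cap H_j^{-}=\emptyset$ then $H_i^{-}\subseteq M\setminus H_j^{-}=H_j^{+}$, and since $u\in H_j^{+}\setminus H_i^{-}$ and $\tilde w(u)>0$ this forces $\tilde w(H_j^{+})\ge\tilde w(H_i^{-})+\tilde w(u)>\tfrac{1}{2}\tilde w(M)$, contradicting that $E_j$ is egalitarian. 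By the (classical) Helly property for convex subgraphs of median graphs, $\bigcap_i H_i^{-}\neq\emptyset$; pick $v^{\ast}$ in it. Writing $\Sigma(x,y)$ for the set of $\Theta$-classes of $M$ separating $x$ from $y$, and $q_M$ for the number of $\Theta$-classes of $M$, we have that $\Sigma(u,v^{\ast})$ is the whole set of $\Theta$-classes of $M$, so $d(u,v^{\ast})=q_M$; since $d(u,x)=|\Sigma(u,x)|\le q_M$ for every $x\in M$, the vertex $v^{\ast}$ maximizes $d(u,\cdot)$ over $M$.

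Finally, let $v\in M$ be a maximizer of $d(u,\cdot)$, as in the statement. Then $d(u,v)\ge d(u,v^{\ast})=q_M$, while also $d(u,v)\le q_M$, so $d(u,v)=q_M$ and $\Sigma(u,v)=\Sigma(u,v^{\ast})$; hence $v=v^{\ast}$, since in a partial cube two vertices with the same separating set from a fixed vertex are equal (their mutual distance is $0$). Now for any $x\in M$ we have $\Sigma(u,x)\subseteq\Sigma(u,v)$, which in a partial cube means exactly $x\in I(u,v)$; therefore $M\subseteq I(u,v)$, and combined with the first paragraph $M=I(u,v)$.

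The step I expect to require the most care is the reduction itself: one must verify precisely that the $\Theta$-classes of the convex subgraph $M$ are the $\Theta$-classes of $G$ crossing $M$, that the halfspaces of $M$ are their traces, and that the fiber weights $\tilde w$ make every $\Theta$-class of $M$ egalitarian (this last point is exactly where $M=\Med_w(G)$ and the majority rule enter). One should also make sure the Helly property for convex subgraphs of a median graph is cited or added to the preliminaries; alternatively, $v^{\ast}$ can be constructed directly by crossing $\Theta$-classes away from $u$ one at a time, the egalitarian condition ruling out getting stuck before all $\Theta$-classes have been crossed, but that route is somewhat more technical.
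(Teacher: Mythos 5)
Your proof is correct, but it takes a genuinely different route from the paper's. The paper argues locally: it picks a vertex $z\in M\setminus I(u,v)$ closest to $v$, produces via the median $z'=m(u,v,z)$ and the crossing lemma two \emph{laminar} $\Theta$-classes whose far halfspaces $W(z,z')$ and $W(y,z')$ both meet $M$, invokes Corollary~\ref{disjoint-halfspaces} to conclude that all the weight of $G$ sits in $W(z,z')\cup W(y,z')$, and then checks that the fiber $P_M(u)$ lies outside both of these halfspaces, contradicting $w(P_M(u))>0$. You instead argue globally: you pass to $M$ as a median graph weighted by its fibers, observe that the majority rule forces every $\Theta$-class crossing $M$ to be egalitarian so that the fiber weight $\tilde w$ is perfectly balanced across every halfspace of $M$, and then use $\tilde w(u)>0$ together with the Helly property for convex sets to produce an antipode $v^{\ast}$ of $u$ in $M$, from which $v=v^{\ast}$ and $M=I(u,v)$ follow by the standard $\Theta$-class description of intervals in partial cubes. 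Your key computations check out: the fibers over the vertices of $M\cap H'_i$ do cover exactly $H'_i$ (since $H'_i$ is convex and meets $M$, the gate in $M$ of any $p\in H'_i$ stays in $H'_i$), so $\tilde w(M\cap H'_i)=w(H'_i)=\tfrac{1}{2}w(G)$, and the pairwise-intersection argument for the halfspaces $H_i^{-}$ is sound. What your route buys is a more conceptual explanation of the result; what it costs is two imported facts not established in the paper, namely that the $\Theta$-classes and halfspaces of the convex subgraph $M$ are precisely the traces of those $\Theta$-classes of $G$ that cross $M$ (provable via the relation $\Psi$ from the proof of Lemma~\ref{halfspaces}), and the Helly property for convex (equivalently, by Lemma~\ref{convex-gated}, gated) subgraphs of a median graph. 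Both are classical and easy --- the latter follows by taking medians of points chosen in pairwise intersections --- but, as you yourself note, they would have to be added to the preliminaries, whereas the paper's proof runs entirely on lemmas it has already proved.
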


\begin{proof}
  Since $M$ is convex and $u,v \in M$, we have $I(u,v) \subseteq
  M$. We now prove the reverse inclusion.  Suppose by way of
  contradiction that there exists a vertex $z \in M \setminus I(u,v)$.
  Consider such a vertex $z$ minimizing $d(v,z)$. Let $z'$ be the
  median of $u,v,z$. Then $z'\in I(u,v)$.  Since $M$ is convex and
  $z,z'\in M$, from the minimality choice of $z$ we conclude that $z$
  and $z'$ are adjacent. Since $G$ is bipartite and $v$ is a furthest
  from $u$ vertex of $M$, necessarily $z'\ne v$. Since
  $z\notin I(u,v),z'\in I(u,v)$ and $G$ is bipartite,
  $z'\in I(z,u)\cap I(z,v)$, i.e., $u,v\in W(z',z)$. Let $y$ be any
  neighbor of $z'$ in $I(z',v)$ and suppose that $z'z\in E_i$ and
  $z'y\in E_j$.
		
  We assert that the $\Theta$-classes $E_i$ and $E_j$ are
  laminar. Indeed otherwise, by Lemma~\ref{crossing}, there exists a
  vertex $x' \sim z,y$.
Since $x'\in I(z,y)$ and $z,y\in M$, $x'$ also belongs to $M$ and is
  one step closer to $v$ than $z$.  The minimality choice of $z$
  implies that $x'\in I(v,u)$. Since $u\in W(z,x')$, we have
  $z\in I(x',u)$, yielding $z\in I(v,u)$. This contradiction shows
  that $E_i$ and $E_j$ are laminar, i.e., the halfspaces $W(z,z')$ and
  $W(y,z')$ are disjoint.  Since they both intersect $M$, by
  Corollary~\ref{disjoint-halfspaces}, all vertices of $G$ not
  belonging to $W(z,z')\cup W(y,z')$ have weight 0.
		
  Pick any vertex $p\in P_M(u)$. Since $p$ belongs to the fiber
  $P_M(u)$ and $z,y\in M$, necessarily $u\in I(y,p)\cap I(z,p)$.
  Since $y$ is a neighbor of $z'$ in $I(z',v)$ and $z'\in I(v,u)$, we
  obtain that $z'\in I(y,u)\subseteq I(y,p)$, yielding $p\in W(z',y)$.
  Analogously, from the choice of $z$ we deduced that
  $z'\in I(z,u)\subseteq I(z,p)$, yielding $p\in W(z',z)$. This
  establishes that $P_M(u)\subseteq W(z',y)\cap W(z',z)$, i.e.,
  $P_M(u)$ is disjoint from the halfspaces $W(y,z')$ and $W(z,z')$.
  This contradicts the fact that $P_M(u)$ has positive weight.
\end{proof}
	
Now we can prove the corollary :

\begin{proof}[Proof of Corollary~\ref{cor-interval}]
  Once we have computed $M=\Med_w(G)$, one can pick an arbitrary
  vertex $v_1$ such that $w(v_1) > 0$. By running a BFS-algorithm, we
  find in linear time the closest to $v_1$ vertex $u$ in $M$ and the
  furthest from $v_1$ vertex $v$ in $M$. Since $M$ is gated, $u$ is
  unique, $v_1 \in P_M(u)$ and $v$ is at maximum distance from $u$ in
  $M$. By Proposition~\ref{interval}, $I(u,v) = \Med_w(G)$.
\end{proof}

\subsection*{Median graphs and the majority rule}

In the Introduction we mentioned that the median graphs are the
bipartite graphs satisfying the majority rule. We will make now this
statement precise. Let $G=(V,E)$ be a bipartite graph. A
\emph{halfspace} of $G$ is a subgraph induced by
$W(u,v)=\{ x\in V: d(x,u)<d(x,v)\}$ for some edge $uv$.
Recall that all halfspaces of $G$ are convex iff $G$ is isometrically
embeddable into a hypercube~\cite{Dj}. For a weight function $w$ on
$G$ and any pair of complementary halfspaces $W(u,v)$ and $W(v,u)$, we
have $w(W(u,v))+w(W(v,u))=w(V)$. A \emph{majoritary halfspace} of $G$
is a halfspace $W(u,v)$ such that $w(W(u,v))>w(W(v,u))$. A bipartite
graph $G$ \emph{satisfies the majority rule} if for any weight
function $w$ on $G$, $\Med_w(G)$ is the intersection of all majoritary
halfspaces of $G$.
	
\begin{proposition}\label{prop-maj-median}
  A bipartite graph $G$ satisfies the majority rule if and only if $G$
  is median.
\end{proposition}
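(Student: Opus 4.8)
The plan is to prove the two implications separately. The implication ``$G$ median $\Rightarrow$ $G$ satisfies the majority rule'' is exactly the first assertion of Proposition~\ref{majority}: for a median graph $G$ and every weight function $w$, $\Med_w(G)$ is the intersection of all majoritary halfspaces of $G$. So all the work lies in the converse, and I would establish it by feeding the majority rule only with integer weight functions supported on two or three vertices.

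\emph{Step 1: $G$ is a partial cube.} By~\cite{Dj}, a bipartite graph is a partial cube if and only if every halfspace $W(u,v)$ (with $uv\in E$) is convex. So assume some $W(u,v)$ is not convex: there are $a,b\in W(u,v)$ and $c\in I(a,b)$ with $c\notin W(u,v)$, hence $c\in W(v,u)$. Take $w$ with $w(a)=w(b)=1$ and $w\equiv 0$ elsewhere; then $\Med_w(G)=I(a,b)\ni c$. On the other hand $W(u,v)$ contains both $a$ and $b$, so $w(W(u,v))=2>\tfrac12 w(V)$ and $W(u,v)$ is majoritary, whence the majority rule gives $\Med_w(G)\subseteq W(u,v)$ --- contradicting $c\in\Med_w(G)\setminus W(u,v)$. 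Thus every $W(u,v)$ is convex and $G$ is a partial cube; in particular its halfspaces are exactly the two convex sides $H'_i,H''_i$ of its $\Theta$-classes $E_1,\dots,E_q$, and $d(p,r)$ equals the number of $\Theta$-classes separating $p$ from $r$.

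\emph{Step 2: every triple $x,y,z$ has a unique median.} We may assume $x,y,z$ pairwise distinct (otherwise the triple median is trivially a single vertex). Take $w$ with $w(x)=w(y)=w(z)=1$ and $w\equiv 0$ elsewhere, so $w(V)=3$. For each $E_i$, exactly one of $H'_i,H''_i$ contains at least two of $x,y,z$ (as $3$ is odd); call it $M_i$. A halfspace is majoritary iff its weight is at least $2$ iff it equals some $M_i$, so the majority rule gives $\Med_w(G)=\bigcap_{i=1}^{q}M_i$. Now I would verify the partial-cube identity $\bigcap_{i=1}^{q}M_i = I(x,y)\cap I(y,z)\cap I(z,x)$: since (in a partial cube) $m\in I(p,r)$ iff no $\Theta$-class separates $m$ from both $p$ and $r$, a vertex $m$ lies in the right-hand side iff for every $E_i$ it lies on the same side as at least two of $x,y,z$, i.e.\ iff $m\in M_i$ for all $i$. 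Hence $I(x,y)\cap I(y,z)\cap I(z,x)=\Med_w(G)$ is nonempty. It is also an intersection of convex halfspaces, hence convex and connected; if it contained two vertices it would contain an edge $pq$, lying in some $\Theta$-class $E_j$, and then $p,q$ would be on opposite sides of $E_j$ while both belong to $\bigcap_i M_i\subseteq M_j$ --- impossible. So the triple median is a single vertex, and $G$ is median.

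The one genuinely delicate point is Step 1 --- converting ``some halfspace is non-convex'' into a violation of the majority rule --- where the weight function placed on the two endpoints of a broken geodesic is the natural device. After that, verifying the partial-cube identity in Step 2 and running the singleton argument are routine $\Theta$-class bookkeeping.
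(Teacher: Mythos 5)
Your proof is correct, but it takes a genuinely different route from the paper's. The paper reduces ``majority rule $\Rightarrow$ median'' to Mulder's characterization: a bipartite graph is median iff it satisfies the quadrangle condition and contains no induced $K_{2,3}$. It then violates the majority rule with a weight function supported on three vertices $u,x,y$ (to force the quadrangle condition) and on the five vertices of a putative $K_{2,3}$ (to exclude that subgraph). You instead first establish that $G$ is a partial cube via Djokovi\'c's criterion, using a two-point weight function on the endpoints of a broken geodesic --- this is a clean and arguably more transparent use of the hypothesis, since $\Med_w(G)=I(a,b)$ is immediate --- and then verify the defining median property for an arbitrary triple directly, by identifying $I(x,y)\cap I(y,z)\cap I(z,x)$ with the intersection of the majority sides $M_i$ and invoking the majority rule for the three-point weight function. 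Your Step~2 bookkeeping is sound: the characterization $m\in I(p,r)$ iff no $\Theta$-class separates $m$ from both $p$ and $r$ is the standard partial-cube distance formula, the nonemptiness of $\Med_w(G)$ is automatic by finiteness, and the singleton argument via convexity and connectivity is airtight. The trade-off is that the paper's proof leans on Mulder's structural theorem but needs no partial-cube machinery, whereas yours leans on Djokovi\'c's theorem and the $\Theta$-class calculus but then derives the median property from first principles; your version also makes visible exactly where each piece of the majority-rule hypothesis is consumed (convexity of halfspaces from two-point weights, uniqueness of triple medians from three-point weights), which is a nice structural insight.
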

	
\begin{proof}
  One direction is covered by Proposition~\ref{majority}. Now, suppose
  that a bipartite graph $G$ satisfies the majority rule. To prove
  that $G$ is median it suffices to show that $G$ satisfies the
  quadrangle condition and does not contain $K_{2,3}$~\cite{Mu}. To
  establish the quadrangle condition, let
  $d(u,z)=k+1, d(u,x)=d(u,y)=k\ge 2$ and $z\sim x,y$.
Consider the weight function $w(u)=w(x)=w(y)=1$ and $w(v)=0$ is
  $v\in V\setminus \{ u,x,y\}$. Notice that
  $F_w(x)=F_w(y)=k+2, F_w(z)=k+3,F_w(u)=2k$ and that $F_w(v)\ge k+1$
  for any other vertex $v$.  Since $W(x,z)$ and $W(y,z)$ are
  majoritary halfspaces and $x\notin W(y,z), y\notin W(x,z)$, the
  vertices $x$ and $y$ are not medians.  This implies that if
  $v\in \Med_w(G)$, then $F_w(v)=k+1$. Since $G$ is bipartite, this is
  possible only if $d(v,u)=k-1, d(v,x)=d(v,y)=1$, i.e., $G$ satisfies
  the quadrangle condition. Suppose now that $G$ contains a $K_{2,3}$
  induced by the vertices $x,y,z,u,u'$, where $u$ and $u'$ are
  adjacent to $x,y,z$. Consider the weight function
  $w(x)=w(y)=w(z)=w(u)=w(u')=1$ and $w(v)=0$ for any vertex
  $v\in V\setminus \{ x,y,z,u,u'\}$.  Then $F_w(u)=F_w(u')=5$ and
  $F_w(x)=F_w(y)=F_w(z)=6$. Since $G$ is bipartite, $F_w(v)\ge 6$ for
  any other vertex $v$. Thus $\Med_w(G)=\{ u,u'\}$.  Since
  $u,y,z\in W_w(u,x)$ and $x,u'\in W(x,u)$ we deduce that $W(u,x)$ is
  a majoritary halfspace and $u'\in \Med_w(G)\setminus W(u,x)$, a
  contradiction with the majority rule.
\end{proof}

\subsection{The Wiener index $W_w(G)$ and the distance matrix $D(G)$
  of $G$.}\label{app-Wiener}
	
\begin{algorithm}[h]
  \caption{DistanceMatrix($G,\Theta$)}
  \label{alg:calcapsp}
  \DontPrintSemicolon \SetAlgoVlined
  \KwData{A median graph $G=(V,E)$, the $\Theta$-classes $\Theta =
    (E_1, \hdots,E_q)$ ordered by increasing distance to the basepoint
    $v_0$.}  
  \KwResult{The distance matrix $D : V\times V \rightarrow \N$}
  \Begin{
    \uIf{$G$ contains a single vertex $v$}{
      $D(v,v) \leftarrow 0$\;
      \Return $D$
    }
    \Else{
      Let $H'$ and $H''$ be two complementary halfspaces defined by
      $E_q$ ($v_0\in H''$)\; 
$D\leftarrow$ DistanceMatrix($H'', \Theta \setminus \{E_q\}$) \;
      \ForEach{$u'u'' \in E_q$ with $u' \in H'$ and $u''\in H''$}
      {
\lForEach{$v''\in H''$}{
          $D(u', v'')\leftarrow D(u'',v'')+1$}
        \lForEach{$v'v'' \in E_q$ with $v'\in H'$ and $v''\in H''$}
        {
$D(u', v')\leftarrow D(u'',v'')$
        }
      }
      \Return $D$
    }
  }
\end{algorithm}

Using the fast computation of the $\Theta$-classes and of the weights
of halfspaces of a median graph $G$, we can compute the Wiener index
of $G$ in linear time.
	
\begin{proposition}\label{t-wiener}
  The Wiener index $W_w(G)$ of a median graph $G$ can be computed in
  $O(m)$ time.
\end{proposition}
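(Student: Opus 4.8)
The plan is to reduce the computation of $W_w(G)$ to the weights of the halfspaces, so that no distance is ever computed explicitly. The key observation is that, by Corollary~\ref{cor-halfspaces-convex}, $G$ embeds isometrically into the $q$-cube whose coordinates are the $\Theta$-classes $E_1,\dots,E_q$; hence for any two vertices $u,v$ the distance $d(u,v)$ equals the number of $\Theta$-classes \emph{separating} $u$ and $v$, where $E_i$ separates $u$ and $v$ if $u$ and $v$ lie in different halfspaces of the pair $\{H'_i,H''_i\}$ (equivalently, by Lemma~\ref{halfspaces}, if $uv\in E_i$ puts $u,v$ on opposite sides of $W(u,v)$). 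Writing $\chi_i(u,v)=1$ when $E_i$ separates $u$ and $v$ and $\chi_i(u,v)=0$ otherwise, we obtain $d(u,v)=\sum_{i=1}^{q}\chi_i(u,v)$.

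First I would substitute this identity into the definition $W_w(G)=\sum_{u,v\in V}w(u)w(v)d(u,v)$ and interchange the two summations, obtaining
\[
W_w(G)=\sum_{i=1}^{q}\ \sum_{u,v\in V}w(u)w(v)\chi_i(u,v)=2\sum_{i=1}^{q}w(H'_i)\,w(H''_i),
\]
because for a fixed $i$ the ordered pairs $(u,v)$ counted by $\chi_i$ are precisely those with $u\in H'_i,\,v\in H''_i$ or $u\in H''_i,\,v\in H'_i$, each family contributing $w(H'_i)w(H''_i)$. (If $W_w$ is instead understood as a sum over unordered pairs the factor $2$ is dropped; in every case $W_w(G)$ is a fixed multiple of $\sum_{i=1}^{q}w(H'_i)w(H''_i)$.)

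It then remains to evaluate this sum in linear time. By Theorem~\ref{LexBFS} the $\Theta$-classes $E_1,\dots,E_q$ of $G$ are computed in $O(m)$ time; by the procedure described in Section~\ref{ssec:weight-halfpaces}, whose correctness rests on Lemma~\ref{weight-halfspaces}, all pairs $\bigl(w(H'_i),w(H''_i)\bigr)$, $i=1,\dots,q$, are computed in $O(m)$ time as well. Since the $\Theta$-classes partition $E$, we have $q\le m$, so a single pass summing $w(H'_i)w(H''_i)$ over $i$ costs $O(m)$ additional time, and the total running time is $O(m)$.

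I do not expect a genuine obstacle in this argument: the only points that require care are the identity $d(u,v)=\sum_i\chi_i(u,v)$ (immediate from the isometric hypercube embedding of Corollary~\ref{cor-halfspaces-convex}) and reorganizing the double sum so that no entry of the distance matrix is ever materialized. The linear-time bound is then inherited verbatim from the already-established fast computations of the $\Theta$-classes (Theorem~\ref{LexBFS}) and of the halfspace weights (Lemma~\ref{weight-halfspaces}).
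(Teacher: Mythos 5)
Your proof is correct and follows essentially the same route as the paper: both reduce $W_w(G)$ to $\sum_{i=1}^q w(H'_i)\,w(H''_i)$ via the partial-cube identity $d(u,v)=\#\{i: E_i \text{ separates } u,v\}$ and an exchange of summations, then invoke the $O(m)$ computations of the $\Theta$-classes and halfspace weights. Your explicit remark about the factor of $2$ for ordered versus unordered pairs is a point the paper's Lemma~\ref{wiener_folklore} silently elides, but it does not change the argument.
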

	
\begin{proof}
  Given the weights $w(H'_i)$ and $w(H''_i)$ of all halfspaces of $G$,
  the Wiener index $W_w(G)$ of $G$ can be computed in $O(q)$ time
  using the following formula (which holds for all partial cubes, see
  e.g.~\cite{Kl_wiener}):
		
  \begin{lemma}\label{wiener_folklore}
    $W_w(G)=\sum_{i=1}^q w(H'_i)\cdot w(H''_i)$.
  \end{lemma}
  
  \begin{proof}
    If $G$ is a partial cube, then for any two vertices $u,v$ of $G$,
    $d(u,v)$ is equal to the number of $\Theta$-classes $E_i$
    separating $u$ and $v$ ($u$ and $v$ belong to different halfspaces
    defined by $E_i$). We write $u|_iv$ if $E_i$ separates $u$ and
    $v$.  This implies that
    \[
      \sum_{u\in V}\sum_{v\in V} w(u)\cdot w(v)\cdot d(u,v)=\sum_{u\in
        V}\sum_{v\in V}\sum_{i: u|_i v} w(u)\cdot w(v)\cdot
      1=\sum_{i=1}^{q} \sum_{u\in H'_i} \sum_{v\in H''_i} w(u)\cdot w(v)
    \]
    and thus $W_w(G)=\sum_{i=1}^q w(H'_i)\cdot w(H''_i)$.
  \end{proof}
  
  Since the weight of all halfspaces can be computed in $O(m)$ time
  and since $q \leq m$, $W_w(G)$ can also be computed in $O(m)$ time.
\end{proof}

The distance matrix $D(G)$ of a median graph $G$ can be computed in
$O(n^2)$ time by traversing the reverse peripheral peeling
$G_0,\ldots,G_{q-1},G_q=G$ of $G$ (the pseudo-code is given in
Algorithm~\ref{alg:calcapsp}). For each $i$, we compute $D(G_i)$
assuming $D(G_{i-1})$ already computed.  Since $G_{i-1}$ coincides
with the halfspace $H''_i$ of $G_i$, $G_{i-1}$ is gated in $G_i$, thus
$D(G_i)$ restricted to $G_{i-1}$ coincides with $D(G_{i-1})$. Thus, to
obtain $D(G_i)$ we have to compute the distances from all vertices
$v'$ of $H'_i$ to all other vertices of $G_i$.  For each pair $u',v'$
of $H'_i$, let $u'',v''$ be their unique neighbors in $G_{i-1}=H''_i$.
Since $H'_i$ is peripheral in $G_i$, $H'_i$ is isomorphic to the
boundary $\partial H''_i$ of $H''_i$. Since $\partial H''_i$ is gated
(by Lemma~\ref{boundary}), $d(u',v')=d(u'',v'')$. Since $v''$ is the
gate of $v'$ in $H''_i$, for each vertex $w''\in H''_i$ we have
$d(v',w'')=d(v'',w'')+1$. This establishes how to complete the
distance matrix $D(G_i)$ from $D(G_{i-1})$. The complexity of this
completion is the number of items of $D(G_i)$ which are not in
$D(G_{i-1})$. This shows that $D(G)$ can be computed in total $O(n^2)$
time. Consequently, we obtain the following result:
	
\begin{proposition}\label{t-dist}
  The distance matrix $D(G)$ of a median graph $G$ can be computed in
  $O(n^2)$ time.
\end{proposition}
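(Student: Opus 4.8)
The plan is to compute $D(G)$ incrementally along the \emph{reverse} peripheral peeling $G_0,G_1,\ldots,G_q=G$, where $G_{i-1}$ is obtained from $G_i$ by contracting the peripheral $\Theta$-class $E_i$ (so $G_{i-1}=H''_i$ inside $G_i$). Such a peeling, together with the matching between $H'_i$ and $\partial H''_i$ encoded by the edges of $E_i$, is produced in $O(m)=O(n\log n)$ time by Theorem~\ref{LexBFS} and the peeling procedure of Section~\ref{s-algomed}, hence its cost is absorbed into the target bound. Starting from the trivial one-vertex matrix $D(G_0)$, I would maintain the invariant that after step $i$ the full matrix $D(G_i)$ is available, and show that the update $D(G_{i-1})\rightsquigarrow D(G_i)$ costs $O(|H'_i|\cdot|V(G_i)|)$ time.

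For the update step, $G_{i-1}=H''_i$ is a halfspace of $G_i$, hence convex (Lemma~\ref{halfspaces}), hence gated (Lemma~\ref{convex-gated}); therefore distances between vertices of $H''_i$ are unchanged in $G_i$, and the corresponding block of $D(G_{i-1})$ is reused verbatim. It remains to fill, for every $v'\in H'_i$, the distances $d_{G_i}(v',x)$ for all $x\in V(G_i)$. Write $v''$ for the unique neighbour of $v'$ in $H''_i$. Since $H'_i$ is peripheral in $G_i$, the map $v'\mapsto v''$ is an isomorphism of $H'_i$ onto the boundary $\partial H''_i$, which is gated by Lemma~\ref{boundary}; consequently $d_{G_i}(u',v')=d_{H''_i}(u'',v'')=d_{G_{i-1}}(u'',v'')$ for $u',v'\in H'_i$, an already-known value. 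For $w''\in H''_i$, the vertex $v''$ is the gate of $v'$ in $H''_i$, so $d_{G_i}(v',w'')=d_{G_{i-1}}(v'',w'')+1$. These two rules fill each missing entry of $D(G_i)$ with a single table lookup; this is exactly Algorithm~\ref{alg:calcapsp}.

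For the running time, the number of entries written at step $i$ is $O(|H'_i|^2+|H'_i|\cdot|V(G_{i-1})|)=O(|H'_i|\cdot|V(G_i)|)$. Putting $n_i=|H'_i|$ and observing that $H'_1,\ldots,H'_q$ partition $V(G)$ (every vertex is contracted exactly once), we get $\sum_i n_i=n$ and $|V(G_i)|\le n$, so the total work is $\sum_i O(n_i\cdot n)=O(n^2)$. Together with the $O(n\log n)$ preprocessing this yields the $O(n^2)$ bound, which is optimal since the output itself has $\Theta(n^2)$ entries.

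I expect the only genuine subtlety to be the amortised time bound: one must be sure that each cell of the final matrix is touched a constant number of times, which is precisely the telescoping identity $\sum_i n_i=n$ and relies on the peeling removing the vertices of $G$ as a single partition rather than revisiting them. The correctness half is bookkeeping once "a halfspace is gated" (Lemmas~\ref{halfspaces} and~\ref{convex-gated}) and "the boundary of a halfspace is gated" (Lemma~\ref{boundary}) are invoked.
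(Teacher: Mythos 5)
Your proposal is correct and follows essentially the same route as the paper: the reverse peripheral peeling, reuse of the $D(G_{i-1})$ block via gatedness of $H''_i$, the two update rules $d(u',v')=d(u'',v'')$ (from the isomorphism of $H'_i$ with the gated boundary $\partial H''_i$) and $d(v',w'')=d(v'',w'')+1$ (from $v''$ being the gate of $v'$), and the amortised count of newly written entries. The only difference is cosmetic: you make the telescoping bound $\sum_i |H'_i|=n$ explicit, where the paper simply charges each step to the number of new matrix entries.
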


\section{The median problem in the cube complex of
  $G$}\label{sec-cubecomplex}

\subsection{The main result}\label{s-problegeom}
In this section, we describe a linear time algorithm to compute
medians in cube complexes of median graphs.
	
\subsection*{The problem}
Let $G=(V,E)$ be a median graph with $n$ vertices, $m$ edges, and $q$
$\Theta$-classes $E_1,\ldots, E_q$.  Let $\G$ be the \emph{cube
  complex} of $G$ obtained by replacing each graphic cube of $G$ by a
unit solid cube and by isometrically identifying common subcubes. We
refer to $\G$ as to the \emph{geometric realization} of $G$ (see
Fig.~\ref{the_complex}(a)). We suppose that $\G$ is endowed with the
intrinsic $\ell_1$-metric $d_1$.  Let $P$ be a finite set of points of
$(\G,d_1)$ (called \emph{terminals}) and let $w$ be a weight function
on $\G$ such that $w(p)>0$ if $p\in P$ and $w(p)=0$ if $p\notin
P$. The goal of the \emph{median problem} is to compute the set
$\Med_w(\G)$ of median points of $\G$, i.e., the set of all points
$x\in \G$ minimizing the function
$F_w(x)=\sum_{p\in \G} w(p)d_1(x,p)=\sum_{p\in P} w(p)d_1(x,p)$.
	
\subsection*{The input}
The cube complex $\G$ is given  by its 1-skeleton $G$.
Each terminal $p\in P$ is given by its coordinates in the smallest
cube $Q(p)$ of $\G$ containing $p$. Namely, we give a vertex $v(p)$ of
$Q(p)$ together with its neighbors in $Q(p)$ and the coordinates of $p$ in
the embedding of $Q(p)$ as a unit cube in which $v(p)$ is the origin
of coordinates.
Let $\delta$ be the sum of the sizes of the encodings of the
points of $P$. Thus the input of the median problem has size
$O(m+\delta)$.
	
\subsection*{The output} Unlike $\Med_w(G)$ (which is a gated subgraph
of $G$), $\Med_w(\G)$ is not a subcomplex of $\G$. Nevertheless we
show that $\Med_w(\G)$ is a subcomplex of the box complex $\hG$
obtained by subdividing $\G$, using the hyperplanes passing via the
terminals of $P$.  The output is the 1-skeleton $\wM$ of $\Med_w(\wG)$ and $\Med_w(\hG)$, and the local
coordinates of the vertices of $\wM$ in $\G$. We show that the output
has linear size $O(m)$.

\begin{theorem}\label{mediancomplexx}
  Let $G$ be a median graph with $m$ edges and let $P$ be a finite set
  of terminals of $\G$ described by an input of size $\delta$.  The
  1-skeleton $\wM$ of $\Med_w(\G)$ can be computed in linear time
  $O(m+\delta)$.
\end{theorem}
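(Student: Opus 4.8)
The plan is to reduce the problem, via the $\Theta$-classes, to $q$ independent one-dimensional weighted $\ell_1$-median problems, to solve all of them together in $O(m+\delta)$ time by recycling the halfspace-weight machinery of Section~\ref{sec:medianWiener}, and finally to assemble $\wM$ by a local traversal of $\G$.

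\textbf{Coordinate decomposition.} First I would compute the $\Theta$-classes $E_1,\dots,E_q$ and the halfspaces $H'_i,H''_i$ (with $v_0\in H''_i$) in $O(m)$ time by Theorem~\ref{LexBFS}. By Corollary~\ref{cor-halfspaces-convex}, the map $\chi\colon v\mapsto(\chi(v)_i)_i\in\{0,1\}^q$ with $\chi(v)_i=1\iff v\in H'_i$ extends to an isometric embedding of $(\G,d_1)$ onto a union of axis-parallel unit cubes in $\ell_1^q$, so that $d_1(x,y)=\sum_{i=1}^q|x_i-y_i|$ for all $x,y\in\G$ (any path of $\G$ crosses every hyperplane separating its ends, giving ``$\ge$''; a monotone geodesic gives ``$\le$''). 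Hence $F_w$ is separable:
\[ F_w(x)=\sum_{p\in P}w(p)\,d_1(x,p)=\sum_{i=1}^{q}g_i(x_i),\qquad g_i(t)=\sum_{p\in P}w(p)\,|t-p_i|. \]
Each $g_i\colon[0,1]\to\R$ is piecewise-linear and convex with breakpoints only among $\{0,1\}\cup\{p_i:p\in P\}$, so $J_i:=\arg\min_{[0,1]}g_i$ is a (possibly degenerate) interval $[a_i,b_i]$ whose endpoints lie in $\{0,1\}\cup\{p_i:p\in P\}$; thus $[a_i,b_i]$ is one cell of the subdivision of $[0,1]$ at the $i$-th coordinates of the terminals. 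Since $\G$ is compact, $\Med_w(\G)\ne\emptyset$, and as $F_w=\sum_ig_i$ is separable, $\Med_w(\G)=\G\cap\prod_{i}[a_i,b_i]=:\G\cap B$, a subcomplex of the box complex $\hG$. Moreover $g_i$ is affine on each subdivision cell, hence $F_w$ is affine on every cell of $\hG$ and attains its minimum over such a cell on a face containing a vertex; therefore $\Med_w(\G)$ is the subcomplex of $\hG$ spanned by $\Med_w(\wG)$, where $\wG$ denotes the $1$-skeleton of $\hG$ — again a median graph — carrying $w$ on the terminals and $0$ elsewhere, and $\wM=\wG[\Med_w(\wG)]$. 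By the majority rule (Proposition~\ref{majority}) applied to $\wG$ (whose $\Theta$-classes are the ``slices'' of the $E_i$), $\Med_w(\wG)$ is exactly the set of vertices of $\wG$ lying in $B$, with the same intervals $[a_i,b_i]$.

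\textbf{Computing the intervals in $O(m+\delta)$.} Since $\wG$ can be far larger than $m+\delta$, the $[a_i,b_i]$ must be found without building it. From the input of a terminal $p$ — a vertex $v(p)$ of its minimal cube $Q(p)$, the neighbours of $v(p)$ in $Q(p)$, and local coordinates $(\lambda_j)_{j\in\sigma(p)}$, where $\sigma(p)$ indexes the $\Theta$-classes of the edges of $Q(p)$ at $v(p)$ — one reads off in $O(|\sigma(p)|)$ time $p_j\in(0,1)$ for $j\in\sigma(p)$ (equal to $\lambda_j$ or $1-\lambda_j$ according to whether $v(p)\in H''_j$ or $H'_j$, decided from $d(v_0,v(p))$) and $p_i\in\{0,1\}$ for $i\notin\sigma(p)$ ($p_i=1\iff v(p)\in H'_i$); summing over $p$, this is $O(\delta)$. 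Write $g_i(t)=A_it+B_i(1-t)+\sum_{p:\,i\in\sigma(p)}w(p)\,|t-p_i|$ with $A_i:=w\{p:p_i=0\}$ and $B_i:=w\{p:p_i=1\}$. As $p_i=0$ (resp.\ $p_i=1$) iff $i\notin\sigma(p)$ and $v(p)\in H''_i$ (resp.\ $H'_i$), we get $A_i=\omega(H''_i)-c^0_i$ and $B_i=\omega(H'_i)-c^1_i$, where $\omega(v):=\sum_{p:\,v(p)=v}w(p)$ and $c^0_i,c^1_i$ are the total $w$-weights of terminals $p$ with $i\in\sigma(p)$ and $v(p)\in H''_i$, resp.\ $H'_i$. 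All $\omega(H'_i),\omega(H''_i)$ are computed in $O(m)$ by Algorithm~\ref{alg:compweight} and Lemma~\ref{weight-halfspaces} applied to $(G,\omega)$, and all $c^0_i,c^1_i$ in $O(\delta)$ by one pass over the terminals. Finally $[a_i,b_i]$ is the weighted median of $\{(0,A_i),(1,B_i)\}\cup\{(p_i,w(p)):i\in\sigma(p)\}$, found in time linear in $f_i:=|\{p:i\in\sigma(p)\}|$ by linear-time weighted selection; since $\sum_if_i\le\sum_p|\sigma(p)|=O(\delta)$, the whole step is $O(m+\delta)$.

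\textbf{Assembling $\wM$, and the main obstacle.} It remains to list the cells of $\wM=\G\cap B$. For a cube $C$ of $\G$, $C\cap B$ is empty or a single cell of $\hG$ (a face of $C$ in the directions $i$ with $a_i=b_i$, and a full subdivision cell in the directions with $a_i<b_i$), so $\wM$ is obtained by enumerating the cubes of $\G$ meeting $B$ and recording each $C\cap B$ with the local coordinates of its corners. Any such $C$ must contain an edge of $E_i$ for every $i$ with $0<a_i$ and $b_i<1$; by Lemma~\ref{crossing} these $\Theta$-classes are pairwise crossing (an admissible cube exists since $\Med_w(\G)\ne\emptyset$), and together with the one-sided conditions coming from the directions with $a_i=0<b_i<1$ or $0<a_i<b_i=1$ they confine these cubes to a gated subgraph of $G$ (an intersection of halfspaces, convex by definition and gated by Lemma~\ref{convex-gated}). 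I would therefore start from a vertex of that subgraph obtained from $\Med_\omega(G)$ (Theorem~\ref{median}) by a short descent into $B$, and explore the incident admissible cubes in BFS fashion, emitting cells. The step I expect to be the main obstacle is exactly this last one: proving that $C\cap B$ is always a single cell of $\hG$, that the admissible cubes can be enumerated locally without ever materializing $\wG$, and — most delicately — that the resulting $\wM$ has total size $O(m)$ (so the traversal costs $O(m)$), which requires a careful accounting of how the admissible cubes of $\G$ are packed and how many cells of $\hG$ each of them contributes.
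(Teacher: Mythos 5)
Your first two steps coincide with the paper's proof. The separability of $F_w$ over the $\Theta$-classes, the reduction to $q$ one-dimensional weighted medians $[a_i,b_i]$ (the paper's $E_i$-medians $[\varrho''_i,\varrho'_i]$), the computation of the endpoint weights by running the halfspace-weight algorithm on $G$ with the weight function $v\mapsto w(P_v)$ and then correcting by the weight of the terminals lying in the open carrier, and the linear-time weighted selection are all exactly what the paper does (Propositions~\ref{med_as_subcomplex} and~\ref{majority_bis} and the paragraph ``Computing the $E_i$-medians''). One point you should make explicit: the identity $\Med_w(\G)=\G\cap\prod_i[a_i,b_i]$ requires $\G\cap\prod_i[a_i,b_i]\neq\emptyset$ (otherwise the separable lower bound $\sum_i\min g_i$ need not be attained on $\G$); this follows from the majority rule applied to the median graph $\wG$, which you do invoke, so this is presentational rather than mathematical.

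The genuine gap is the assembly step, and it is precisely the one you flag as the main obstacle. Enumerating the cubes of $\G$ that meet $B$ is not an $O(m)$ operation: a median graph can have far more cubes than edges (the $d$-hypercube has $3^d$ faces but only $d2^{d-1}$ edges), and if, say, $P$ consists of two antipodal terminals of equal weight in $Q_d$, then every $[a_i,b_i]=[0,1]$, every cube meets $B$, and your traversal is exponential in $m$. Moreover, the theorem asks only for the 1-skeleton, so listing cells is both unnecessary and the wrong object to bound. The paper's resolution, which is the real content of the proof, is to never touch cubes or $\wG$ at all: it partially orients the edges of $G$ itself by the majority rule (an edge $v'v''\in E_i$ is directed towards the majoritary side, and left undirected with ``half-edges'' recording whether $\varrho''_i>0$ or $\varrho'_i<1$), proves that the half-edges issuing from a vertex $v$ span a cube $Q_v$ (Proposition~\ref{one_sink}, via pairwise-crossing $\Theta$-classes and Lemma~\ref{crossing}), and then exhibits maps $g:S(\oG)\to V(\wM)$ and $f:V(\wM)\to S(\oG)$ with $g\circ f=\mathrm{id}$ such that edges of $\wM$ arise from edges of $G$ between sinks (Proposition~\ref{prop-sources}). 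This simultaneously yields $|V(\wM)|\le n$, $|E(\wM)|\le m$, and an $O(m)$ enumeration by a single pass over the edges of $G$. Without something equivalent to this sink/half-edge correspondence, your plan establishes neither the size bound on $\wM$ nor the linear running time of the traversal.
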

	
\begin{figure}[t]
  \captionsetup[subfigure]{singlelinecheck=true}
  \centering
\subcaptionbox{\label{fig:Complex}}{\includegraphics[scale=0.64,page=5]{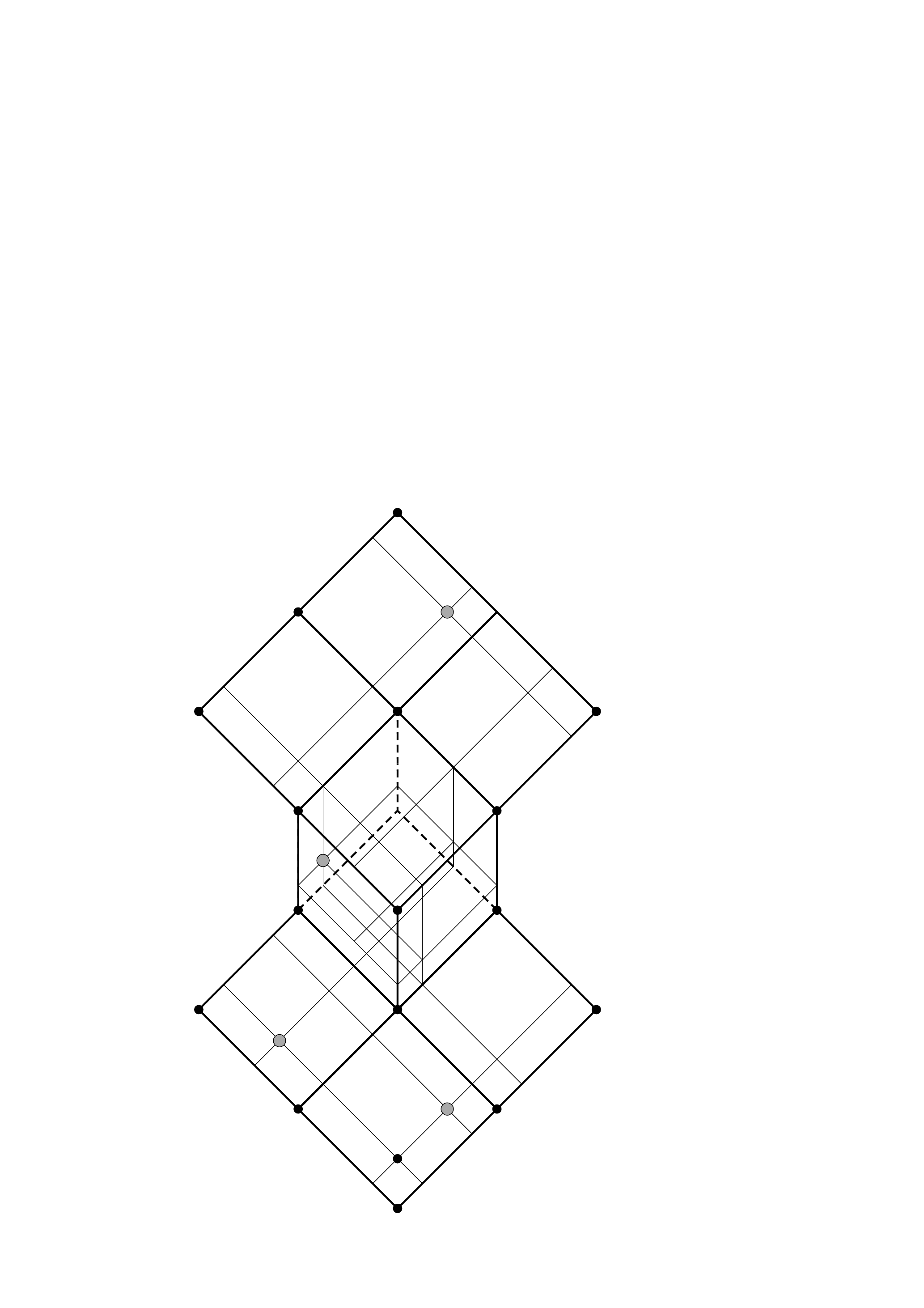}}
  \qquad\qquad \subcaptionbox{\label{fig:hyperplane}}
  {\includegraphics[scale=0.64,page=4]{Images/Complexe.pdf}}
  \qquad\qquad \subcaptionbox{\label{fig:boxcomplex}}
  {\includegraphics[scale=0.64,page=6]{Images/Complexe.pdf}}
  \caption{~(a) The cube complex ${\mathcal D}$ of $D$,~(b) a
    hyperplane of $\mathcal D$, and~(c) the box complex
    $\widehat{\mathcal D}$ and $\Med_w({\mathcal D})$ (in gray)
    defined by 4 terminals of weight $1$.}\label{the_complex}
\end{figure}

\subsection{Geometric halfspaces and hyperplanes}\label{ss-geomhalfs}

In the following, we fix a basepoint $v_0$ of $G$. For each point $x$
of $\G$, let $Q(x)$ be the smallest cube of $\G$ containing $x$ and
let $v(x)$ be the gate of $v_0$ in $Q(x)$. For each $\Theta$-class
$E_i$ defining a dimension of $Q(x)$, let $\epsilon_i(x)$ be the
coordinate of $x$ along $E_i$ in the embedding of $Q(x)$ as a unit
cube in which $v(x)$ is the origin.
For a $\Theta$-class $E_i$ and a cube $Q$ having $E_i$ as a dimension,
the \emph{$i$-midcube} of $Q$ is the subspace of $Q$ obtained by
restricting the $E_i$-coordinate of $Q$ to $\frac{1}{2}$.
A \emph{midhyperplane} $\ch_i$
of $\G$
is the union of all $i$-midcubes.
Each $\ch_i$ cuts $\G$ in two components~\cite{Sa} and the union of
each of these components with $\ch_i$ is called a \emph{geometric
  halfspace} (see Fig.~\ref{the_complex}(b)).
The \emph{carrier} $\cN_i$ of $E_i$ is the union of all cubes of $\G$
intersecting $\ch_i$; $\cN_i$ is isomorphic to $\ch_i\times[0,1]$. For
a $\Theta$-class $E_i$ and $0< \epsilon < 1$, the \emph{hyperplane}
$\ch_i(\epsilon)$ is the set of all points $x\in \cN_i$ such that
$\epsilon_i(x)=\epsilon$. Let $\ch_i(0)$ and $\ch_i(1)$ be the
respective geometric realizations of $\partial H_i''$ and
$\partial H_i'$. Note that $\ch_i(\epsilon)$ is obtained from $\ch_i$
by a translation.
The \emph{open carrier} $\cN^{\circ}_i$ is
$\cN_i\setminus(\ch_i(0)\cup \ch_i(1))$.
We denote by $\cH'_i(\epsilon)$ and $\cH''_i(\epsilon)$ the geometric
halfspaces of $\G$ defined by
$\ch_i(\epsilon)$. Let $\cH''_i:=\cH''_i(0)$ and $\cH'_i:=\cH'_i(1)$; they are the
geometric realizations of $H'_i$ and $H''_i$.  Note that $\G$ is the
disjoint union of $\cH'_i$, $\cH''_i$, and $\cN^{\circ}_i$.
	
\subsection{The majority rule for $\G$}
	
Now we show how to reduce the median problem in $\G$ to a median problem
in a median graph.

\subsection*{The box complex $\hG$}
By~\cite[Theorem 3.16]{vdV}, $(\G,d_1)$ is a median metric space
(i.e., $|I(x,y)\cap I(y,z)\cap I(z,x)|=1$ $\forall x,y,z\in \G$) and
the graph $G$ is isometrically embedded in $(\G,d_1)$.  For each
$p\in P$ and each coordinate $\epsilon_i(p)$,
consider the hyperplane $\ch_i(\epsilon_i(p))$. All such hyperplanes
subdivide $\G$ into a box complex $\hG$ (see
Fig.~\ref{the_complex}(c)).
Clearly, $(\hG,d_1)$ is a median space.  By~\cite[Theorem 3.13]{vdV},
the 1-skeleton $\wG$ of $\hG$ is a median graph and each point of $P$
corresponds to a vertex of $\wG$. The $\Theta$-classes of $\wG$ are
subdivisions of the $\Theta$-classes of $G$. In $\hG$, all edges of a
$\Theta$-class of $\wG$ have the same length. Let $\wG_l$ be the graph
$\wG$ in which the edges have these lengths. $\wG_l$ is a median
space, thus $\Med_w(\wG_l)=\Med_w(\wG)$ by~\cite{SoCh_Weber}.  By
Proposition~\ref{majority}, $\Med_w(\wG_l)$
is the intersection of the majoritary halfspaces of
$\wG$.

\begin{proposition}\label{med_as_subcomplex}
  $\Med_w(\G)$ is the subcomplex of $\hG$ defined by
  $\wM:=\Med_w(\wG_l)$.
\end{proposition}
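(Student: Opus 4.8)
The plan is first to reduce the statement to a claim about the box complex $\hG$ alone, and then to decompose the median function along the hyperplanes of $\wG$. Since $\hG$ is obtained from $\G$ only by refining its cell structure, $\G$ and $\hG$ are the \emph{same} metric space $(\G,d_1)$; as $F_w$ depends only on $(\G,d_1)$ and on $w$, the sets of minimizers coincide, so $\Med_w(\G)=\Med_w(\hG)$ as point sets. It therefore suffices to prove that $\Med_w(\hG)$ is exactly the point set of the subcomplex of $\hG$ whose cells are the boxes all of whose vertices lie in $\wM:=\Med_w(\wG_l)=\Med_w(\wG)$ (the last equality holds because $\wG_l$ is a median space, \cite{SoCh_Weber}).

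I would then introduce hyperplane coordinates. Let $\tilde E_1,\dots,\tilde E_r$ be the $\Theta$-classes of $\wG$ with $d_1$-lengths $\ell_1,\dots,\ell_r$, let $\tilde H'_j,\tilde H''_j$ (with $v_0\in\tilde H''_j$) be the complementary halfspaces of $\wG$, and let $\tilde\cH'_j,\tilde\cH''_j,\tilde\cN^{\circ}_j$ be their geometric realizations and the open carrier, so that $\hG=\tilde\cH'_j\sqcup\tilde\cH''_j\sqcup\tilde\cN^{\circ}_j$. Define $\tau_j:\hG\to[0,\ell_j]$ by $\tau_j\equiv 0$ on $\tilde\cH''_j$, $\tau_j\equiv\ell_j$ on $\tilde\cH'_j$, and $\tau_j(x)=\ell_j\,\epsilon_j(x)\in(0,\ell_j)$ on $\tilde\cN^{\circ}_j$, where $\epsilon_j(x)\in(0,1)$ is the coordinate of $x$ along $\tilde E_j$. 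The fact I would use (the $\ell_1$-analogue of Corollary~\ref{cor-halfspaces-convex}, cf.~\cite{vdV}) is that $x\mapsto(\tau_j(x))_j$ is an isometric embedding of $(\hG,d_1)$ into $\ell_1$, i.e.\ $d_1(x,y)=\sum_j|\tau_j(x)-\tau_j(y)|$. Since every terminal $p\in P$ is a vertex of $\wG$, we have $\tau_j(p)\in\{0,\ell_j\}$ for all $j$, and hence
\[
  F_w(x)=\sum_{p\in P}w(p)\,d_1(x,p)=\sum_{j=1}^{r}g_j(\tau_j(x)),\qquad
  g_j(t)=w(\tilde\cH''_j)\,t+w(\tilde\cH'_j)\,(\ell_j-t),
\]
where each $g_j$ is affine on $[0,\ell_j]$, and so is minimized at $t=\ell_j$ if $\tilde H'_j$ is majoritary, at $t=0$ if $\tilde H''_j$ is majoritary, and at every $t$ if $\tilde E_j$ is egalitarian. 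Picking any vertex $v\in\wM$, Proposition~\ref{majority} gives $v\in\tilde H'_j$ for every majoritary $\tilde H'_j$ and $v\in\tilde H''_j$ for every majoritary $\tilde H''_j$, so $v$ minimizes every summand simultaneously; therefore $\min_{\hG}F_w=\sum_j\min g_j$, and $x\in\Med_w(\hG)$ iff $\tau_j(x)$ minimizes $g_j$ for every $j$, i.e.\ iff $x$ lies in the closed geometric halfspace on the majority side of every non-egalitarian $\tilde E_j$.

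It then remains to identify this set with the subcomplex of $\hG$ spanned by $\wM$. If $x$ lies in a cell $C$ of that subcomplex, then all vertices of $C$ lie in $\wM$, hence on the majority side of every majoritary $\tilde E_j$, so no majoritary $\tilde E_j$ is a dimension of $C$; thus $\tau_j$ is constant on $C$, equal to its value at a vertex of $\wM$, which is the minimizer of $g_j$, so $x$ lies in the majority-side closed halfspace of every majoritary $\tilde E_j$. Conversely, if $x$ lies in the majority-side closed halfspace of every majoritary $\tilde E_j$, then $\tau_j(x)\in\{0,\ell_j\}$ for each such $j$, so $\tilde E_j$ is not a dimension of the smallest cell $Q(x)$ containing $x$; hence $\tau_j$ is constant on $Q(x)$, and every vertex of $Q(x)$ shares the $\tau_j$-value of $x$ and therefore lies on the majority side of $\tilde E_j$. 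So every vertex of $Q(x)$ lies in the intersection of all majoritary halfspaces of $\wG$, which equals $\wM$ by Proposition~\ref{majority}; since $\wM$ is convex (gated), $Q(x)$ is a cell of that subcomplex, whence $x$ belongs to it. This shows that $\Med_w(\G)=\Med_w(\hG)$ coincides with the subcomplex of $\hG$ spanned by $\wM$.

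I expect the main obstacle to be the coordinate bookkeeping: justifying the isometric $\ell_1$-embedding $x\mapsto(\tau_j(x))_j$ of $(\hG,d_1)$ for points lying in the interiors of cells (not only for vertices), and carefully matching ``$x$ lies in the majority-side closed halfspace of every non-egalitarian $\tilde E_j$'' with ``the smallest cell $Q(x)$ has all its vertices in $\wM$'' -- which is exactly where one uses that a $\Theta$-class is a dimension of $Q(x)$ precisely when the corresponding coordinate $\epsilon_j(x)$ is strictly interior, together with the gatedness of $\wM$.
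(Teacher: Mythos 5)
Your proof is correct, but it follows a genuinely different route from the paper's. The paper argues locally and inductively: for a point $x$ it forms the further subdivision $\hG(x)$ by the hyperplanes through $x$, picks a $\Theta$-class $\hE_i$ with $0<\hepsilon_i(x)<1$, compares $F_w(x)$ with $F_w(x')$ and $F_w(x'')$ for the two neighbors of $x$ on the opposite facets of $\hB(x)$ via Lemma~\ref{F(x)-F(y)} (using that no terminal has an interior $\hE_i$-coordinate, so the two halfspaces swap majority status), and concludes by induction on $\dim\hB(x)$ that $x$ is median iff all vertices of $\hB(x)$ are. You instead argue globally: you decompose $d_1$ (hence $F_w$) along the hyperplanes of $\wG$ into a sum of independent one-variable affine functions $g_j$, observe that any vertex of $\wM$ minimizes all summands simultaneously, so that $\Med_w(\hG)$ is exactly the intersection of the closed geometric halfspaces on the majority side of every non-egalitarian class, and then match this intersection with the union of cells of $\hG$ spanned by $\wM$ using the gatedness of $\wM$. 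Your approach is cleaner and in effect also proves the halfspace description that the paper only establishes later (Proposition~\ref{majority_bis}); its one cost is that it imports the isometric $\ell_1$-embedding $x\mapsto(\tau_j(x))_j$ of $(\hG,d_1)$ as an external fact from~\cite{vdV} (valid for all points, not just vertices), whereas the paper's induction gets by with the already-proved graph-level majority rule and Lemma~\ref{F(x)-F(y)} applied to the finite median graphs $\wG_l(x)$. You correctly identify this embedding, together with the equivalence ``$\tilde E_j$ is a dimension of $Q(x)$ iff $\tau_j(x)$ is strictly interior,'' as the points needing care; both are standard and consistent with the chains-of-halfspaces arguments the paper itself uses for Proposition~\ref{majority_bis} and the Wiener index in $\G$.
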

	
\begin{proof}
  Let $\hE_1, \ldots, \hE_{\hat{q}}$ be the the
  $\Theta$-classes of $\wG$. For a point $x \in
  \G$, we denote by $\hB(x)$ the smallest box of $\hG$ containing
  $x$, and for any $\Theta$-class $\hE_i$ of
  $\hB(x)$, let $\hepsilon_i(x)$ be the coordinate of
  $x$ along the dimension $\hE_i$ in the embedding of
  $\hB(x)$ as a unit cube where the origin is the gate of
  $v_0$ on $\hB(x)$.
		
  For a point $x\in \G$, let $\hG(x)$ be the subdivision of the
  complex $\hG$ by the hyperplanes passing via $x$, let $\wG(x)$
  denote the 1-skeleton of $\hG(x)$ and let $\wG_l(x)$ be the
  corresponding weighted graph. Again, $\wG(x)$ is a median graph and
  $\wG_l(x)$ is an isometric subgraph of $\hG(x)$ that is isometric to
  $\G$ and $\hG$.
		
  We show by induction on the dimension of $\hB(x)$ that
  $x \in \Med_w(\G) = \Med_w(\hG) = \Med_w(\hG(x))$ if and only if for
  any vertex $v$ of $\hB(x)$, $v \in \Med_w(\G)$. If $\hB(x) = \{x\}$,
  then there is nothing to prove. Otherwise, pick a $\Theta$-class
  $\hE_i$ of $\wG$ such that $0 < \hepsilon_i(x) <1$. In $\wG(x)$, $x$
  has exactly two neighbors $x',x''$ belonging to two opposite facets
  of $\hB(x)$ such that $\hepsilon_i(x'')=0$ and $\hepsilon_i(x')=1$.
  Observe that $V(\hB(x)) = V(\hB(x')) \cup V(\hB(x''))$ and
  $x \in I_{\wG(x)}(x',x'')$.  By the definition of $\wG$, there is no
  terminal $p \in P$ with $0 <\hepsilon_i(p) <1$. Consequently in
  $\wG(x)$, $W(x'',x) \cap P = W(x,x') \cap P$ and
  $W(x',x) \cap P = W(x,x'') \cap P$. Therefore in $\wG(x)$ (and in
  $\wG_l(x)$), the halfspace $W(x'',x)$ is majoritary
  (resp.\ egalitarian, minoritary) if and only if $W(x',x)$ minoritary
  (resp.\ egalitarian, majoritary).
		
  Suppose that $x \in \Med_w(\G)$. If $W(x,x'')$ (resp.\ $W(x,x')$) is
  minoritary in $\wG(x)$, then by Lemma~\ref{F(x)-F(y)} applied to
  $\wG_l(x)$, $F_w(x) > F_w(x'')$ (resp.\ $F_w(x) > F_w(x')$) and
  $x \notin \Med_w(\G)$, a contradiction. Thus, necessarily $W(x'',x)$
  and $W(x',x)$ are egalitarian and by Lemma~\ref{F(x)-F(y)} in
  $\wG_l(x)$, $F_w(x')=F_w(x'')=F_w(x)$. Since $\hB(x')$ and
  $\hB(x'')$ are facets of $\hB(x)$, by induction hypothesis, all
  vertices in $V(\hB(x)) = V(\hB(x')) \cup V(\hB(x''))$ belong to
  $\Med_w(\G)$. Conversely, suppose that
  $V(\hB(x)) = V(\hB(x')) \cup V(\hB(x'')) \subseteq \Med_w(\G)$. Then
  by induction hypothesis, $x',x'' \in \Med_w(\G)$. Since
  $\Med_w(\wG_l(x)) = \Med_w(\wG(x))$ is convex and
  $x \in I_{\wG(x)}(x,x')$, we have $F_w(x) = F_w(x') = F_w(x'')$ and
  consequently, $x \in \Med_w(\G)$.
\end{proof}

\subsection*{The $E_i$-median problems}
We adapt now Proposition~\ref{majority} to the continuous setting. In
our algorithm and next results we will not explicitly construct the
box complex $\hG$ and its 1-skeleton $\widehat{G}$ (because they are
too large), but we will only use them in proofs.
For a $\Theta$-class of $G$, the $E_i$-\emph{median} is the median of
the multiset of points of the segment $[0,1]$ weighted as follows: the
weight $w_i(0)$ of $0$ is $w(\cH''_i)$, the weight $w_i(1)$ of 1 is
$w(\cH'_i)$, and for each $p\in P\cap \cN_i^{\circ}$, there is a point
$\epsilon_i(p)$ of $[0,1]$ of weight $w_i(\epsilon_i(p))=w(p)$.  It is
well-known that this median is a segment $[\varrho''_i,\varrho'_i]$
defined by two consecutive points $\varrho''_i\le \varrho'_i$ of
$[0,1]$ with positive weights,
and for any $p \in P$, $\epsilon_i(p)\leq \varrho_i''$ or
$\epsilon_i(p) \geq \varrho_i'$. \emph{Majoritary}, \emph{minoritary},
and \emph{egalitarian} geometric halfspaces of $\G$ are defined in the
same way as the halfspaces of $G$.

\begin{proposition}\label{majority_bis}
  Let $E_i$ be a $\Theta$-class of $G$.  Then the following holds:
  \begin{enumerate}
  \item $\Med_w(\G)\subseteq \cH''_i$ (resp.\ 
    $\Med_w(\G)\subseteq \cH'_i$) if and only if $\cH''_i$ is
    majoritary (resp.\ $\cH'_i$ is majoritary), i.e.,
    $\rho_i'' = \rho_i' = 0$ (resp.\ $\rho_i'' = \rho_i' = 1$);
			
  \item $\Med_w(\G)\subseteq \cH''_i\cup \cN^{\circ}_i$ (resp.\ 
    $\Med_w(\G)\subseteq \cH'_i\cup \cN^{\circ}_i$) and $\Med_w(\G)$
    intersects each of the sets $\cH''_i$ (resp.\ $\cH'_i$) and
    $\cN^{\circ}_i$ if and only if $\cH''_i$ (resp.\ $\cH'_i$) is
    egalitarian and $\cH'_i$ (resp.\ $\cH''_i$) is minoritary, i.e.,
    $0 = \rho_i'' < \rho_i' < 1$ (resp.\ $0 < \rho_i'' < \rho_i' = 1$);

  \item $\Med_w(\G)\subseteq \cN^{\circ}_i$ if and only if $\cH'_i$
    and $\cH''_i$ are minoritary, i.e.,
    $0 < \rho_i'' \leq \rho_i' < 1$;
			
  \item $\Med_w(\G)$ intersects the three sets $\cH_i, \cH''_i,$ and
    $\cN^{\circ}_i$ if and only if $\cH'_i$ and $\cH''_i$ are
    egalitarian, i.e., $0 = \rho_i'' \leq \rho_i' = 1$ (and thus
    $w(\cN^{\circ}_i)=0$).
  \end{enumerate}
\end{proposition}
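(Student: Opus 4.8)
The plan is to reduce Proposition~\ref{majority_bis} to the discrete majority rule (Proposition~\ref{majority}) applied to the median graph $\wG$ of the box complex $\hG$, using the characterization of $\Med_w(\G)$ as a subcomplex of $\hG$ from Proposition~\ref{med_as_subcomplex}. The crucial observation is that the $E_i$-median problem on $[0,1]$ is exactly the one-dimensional median problem obtained by projecting all weights onto the single coordinate $\epsilon_i$: the $\Theta$-classes of $\wG$ that are subdivisions of $E_i$ partition $[0,1]$ into consecutive intervals at the breakpoints $\{\epsilon_i(p): p\in P\cap\cN_i^\circ\}$, and for such a sub-$\Theta$-class $\hE$ at position $\epsilon$, the halfspace $W$ of $\wG$ on the $\cH'_i$-side has weight $w_i([\epsilon,1]) = w(\cH'_i) + \sum_{p:\epsilon_i(p)\ge\epsilon}w(p)$ (and symmetrically on the other side). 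So $\hE$ is majoritary toward $\cH'_i$ iff $\epsilon > \varrho'_i$, egalitarian iff $\varrho''_i < \epsilon \le \varrho'_i$ or the boundary cases, and majoritary toward $\cH''_i$ iff $\epsilon \le \varrho''_i$; this is precisely the standard description of the median segment $[\varrho''_i,\varrho'_i]$ of a weighted multiset on a line.

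First I would set up this dictionary precisely: list the sub-$\Theta$-classes of $\wG$ coming from $E_i$ as $\hE^{(0)},\ldots,\hE^{(k)}$ at positions $0 < \epsilon_1 < \cdots < \epsilon_k < 1$ (the distinct values $\epsilon_i(p)$ over terminals in $\cN_i^\circ$), together with the two ``extreme'' sub-$\Theta$-classes incident to $\ch_i(0)$ and $\ch_i(1)$; compute the weight of each resulting halfspace of $\wG$ via the formula above. Then I would translate the four cases of the $E_i$-median position. Case (1): $\varrho''_i=\varrho'_i=0$ means every sub-$\Theta$-class of $E_i$ lies (weakly) on the $\cH'_i$-majoritary side except possibly the extreme one at $0$, i.e.\ $\cH''_i$ is majoritary in $\wG$; by Proposition~\ref{majority} $\Med_w(\wG_l)$ lies in the intersection of majoritary halfspaces, hence inside the part of $\wG$ realizing $\cH''_i$, and by Proposition~\ref{med_as_subcomplex} $\Med_w(\G)\subseteq\cH''_i$. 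Conversely if $\cH''_i$ is not majoritary, some majoritary or egalitarian halfspace of $\wG$ protrudes past $\ch_i$, and again by Proposition~\ref{majority} ($\Med_w$ meets every egalitarian halfspace) $\Med_w(\G)$ is not contained in $\cH''_i$.

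Cases (2)--(4) are handled the same way by reading off which of the two extreme sub-$\Theta$-classes of $E_i$ (the ones bordering $\ch_i(0)$ and $\ch_i(1)$) are egalitarian versus majoritary/minoritary, and invoking the two statements of Proposition~\ref{majority}: $\Med_w$ equals the intersection of majoritary halfspaces, and it meets both halfspaces of any egalitarian pair. Case (4), where both $\cH'_i$ and $\cH''_i$ are egalitarian, forces every sub-$\Theta$-class of $E_i$ to be egalitarian, so the total weight strictly inside $\cN_i^\circ$ must be $0$, giving $w(\cN_i^\circ)=0$; here $\Med_w(\G)$ meeting all three of $\cH'_i$, $\cH''_i$, $\cN_i^\circ$ follows because $\Med_w(\wG_l)$ meets each of the many egalitarian halfspaces along $E_i$.

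The main obstacle I anticipate is bookkeeping rather than a deep idea: one must be careful about the half-open conventions in the one-dimensional median segment (which breakpoints belong to $[\varrho''_i,\varrho'_i]$), about the degenerate situations where consecutive breakpoints coincide or where $P\cap\cN_i^\circ=\emptyset$ (so that $\wG$ does not actually subdivide $E_i$ and one falls back directly on Proposition~\ref{majority} for $G$ itself), and about matching ``$\Med_w(\G)$ intersects $\cN_i^\circ$'' with ``some egalitarian/minoritary pair of sub-$\Theta$-classes strictly inside the carrier is met by $\Med_w(\wG_l)$.'' Once the correspondence between the weighted line $[0,1]$ and the chain of sub-$\Theta$-classes of $\wG$ along $E_i$ is nailed down, each of the four equivalences is a direct application of Proposition~\ref{majority} together with Proposition~\ref{med_as_subcomplex}.
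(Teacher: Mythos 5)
Your proposal is correct and follows essentially the same route as the paper: subdividing $E_i$ into the laminar chain of sub-$\Theta$-classes of $\wG$ at the terminal breakpoints, computing the weights of the corresponding nested halfspaces, and then reading off each case from the discrete majority rule (Proposition~\ref{majority}) transferred back to $\G$ via Proposition~\ref{med_as_subcomplex}. The paper's own proof is just a terser version of this same argument (it additionally cites Corollary~\ref{disjoint-halfspaces} for the doubly-egalitarian case, which is the same observation you make when deriving $w(\cN^{\circ}_i)=0$).
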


\begin{proof}
  Let $0<\epsilon_1<\cdots<\epsilon_k<1$ denote the possible values of
  coordinates of points of $P$ with respect to the $\Theta$-class
  $E_i$. They define parallel hyperplanes
  $\ch_i(0),\ch_i(\epsilon_1),\ldots,\ch_i(\epsilon_k),\ch_i(1)$. The
  pieces of the edges of $E_i$ bounded by two such consecutive
  hyperplanes define a $\Theta$-class of $\widehat{G}$ and all such
  $\Theta$-classes are laminar. In fact, we have the following chains
  of inclusions between the geometric halfspaces of $\G$ (or $\hG$)
  defined by those $\Theta$-classes:
  $\cH''_i=\cH_i''(0)\subset \cH''_i(\epsilon_1)\subset\cdots\subset
  \cH''_i(\epsilon_k)$ and
  $\cH'_i=\cH'_i(1)\subset \cH'_i(\epsilon_k)\subset\cdots\subset
  \cH'_i(\epsilon_1)$. Similar inclusions hold between corresponding
  halfspaces of the graph $\widehat{G}$. Notice also that, by the
  definition of $\widehat{G}$ and $\hG$, the geometric halfspaces and
  the corresponding graphic halfspaces, have the same weight.
  Therefore, to deduce the different cases of the proposition, it
  suffices to apply the majority rule (Proposition~\ref{majority} and
  Corollary~\ref{disjoint-halfspaces}) to the halfspaces of
  $\widehat{G}$ occurring in the two chains of inclusions and use the
  fact that $\wM$ is the $1$-skeleton of $\Med_w(\G) = \Med_w(\hG)$ in
  $\hG$ from Proposition~\ref{med_as_subcomplex}.
\end{proof}
	
\subsection{The algorithm}
	
\begin{algorithm}[]
  \caption{ComputeMedianCubeComplex($G,P,w,\Theta$)}\label{alg:compmedcomplex}
\SetKw{Add}{add}
  \SetKw{To}{to}
  \SetKwFunction{ComputeWeightsOfHalfspaces}{ComputeWeightsOfHalfspaces}
  \DontPrintSemicolon
  \SetAlgoVlined
  \KwData{A median graph $G=(V,E)$, a set of terminals $P$, a weight
    function $w : P\rightarrow \R^+$, the $\Theta$-classes $\Theta =
    (E_1, \hdots,E_q)$ of $G$ ordered by increasing distance to the
    basepoint $v_0$.} 
  \KwResult{The graph $\wM$ and the coordinates of each vertex $\hm \in
    \wM$ in $\G$}
  \Begin{
    Modify the root $v(p)$ of each point $p \in P$ such that $v(p)$
    is the gate of $v_0$ on $Q(p)$ \;
    Compute  $w(P_i)$ for all $\Theta$-classes $E_i$ by
    traversing $P$\;
Compute  $w_*(v)=w(P_v)$ for all $v \in V$ by traversing $P$ \;
    Apply \ComputeWeightsOfHalfspaces$(G,w_*,\Theta)$ to compute the
    weights $w_*(H_i')$ and $w_*(H_i'')$ for each $\Theta$-class $E_i$\;
    Set $w(\cH_i') \leftarrow w_*(H_i')$ and $w(\cH_i'') \leftarrow
    w_*(H_i'') - w(P_i)$  for each $\Theta$-class $E_i$\;
    Compute the $E_i$-median instance for all $\Theta$-classes $E_i$ by
    traversing  $P$\;
    Compute the $E_i$-median $[\varrho_i'',\varrho_i']$  for each
    $\Theta$-class $E_i$ \;
    Orient the edges of $G$ and compute the set of half-edges around
    each vertex $v \in V$\;
    Compute the set $S(\oG)$ of sinks of $\oG$ by traversing the
    edges of $G$\;
    $V(\wM) \leftarrow \left\{g(v):v \in S(\oG)\right\}$\;
    $E(\wM)n\leftarrow \left\{g(u)g(v):uv \in E \text{ and } u,v \in
      S(\oG)\right\}$ \;
    \Return $\left(V(\wM),E(\wM)\right)$ \;
  }
\end{algorithm}
	
\subsection*{Preprocessing the input}
We first compute the $\Theta$-classes $E_1, E_2, \ldots, E_q$ of $G$
ordered by increasing distance from $v_0$ to $H'_i$. Using this, we
first modify the input of the median problem in linear time
$O(m + \delta)$ in such a way that for each terminal $p \in P$, $v(p)$
is the gate of $v_0$ in $Q(p)$.  Once the $\Theta$-classes have been
computed, we can assume that each terminal $p$ is described by its
root $v(p)$ as well as a list of coordinates $\Delta(p)$, one
coordinate $0 <\epsilon_i(p)<1$ for each $\Theta$-class $E_i$ of
$Q(p)$ such that $\epsilon_i(p)$ is the coordinate of $x$ along $E_i$
in the embedding of $Q(p)$ as a unit cube in which $v(p)$ is the
origin.  To update $v(p)$ and $\Delta(p)$, we use a (non-initialized)
matrix $B$ whose rows and columns are indexed respectively by the
vertices and the $\Theta$-classes of $G$ and such that if a vertex $v$
has a neighbor $v'$ such that $vv'$ belongs to the $\Theta$-class
$E_i$, then $B[v,E_i] = v'$ (and $B[v,E_i]$ is undefined if $v$ does
not have such a neighbor $v'$). One can construct $B$ in time $O(m)$
by traversing all edges of $G$ once the $\Theta$-classes have been
computed. With the matrix $B$ at hand, for each terminal $p \in P$, we
consider the coordinates of $p$ in order and for each coordinate
$\epsilon_i(p)$ of $p$, if $v'=B[v(p),E_i]$ is closer to $v_0$ than
$v(p)$, we replace $v(p)$ by $v'$ and $\epsilon_i(p)$ by
$1-\epsilon_i(p)$. Observe that each time we modify $v(p)$, $v(p)$ is
still a vertex of $Q(p)$ and thus $B[v(p),E_j]$ is still defined for
any coordinate $\epsilon_j \in \Delta(p)$.  Note that $v(p)$ can move
to distance up to $|\Delta(p)|$ from its original position during the
process. Once the matrix $B$ has been computed, the modification of
the roots of all the points $p \in P$ can be performed in time
$O(\sum_{p\in P} \Delta(p)) = O(\delta)$.
	
In this way, the local coordinates of the terminals of $P$ coincide
with the coordinates $\epsilon_i(p)$ defined in
Section~\ref{ss-geomhalfs}.
For each $\Theta$-class $E_i$, let
$P_i =P\cap \cN^{\circ}_i = \{p \in P : 0 < \epsilon_i(p) <1\}$, and
for each point $v \in V(G)$, let $P_v = \{p \in P : v(p) =v\}$.  By
traversing the points of $P$, we can compute all sets $P_i$,
$1\leq i \leq q$ and $P_v$, $v \in V$ and the weights of these sets in
time $O(\delta)$.
	
\subsection*{Computing the $E_i$-medians}
We first compute the weights $w_i(0) = w(\cH_i'')$ and
$w_i(1) = w(\cH_i')$ of the geometric halfspaces $\cH_i'', \cH_i'$ of
$G$. For each vertex $v$ of $G$, let $\uw(v) = w(P_v)$. Note that
$\uw(V) = w(P)$. Since $v_0 \in H_i''$, $\uw(H_i') = w(\cH_i')$ and
$\uw(H_i'') = w(\cH_i'') + w(\cN_i^o)$ for each $\Theta$-class
$E_i$. We apply the algorithm of Section~\ref{ssec:weight-halfpaces}
to $G$ with the weight function $\uw$ to compute the weights
$\uw(H_i')$ and $\uw(H_i'')$ of all halfspaces of $G$. Since
$w(\cN_i^o) = w(P_i)$ is known, we can compute $w(\cH_i') = \uw(H_i')$
and $w(\cH_i'') = \uw(H_i'') - w(P_i)$.  This allows us to complete
the definition of each $E_i$-median problem which altogether can be
solved linearly in the size of the input~\cite[Problem 9.2]{Cormen},
i.e., in time $O(\Sigma_{i=1}^q (|P_i| +2)) = O(\delta+m)$.
	
\subsection*{Computing $\wM$}
To compute the 1-skeleton $\wM$ of $\Med_w(\G)$ in $\wG$, we orient
the edges of $E_i$ according to the weights of $\cH'_i$ and $\cH''_i$:
$v'v''\in E_i$ with $v'\in \cH'_i$ and $v''\in \cH''_i$ is directed
from $v''$ to $v'$ if $\varrho'_i=\varrho''_i=1$ ($\cH'_i$ is
majoritary) and from $v'$ to $v''$ if $\varrho'_i=\varrho''_i=0$
($\cH''_i$ is majoritary), otherwise the edges of $E_i$ are not
oriented. Denote this partially directed graph by $\oG$ and let
$S(\oG)$ be the set of sinks of $\oG$. A non-directed edge
$v'v'' \in E_i$ defines a \emph{half-edge with origin} $v''$ if
$\varrho''_i>0$ and a \emph{half-edge with origin} $v'$ if
$\varrho'_i < 1$ (an edge $v'v''$ such that
$0< \varrho''_i\le \varrho'_i< 1$ defines two half-edges).
	
\begin{proposition}\label{one_sink}
  For any vertex $v$ of $\oG$, all half-edges with origin $v$ define a
  cube $Q_v$ of $\G$.
\end{proposition}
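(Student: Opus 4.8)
The plan is to show that the half-edges with origin $v$ are pairwise in crossing $\Theta$-classes, so that Lemma~\ref{crossing} immediately gives a cube $Q_v$ of $G$ (hence of $\G$) containing $v$ and all these half-edges. So let $v'v''\in E_i$ and $w'w''\in E_j$ be two distinct non-directed edges of $G$, each defining a half-edge with origin $v$ (say $v=v''$ with $\varrho''_i>0$, and $v=w''$ with $\varrho''_j>0$; the other sign conventions are symmetric). We must show that $E_i$ and $E_j$ are crossing.

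First I would record what the half-edge condition means in terms of the $E_i$-median problem: the edges of $E_i$ being non-directed means, by the orientation rule, that $(\varrho''_i,\varrho'_i)\ne(0,0)$ and $(\varrho''_i,\varrho'_i)\ne(1,1)$; and the origin-$v''$ condition $\varrho''_i>0$ means, via Proposition~\ref{majority_bis}(3)--(4), that $\cH'_i$ is \emph{not} majoritary, i.e.\ $w(\cH'_i)\le\tfrac12 w(\G)$, with strict inequality unless we are in case (4). The upshot I want is: $\Med_w(\G)$ meets $\cN^\circ_i$ and also meets $\cH''_i$ (this is exactly what cases (2)--(4) of Proposition~\ref{majority_bis} assert when $\varrho''_i>0$), and in particular $\Med_w(\G)$ is not contained in $\cH'_i$. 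Since $\Med_w(\G)=\Med_w(\hG)$ is a subcomplex of $\hG$ by Proposition~\ref{med_as_subcomplex}, and $v=v''\in\partial\cH''_i$ lies on the hyperplane $\ch_i$, the vertex $v$ together with its neighbor across $E_i$ both lie in $\Med_w(\hG)$, i.e.\ $v$ has a neighbor in $\Med_w(\G)$ reached by crossing $E_i$, and likewise a neighbor reached by crossing $E_j$.

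Now the key step: I claim that for a vertex $v\in\Med_w(\G)$, the set of $\Theta$-classes $E_i$ for which $v$ has a neighbor $v_i\sim v$ with $v_i\in\Med_w(\G)$ and $vv_i\in E_i$ is a pairwise-crossing family. This follows because $\Med_w(\G)$ (equivalently $\Med_w(\wG_l)=\wM$, which by Proposition~\ref{majority} is an intersection of halfspaces, hence convex and gated in $\wG$) is a gated subgraph of $\wG$; in a median graph, the edges at a vertex $v$ that lie inside a convex (gated) subgraph span a cube of that subgraph — this is a standard consequence of Lemma~\ref{crossing} together with the fact that a gated subgraph of a median graph is itself median and its $\Theta$-classes are restrictions of those of the ambient graph. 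Concretely, if $vv_i\in E_i$ and $vv_j\in E_j$ both lie in the gated set $\wM$, then the median of $v_i,v_j$ and any third vertex stays in $\wM$, and the quadrangle condition produces the fourth corner of the square inside $\wM$, so $E_i$ and $E_j$ are crossing. Translating back through the subdivision $\wG$ of $G$ (the $\Theta$-classes of $\wG$ are subdivisions of those of $G$, so two of them cross in $\wG$ iff the parent $\Theta$-classes cross in $G$), we get that all the $E_i$'s carrying a half-edge with origin $v$ are pairwise crossing. By Lemma~\ref{crossing}, $v$ and the corresponding edges $vv_i$ of $G$ lie in a single cube $Q_v$ of $G$; its geometric realization in $\G$ is the desired cube, and the half-edges with origin $v$ are exactly the half-edges of $Q_v$ emanating from $v$.

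The main obstacle I anticipate is the bookkeeping between the three complexes $G$, $\wG$ (the $1$-skeleton of $\hG$), and $\wG(x)$, and making precise that "$v$ has a half-edge in direction $E_i$" is equivalent to "$v$, viewed as a vertex of $\wG$, has a neighbor across the subdivided $\Theta$-class of $E_i$ that lies in $\wM$." Once that dictionary is set up — using Proposition~\ref{majority_bis} to read off from $\varrho''_i>0$ or $\varrho'_i<1$ precisely which geometric halfspaces $\wM$ meets — the crossing argument is the short, clean part. A minor point to check is the degenerate case~(4) ($w(\cN^\circ_i)=0$), where the half-edge still exists because $\cH''_i$ is egalitarian; there the edge of $E_i$ itself lies in $\wM$ and the argument goes through verbatim.
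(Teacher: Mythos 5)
Your overall plan --- show that the $\Theta$-classes carrying half-edges with origin $v$ pairwise cross and then invoke Lemma~\ref{crossing} --- is the same as the paper's, but the mechanism you use to establish crossing contains a genuine error. You assert that $\varrho''_i>0$ implies, via Proposition~\ref{majority_bis}, that $\Med_w(\G)$ meets $\cH''_i$, and hence that $v=v''$ and its neighbor across $E_i$ both lie in $\Med_w(\hG)$; from there you reduce to the (true) fact that edges issuing from a vertex of a gated subgraph and staying inside it pairwise cross. But the implication is exactly backwards: $\varrho''_i>0$ corresponds to case (2) with $0<\varrho''_i<\varrho'_i=1$ or to case (3) of Proposition~\ref{majority_bis}, and in both of these $\Med_w(\G)\subseteq \cH'_i\cup\cN^{\circ}_i$, so $\Med_w(\G)\cap\cH''_i=\emptyset$ and in particular $v\notin\Med_w(\G)$. (Case (4) has $\varrho''_i=0$ and $\varrho'_i=1$ and produces no half-edge at all, so your ``degenerate case'' remark is also off.) Thus in every non-vacuous instance of the proposition --- i.e., whenever $v$ is the origin of at least one half-edge --- your key step is applied to a vertex that does not belong to the median set, and the convexity/gatedness of $\wM$ at $v$ cannot be invoked. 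The argument collapses at precisely this point.

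The repair, which is what the paper does, is to read Proposition~\ref{majority_bis} in the opposite direction: $\varrho''_i>0$ means $w(\cH''_i)<\tfrac12 w(P)$, i.e., the half-edge with origin $v$ points, in $\wG$, into a \emph{strictly majoritary} halfspace (the one bounded by the hyperplane $\ch_i(\epsilon)$ through the smallest positive terminal coordinate $\epsilon$ along $E_i$). Two strictly majoritary halfspaces have total weight exceeding $w(P)$ and hence must intersect; together with the three intersections witnessed by $v$ and its two neighbors $v_i,v_j$ in $\wG$, this shows that the two $\Theta$-classes cross, and Lemma~\ref{crossing} finishes the proof. So your target (pairwise crossing) is the right one, but the route you propose to it does not work.
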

	
\begin{proof}
For any vertex $v$ and two $\Theta$-classes $E_i,E_j$ defining
  half-edges with origin $v$, let $v_i$ and $v_j$ be the respective
  neighbors of $v$ in $\wG$ along the directions $E_i$ and $E_j$.  By
  Proposition~\ref{majority_bis}, $vv_i$ and $vv_j$ point to two
  majoritary halfspaces of $\wG$ (and $\G$). Since those two
  halfspaces cannot be disjoint, $E_i$ and $E_j$ are crossing.
The proposition then follows  from Lemma~\ref{crossing}.
\end{proof}

For any cube $Q$ of $\G$, let $B(Q) \subseteq Q$ be the subcomplex of
$\hG$ that is the Cartesian product of the $E_i$-medians
$[\varrho''_i,\varrho'_i]$ over all $\Theta$-classes $E_i$ wich define
dimensions of $Q$. By the definition of the $E_i$-medians, $B(Q)$ is a
single box of $\hG$ and its vertices belong to $\wG$.
	
\begin{proposition}
  For any cube $Q$ of $\G$, if $Q \cap \Med_{w}(\G) \neq \emptyset$,
  then $B(Q) = \Med_w(\G) \cap Q$.
\end{proposition}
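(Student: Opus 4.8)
The plan is to reduce the proposition to a single global description of the median set, and then to read off the cube-local statement from it. First I would introduce, for each $\Theta$-class $E_i$ of $G$, the coordinate function $\sigma_i\colon\G\to[0,1]$ defined by $\sigma_i(x)=\epsilon_i(x)$ when $E_i$ is a dimension of the smallest cube $Q(x)$ containing $x$, and $\sigma_i(x)=0$ (resp.\ $\sigma_i(x)=1$) when $x\in\cH''_i$ (resp.\ $x\in\cH'_i$); note that $\sigma_i$ is continuous, that $\sigma_i(x)\in(0,1)$ exactly when $x\in\cN^\circ_i$, and that, since $\G$ is the disjoint union of $\cH'_i$, $\cH''_i$ and $\cN^\circ_i$, every point of $\G$ receives a well-defined value. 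The identity I want to prove is
\[
 \Med_w(\G)=\{\,x\in\G : \varrho''_i\le\sigma_i(x)\le\varrho'_i \text{ for every } \Theta\text{-class } E_i \text{ of } G\,\}.
\]

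To prove this identity I would combine Propositions~\ref{med_as_subcomplex} and~\ref{majority}. By Proposition~\ref{med_as_subcomplex}, a point $x$ lies in $\Med_w(\G)$ if and only if every vertex of the smallest box $\hB(x)$ of $\hG$ containing $x$ belongs to $\wM=\Med_w(\wG_l)$. Along the direction $E_i$, the vertices of $\hB(x)$ take either the single value $\sigma_i(x)$ (when $\sigma_i(x)$ is one of the cut values $0$, $1$, $\epsilon_i(p)$ with $p\in P\cap\cN^\circ_i$, so that $E_i$ is not a dimension of $\hB(x)$) or the two consecutive cut values bounding the slab, i.e.\ the piece of $\cN_i$ between two consecutive $E_i$-hyperplanes of $\hG$, that contains $\sigma_i(x)$. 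On the other hand, $\wG_l$ is a median graph, so by Proposition~\ref{majority} the set $\wM$ is the intersection of the majoritary halfspaces of $\wG$; repeating the computation from the proof of Proposition~\ref{majority_bis}, the majoritary halfspaces arising from the subdivision of a fixed $E_i$ intersect in $\{v\in V(\wG):\varrho''_i\le\sigma_i(v)\le\varrho'_i\}$, whence $\wM=\{v:\varrho''_i\le\sigma_i(v)\le\varrho'_i \text{ for all } i\}$. Since $\varrho''_i$ and $\varrho'_i$ are themselves consecutive cut values, combining the two descriptions of $\hB(x)$ shows that $V(\hB(x))\subseteq\wM$ if and only if $\varrho''_i\le\sigma_i(x)\le\varrho'_i$ for every $i$, which is the displayed identity. (Alternatively this identity can be obtained metrically: $(\G,d_1)$ being a median space yields $d_1(x,y)=\sum_i|\sigma_i(x)-\sigma_i(y)|$, hence $F_w(x)=\sum_i F^i_w(\sigma_i(x))$ where $F^i_w$ is the objective function of the $E_i$-median problem, and this sum is minimized precisely when every $\sigma_i(x)$ lies in $\argmin F^i_w=[\varrho''_i,\varrho'_i]$.)

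With the identity established, I would conclude as follows. Let $E_{i_1},\dots,E_{i_d}$ be the $\Theta$-classes defining the dimensions of $Q$. If $E_i\notin\{E_{i_1},\dots,E_{i_d}\}$, then $Q$ does not meet the midhyperplane $\ch_i$, so $\sigma_i$ is constant on $Q$ and equal to $0$ or $1$; applying the identity to any point of the set $Q\cap\Med_w(\G)$, which is nonempty by hypothesis, shows that this constant value already lies in $[\varrho''_i,\varrho'_i]$, so the corresponding condition holds on all of $Q$. Hence, for $x\in Q$, the only remaining conditions are $\varrho''_{i_j}\le\epsilon_{i_j}(x)\le\varrho'_{i_j}$ for $j=1,\dots,d$, and the identity gives
\[
 \Med_w(\G)\cap Q=\{\,x\in Q : \varrho''_{i_j}\le\epsilon_{i_j}(x)\le\varrho'_{i_j} \text{ for } j=1,\dots,d\,\}=\prod_{j=1}^{d}[\varrho''_{i_j},\varrho'_{i_j}]=B(Q).
\]

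The main obstacle is the global identity itself: a priori $\Med_w(\G)$ is merely some subcomplex of $\hG$, and one has to show it is exactly the set cut out by the per-$\Theta$-class constraints $\varrho''_i\le\sigma_i\le\varrho'_i$. This rests on correctly identifying which of the halfspaces of $\wG$ produced by each subdivided $E_i$ are majoritary --- equivalently, on the $\argmin$ analysis of the piecewise-linear convex function $F^i_w$ --- together with the bookkeeping that $\varrho''_i$ and $\varrho'_i$ are consecutive cut values, so that a box of $\hG$ either lies inside the slab $[\varrho''_i,\varrho'_i]$ or leaves it entirely. The role of the hypothesis $Q\cap\Med_w(\G)\neq\emptyset$ is only to ensure that the constraints coming from the $\Theta$-classes that do not define a dimension of $Q$ are automatically satisfied on all of $Q$; the remaining steps are routine verifications.
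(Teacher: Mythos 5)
Your proof is correct, but it is organized differently from the paper's. The paper argues locally inside $Q$: it first shows $B(Q)\subseteq \wM$ by taking a vertex $x$ of $B(Q)$ that is not a local median, producing an edge $xy$ of $\wG$ with $F_w(x)>F_w(y)$ parallel to some $E_i$, and contradicting the fact that $\epsilon_i(x)\in\{\varrho''_i,\varrho'_i\}$ is an $E_i$-median point; it then uses the convexity of $\wM$ to find an edge of $\wM\cap Q$ leaving $B(Q)$ and derives a symmetric contradiction with a minoritary halfspace. You instead first establish the global identity $\Med_w(\G)=\{x:\varrho''_i\le\sigma_i(x)\le\varrho'_i\ \forall i\}$ by combining Proposition~\ref{med_as_subcomplex} with the majority rule applied direction by direction in $\wG$, and then read off the cube-local statement. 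The underlying ingredients are identical (the majority rule for $\wG_l$, the laminar chain of subdivided halfspaces of a fixed $E_i$ as in Proposition~\ref{majority_bis}, and the fact that $\varrho''_i,\varrho'_i$ are consecutive cut values, which holds because every interior cut value $\epsilon_i(p)$ carries positive weight), so the substance is the same; but your global formulation buys a complete coordinate-wise description of $\Med_w(\G)$ that subsumes both this proposition and Proposition~\ref{majority_bis}, and it makes explicit where the hypothesis $Q\cap\Med_w(\G)\neq\emptyset$ is used, namely only to handle the $\Theta$-classes that are not dimensions of $Q$ --- a point the paper's first inclusion leaves somewhat implicit, since for such a class $E_i$ the claim that $\epsilon_i(x)$ coincides with $\varrho''_i$ or $\varrho'_i$ is not part of the definition of $B(Q)$ and also requires the nonemptiness hypothesis. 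The price is the extra bookkeeping needed to set up the coordinate functions $\sigma_i$ and to verify that the majoritary halfspaces of $\wG$ in direction $E_i$ intersect exactly in $\{\varrho''_i\le\sigma_i\le\varrho'_i\}$, which you do correctly.
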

	
\begin{proof}
  If a vertex $x$ of $B(Q)$ is not a median of $\wG$, by
  Proposition~\ref{majority}, $x$ is not a local median of $\wG$. Thus
  $F_w(x)>F_w(y)$ for an edge $xy$ of $\wG$. Suppose that $xy$ is
  parallel to the edges of $E_i$ of $G$.  Then $\epsilon_i(x)$
  coincides with $\varrho''_i$ or $\varrho'_i$. Since $F_w(x)>F_w(y)$,
  the halfspace $W(y,x)$ of $\wG$ is majoritary, contrary to the
  assumption that $\epsilon_i(x)$ is an $E_i$-median point.  Thus all
  vertices of $B(Q)$ belong to $\wM$ and by
  Proposition~\ref{med_as_subcomplex}, $B(Q)\subseteq \Med_w(\G)$. It
  remains to show that any point of $Q\setminus B(Q)$ is not
  median. Otherwise, by Proposition~\ref{med_as_subcomplex} and since
  $\wM$ is convex, there exists a vertex $y\notin B(Q)$ of
  $(\wM\cap Q)\setminus B(Q)$ adjacent to a vertex $x$ of $B(Q)$. Let
  $xy$ be parallel to $E_i$.  Then $\epsilon_i(x)$ coincides with
  $\varrho''_i$ or $\varrho'_i$ and $\epsilon_i(y)$ does not belong to
  the $E_i$-median $[\varrho''_i,\varrho'_i]$.  Hence the halfspace
  $W(y,x)$ of $\wG$ is minoritary, contrary to $F_w(y)=F_w(x)$.
\end{proof}
	
For a sink $v$ of $\oG$, let $g(v)$ be the point of $Q_v$ such that
for each $\Theta$-class $E_i$ of $Q_v$, $\epsilon_i(g(v)) = \varrho'$
if $v \in \cH_i'$ and $\epsilon_i(g(v)) = \varrho''$ if
$v \in \cH_i''$. Note that $g(v)$ is the gate of $v$ in $B(Q_v)$ and
$g(v)$ is a vertex of $\wM$.  Conversely, let $x \in \wM$ and consider
the cube $Q(x)$. Since $B(Q(x))$ is a cell of $\hG$, for each
$\Theta$-class $E_i$ of $Q(x)$, we have
$\epsilon_i(x) \in \{\varrho'_i,\varrho_i''\}$. Let $f(x)$ be the
vertex of $Q(x)$ such that $f(x) \in \cH_i''$ if
$\epsilon_i(x) = \varrho''_i$ and $f(x) \in \cH_i'$ otherwise.
	
\begin{proposition}\label{prop-sources}
  For any $v\in S(\oG)$, $g(v)$ is the gate of $v$ in $\wM$ and
  $\Med_w(\G)$.  For any $x\in \wM$, $x = g(f(x))$ is the gate of
  $f(x)$ in $\wM$ and $\Med_w(\G)$.
		
  Furthermore, for any edge $uv$ of $G$ with $u,v\in S(\oG)$, either
  $g(u) = g(v)$ or $g(u)g(v)$ is an edge of $\wM$. Conversely, for any
  edge $xy$ of $\wM$, $f(x)f(y)$ is an edge of $G$.
\end{proposition}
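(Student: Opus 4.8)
Throughout, write $\pi$ for the gate projection onto the convex (hence gated, by Lemma~\ref{convex-gated}) set $\wM$ in the median graph $\wG$; recall that $\wM$ is the intersection of all majoritary halfspaces of $\wG$. The plan is to show that $g$ is precisely $\pi$ restricted to $S(\oG)$, that $f$ is its inverse, and then to read off the two edge statements. I would first record the elementary fact that, in a median graph $H$ with a gated subgraph $S$, a vertex $s\in S$ is the gate of a vertex $v$ if and only if every $\Theta$-class of $H$ separating $v$ from $s$ separates $v$ from all of $S$ (immediate from the description of intervals by separating $\Theta$-classes). Since $g(v)\in V(\wM)$, the first sentence (for $\wM$) amounts to checking that every $\Theta$-class $\hE$ of $\wG$ separating $v$ from $g(v)$ has $\wM$ on its $g(v)$-side. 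Now $g(v)$ lies in the box $Q_v$ and differs from the vertex $v$ of $Q_v$ exactly along the dimensions $E_i$ of $Q_v$, so $\hE$ is a piece of one such $E_i$; say $v\in\cH''_i$ (the case $v\in\cH'_i$ being symmetric). As $E_i$ is a half-edge direction at $v$, we have $0<\varrho''_i<1$, $\epsilon_i(g(v))=\varrho''_i$, and $\hE$ lies between two vertex-levels $0\le\alpha<\beta\le\varrho''_i$ of $\cN_i$, its $g(v)$-side halfspace being the geometric halfspace $\cH'_i(\beta)$. Its weight is $w(\cH'_i(\beta))=\sum_{t\ge\beta}w_i(t)\ge\sum_{t\ge\varrho''_i}w_i(t)=w(P)-\sum_{t<\varrho''_i}w_i(t)>\tfrac12 w(P)$, the last inequality because $\varrho''_i$ is the left endpoint of the $E_i$-median; so $\cH'_i(\beta)$ is majoritary, and hence contains $\wM$. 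Thus $g(v)=\pi(v)$; since $v$ is a vertex, the same separating-hyperplane argument shows $g(v)$ is also the gate of $v$ in $\Med_w(\G)$.

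For the second sentence, let $x\in\wM$. I would check $f(x)\in S(\oG)$: for an edge $f(x)w\in E_j$ of $G$, if $E_j$ is a dimension of $Q(x)$ then $\epsilon_j(x)\in\{\varrho''_j,\varrho'_j\}\cap(0,1)$ (as $x$ lies in the relative interior of $Q(x)$), so $E_j$ is undirected; otherwise $f(x)$ is on the same side of $E_j$ as $x$, and either $E_j$ is undirected or one of its halfspaces is majoritary and hence contains $\wM\ni x$, so $f(x)w$ is oriented toward $f(x)$. Next, $Q_{f(x)}=Q(x)$: every dimension of $Q(x)$ is a half-edge direction at $f(x)$ (same computation with $\epsilon_i(x)\in\{\varrho''_i,\varrho'_i\}\cap(0,1)$), and conversely a half-edge direction $E_i$ at $f(x)$ that is not a dimension of $Q(x)$ would put $x$ on the same side of $E_i$ as $f(x)$, say $x\in\cH''_i$, while forcing $0<\varrho''_i<1$, which by Proposition~\ref{majority_bis} gives $\Med_w(\G)\cap\cH''_i=\emptyset$, contradicting $x\in\wM$. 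Finally $g(f(x))$ and $x$ are two points of $Q(x)=Q_{f(x)}$ with the same coordinate $\varrho''_i$ or $\varrho'_i$ along each dimension $E_i$, so $g(f(x))=x$; applying the first sentence to the sink $f(x)$ gives that $x$ is the gate of $f(x)$ in $\wM$ and in $\Med_w(\G)$.

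The two edge statements I would prove in parallel, purely combinatorially. Let $uv\in E_j$ be an edge of $G$ with $u,v\in S(\oG)$; then $uv$ is undirected (else one endpoint has an out-arc), say $u\in\cH''_j$, $v\in\cH'_j$. For each $i\ne j$ the vertices $u,v$ lie on the same side of $E_i$, and since being undirected and the numbers $\varrho''_i,\varrho'_i$ are attributes of the whole $\Theta$-class $E_i$, the same dimensions are half-edge directions at $u$ and at $v$; hence $\epsilon_i(g(u))=\epsilon_i(g(v))$. For $i=j$ one gets $\epsilon_j(g(u))=\varrho''_j$ and $\epsilon_j(g(v))=\varrho'_j$ (equal to $0$, resp.\ $1$, when $E_j$ is not a half-edge direction there). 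Thus $g(u)$ and $g(v)$ are vertices of $\wG$ differing in at most one $\Theta$-class of $\wG$ (a piece of $E_j$, there being no vertex-level strictly between $\varrho''_j$ and $\varrho'_j$), so $g(u)=g(v)$ or $g(u)\sim g(v)$ in $\wG$, and in the latter case $g(u)g(v)\in E(\wM)$ because $\wM$ is convex. Conversely, let $xy$ be an edge of $\wM$ lying in a piece of some $E_j$; then $\epsilon_i(x)=\epsilon_i(y)$ for all $i\ne j$, while $\{\epsilon_j(x),\epsilon_j(y)\}=\{\varrho''_j,\varrho'_j\}$ (a short case check using that $x,y\in\wM$ and the $E_j$-median property). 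Hence $f(x)$ and $f(y)$ lie on the same side of $E_i$ for every $i\ne j$ and on opposite sides of $E_j$, so they differ in exactly one $\Theta$-class of $G$ and are therefore adjacent, i.e.\ $f(x)f(y)\in E(G)$.

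The main obstacle is the weight estimate in the first paragraph: one must pair each separating piece of $E_i$ with the geometric halfspace it bounds and use the extremality of the endpoints $\varrho''_i,\varrho'_i$ of the $E_i$-median to get a \emph{strict} majority, being careful with the degenerate cases of Proposition~\ref{majority_bis} (which are needed again to exclude spurious half-edge directions in the second paragraph). Everything else reduces to the partial-cube fact that two vertices differing in exactly one $\Theta$-class are adjacent, and to the description of gates and intervals by separating $\Theta$-classes.
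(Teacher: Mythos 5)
Your proposal is essentially correct in structure but takes a noticeably different route from the paper's proof, and it contains one step whose justification does not hold as stated. On the comparison: for the gate property the paper argues via the identity $B(Q_v)=\Med_w(\G)\cap Q_v$ (so the gate of $v$ in $\Med_w(\G)$ is its gate in the gated set $B(Q_v)$, which is $g(v)$ by definition), whereas you argue hyperplane-by-hyperplane that every $\Theta$-class of $\wG$ separating $v$ from $g(v)$ bounds a majoritary halfspace on the $g(v)$-side; your weight estimate $\sum_{t\ge\beta}w_i(t)\ge\sum_{t\ge\varrho''_i}w_i(t)>\tfrac12 w(P)$ is correct. Your treatment of the second sentence and of the last edge statement agrees with the paper's (your "separated by exactly one $\Theta$-class" argument even avoids the paper's appeal to the injectivity of $f$). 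Two smaller remarks: you assert $g(v)\in V(\wM)$ without proof (the paper states it in the preamble but its proof re-derives it from the fact that $g(v)$ is a sink of $\owG$), and passing from the gate in $\wM$ to the gate in $\Med_w(\G)$ deserves an explicit appeal to Proposition~\ref{med_as_subcomplex}.

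The genuine gap is in your proof of the first edge statement. For an undirected edge $uv\in E_j$ between two sinks and $i\ne j$, you claim that "the same dimensions are half-edge directions at $u$ and at $v$" because undirectedness and the numbers $\varrho''_i,\varrho'_i$ are attributes of the whole class $E_i$. But being a half-edge direction at $u$ also presupposes that $u$ is incident to an edge of $E_i$, i.e., that $u$ lies in the carrier of $E_i$, and adjacent vertices of a median graph need not lie in the same carriers. If $E_i$ were a half-edge direction at $u$ (say $u,v\in\cH''_i$, so $\varrho''_i>0$) while $v$ had no $E_i$-neighbour, then $\epsilon_i(g(u))=\varrho''_i>0=\epsilon_i(g(v))$ and your coordinate comparison breaks. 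The claim is still true, but it needs an extra argument: $\varrho''_i>0$ forces the halfspace of $\wG$ complementary to $\cH''_i$ to be majoritary, so $\wM\cap\cH''_i=\emptyset$; if $v$ had no $E_i$-neighbour then $Q_v\subseteq\cH''_i$, hence $g(v)\in\cH''_i$, contradicting $g(v)\in\wM$. The paper sidesteps this entirely: it introduces the two adjacent vertices $u',u''$ of $\wG$ on the segment $uv$ with $\epsilon_i(u')=\varrho'_i$, $\epsilon_i(u'')=\varrho''_i$, notes that $g(u)$ and $g(v)$ are their gates in $\wM$, and deduces $d_{\wG}(g(u),g(v))\le 1$ from two triangle inequalities. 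You should either adopt that argument or insert the carrier argument above.
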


\begin{proof}By Proposition~\ref{majority_bis} applied to $\G$,
  Proposition~\ref{majority} applied to $\wG$, and the definition of
  sinks of $\owG$, $g(v)$ is a sink of $\owG$, thus $g(v)$ is a median
  of $\wG$ and $\G$.  Since $B(Q_v) = \Med_w(\G) \cap Q_v$ is gated
  and non-empty, the gate of $v$ in $\Med_w(\G)$ belongs to $B(Q_v)$
  and thus the gate of $v$ in $\Med_w(\G)$ is the gate of $v$ on
  $B(Q_v)$.  Conversely, $\epsilon_i(x) \notin \{0,1\}$ for any $E_i$
  defining a dimension of $Q(x)$, thus there is an $E_i$-half-edge
  with origin $f(x)$.  Pick now any $E_j$-edge incident to $v$ such
  that $E_j$ does not define a dimension of $Q(x)$. Without loss of
  generality, assume that $f(x)\in \cH_j'$. Then $x \in \cH_j'$,
  yielding $w(\cH_j') \geq \frac{1}{2}w(P)$. By
  Proposition~\ref{majority_bis}, $\varrho_j'=1$ and thus $f(x)$ is
  not the origin of an $E_j$-edge or $E_j$-half-edge. Consequently,
  $Q_{f(x)} = Q(x)$ by Proposition~\ref{one_sink} and by the
  definition of $f(x)$ and $g(f(x))$, we have $x= g(f(x))$.
		
  Let $v'v''$ be an $E_i$-edge between two sinks of $\oG$ with
  $v' \in \cH_i'$ and $v'' \in \cH''_i$. Let $x' = g(v')$ and
  $x'' = g(v'')$ and assume that $x' \neq x''$.  Let $u', u''$ be the
  points of $v'v''$ such that $\epsilon_i(u') = \varrho_i'$ and
  $\epsilon_i(u'') = \varrho_i''$. Note that $u'$ and $u''$ are
  adjacent vertices of $\wG$ and that $u' \in I_{\wG}(v',x')$ and
  $u'' \in I_{\wG}(v'',x'')$. In $\wG$, $x''$ is the gate of $u''$
  (and $x'$ is the gate of $u'$) in $\wM$.  Since
  $d_{\wG}(u',x') + d_{\wG}(x',x'') = d_{\wG}(u',x'') \leq
  d_{\wG}(u'',x'') +1$ and
  $d_{\wG}(u'',x'') + d_{\wG}(x',x'') = d_{\wG}(u'',x') \leq
  d_{\wG}(u',x') +1$, we obtain that $d_{\wG}(x',x'') \leq 1$.
		
  Any edge $x'x''$ of $\wM$ is parallel to a $\Theta$-class $E_i$ of
  $G$.  For any $\Theta$-class $E_j$ of $Q(x')$ (resp.\ $Q(x'')$) with
  $j \neq i$, $E_j$ is a $\Theta$-class of $Q(x'')$ (resp.\ $Q(x')$)
  and $\epsilon_j(x') = \epsilon_j(x'')$. By their definition, $f(x')$
  and $f(x'')$ can be separated only by $E_i$, i.e.,
  $d_G(f(x'),f(x'')) \leq 1$. Since $f$ is an injection from $V(\wM)$
  to $S(\oG)$, necessarily $f(x')$ and $f(x'')$ are adjacent.
\end{proof}
		
The algorithm computes the set $S(\oG)$ of all sinks of $\oG$ and for
each sink $v\in S(\oG)$, it computes the gate of $g(v)$ of $v$ in
$\wM$ and the local coordinates of $g(v)$ in $\G$. The algorithm
returns $\left\{g(v): v\in S(\oG)\right\}$ as
$V(\wM)$ and $\left\{g(u)g(v) : uv \in E \text{ and } u,v \in
  S(\oG)\right\}$ as
$E(\wM)$. Proposition~\ref{prop-sources} implies that $V(\wM)$ and
$E(\wM)$ are correctly computed and that $\wM$ contains at most $n$
vertices and $m$ edges. Moreover each vertex $x$ of $\wM$ is the gate
$g(f(x))$ of the vertex $f(x)$ of $Q(x)$ that has dimension at most
$\deg(f(x))$. Hence the size of the description of the vertices of
$\wM$ is at most $O(m)$. This finishes the proof of
Theorem~\ref{mediancomplexx}.

\subsection{Wiener index in $\G$}
We describe a linear time algorithm to compute the Wiener index of a
set of terminals in the $\ell_1$-cube complex $\G$ of a median graph
$G$.  By analogy with graphs, the Wiener index in $\G$ is the sum of
the weighted distances between all pairs of terminals.
	
\begin{proposition}
  Let $G$ be a median graph with $m$ edges and let $P$ be a finite set
  of terminals of $\G$ described by an input of size $\delta$. The
  Wiener index of $P$ in $\G$ can be computed in $O(m+\delta)$ time.
\end{proposition}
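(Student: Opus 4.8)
The plan is to mirror the reduction used for $\Med_w(\G)$ and express the Wiener index of $P$ in $\G$ as a sum of one-dimensional Wiener indices, one for each $\Theta$-class of $G$. As in the proof of Proposition~\ref{majority_bis}, I would associate to each point $x\in\G$ and each $\Theta$-class $E_i$ of $G$ the coordinate $c_i(x)\in[0,1]$ equal to $0$ if $x\in\cH''_i$, to $1$ if $x\in\cH'_i$, and to $\epsilon_i(x)$ if $x\in\cN^{\circ}_i$. In the box complex $\hG$ the class $E_i$ is subdivided into consecutive $\Theta$-classes of the median graph $\wG$, every terminal of $P$ is a vertex of $\wG$, and on the vertices of $\wG$ the $\ell_1$-metric agrees with the path metric $d_{\wG_l}$ of $\wG$ equipped with its $\Theta$-class lengths. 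Since $w$ is supported on $P\subseteq V(\wG)$, the edge-weighted analogue of Lemma~\ref{wiener_folklore} applied to the (weighted) partial cube $\wG_l$ — in which $d_{\wG_l}(x,y)$ is the sum of the lengths of the $\Theta$-classes of $\wG$ separating $x$ and $y$ — after grouping those $\Theta$-classes according to the class $E_i$ of $G$ that they subdivide, yields
\[
  W_w(\G)=\sum_{p,q\in P}w(p)w(q)\,d_1(p,q)=\sum_{i=1}^{q}\ \sum_{p,q\in P}w(p)w(q)\,\bigl|c_i(p)-c_i(q)\bigr|.
\]

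For a fixed $\Theta$-class $E_i$, the inner sum is exactly the Wiener index of a weighted multiset $X_i$ of at most $|P_i|+2$ points of the segment $[0,1]$: the point $0$ with weight $w(\cH''_i)$, the point $1$ with weight $w(\cH'_i)$, and, for each terminal $p\in P_i$, the point $\epsilon_i(p)$ with weight $w(p)$; here $w(\cH''_i)+w(\cH'_i)+w(P_i)=w(P)=:M$. All of these quantities — the normalized coordinates $\epsilon_i(p)$, the sets $P_i$, and the halfspace weights $w(\cH'_i),w(\cH''_i)$ — are precisely what the preprocessing of Algorithm~\ref{alg:compmedcomplex} already produces in time $O(m+\delta)$. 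If $0=t_0<t_1<\dots<t_r<t_{r+1}=1$ are the distinct coordinate values occurring in $X_i$ and $S_j$ denotes the total weight of the points of $X_i$ with coordinate $\le t_j$, then the Wiener index of $X_i$ telescopes to (a fixed constant multiple of) $\sum_{j=0}^{r}(t_{j+1}-t_j)\,S_j\,(M-S_j)$, which, once the $t_j$ are listed in increasing order, is evaluated by a single left-to-right pass that maintains the running prefix weight $S_j$.

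So the algorithm is: run the preprocessing of Algorithm~\ref{alg:compmedcomplex} to obtain the $\epsilon_i(p)$, the $P_i$, and the weights $w(\cH'_i),w(\cH''_i)$; then, for each $\Theta$-class $E_i$, sort $\{\epsilon_i(p):p\in P_i\}$, adjoin the endpoints $0$ and $1$ with their weights, and accumulate $\sum_{j}(t_{j+1}-t_j)S_j(M-S_j)$; finally output the sum of these contributions over $i=1,\dots,q$. Correctness is immediate from the displayed identity and the telescoping formula, and the accumulation phase performs $O\!\bigl(\sum_{i=1}^{q}(|P_i|+2)\bigr)=O(m+\delta)$ arithmetic operations; this part is entirely parallel to the computation of the $E_i$-medians in Algorithm~\ref{alg:compmedcomplex}.

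The one step I expect to be genuinely delicate is the sorting of the coordinates inside each $\Theta$-class together with the accounting that keeps the total inside the $O(m+\delta)$ budget: a separate comparison sort per class costs $O\!\bigl(\sum_i|P_i|\log|P_i|\bigr)$, which in the pure comparison model is only $O(m+\delta\log\delta)$. To reach the claimed linear bound one must exploit the representation of the input coordinates (e.g.\ a radix or bucket sort, valid when the coordinates of the terminals are given with bounded precision), so that each of the $O(\delta)$ coordinate values is handled a constant number of times; everything else in the argument is routine bookkeeping built on the primitives established for Theorem~\ref{mediancomplexx}.
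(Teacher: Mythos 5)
Your proof follows essentially the same route as the paper: the decomposition over $\Theta$-classes via the box complex and the telescoping sum $\sum_{j}(t_{j+1}-t_j)\,S_j\,(M-S_j)$ is exactly the paper's formula $\sum_{i=1}^{q}\sum_{j=0}^{k} w(\cH''_i(\epsilon_j))\cdot w(\cH'_i(\epsilon_{j+1}))\cdot(\epsilon_{j+1}-\epsilon_j)$, since $S_j=w(\cH''_i(\epsilon_j))$ and $M-S_j=w(\cH'_i(\epsilon_{j+1}))$. The sorting caveat you raise at the end is legitimate, but the paper's own proof has the same issue --- it simply says ``once the $O(\delta)$ hyperplanes are ordered'' and charges nothing for producing that order --- so your proposal is, if anything, more explicit on this point.
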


\begin{proof}
  The proof is similar to the proof of Proposition~\ref{t-wiener}.
  Let $0<\epsilon_1<\cdots<\epsilon_k<1$ denote the $E_i$-coordinates
  of the points in $P$ and let $\epsilon_0=0$ and $\epsilon_{k+1}=1$.
  Just like in the proof of Proposition~\ref{majority_bis}, we have
  the following chains of inclusions between the halfspaces defined by
  the hyperplanes
  $\ch_i(\epsilon_0),\ch_i(\epsilon_1),\ldots,\ch_i(\epsilon_k),\ch_i
  (\epsilon_{k+1})$:
  \[ \cH''_i=\cH_i''(\epsilon_0)\subset
    \cH''_i(\epsilon_1)\subset\cdots\subset \cH''_i(\epsilon_k) \]
  and 
  \[\cH'_i=\cH'_i(\epsilon_{k+1})\subset
    \cH'_i(\epsilon_k)\subset\cdots\subset \cH'_i(\epsilon_1).\] 
  
  Then, similarly to Lemma~\ref{wiener_folklore}, we get the following
  result:
  \begin{lemma}
    $W_w(\G)=\sum_{i=1}^{q} \sum_{j=0}^k w(\cH''_i(\epsilon_j))\cdot
    w(\cH'_i(\epsilon_{j+1}))\cdot(\epsilon_{j+1}-\epsilon_j)$.
  \end{lemma}
  Once the $O(\delta)$ hyperplanes are ordered, we can compute the
  weights of the halfspaces in $O(m+\delta)$ time and compute the
  Wiener index of $P$ in $\G$ in $O(m+\delta)$ time.
\end{proof}
	
\section{The median problem in event structures}\label{sec:compact}
	
In this section, we consider the median problem in which the median
graph is implicitly defined as the domain of configurations of an
event structure. We show that the problem can be solved efficiently in
the size of the input. However, if the input consists solely of the
event structure and the goal is to compute the median of all
configurations of the domain, then this algorithmic problem is
$\#$P-hard. To prove this we provide a direct (polynomial size)
correspondence between event structures and 2-SAT formulas and use
$\#$P-hardness of a similar median problem for 2-SAT established
in~\cite{Fe}.

\subsection{Definitions and bijections}
We start with the definition of event structures and 2-SAT formulas
and their bijections with median graphs.

\subsubsection{Event structures}
Event structures, introduced by Nielsen, Plotkin, and
Winskel~\cite{NiPlWi,Winskel}, are a widely recognized abstract model
of concurrent computation. An \emph{event structure} is a triple
${\E}=(E,\le, \#)$, where
\begin{itemize}
\item $E$ is a set of \emph{events},
\item $\le\subseteq E\times E$ is a partial order of \emph{causal
    dependency},
\item $\#\subseteq E\times E$ is a binary, irreflexive, symmetric
  relation of \emph{conflict},
\item $\downarrow \!e:=\{ e'\in E: e'\le e\}$ is finite for any
  $e\in E$,
\item $e\# e'$ and $e'\le e''$ imply $e\# e''$.
\end{itemize}

Two events $e',e''$ are \emph{concurrent} (notation $e'\| e''$) if
they are order-incomparable and they are not in conflict.
A \emph{configuration} of an event structure ${\E}$ is any finite
subset $c\subset E$ of events which is \emph{conflict-free}
($e,e'\in c$ implies that $e,e'$ are not in conflict) and
\emph{downward-closed} ($e\in c$ and $e'\le e$ implies that
$e'\in c$). Notice that $\varnothing$ is always a configuration and
that $\downarrow \!e$ and $\downarrow \!e\setminus \{ e\}$ are
configurations for any $e\in E$.  The \emph{domain} of $\E$ is the set
$\cD({\E})$ of all configurations of ${\E}$ ordered by inclusion;
$(c',c)$ is a (directed) edge of the Hasse diagram of the poset
$({\cD}({\E}),\subseteq)$ if and only if $c=c'\cup \{ e\}$ for an
event $e\in E\setminus c$. The domain $\cD(\E)$ can be endowed with
the \emph{Hamming distance} $d(c,c')=|c\Delta c'|$ between any
configurations $c$ and $c'$. From the following result, the Hamming
distance coincides with the graph-distance.
Barth\'elemy and Constantin~\cite{BaCo}
established the following bijection between event structures and
pointed median graphs:

\begin{theorem}[\!\!\cite{BaCo}]\label{median_domain}
  The (undirected) Hasse diagram of the domain $(\cD(\E),\subseteq)$
  of an event structure $\E=(E,\le, \#)$ is
  median. Conversely, for any median graph $G$ and any basepoint $v$
  of $G$, the pointed median graph $G_v$ is the Hasse diagram of the
  domain of an event structure $\E_v$.
\end{theorem}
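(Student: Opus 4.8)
The plan is to handle the two directions separately; the converse one rests on the $\Theta$-class/halfspace machinery of Section~\ref{sec:properties}.

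\textbf{Domains are median.} For configurations $c,c'$ of $\E$ I first claim that the graph distance in the (undirected) Hasse diagram of $(\cD(\E),\subseteq)$ equals the Hamming distance $|c\,\Delta\,c'|$. Indeed, from $c$ one reaches $c\cap c'$ by repeatedly deleting a maximal event of $c\setminus c'$ (downward-closedness is preserved and conflict-freeness is inherited by subsets), and from $c\cap c'$ one reaches $c'$ by repeatedly adding a minimal event of $c'\setminus c$ (downward-closedness is preserved and the current set stays inside $c'$, hence conflict-free); this produces a path of length $|c\,\Delta\,c'|$, and no walk can be shorter since each edge flips the membership of exactly one event in the symmetric difference with $c'$. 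Hence $\cD(\E)$ embeds isometrically into the hypercube on $E$. Now for three configurations $c_1,c_2,c_3$ let $m$ be the set of events lying in at least two of them. Then $m$ is downward-closed (a predecessor of an event of $m$ is a predecessor inside each of the two configurations containing that event) and conflict-free (two conflicting events of $m$ would, by the pigeonhole principle, lie together in some $c_i$), so $m$ is a configuration; since $m$ is the median of $c_1,c_2,c_3$ in the hypercube, the isometric embedding gives $m\in I(c_i,c_j)$ for every pair, and any point in all three intervals is again the hypercube-median, hence equals $m$. Thus the Hasse diagram of $\cD(\E)$ is a median graph.

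\textbf{Median graphs are domains.} Fix a basepoint $v_0$ of $G$, let $E_1,\dots,E_q$ be its $\Theta$-classes with halfspaces $H'_i,H''_i$ and $v_0\in H''_i$, and recall from Lemma~\ref{halfspaces} and Corollary~\ref{cor-halfspaces-convex} that $u\mapsto c(u):=\{E_i:u\in H'_i\}$ is an isometric embedding of $G$ into the $q$-hypercube, so $d(u,w)=|c(u)\,\Delta\,c(w)|$ and $c$ is injective. Define $\E_{v_0}=(E,\le,\#)$ with $E=\{E_1,\dots,E_q\}$, setting $E_j\le E_i$ iff $H'_i\subseteq H'_j$ and $E_i\# E_j$ iff $H'_i\cap H'_j=\emptyset$. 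One checks that $\le$ is a partial order (antisymmetry because a halfspace determines its $\Theta$-class), that $\#$ is irreflexive and symmetric, that $\downarrow\!E_i$ is finite ($G$ is finite), and that $H'_i\cap H'_j=\emptyset$ together with $H'_k\subseteq H'_j$ imply $H'_i\cap H'_k=\emptyset$, which is the inheritance axiom.

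\textbf{Matching the domain with $(G,v_0)$.} For every $u\in V$ the set $c(u)$ is downward-closed ($u\in H'_i$ and $H'_i\subseteq H'_j$ give $u\in H'_j$) and conflict-free ($u\in H'_i\cap H'_j$), hence a configuration. Conversely, let $c$ be a configuration and consider the convex (equivalently gated, by Lemma~\ref{convex-gated}) subgraphs $\{H'_i:E_i\in c\}\cup\{H''_j:E_j\notin c\}$: two sets $H'_i$ meet since $c$ is conflict-free, two sets $H''_j$ contain $v_0$, and $H'_i\cap H''_j\ne\emptyset$ whenever $E_i\in c,\ E_j\notin c$ because $H'_i\cap H''_j=\emptyset$ would mean $H'_i\subseteq H'_j$, i.e.\ $E_j\le E_i$, forcing $E_j\in c$ by downward-closedness. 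By the Helly property for convex subgraphs of a median graph (or, alternatively, by induction on $|V(G)|$ peeling off a peripheral halfspace via Lemma~\ref{peripheral}) this family has a common vertex, which is unique since $c$ is injective and which is a vertex $u$ with $c(u)=c$. Finally, $c(w)=c(u)\cup\{E_i\}$ forces $|c(u)\,\Delta\,c(w)|=1$, so $u\sim w$, $uw\in E_i$, $u\in H''_i$, $w\in H'_i$; conversely such an edge changes only the $E_i$-coordinate. Since $c(v_0)=\varnothing$ is the least configuration, $c$ is an isomorphism of the pointed graph $(G,v_0)$ onto the Hasse diagram of $(\cD(\E_{v_0}),\subseteq)$. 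I expect the surjectivity step — realizing an arbitrary configuration by a vertex of $G$ — to be the crux, being exactly the place where the Helly property (or the peeling induction) is used; checking the axioms of $\E_{v_0}$ and the edge correspondence is routine.
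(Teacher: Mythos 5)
Your argument is correct, but note that the paper does not prove Theorem~\ref{median_domain} at all: it imports the result from \cite{BaCo} and merely recalls the construction of $\E_v$ (events are the $\Theta$-classes/hyperplanes, $H\le H'$ iff $H$ separates $H'$ from the basepoint, conflict iff neither separates the other and they do not cross). Your definitions of $\le$ and $\#$ via $H'_i\subseteq H'_j$ and $H'_i\cap H'_j=\emptyset$ are exactly the halfspace translation of that recalled construction, so on the converse direction you are supplying a proof where the paper supplies only a citation. Both halves of your argument check out: the forward direction correctly establishes the isometric embedding into the hypercube on $E$ (deleting $\le$-maximal elements of $c\setminus c'$ and adding $\le$-minimal elements of $c'\setminus c$ both preserve the configuration axioms, and the majority set of three configurations is again a configuration), and the surjectivity step in the converse direction is indeed the crux. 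The only ingredient you invoke that the paper nowhere establishes is the Helly property for convex (equivalently gated, by Lemma~\ref{convex-gated}) subgraphs of a median graph; for completeness you should record its two-line proof — for pairwise intersecting convex sets $A,B,C$ pick $a\in B\cap C$, $b\in A\cap C$, $c\in A\cap B$ and observe that $m(a,b,c)\in I(b,c)\cap I(a,c)\cap I(a,b)\subseteq A\cap B\cap C$, then induct on the number of sets — or else carry out the peripheral-peeling induction you sketch. With that supplied, the proof is complete and self-contained.
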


We briefly recall the construction of the event structure $\E_v$.
Consider a median graph $G$ and an arbitrary basepoint $v$. The events
of the event structure $\E_{v}$ are the hyperplanes of the cube
complex $\G$ (or the $\Theta$-classes of $G$). Two hyperplanes $H$ and
$H'$ define concurrent events if and only if they cross (i.e., there
exist a square with two opposite edges in one $\Theta$-class and other
two opposite edges in the second $\Theta$-class). The hyperplanes $H$
and $H'$ are in relation $H\leq H'$ if and only if $H=H'$ or $H$
separates $H'$ from $v$.  Finally, the events defined by $H$ and $H'$
are in conflict if and only if $H$ and $H'$ do not cross and neither
separates the other from $v$.

\begin{example}\label{example-event-structure1}
  The pointed median graph $G$ described in Fig.~\ref{fig:ExStrEvt} is
  the domain of the event structure $\E=(E,\leq, \# )$. The seven
  events $e_1,\ldots, e_7$ of $E$ correspond to the seven
  $\Theta$-classes of $G$. The causal dependency is defined by
  $e_1 \leq e_3, e_5, e_6, e_7$; $e_2 \leq e_4, e_5, e_6, e_7$;
  $e_3, e_4, e_5 \leq e_6, e_7$.  The events $e_6$ and $e_7$ are in
  conflict and all remaining pairs of events are concurrent.
\end{example}	

\begin{figure}[h]
  \centering
  {\includegraphics[scale=0.37]{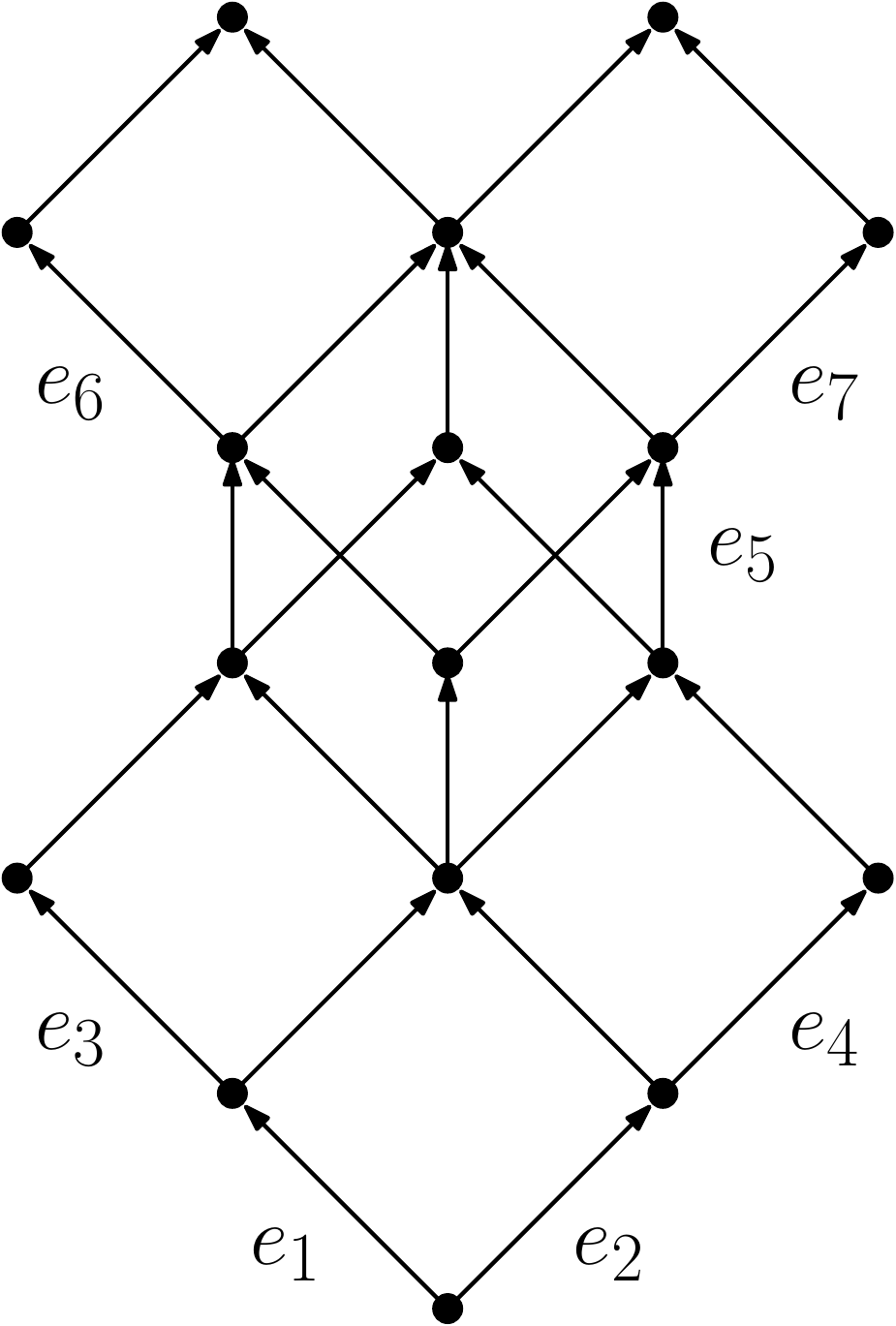}
    \caption{The domain of the event structure from
      Example~\ref{example-event-structure1}.}\label{fig:ExStrEvt}}
\end{figure}

Consequently, event structures encode median graphs and this
representation is much more compact than the standard one using
vertices and edges.  For example, the hypercube of dimension $d$ is
the domain of the event structure with $d$ events that are pairwise
concurrent.

\subsubsection{The median problem in event structures}
Let ${\E}=(E,\le, \#)$ be a finite event structure.  The input is a
set $C=\{ c_1,\ldots,c_k\}$ of configurations of $\E$ and their
weights $w_1,\ldots,w_k$, where each $c_i$ is given by the list of
events belonging to $c_i$. The goal of the \emph{median problem in the
  event structure} $\E$ is to compute a configuration $c$ minimizing
the function $F_w(c)=\sum_{i=1}^k w_i d(c,c_i)$, where $d(c,c')$ is
the Hamming distance between $c$ and $c'$. Consider also a special
case of this median problem, in which $C$ is the set of \emph{all}
configurations of $\E$ and the input is the event structure $\E$,
i.e., the graphs $(E,\le)$ and $(E,\#)$. We call this problem the
\emph{compact median problem}.

\subsubsection{2-SAT formulas}
A \emph{2-SAT formula} on variables $x_1,\ldots,x_n$ is a formula
$\varphi$ in conjunctive normal form with two literals per clause,
i.e., a set of clauses of the form $(u \vee v)$, where each of the two
literals $u,v$ is either a positive literal $x_i$ or a negative
literal $\neg x_i$. A \emph{solution} for $\varphi$ is an assignment
$S$ of variables to 0 or 1 that satisfies all clauses.  The
\emph{solution set} ${\mathcal S}(\varphi)$ of $\varphi$ is the set of
all solutions of $\varphi$.  We consider each solution set as a subset
of vertices of the $n$-dimensional hypercube $Q_n$.  A subset $Y$ of
vertices of the hypercube $Q_n$ (viewed as a median graph) is called
\emph{median-stable} if the median of each triplet $x,y,z\in Y$ also
belongs to $Y$.

\begin{proposition}[\!\!\cite{Scha,MuSch}]\label{median-2SAT}
  Median-stable sets are exactly the solution sets of 2-SAT formulas.
\end{proposition}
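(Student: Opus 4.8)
The plan is to prove the two inclusions separately, treating the ``solution sets are median-stable'' direction as the routine one and reserving the real work for the converse.

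First I would record the one fact that makes the easy direction immediate: in the hypercube $Q_n$ the median $m(x,y,z)$ is the coordinatewise majority of $x,y,z$ (this is the ``majority rule on coordinates'' recalled in the Introduction and follows from the definition of $m$ in $Q_n$). Hence for any literal $\ell$ over $x_1,\dots,x_n$, $\ell$ is satisfied by $m(x,y,z)$ if and only if $\ell$ is satisfied by at least two of $x,y,z$. Now take a $2$-SAT formula $\varphi$, a clause $(u\vee v)$ of $\varphi$, and three solutions $x,y,z\in\mathcal{S}(\varphi)$. If $(u\vee v)$ were falsified by $m(x,y,z)$, then $u$ is falsified by at least two of $x,y,z$ and $v$ is falsified by at least two of $x,y,z$; two subsets of size $\geq 2$ of a three-element set must meet, so some single assignment among $x,y,z$ falsifies both $u$ and $v$, contradicting $x,y,z\in\mathcal{S}(\varphi)$. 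Thus $m(x,y,z)$ satisfies every clause of $\varphi$, so $\mathcal{S}(\varphi)$ is median-stable.

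For the converse, given a median-stable set $Y$ of vertices of $Q_n$, I would let $\varphi$ be the conjunction of \emph{all} $2$-clauses over $x_1,\dots,x_n$ that are satisfied by every $y\in Y$. By construction $Y\subseteq\mathcal{S}(\varphi)$, so the task reduces to showing $\mathcal{S}(\varphi)\subseteq Y$. For a pair of coordinates $i\neq j$, set $Y_{ij}=\{(y_i,y_j):y\in Y\}\subseteq\{0,1\}^2$; for each pair $(a,b)\notin Y_{ij}$ the unique $2$-clause over $x_i,x_j$ expressing ``$(x_i,x_j)\neq(a,b)$'' (namely $x_i\vee x_j$, $\neg x_i\vee\neg x_j$, $\neg x_i\vee x_j$, or $x_i\vee\neg x_j$, depending on $(a,b)$) is satisfied by all of $Y$ and hence lies in $\varphi$. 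Therefore any $z\in\mathcal{S}(\varphi)$ satisfies $(z_i,z_j)\in Y_{ij}$ for every pair $i\neq j$; equivalently, for every pair of coordinates there is a member of $Y$ agreeing with $z$ on both. I would dispatch the degenerate cases first (if $Y=\varnothing$ include $(x_1\vee x_1)\wedge(\neg x_1\vee\neg x_1)$ so $\varphi$ is unsatisfiable; the case $n\leq 1$ is trivial), and note that a genuine unit constraint ``$x_i$ constant on $Y$'' is encoded by $(\ell_i\vee x_j)\wedge(\ell_i\vee\neg x_j)$ when $n\geq 2$, so $\varphi$ stays within two-literal clauses.

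The hard part will be extracting $z\in Y$ from median-stability together with this pairwise ($2$-local) consistency, and here I would use a minimal-distance argument. Assuming $Y\neq\varnothing$, pick $y^*\in Y$ minimizing the Hamming distance $d(z,y^*)$, and let $D$ be the set of coordinates on which $z$ and $y^*$ differ. If $|D|=0$ we are done. If $|D|=1$, say $D=\{i\}$, use pairwise consistency with some coordinate $j\neq i$ to get $y\in Y$ with $y_i=z_i$; then at coordinate $i$ two of $z,y^*,y$ equal $z_i$, and at every other coordinate $z$ and $y^*$ already agree, so $m(z,y^*,y)=z$, giving $z\in Y$. If $|D|\geq 2$, pick $i,j\in D$ and $y\in Y$ with $y_i=z_i$ and $y_j=z_j$; then $m(z,y^*,y)\in Y$ agrees with $z$ off $D$ (all three agree there), agrees with $z$ on $i$ and on $j$, and can differ from $z$ only on $D\setminus\{i,j\}$, so $d(z,m(z,y^*,y))\leq|D|-2<d(z,y^*)$, contradicting minimality. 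Hence $z\in Y$ in all cases, so $\mathcal{S}(\varphi)=Y$. Everything outside this last step is bookkeeping about the correspondence between Boolean $2$-clauses and excluded pairs of coordinate values; the only point needing care is that the clause construction and the degenerate cases never force an honest unit clause, which is why I would handle $n\leq 1$ and $Y=\varnothing$ explicitly.
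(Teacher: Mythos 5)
The paper does not actually prove Proposition~\ref{median-2SAT}; it imports it from \cite{Scha,MuSch}, so your attempt has to stand on its own. The first direction (solution sets of $2$-SAT formulas are median-stable) is correct: the coordinatewise-majority description of $m(x,y,z)$ in $Q_n$ plus the pigeonhole on the two literals of a clause is exactly the standard argument. The reduction of the converse to the statement ``if $z$ agrees with some member of $Y$ on every pair of coordinates, then $z\in Y$'' is also the right move, and your bookkeeping with $Y_{ij}$, the encoding of unit constraints, and the degenerate cases is fine.

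The gap is in the proof of that last implication. In both the $|D|=1$ and $|D|\geq 2$ cases you form $m(z,y^*,y)$ and invoke median-stability to place it in $Y$ (in the first case to conclude $z\in Y$, in the second to contradict minimality of $d(z,y^*)$). But median-stability only closes $Y$ under medians of triples \emph{of elements of $Y$}, and $z$ is precisely the point whose membership is at stake: the $|D|=1$ case is circular, and the $|D|\geq 2$ case rests on the unjustified claim $m(z,y^*,y)\in Y$. Nor can you simply replace $z$ by a third witness from $Y$: if $y,y'\in Y$ each agree with $z$ on a chosen pair of coordinates inside $D$, the median $m(y^*,y,y')$ may move \emph{away} from $z$ on coordinates outside $D$ where $y$ and $y'$ both disagree with $z$, so the distance-decrease argument breaks down. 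The standard repair (Baker--Pixley for a majority operation) is an induction on the size of the agreement set: assuming that for every $k$-subset of coordinates some element of $Y$ agrees with $z$ on it, take a $(k+1)$-subset $A$, write it as $A=(A\setminus\{i\})\cup\{i\}=(A\setminus\{j\})\cup\{j\}=(A\setminus\{l\})\cup\{l\}$ for three distinct $i,j,l\in A$ (possible once $k+1\geq 3$), pick witnesses $u,v,w\in Y$ for the three $k$-subsets $A\setminus\{i\}$, $A\setminus\{j\}$, $A\setminus\{l\}$, and note that $m(u,v,w)\in Y$ agrees with $z$ on all of $A$ because every coordinate of $A$ is covered by at least two of the three witnesses. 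Running this from your $k=2$ base case up to $k=n$ yields $z\in Y$; your minimal-distance shortcut does not.
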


A variable of a 2-SAT formula $\varphi$ is \emph{trivial} if it has
the same value in each solution.  Two nontrivial variables $x_i$ and
$x_j$ in $\varphi$ are \emph{equivalent} if either $x_i=x_j$ in all
solutions or $x_i=\neg x_j$ in all solutions.  Here is a
characterization of 2-SAT formulas corresponding to median graphs:

\begin{proposition}[\!\!{\cite[Corollary 3.34]{Fe}}]\label{2sat=median-graph}
  The solution set ${\mathcal S}(\varphi)$ of a 2-SAT formula
  $\varphi$ induces a median graph if and only if $\varphi$ has no
  equivalent variables.
\end{proposition}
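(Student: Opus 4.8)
The plan is to derive the statement from Proposition~\ref{median-2SAT}, which says that $S:=\mathcal{S}(\varphi)\subseteq V(Q_n)$ is closed under the coordinatewise‑majority operation $m$ of $Q_n$; everything then reduces to deciding when the subgraph $G[S]$ of $Q_n$ induced by $S$ is a median graph. I would first dispose of two reductions: we may assume $\varphi$ is satisfiable (otherwise $S=\emptyset$), and that $\varphi$ has no trivial variables — deleting a trivial variable only fixes one coordinate of $S$ to a constant, so it replaces $G[S]$ by an isomorphic graph and changes neither satisfiability nor the set of equivalent pairs among the surviving variables.

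For the ``only if'' direction I would argue the contrapositive. If $x_i,x_j$ are equivalent, say $x_i=x_j$ in every solution (the other case being symmetric), then no edge of $G[S]$ can flip coordinate $i$: such an edge would join two solutions agreeing on coordinate $j$ but differing on coordinate $i$. Since $x_i$ is non‑trivial, $S$ meets both halves $\{x_i=0\}$ and $\{x_i=1\}$ of $Q_n$, and these nonempty sets are then separated in $G[S]$; so $G[S]$ is disconnected and cannot be a median graph.

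For the ``if'' direction, assume $\varphi$ has no equivalent and no trivial variables and let $H_\varphi$ be the implication digraph on the $2n$ literals (a clause $u\vee v$ contributing the arcs $\bar u\to v$ and $\bar v\to u$). The first step is to check that $H_\varphi$ is acyclic: a directed cycle would lie inside a strong component, which by satisfiability contains at most one literal per variable, so a shortest such cycle would meet literals of at least two distinct variables; but literals in a common strong component are logically equivalent in all solutions, making those (non‑trivial) variables equivalent — a contradiction. The key step, and the main obstacle, is to convert this acyclicity into a statement about moves inside $S$, namely the following \emph{no‑detour} property: for all $a,b\in S$ with $a\neq b$ there is a coordinate $i$ with $a_i\neq b_i$ and $a\oplus e_i\in S$. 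To prove it, set $D=\{k:a_k\neq b_k\}$ and suppose $a\oplus e_i\notin S$ for every $i\in D$. For each such $i$ some clause is satisfied by $a$ but falsified by $a\oplus e_i$; inspecting which literals could have changed their truth value (its second literal being false already at $a$), this clause must read $\bigl(\ell_a(i)\vee\overline{\ell_a(j)}\bigr)$ for some variable $j=j(i)\neq i$, where $\ell_v(k)$ denotes the literal of $x_k$ satisfied by $v$. Evaluating this clause at the solution $b$, where $\ell_a(i)$ is false because $i\in D$, forces $j(i)\in D$; and the clause contributes to $H_\varphi$ the arc $\overline{\ell_a(i)}\to\overline{\ell_a(j(i))}$. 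Hence every vertex of the $|D|$‑element set $\{\overline{\ell_a(i)}:i\in D\}$ has an out‑arc inside that set, producing a directed cycle in $H_\varphi$ — contradiction.

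Finally I would assemble the pieces. The no‑detour property gives, by induction on $d_{Q_n}(a,b)$, that $d_{G[S]}(a,b)=d_{Q_n}(a,b)$ for all $a,b\in S$, so $G[S]$ is an isometric (in particular connected) subgraph of $Q_n$. Then $I_{G[S]}(a,b)=S\cap I_{Q_n}(a,b)$ for all $a,b\in S$, so for any triple $x,y,z\in S$ the majority vertex $m(x,y,z)$, which lies in $S$ and in each of the three pairwise $Q_n$‑intervals, lies in the three $G[S]$‑intervals; and it is the unique such vertex since $I_{Q_n}(x,y)\cap I_{Q_n}(y,z)\cap I_{Q_n}(z,x)=\{m(x,y,z)\}$ already in $Q_n$. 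Hence $G[S]$ is a median graph, completing the proof. I expect the no‑detour step to be where the real work lies, since it is the only place that genuinely uses the hypothesis on $\varphi$.
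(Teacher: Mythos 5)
Your argument is correct, but there is nothing in the paper to compare it against: Proposition~\ref{2sat=median-graph} is quoted from Feder~\cite[Corollary~3.34]{Fe} and the paper gives no proof of it. What you have produced is a self-contained elementary proof built on Proposition~\ref{median-2SAT} (median-stability of $\mathcal{S}(\varphi)$), and its architecture is sound: the ``only if'' direction via disconnection along a coordinate that no edge of $G[\mathcal S(\varphi)]$ can flip; the ``if'' direction via acyclicity of the implication digraph, the no-detour property, isometry of $G[\mathcal S(\varphi)]$ in $Q_n$, and finally uniqueness of the majority vertex in the three hypercube intervals. The no-detour step is indeed where the hypotheses are used, and your derivation of the forced clause shape $\bigl(\ell_a(i)\vee\overline{\ell_a(j)}\bigr)$ and of the resulting directed cycle on $\{\overline{\ell_a(i)}:i\in D\}$ is correct. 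A few loose ends you should tie up in a written version: (i) if $\varphi$ is unsatisfiable the statement is degenerate (every variable is vacuously trivial, yet the empty graph is not median), so satisfiability must be a standing assumption, as it implicitly is in \cite{Fe}; (ii) in identifying the clause falsified by $a\oplus e_i$ you must explicitly exclude clauses whose two literals both involve $x_i$ --- tautological ones are never falsified, and non-tautological ones force $x_i$ and are ruled out by the absence of trivial variables after your preprocessing; (iii) the acyclicity claim should dismiss self-loops coming from tautological clauses (remove them without changing $\mathcal S(\varphi)$). None of these affects the substance; your proof stands as a valid replacement for the external citation.
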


\subsection{A direct correspondence between event structures and 2-SAT formulas}
We provide a canonical correspondence between event structures and
2-SAT formulas (which may be useful also for other purposes).

By Theorem~\ref{median_domain} and
Propositions~\ref{median-2SAT},\ref{2sat=median-graph} there is a
bijection between the domains of event structures and pointed median
graphs and a bijection between (unpointed) median graphs and 2-SAT
formulas not containing equivalent variables.  Since $\varnothing$ and
$\downarrow \!e, e\in E$ are configurations, their characteristic
vectors must be solutions of the associated 2-SAT formula.  This can
be ensured by requiring that the 2-SAT formula does not contain
clauses of the form $(x_i\vee x_j)$.

Let $\E=(E, \leq, \#)$ be an event structure with
$E=\{ e_1,\ldots,e_n\}$. We associate to $\E$ a 2-SAT formula
$\varphi_{\E}$ on $n$ variables $x_1,\ldots,x_n$. For each pair of
events such that $e_i \leq e_j$ we define the clause
$(x_i \vee \neg x_j)$ and for each pair of events such that
$e_i \# e_j$ we assign the clause $(\neg x_i \vee \neg x_j)$. Next for
each subset $c$ of $E$ we denote by $S_c$ its characteristic vector.

\begin{proposition}\label{event-struc->2SAT}
  ${\mathcal S}(\varphi_{\E})$ coincides with $\cD(\E)$.
\end{proposition}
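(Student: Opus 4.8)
The plan is to unwind definitions and verify a clause-by-clause correspondence. Identify a $\{0,1\}$-assignment $S$ of the variables $x_1,\dots,x_n$ with the subset $c_S=\{e_i : S(x_i)=1\}\subseteq E$; this map is the inverse of $c\mapsto S_c$, so it suffices to show that $S$ satisfies $\varphi_{\E}$ if and only if $c_S$ is a configuration of $\E$. Since $E$ is finite, finiteness of $c_S$ is automatic, so the only two conditions to track are downward-closedness and conflict-freeness.

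First I would record the two implications at the level of individual clauses. A clause $(x_i\vee\neg x_j)$ of $\varphi_{\E}$ is present exactly when $e_i\le e_j$, and for a fixed $c$ it is satisfied by $S_c$ precisely when $e_j\in c$ forces $e_i\in c$; hence $S_c$ satisfies all such clauses iff $c$ is downward-closed. Similarly, a clause $(\neg x_i\vee\neg x_j)$ is present exactly when $e_i\#e_j$, and it is satisfied by $S_c$ precisely when $c$ does not contain both $e_i$ and $e_j$; hence $S_c$ satisfies all such clauses iff $c$ is conflict-free. Combining, $S_c$ is a solution of $\varphi_{\E}$ iff $c$ is both downward-closed and conflict-free, i.e.\ iff $c\in\cD(\E)$. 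Running this equivalence in both directions over all $c\subseteq E$ gives $\mathcal S(\varphi_{\E})=\{S_c : c\in\cD(\E)\}$, which is the asserted coincidence. One can also note that the correspondence is order-preserving: $c\subseteq c'$ iff $S_c\le S_{c'}$ coordinatewise, so the poset $(\cD(\E),\subseteq)$ is carried onto $\mathcal S(\varphi_{\E})$ viewed as an induced subposet (and induced subgraph) of $Q_n$.

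There is no genuine obstacle here: the statement is precisely the observation that the two defining closure properties of configurations translate into the two families of clauses of $\varphi_{\E}$. The only points worth making explicit are the identification between assignments and subsets, and the remark that no analog of the ``$\downarrow\!e$ finite'' axiom has to be invoked separately because $E$ itself is finite. As a sanity check one observes that $\varphi_{\E}$ contains no all-positive clause $(x_i\vee x_j)$, which is exactly the condition ensuring $\varnothing$ and each $\downarrow\!e$ are solutions — consistent with these sets always being configurations.
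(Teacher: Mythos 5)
Your proof is correct and follows essentially the same route as the paper's: both arguments reduce the claim to the observation that the clauses $(x_i\vee\neg x_j)$ encode downward-closedness and the clauses $(\neg x_i\vee\neg x_j)$ encode conflict-freeness (the paper phrases the two directions via contrapositives, while you state the clause-by-clause biconditional directly, which is a purely cosmetic difference). Your added remarks on finiteness and the absence of all-positive clauses are accurate but not needed.
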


\begin{proof}
Let $c\subseteq E$ and $c\notin \cD(\E)$, i.e., $c$ is either not
  downward-closed or not conflict-free. In the first case, there exist
  two events $e_i$ and $e_j$ such that $e_i \leq e_j$ and
  $e_j\in c, e_i\notin c$. This implies that $\varphi_{\E}$ contains
  the clause $(\neg x_j \vee x_i)$. Since $S_c(x_j)=1$ and
  $S_c(x_i)=0$, $S_c$ is not a solution of $\varphi_{\E}$. In the
  second case, there exist two events $e_i,e_j\in c$ such that
  $e_i \# e_j$. This implies that $\varphi_{\E}$ contains the clause
  $(\neg x_i \vee \neg x_j)$. Since $S_c(x_i)=S_c(x_j)=1$, $S_c$ is
  not a solution of $\varphi_{\E}$.
		
  Conversely, suppose that $S_c$ is not a solution of
  $\varphi_{\E}$. Recall that $\varphi_{\E}$ contains only clauses of
  the form $(\neg x_i \vee x_j)$ or $(\neg x_i \vee \neg x_j)$. If a
  clause $(\neg x_i \vee x_j)$ of $\varphi_{\E}$ is false, then
  $S_c(x_i)=1$ and $S_c(x_j)=0$. Thus, the events $e_i$ and $e_j$ are
  such that $e_j \leq e_i$ and $e_i\in c$ and $e_j\notin c$.
  Consequently, $c$ is not a configuration of $\E$. Similarly, if a
  clause $(\neg x_i \vee \neg x_j)$ in $\varphi_{\E}$ is false, then
  $S_c(x_i)=S_c(x_j)=1$. Thus $c$ contains two events $e_i$ and $e_j$
  such that $e_i \# e_j$, whence $c$ is not a configuration. This
  shows that ${\mathcal S}(\varphi_{\E})$ and $\cD(\E)$ coincide.
\end{proof}

Let $\varphi$ be a 2-SAT formula on variables $x_1,\ldots, x_n$ not
containing trivial and equivalent variables and clauses of the form
$(x_i\vee x_j)$. We associate to $\varphi$ an event structure
$\E_{\varphi}=(E, \leq, \#)$ consisting of a set
$E=\{ e_1,\ldots,e_n\}$ and the binary relations $\leq$ and $\#$
defined as follows.  First we define two binary relations $\#_0$ and
$\leq_0$: for $e_i,e_j$ we set $e_i \#_0 e_j$ if and only if $\varphi$
contains the clause $(\neg x_i \vee \neg x_j)$ and we set
$e_i \leq_0 e_j$ if and only if $\varphi$ contains the clause
$(x_i \vee \neg x_j)$.  Let $\leq$ be the transitive and reflexive
closure of $\leq_0$. Let also $\#$ be the relation obtained by setting
$e_j \# e_{\ell}$ for each quadruplet $e_i, e_j, e_k, e_{\ell}\in E$
such that $e_i \leq e_j,e_k \leq e_{\ell}$, and $e_i \#_0 e_k$. Note
that $\#_0\subseteq \#$ and that $\#$ satisfies the last axiom of
event structures.

\begin{proposition}\label{2SAT->event-struc}
  $\E_{\varphi}=(E, \leq, \#)$ is an event structure and
  $\cD(\E_{\varphi})$ coincides with ${\mathcal S}(\varphi)$.
\end{proposition}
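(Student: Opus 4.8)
The plan is to first verify that $\E_{\varphi}=(E,\leq,\#)$ obeys the axioms of an event structure, and then to prove the two inclusions $\cD(\E_{\varphi})\subseteq\mathcal{S}(\varphi)$ and $\mathcal{S}(\varphi)\subseteq\cD(\E_{\varphi})$, mirroring the proof of Proposition~\ref{event-struc->2SAT}. Everything runs on one dictionary: a clause $(x_i\vee\neg x_j)$, i.e.\ $e_i\leq_0 e_j$, expresses the implication ``$x_j=1\Rightarrow x_i=1$'', while a clause $(\neg x_i\vee\neg x_j)$, i.e.\ $e_i\#_0 e_j$, expresses ``not both $x_i=1$ and $x_j=1$''. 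Since $\leq$ is the transitive and reflexive closure of $\leq_0$, a chain $e_i=g_0\leq_0 g_1\leq_0\cdots\leq_0 g_k=e_j$ witnessing $e_i\leq e_j$ translates, in every solution $S$ of $\varphi$, into the implication $S(x_j)=1\Rightarrow S(x_i)=1$. It is convenient to note at the outset that each clause of $\varphi$ contains a negative literal (clauses $(x_i\vee x_j)$ being excluded), so the all-zero assignment is a solution; hence $\mathcal{S}(\varphi)\neq\emptyset$, the hypotheses forbidding trivial and equivalent variables have content, and one may assume each clause involves two distinct variables (a clause $(\neg x_i\vee\neg x_i)$ would make $x_i$ trivial).

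Checking the event-structure axioms: $\leq$ is reflexive and transitive by construction, $\downarrow\!e$ is finite because $E$ is finite, and $\#$ is symmetric because $\#_0$ is. Antisymmetry of $\leq$: if $e_i\leq e_j$ and $e_j\leq e_i$ with $i\neq j$, then $x_i=x_j$ in every solution, so $x_i$ and $x_j$ would be equivalent variables, a contradiction. Irreflexivity of $\#$: if $e_j\#e_j$, there are $e_i\leq e_j$, $e_k\leq e_j$ with $e_i\#_0 e_k$, and then $x_j=1$ would force $x_i=x_k=1$ in every solution, violating $(\neg x_i\vee\neg x_k)$; hence $x_j=0$ always, so $x_j$ is trivial, a contradiction. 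Finally, the last axiom: if $e\#e'$ and $e'\leq e''$, a witness $e_i\leq e$, $e_k\leq e'$ with $e_i\#_0 e_k$ yields $e_k\leq e''$ by transitivity, hence $e\#e''$. Antisymmetry and irreflexivity are the only points at which the hypotheses on $\varphi$ are used and the only ones requiring genuine care; the rest is bookkeeping.

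For $\cD(\E_{\varphi})\subseteq\mathcal{S}(\varphi)$, let $c$ be a configuration and $S_c$ its characteristic vector. A clause $(\neg x_i\vee\neg x_j)$ of $\varphi$ gives $e_i\#e_j$, so conflict-freeness of $c$ forces $e_i\notin c$ or $e_j\notin c$, and the clause holds. A clause $(x_i\vee\neg x_j)$ gives $e_i\leq e_j$; if $e_j\in c$ then downward-closedness of $c$ gives $e_i\in c$ and the clause holds, and if $e_j\notin c$ the literal $\neg x_j$ satisfies it. Thus $S_c\in\mathcal{S}(\varphi)$.

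For $\mathcal{S}(\varphi)\subseteq\cD(\E_{\varphi})$, let $S_c$ be a solution. First, $c$ is downward-closed: if $e_j\in c$ and $e_i\leq e_j$ (the case $e_i=e_j$ being trivial), take a $\leq_0$-chain from $e_i$ to $e_j$, read off the corresponding clauses of $\varphi$, and use $S_c(x_j)=1$; induction along the chain gives $S_c(x_i)=1$, i.e.\ $e_i\in c$. Then $c$ is conflict-free: if $e_a,e_b\in c$ with $e_a\#e_b$, a witness gives $e_i\leq e_a$ and $e_k\leq e_b$ with $e_i\#_0 e_k$, so by downward-closedness $e_i,e_k\in c$, whence $S_c(x_i)=S_c(x_k)=1$ contradicts $(\neg x_i\vee\neg x_k)$. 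Hence $c\in\cD(\E_{\varphi})$, and $\cD(\E_{\varphi})=\mathcal{S}(\varphi)$. The main obstacle, if any, is simply getting the well-definedness of $\leq$ as a partial order and of $\#$ as an irreflexive relation right; beyond that the proof is a routine translation between configurations of $\E_{\varphi}$ and the two-valued semantics of the clauses of $\varphi$ read along $\leq_0$-chains.
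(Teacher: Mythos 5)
Your proof is correct and follows essentially the same route as the paper's: antisymmetry of $\leq$ via the no-equivalent-variables hypothesis, and the two set inclusions by translating clauses of $\varphi$ into downward-closedness and conflict-freeness along $\leq_0$-chains (the paper phrases both inclusions contrapositively, which is immaterial). Your explicit check that $\#$ is irreflexive, using the no-trivial-variables hypothesis, covers an axiom the paper leaves implicit and is a welcome addition rather than a departure.
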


\begin{proof}
  In view of previous conclusions, to show that $\E_{\varphi}$ is an
  event structure it remains to prove that $\leq$ is
  antisymmetric. Suppose by way of contradiction that there exist
  $e_i,e_j\in E$ such that $e_i \leq e_j$ and $e_j \leq e_i$.  By
  definition of $\leq$, there exist $e_1, \ldots, e_p\in E$ and
  $e_{p+1}, e_{p+2}, \ldots, e_q\in E$ such that
  $e_i \leq_0 e_1 \leq_0 e_2 \leq_0 \cdots \leq_0 e_p \leq_0 e_j$ and
  $e_j \leq e_{p+1} \leq_0 e_{p+2} \leq_0 \cdots \leq_0 e_q \leq_0
  e_i$.  Consequently, the formula $\varphi$ contains the clauses
  $(\neg x_i \vee x_1), (\neg x_1 \vee x_2), \ldots, (\neg x_p \vee
  x_j),(\neg x_j \vee x_{p+1}), \ldots, (\neg x_q \vee x_i)$.  Thus,
  the variables $x_1, \ldots, x_q, x_i, x_j$ are equivalent, which is
  impossible because $\varphi$ is a 2-SAT formula without trivial and
  equivalent variables.
		
  To prove the second assertion, let $c=\{ e_1, \ldots e_p\}$ be a
  subset of $E$ which is not a configuration of $\E_{\varphi}$, i.e.,
  either $c$ is not downward-closed or is not conflict-free. First,
  suppose that $c$ contains an event $e_j$ such that there is an event
  $e_i\notin c$ with $e_i \leq e_j$.  Since $\leq$ is the transitive
  and reflexive closure of $\leq_0$, there exists a pair of events
  $e_k,e_{\ell}$ such that $e_{\ell}$ belongs to $c$, $e_k$ does not
  belong to $c$ and $e_k \leq_0 e_{\ell}$.  Thus $\varphi$ contains
  the clause $(\neg x_{\ell} \vee x_k)$. Since $S_c(x_{\ell})=1$ and
  $S_c(x_k)=0$, the assignment $S_c$ is not a solution of
  $\varphi$. Therefore, we can suppose now that $c$ is
  downward-closed.  Second, suppose that $c$ contains two events $e_i$
  and $e_j$ such that $e_i \# e_j$.  By definition of $\#$ and $\#_0$,
  there exists a pair of events $e_k \leq e_i$ and $e_{\ell} \leq e_j$
  such that $e_k \#_0 e_{\ell}$. Since $c$ is downward-closed and
  contains both $e_i$ and $e_j$, necessarily both $e_k$ and $e_{\ell}$
  belong to $c$. Thus, $S_c(x_k)=S_c(x_{\ell})=1$. But since
  $e_k \#_0 e_{\ell}$, $\varphi$ contains the clause
  $(\neg x_k \vee \neg x_{\ell})$, which is not satisfied by
  $S_c$. Consequently, if $c$ is not a configuration of $\E$, then
  $S_c$ is not a solution of $\varphi$.
		
  Conversely, suppose that $S$ is a assignment which is not a solution
  of $\varphi$. Let $c\subseteq E$ such that $S_c=S$.  Since $\varphi$
  does not contain clauses of the form $(x_i\vee x_j)$, this implies
  that $\varphi$ either contains a clause $(\neg x_i \vee x_j)$ such
  that $S_c(x_i)=1$ and $S_c(x_j)=0$, or $\varphi$ contains a clause
  $(\neg x_i \vee \neg x_j)$ such that $S_c(x_i)=S_c(x_j)=1$.  If
  $\varphi$ contains $(\neg x_i \vee x_j)$ with $S_c(x_i)=1$ and
  $S_c(x_j)=0$, then $e_j \leq e_i$ in $\E_{\varphi}$. Consequently,
  the corresponding subset $c$ of events is not a configuration of
  $\E_{\varphi}$ because $c$ contains $e_i$ but not $e_j$. If
  $\varphi$ contains $(\neg x_i \vee \neg x_j)$ with
  $S_c(x_i)=S_c(x_j)=1$, then $e_i \# e_j$ and again $c$ is not a
  configuration of $\E_{\varphi}$ because $c$ contains two conflicting
  events $e_i$ and $e_j$.
\end{proof}

\subsection{The median problem in event structures}\label{median_event}

In this subsection we show that the median problem in event structures
can be solved in linear time in the size of the input. We also show
that a diametral pair of median configurations can be computed in
linear time.  On the other hand, we show that the compact median
problem is hard.

\subsubsection{An algorithm for the median problem in event structures}
Let $\E=(E, \leq, \#)$ be an event structure with
$E=\{ e_1,\ldots,e_n\}$. Let $C=\{ c_1,\ldots,c_k\}$ be a set of
configurations of $\E$ and $w_1,\ldots,w_k$ be their weights. Let
$c^*$ be a subset of $E$ defined by the majority rule in the hypercube
$Q_n=\{ 0,1\}^E$.  Namely, $c^*$ consists of all events $e_i$ such
that the weight of all configurations of $C$ containing $e_i$ is
strictly larger than the total weight of all configurations not
containing $e_i$:
$c^*=\{ e_i\in E: \sum_{j: e_i\in c_j} w_j>\sum_{j: e_i\notin c_j}
w_j\}$. We assert that $c^*$ is a configuration of $\E$ and that $c^*$
minimizes the median function $F_w(c)=\sum_{i=1}^k w_i d(c,c_i)$.
Since $\cD(\E)$ is an isometric subgraph of $Q_n$, for each
$c\in \cD(\E)$, the values of the median function $F_w(c)$ in
$\cD(\E)$ and $Q_n$ are the same.

Since $Q_n$ is a median graph, by the majority rule
(Proposition~\ref{majority}), the median set of $Q_n$ is the
intersection of all majoritary halfspaces of $Q_n$. Each pair
$H'_i,H''_i$ of complementary halfspaces of $Q_n$ correspond to an
event $e_i$ of $\E$: one halfspace $H'_i$ consists of all
$c\subseteq E$ containing $e_i$ and its complement $H''_i$ consists of
all $c\subseteq E$ not containing $e_i$. If $w(H'_i)>w(H''_i)$, then
$H'_i$ is majoritary, which means that the weight of all
configurations of $C$ containing $e_i$ is strictly larger than one
half of the total weight. By definition of $c^*$, $e_i$ belongs to
$c^*$, i.e., $c^*$ (viewed as a characteristic vector) is a vertex of
$H'_i$. Similarly, if $w(H''_i)>w(H'_i)$, then $H''_i$ is majoritary,
which means that the weight of all configurations of $C$ not
containing $e_i$ is strictly larger than one half. By definition of
$c^*$, $e_i$ does not belong to $c^*$, i.e., $c^*$ is a vertex of
$H''_i$.  Consequently, $c^*$ is a vertex of the hypercube $Q_n$
minimizing the function $F_w(c)=\sum_{i=1}^k w_i d(c,c_i)$ over all
$c\subseteq E$. Since the minimum of this function taken over
$\cD(\E)$ cannot be smaller than this minimum, to finish the proof it
remains to show that $c^*$ is a configuration of $\E$. Suppose that
$e_i\in c^*$ and $e_j\leq e_i$. Since each configurations of $\E$ is
downward-closed, all configurations of $C$ containing $e_i$ also
contain $e_j$. Thus the weight of all configurations of $C$ containing
$e_j$ is strictly larger than the weight of all configurations not
containing $e_j$, whence $e_j\in c^*$. Now suppose by way of
contradiction that $e_i,e_j\in c^*$ and $e_i\# e_j$. Since
$e_i,e_j\in c^*$, the weight of all configurations of $C$ containing
$e_i$ is strictly larger than one half of the total weight and the
weight of all configurations of $C$ containing $e_j$ is also strictly
larger than one half of the total weight. Therefore, in $C$ we must
find a configuration containing both $e_i$ and $e_j$, which is
impossible because $e_i\# e_j$. Consequently, we obtain the following
result:

\begin{proposition}
  A median configuration $c^*$ of any set $C=\{ c_1,\ldots,c_k\}$ of
  configurations of an event structure $\E$ can be computed in linear
  time in the size $O(\sum_{i=1}^k |c_i|)$ of the input.
\end{proposition}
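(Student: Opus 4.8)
The plan is to show that the vertex $c^*$ obtained by the coordinate-wise majority rule in the Boolean cube $Q_n=\{0,1\}^E$ is simultaneously a global minimizer of $F_w$ over $Q_n$ and a configuration of $\E$, and that it can be read off the input by a single pass. Concretely, set $w(C)=\sum_{j=1}^k w_j$ and $c^*=\{e_i\in E:\sum_{j:\,e_i\in c_j}w_j>\tfrac12 w(C)\}$. First I would recall that $Q_n$ is a median graph whose pairs of complementary halfspaces are exactly the dichotomies $H'_i=\{c:e_i\in c\}$ and $H''_i=\{c:e_i\notin c\}$; thus $w(H'_i)$ (resp.\ $w(H''_i)$) equals the total weight of the configurations of $C$ containing (resp.\ not containing) $e_i$. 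By the majority rule (Proposition~\ref{majority}), $\Med_w(Q_n)$ is the intersection of all majoritary halfspaces, and by construction $c^*$ lies in every majoritary halfspace; hence $c^*\in\Med_w(Q_n)$, i.e.\ $c^*$ minimizes $F_w(c)=\sum_j w_j d(c,c_j)$ over all $c\subseteq E$.

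Next I would verify that $c^*\in\cD(\E)$. For downward-closedness: if $e_i\in c^*$ and $e_j\le e_i$, then every configuration of $C$ containing $e_i$ also contains $e_j$ (configurations are downward-closed), so the weight of the configurations of $C$ containing $e_j$ is at least that of those containing $e_i$, hence $>\tfrac12 w(C)$, and therefore $e_j\in c^*$. For conflict-freeness: if $e_i,e_j\in c^*$ with $e_i\#e_j$, then the sets of configurations of $C$ containing $e_i$, resp.\ containing $e_j$, both have weight exceeding $\tfrac12 w(C)$, so they must share a configuration containing both $e_i$ and $e_j$ — impossible, since configurations are conflict-free. Since $\cD(\E)$ embeds isometrically into $Q_n$ (its graph distance coincides with the Hamming distance, by Theorem~\ref{median_domain}), the function $F_w$ restricted to the vertices of $\cD(\E)$ agrees with the $F_w$ of $Q_n$; as $c^*$ is a global minimizer over all of $Q_n$ and lies in $\cD(\E)$, it is a fortiori a median configuration of $C$ in $\cD(\E)$.

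Finally, for the complexity bound I would keep a running weight $s(e)$ for each event encountered, initialized lazily to $0$: traversing $C$ and, for each $c_j$ and each $e\in c_j$, adding $w_j$ to $s(e)$, costs $O(\sum_{j=1}^k|c_j|)$ time and simultaneously yields $w(C)$. Then $c^*$ is exactly the set of events $e$ with $s(e)>\tfrac12 w(C)$; since every such event occurs in at least one $c_j$, only $O(\sum_j|c_j|)$ events are touched and the whole computation is linear in the input size. The argument has no serious obstacle; the only points that need a little care are that the strict inequality in the definition of $c^*$ places it inside every majoritary halfspace (so it is a genuine vertex of $\Med_w(Q_n)$, even when egalitarian halfspaces are present), and that events occurring in no configuration may simply be ignored, so one never needs to enumerate all of $E$.
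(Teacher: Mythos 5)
Your proposal is correct and follows essentially the same route as the paper: define $c^*$ by the coordinate-wise majority rule in $Q_n$, use Proposition~\ref{majority} on the hypercube to see that $c^*$ lies in every majoritary halfspace and hence minimizes $F_w$ over all of $Q_n$, check downward-closedness and conflict-freeness exactly as you do, and conclude via the isometric embedding of $\cD(\E)$ into $Q_n$. Your explicit accounting of the linear-time single pass (and the remark that events absent from all $c_j$ can be ignored) is a welcome addition that the paper leaves implicit.
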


\begin{remark}
  We mentioned in the Introduction that the space of trees with a
  fixed set of $n$ leaves is a CAT(0) cube complex~\cite{BiHoVo}.  The
  vertices of this complex are the so-called $n$-trees and is was
  known since 1981 that the set of all $n$-trees is a median
  semilattice~\cite{MaMcM}, thus a median graph. Let
  $E=\{ e_1,\ldots, e_n\}$. An $n$-\emph{tree} $T$ is a collection of
  subsets of $E$ satisfying the following conditions: (1)
  $E\in T, \varnothing \notin T$, (2) $\{ e_i\}\in T$ for any
  $e_i\in E$, (3) $A\cap B\in \{ \varnothing, A,B\}$ for any
  $A,B\in T$. Any set $A\in T$ is called a \emph{cluster}. This name
  is justified by the fact that $n$-trees are exactly the collections
  of clusters occurring in hierarchical clustering: any two clusters
  either are disjoint or one is contained in another one. Barth\'elemy
  and McMorris~\cite{BaMcM} considered the median problem for
  $n$-trees, where the input consists of the $n$-trees
  $T_1,\ldots, T_k$ on $E$ and the goal is to compute an $n$-tree $T$
  minimizing $\sum_{i=1}^k d(T,T_i)$, where $d(T,T')=|T\Delta T'|$ is
  the number of clusters in $T$ but not in $T'$ plus the number of
  clusters in $T'$ but not in $T$. Since the space of all $n$-trees is
  a median semilattice, the authors of~\cite{BaMcM} deduced that the
  majority $n$-tree $T^*$ is a median $n$-tree; the \emph{majority
    $n$-tree} $T^*$ consists of all clusters included in strictly more
  than one half of the $n$-trees $T_1,\ldots, T_k$. This can be viewed
  as another compact formulation of the median problem in an
  implicitly defined median graph, where the input is given by the
  $n$-trees $T_1,\ldots, T_k$.
\end{remark}

\begin{remark}\label{median-2sat}
  Due to the correspondence between event structures and 2-SAT
  formulas establishes in Propositions~\ref{event-struc->2SAT}
  and~\ref{2SAT->event-struc}, we can define a similar median problem
  for a set of solutions $S_{c_1},\ldots,S_{c_k}$ of the 2-SAT formula
  $\varphi_{\E}$ and to search for a solution
  $S_c\in {\mathcal S}(\varphi_{\E})$ minimizing
  $\sum_{i=1}^k w_id(S_c,S_{c_i})$. From our bijections we deduce that
  $S_{c^*}$ belongs to ${\mathcal S}(\varphi_{\E})$ and therefore is
  an optimal solution.
\end{remark}

\subsubsection{Computing a diametral pair of median configurations} 
We know from~\cite{BaBa} that the median set of a median graph
coincides with the interval between two diametral pairs of its
vertices. In Proposition~\ref{interval} we gave a different proof of
this result and in Corollary~\ref{cor-interval} we showed how to find
such a diametral pair in linear time.  Now we will show how to compute
a diametral pair $\{ c'_*,c''_*\}$ of median configurations in linear
time in the size of the input (the description of the event structure
and of the set of configurations).

\begin{remark}
  Similarly to the median problem in the cube complexes associated to
  median graphs, we cannot explicitly return all median
  configurations, because one can have an exponential number of such
  optimal solutions.
\end{remark} 

For the moment, we will suppose that $\{ c',c''\}$ is a diametral pair
of the median set of configurations (which exists by the result
of~\cite{BaBa}). Recall also that in the previous subsection we
defined a canonical median configuration $c^*$. Similarly to the
classification of $\Theta$-classes of a median graph, we can classify
the events of $\E$ in three classes: an event $e\in E$ is called (i)
\emph{majoritary} if $e$ belongs to $c^*$, (ii) \emph{minoritary} if
the halfspace defined by $e$ and containing $c^*$ is a minoritary
halfspace, and (iii) \emph{egalitarian} if the two halfspaces defined
by $e$ have the same weight.  We denote by $E_=$ the set of all
egalitarian events. We start with several simple assertions.

\begin{lemma}\label{egalitarian_distance}
  The distance $d(c',c'')$ between $c'$ and $c''$ in the median graph
  $\cD(\E)$ equals to the number of egalitarian events.
\end{lemma}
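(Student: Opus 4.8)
The plan is to combine three ingredients: that $\cD(\E)$ is a partial cube whose $\Theta$-classes are in bijection with the events of $\E$; the majority rule (Proposition~\ref{majority}); and the result of~\cite{BaBa} (reproved here as Proposition~\ref{interval}) that $M:=\Med_w(\cD(\E))$ equals the interval $I(c',c'')$ for a diametral pair $\{c',c''\}$. For an event $e$ write $H'_e$ for the halfspace of configurations containing $e$ and $H''_e$ for the complementary halfspace of configurations not containing $e$; these are the two halfspaces cut out by the $\Theta$-class $E_e$ of $e$. Since $\cD(\E)$ embeds isometrically into the hypercube $Q_n$, the distance $d(c',c'')$ equals the number of events $e$ such that $E_e$ separates $c'$ and $c''$, i.e. such that exactly one of $c',c''$ contains $e$. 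So the whole proof reduces to showing that $E_e$ separates $c'$ and $c''$ \emph{if and only if} $e$ is egalitarian.

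First I would dispose of the non-egalitarian events: if $e$ is not egalitarian then one of $H'_e,H''_e$ is majoritary, and since by Proposition~\ref{majority} the set $M$ is contained in every majoritary halfspace, $c'$ and $c''$ both lie on the same side of $E_e$ and hence are not separated by it. This already yields $d(c',c'')\le|E_=|$, without using diametrality.

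For the reverse inequality I would use that $\{c',c''\}$ is diametral. Fix an egalitarian event $e$; then $H'_e$ and $H''_e$ are both egalitarian halfspaces, so by Proposition~\ref{majority} $M$ meets both of them. Suppose $E_e$ did not separate $c'$ and $c''$, say both lay in $H'_e$. Since $H'_e$ is convex and $M=I(c',c'')$, we would get $M\subseteq H'_e$, contradicting $M\cap H''_e\ne\emptyset$; the case $c',c''\in H''_e$ is symmetric. Hence every egalitarian event separates $c'$ and $c''$, so $d(c',c'')\ge|E_=|$, and combined with the previous step $d(c',c'')=|E_=|$. The argument is short and I do not foresee a genuine obstacle; the only points needing care are the appeal to $M=I(c',c'')$, which is exactly the content of~\cite{BaBa} and Proposition~\ref{interval}, and the partial-cube fact that $d$ counts separating $\Theta$-classes — both already available.
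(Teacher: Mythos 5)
Your proof is correct and follows essentially the same route as the paper's: the upper bound via the majority rule (no majoritary/minoritary event separates two median configurations) and the lower bound via convexity of halfspaces together with the fact that every median configuration lies in $I(c',c'')$. The only cosmetic difference is that you invoke $\Med_w(\cD(\E))=I(c',c'')$ explicitly, whereas the paper uses the same fact implicitly when it asserts $c\in I(c',c'')$ for a median configuration $c$ in the opposite halfspace.
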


\begin{proof}
  By Proposition~\ref{majority}, no majoritary or minoritary event of
  $\E$ separate two configurations of the median set. Therefore, any
  event corresponding to a $\Theta$-class separating $c'$ and $c''$ is
  egalitarian; this establishes that $d(c',c'')$ is not larger than
  $|E_=|$.  Conversely, we assert that any event $e\in E_=$ separates
  $c'$ and $c''$. By Proposition~\ref{majority}, the two halfspaces
  defined by $e$ both intersect the medians set. If $c',c''$ are not
  separated by these halfspaces, they necessarily they both belong to
  the same halfspace and some median configuration $c$ belongs to the
  complementary halfspace. Since $c\in I(c',c'')$, we obtain a
  contradiction with the convexity of halfspaces.  This proves that
  any egalitarian event separates $c'$ and $c''$.
\end{proof}

\begin{lemma}\label{median_c*}
  The median configuration $c^*$ is the gate of the empty
  configuration $c_{\varnothing}=\varnothing$ in the interval
  $I(c',c'')$. In particular, $c^*$ is the median of the triplet
  $c_{\varnothing},c',$ and $c''$.
\end{lemma}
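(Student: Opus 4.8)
The plan is to pin down $c^{*}$ explicitly: I will show $c^{*}=c'\cap c''$, then recognize $c'\cap c''$ as the median $m(\varnothing,c',c'')$, which is exactly the gate of $\varnothing$ in $I(c',c'')$.

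\emph{Step 1: $c^{*}=c'\cap c''$.} Write $M=\Med_w(\cD(\E))$, and for an event $e_i$ let $H'_i$ (resp.\ $H''_i$) be the halfspace of $\cD(\E)$ formed by the configurations containing (resp.\ not containing) $e_i$; by definition $e_i\in c^{*}$ iff $w(H'_i)>w(H''_i)$. If $e_i\in c^{*}$, then $H'_i$ is majoritary, so $M\subseteq H'_i$ by the majority rule (Proposition~\ref{majority}); since $c',c''\in M$, this yields $e_i\in c'\cap c''$. Conversely, assume $e_i\notin c^{*}$, so $H'_i$ is not majoritary. If $H''_i$ is majoritary, then $M\subseteq H''_i$, hence $e_i\notin c'$ and $e_i\notin c''$. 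Otherwise $H'_i$ and $H''_i$ are egalitarian, i.e.\ $e_i\in E_=$; since $\{c',c''\}$ is a diametral pair of $M$, Lemma~\ref{egalitarian_distance} (see its proof) shows that $e_i$ separates $c'$ and $c''$, so $e_i$ lies in exactly one of them. In all cases $e_i\notin c'\cap c''$, so $c^{*}=c'\cap c''$.

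\emph{Step 2: $c'\cap c''=m(\varnothing,c',c'')$.} The set $c'\cap c''$ is a configuration, being the intersection of two downward-closed sets (hence downward-closed) and a subset of a conflict-free set (hence conflict-free). Since the distance in $\cD(\E)$ is the Hamming distance, $d(\varnothing,c)=|c|$ for every configuration $c$. From $c'\cap c''\subseteq c'$ we get $d(\varnothing,c'\cap c'')+d(c'\cap c'',c')=|c'\cap c''|+(|c'|-|c'\cap c''|)=|c'|=d(\varnothing,c')$, so $c'\cap c''\in I(\varnothing,c')$; symmetrically $c'\cap c''\in I(\varnothing,c'')$; and $d(c',c'\cap c'')+d(c'\cap c'',c'')=|c'\setminus c''|+|c''\setminus c'|=d(c',c'')$ gives $c'\cap c''\in I(c',c'')$. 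As $\cD(\E)$ is median (Theorem~\ref{median_domain}), the intersection $I(\varnothing,c')\cap I(c',c'')\cap I(c'',\varnothing)$ is the single vertex $m(\varnothing,c',c'')$, so $c'\cap c''=m(\varnothing,c',c'')$, and with Step 1, $c^{*}=m(\varnothing,c',c'')$.

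\emph{Step 3: $c^{*}$ is the gate of $\varnothing$ in $I(c',c'')$.} The interval $I(c',c'')$ is convex, hence gated by Lemma~\ref{convex-gated}, so $\varnothing$ has a unique gate there, and it suffices to check that $c^{*}=c'\cap c''$ fulfils the defining property. We already know $c'\cap c''\in I(c',c'')$ (Step 2). For an arbitrary $c\in I(c',c'')$ we have $c'\cap c''\subseteq c$, so $d(\varnothing,c'\cap c'')+d(c'\cap c'',c)=|c'\cap c''|+(|c|-|c'\cap c''|)=|c|=d(\varnothing,c)$, i.e.\ $c'\cap c''\in I(\varnothing,c)$. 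Hence $c'\cap c''$ is the gate of $\varnothing$ in $I(c',c'')$, which proves the first assertion; the ``in particular'' claim is the identity $c^{*}=m(\varnothing,c',c'')$ from Step 2. The only genuinely delicate point is the egalitarian case in Step 1: there it is essential that $\{c',c''\}$ is a \emph{diametral} pair of $M$, so that Lemma~\ref{egalitarian_distance} forces every egalitarian event to separate $c'$ and $c''$; everything else is a routine unwinding of the majority rule and of the interval conditions in a hypercube.
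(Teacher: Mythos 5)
Your proof is correct, but it follows a genuinely different route from the paper's. The paper argues by contradiction in two lines: if the gate $c$ of $\varnothing$ in $I(c',c'')$ were different from $c^*$, then $c\subseteq c^*$ and every event in $c^*\setminus c$ would separate two median configurations, hence be egalitarian by Lemma~\ref{egalitarian_distance} --- contradicting the fact that $c^*$ consists only of majoritary events. You instead identify $c^*$ explicitly as $c'\cap c''$ (using the majority rule for the majoritary/minoritary cases and the ``every egalitarian event separates $c'$ and $c''$'' direction of Lemma~\ref{egalitarian_distance} for the remaining case), and then verify by direct Hamming-distance arithmetic in the hypercube that $c'\cap c''$ is the median $m(\varnothing,c',c'')$ and the gate of $\varnothing$ in $I(c',c'')$. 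Both arguments rest on the same two ingredients (Proposition~\ref{majority} and Lemma~\ref{egalitarian_distance}); yours is longer but constructive, and it delivers the identity $c^*=c'\cap c''$ up front --- a fact the paper only extracts \emph{after} the lemma, when it writes $c'=c^*\cup A$ and $c''=c^*\cup B$. Two small points worth making explicit: the inclusion $c'\cap c''\subseteq c$ for $c\in I(c',c'')$ rests on the standard description of hypercube intervals ($I(A,B)=\{C: A\cap B\subseteq C\subseteq A\cup B\}$), valid here because $\cD(\E)$ is isometrically embedded in $Q_n$; and in Step 3 you do not actually need to invoke convexity of the interval, since you verify the defining property of the gate directly.
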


\begin{proof}
  Suppose by way of contradiction that $c\ne c^*$ is the gate of
  $c_{\varnothing}$ in $I(c',c'')$. This implies that
  $c\in I(c_{\varnothing},c^*)$, i.e., $c^*$ is the union of $c$ and
  the events separating $c$ and $c^*$. Since $c,c^*\in I(c',c'')$, by
  Lemma~\ref{egalitarian_distance} any event separating $c$ and $c^*$
  is an egalitarian event. This contradicts the definition of $c^*$:
  by its definition, $c^*$ contains only majoritary events.
\end{proof}

By Lemma~\ref{median_c*},
$c^*\in I(c_{\varnothing},c')\cap I(c_{\varnothing},c'')$ and we
conclude that $c'=c^* \cup A$ and $c''=c^*\cup B$, where $A$ and $B$
are sets of egalitarian events. By Lemma~\ref{egalitarian_distance},
$d(c',c'')=|E_=|$ and since it coincides with the Hamming distance
$|c'\Delta c''|=|A\Delta B|$, the sets $A$ and $B$ must constitute a
partition of $E_{=}$. Since $c'=c^*\cup A$ and $c''=c^*\cup B$ are
configurations, we conclude that the sets $c^*\cup A, c^*\cup B$ are
conflict-free and downward-closed.  Therefore, the events of $c^*$ are
not in conflict with the events of $E_{=}=A\cup B$.

On the set $E_=$ we define the following binary relation $R_0$: for
$e_1,e_2\in E_=$ we set $e_1R_0 e_2$ if $e_1 \leq e_2$ or
$e_2 \leq e_1$. Let $R$ be the transitive closure of the relation
$R_0$. Observe that the equivalence classes of $R$ are the connected
components of the graph obtained by forgetting the orientation of the
Hasse diagram of $(E_{=},\leq)$. Now define the following conflict
graph $\Gamma$: the vertices of $\Gamma$ are the equivalence classes
of $R$ and two such classes $C'$ and $C''$ are linked by an edge in
$\Gamma$ if and only if there exists an event $e'\in C'$ and an event
$e''\in C''$ such that $e'\# e''$.

\begin{lemma}\label{Gamma-bipartite}
  Any equivalence class $C$ of the relation $R$ is
  conflict-free. Consequently, the conflict graph $\Gamma$ is
  bipartite.
\end{lemma}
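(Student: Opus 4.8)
The plan is to exploit the decomposition $c'=c^*\cup A$ and $c''=c^*\cup B$ obtained just above the statement, where $A$ and $B$ partition $E_=$ and both $c^*\cup A$ and $c^*\cup B$ are configurations (hence conflict-free and downward-closed). First I would record the elementary bookkeeping that $E_=$ is disjoint from $c^*$: an event is majoritary exactly when it lies in $c^*$, and an egalitarian event is by definition not majoritary. Consequently $A=c'\setminus c^*$ and $B=c''\setminus c^*$, and both $A$ and $B$ are conflict-free, being subsets of the conflict-free sets $c'$ and $c''$.

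The heart of the argument is the claim that \emph{every equivalence class $C$ of $R$ is contained entirely in $A$ or entirely in $B$}. Since $R$ is the transitive closure of $R_0$, an equivalence class of $R$ is precisely a connected component of the graph on $E_=$ whose edges are the pairs $\{e_1,e_2\}$ with $e_1R_0e_2$; so it suffices to rule out an edge $\{e_1,e_2\}$ of this graph with $e_1\in A$ and $e_2\in B$. If such an edge existed, then $e_1\le e_2$ or $e_2\le e_1$. In the case $e_2\le e_1$, from $e_1\in A\subseteq c'$ and downward-closedness of $c'$ we get $e_2\in c'$; since $e_2$ is egalitarian, $e_2\notin c^*$, hence $e_2\in c'\setminus c^*=A$, contradicting $e_2\in B$ and $A\cap B=\varnothing$. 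In the case $e_1\le e_2$, the symmetric argument applied to $c''$ (using $e_2\in B\subseteq c''$ and that $c''$ is downward-closed) forces $e_1\in B$, again a contradiction. This proves the claim, and since $A$ and $B$ are conflict-free, every equivalence class of $R$ is conflict-free.

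For the consequence, I would 2-color $V(\Gamma)$ (the set of equivalence classes of $R$) by placing the classes contained in $A$ in one part and those contained in $B$ in the other; by the claim this is a well-defined partition of $V(\Gamma)$. An edge of $\Gamma$ joins classes $C'$ and $C''$ for which there exist $e'\in C'$ and $e''\in C''$ with $e'\# e''$; if $C'$ and $C''$ both lay in $A$ (resp.\ both in $B$), then $e'$ and $e''$ would be two conflicting events of the conflict-free set $A$ (resp.\ $B$), which is impossible. Hence every edge of $\Gamma$ crosses between the two parts, so $\Gamma$ is bipartite.

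I do not anticipate a genuine obstacle: the only point requiring a modicum of care is the identification of equivalence classes of $R$ with connected components of the comparability relation restricted to $E_=$ (immediate from $R$ being the transitive closure of $R_0$), and the observation that egalitarian events avoid $c^*$, which underpins the extraction of the contradiction from downward-closedness.
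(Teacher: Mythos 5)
Your proof is correct and follows essentially the same route as the paper's: both arguments reduce the claim to showing that no $R_0$-edge of $E_=$ can join $A$ to $B$ (the paper via a chain $e_0,\ldots,e_{p+1}$ with a first crossing pair, you directly via connected components), and both derive the contradiction from downward-closedness of $c'$ or $c''$ together with the fact that egalitarian events avoid $c^*$. The bipartition of $\Gamma$ by the $A$-classes and $B$-classes is also exactly the paper's conclusion.
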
 

\begin{proof}
  Let $A,B$ be a bipartition of $E_=$ such that
  $c'=c^*\cup A, c''=c^*\cup B$ is a diametral pair of median
  configurations (that $c',c''$ have such a representation follows
  from the discussion after Lemma~\ref{median_c*}). Since the sets $A$
  and $B$ are conflict-free, it suffices to prove that $C$ is
  contained in $A$ or in $B$.

  Suppose by way of contradiction that there exists $e \in A\cap C$
  and $e' \in B\cap C$. By the definition of $R_0$, there exist events
  $e=e_0,e_1,\ldots,e_p,e_{p+1}=e'\in E_=$ such that
  $(e_0,e_1), (e_1,e_2),\ldots,(e_{p-1},e_p),(e_p,e_{p+1})\in
  R_0$. Since $A,B$ is a partition of $E_{=}$, there exists
  $(e_{j-1},e_j) \in R_0$ such that $e_{j-1} \in A \setminus B$ and
  $e_j \in B \setminus A$. Since $(e_{j-1},e_j) \in R_0$, either
  $e_{j-1}\leq e_j$ or $e_j \leq e_{j-1}$. Without loss of generality,
  assume the first. Consequently, since $c''=c^* \cup B$ is
  downward-closed and contains $e_j$, necessarily $e_{j-1}$ belongs to
  $c''$. Since $e_{j-1}$ is egalitarian and all events in $c^*$ are
  majoritary, necessarily $e_{j-1} \in B$, a contradiction.
Therefore, any equivalence class of $R$ is contained in $A$ or in
  $B$. Since $A$ and $B$ are conflict-free, any edge of the conflict
  graph $\Gamma$ must run between $A$ and $B$. Therefore, the
  equivalent classes of $R$ included in $A$ and those included in $B$
  define a bipartition of $\Gamma$ into two independent sets.
\end{proof}

\begin{lemma}\label{Gammma-bipartition}
  For the partition $A_*,B_*$ of $E_=$ induced by any bipartition
  $Q'_*,Q''_*$ of $\Gamma$ into two independent sets,
  $c'_*=c^*\cup A_*$ and $c''_*=c^*\cup B_*$ is a diametral pair of
  median configurations.
\end{lemma}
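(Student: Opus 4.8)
The plan is to establish three things about the pair $c'_*=c^*\cup A_*$ and $c''_*=c^*\cup B_*$: (i) both are configurations of $\E$; (ii) both are median configurations; and (iii) $d(c'_*,c''_*)=|E_=|$, which by Lemma~\ref{egalitarian_distance} is exactly the diameter of the median set, so that (i)--(iii) together say that $c'_*,c''_*$ is a diametral pair of median configurations. Throughout I use that $c^*$ is itself a configuration and that no event of $c^*$ is in conflict with any event of $E_=$ (both established above), and that $A_*$ and $B_*$ partition $E_=$ (since $Q'_*,Q''_*$ is a bipartition of the vertex set of $\Gamma$, whose vertices are the $R$-classes).

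First I would check that $c'_*$ is a configuration (the argument for $c''_*$ is symmetric). For downward-closure, take $e\in c'_*$ and $e'\le e$. If $e\in c^*$, then $e'\in c^*\subseteq c'_*$, as $c^*$ is downward-closed. If $e\in A_*\subseteq E_=$, then, because every configuration is downward-closed, every configuration containing $e$ also contains $e'$, so the halfspace of configurations containing $e'$ is at least as heavy as the halfspace of configurations containing $e$, whose weight is exactly half the total since $e$ is egalitarian. Hence $e'$ is either majoritary, in which case $e'\in c^*\subseteq c'_*$, or egalitarian; in the latter case $e'R_0 e$, so $e'$ lies in the same $R$-class as $e$, and since this class belongs to $Q'_*$ we get $e'\in A_*\subseteq c'_*$. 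For conflict-freeness: $c^*$ is conflict-free, no event of $c^*$ conflicts with an event of $E_=\supseteq A_*$, each $R$-class is conflict-free by Lemma~\ref{Gamma-bipartite}, and two distinct $R$-classes lying in $Q'_*$ contain no pair of conflicting events because $Q'_*$ is an independent set of $\Gamma$. Hence no two events of $c'_*=c^*\cup A_*$ conflict, and $c'_*$ is a configuration.

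Next I would show that $c'_*$ and $c''_*$ are median configurations. By Proposition~\ref{majority}, $\Med_w(\cD(\E))$ is the intersection of all majoritary halfspaces, so it suffices to check that the configuration $c'_*$ lies in each of them. Fix an event $e_i$, with $H'_i$ the set of configurations containing $e_i$ and $H''_i$ its complement. If $H'_i$ is majoritary, then $e_i\in c^*\subseteq c'_*$, so $c'_*\in H'_i$. If $H''_i$ is majoritary, then $H'_i$ has weight less than half the total, so $e_i\notin c^*$ and $e_i\notin E_=$; in particular $e_i\notin A_*$, hence $e_i\notin c'_*$ and $c'_*\in H''_i$. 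If both halfspaces are egalitarian, there is nothing to verify. Thus $c'_*\in\Med_w(\cD(\E))$, and symmetrically $c''_*\in\Med_w(\cD(\E))$. Finally, since $c^*$ consists of events $e_i$ with $w(H'_i)>w(H''_i)$, it is disjoint from $E_=$, while $A_*$ and $B_*$ partition $E_=$; therefore $c'_*\,\Delta\,c''_*=A_*\,\Delta\,B_*=E_=$ and $d(c'_*,c''_*)=|E_=|$, which by Lemma~\ref{egalitarian_distance} equals the diameter of the median set. Hence $c'_*,c''_*$ is a diametral pair of median configurations.

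The only mildly delicate point is the downward-closure step, and within it the fact that a causal predecessor of an egalitarian event lies in neither more nor fewer than one of $c^*$ and $E_=$ (it cannot be minoritary), which I handle via the monotonicity of halfspace weights along the causal order; everything else is a direct case analysis combining the definitions of $\Gamma$, $R$, and $c^*$ with the properties of $c^*$, $E_=$, and diametral pairs already established.
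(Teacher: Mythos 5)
Your proof is correct and follows essentially the same route as the paper's: verify that $c'_*$ and $c''_*$ are configurations (downward-closure via the weight-monotonicity of halfspaces along $\leq$ and the definition of $R_0$; conflict-freeness via Lemma~\ref{Gamma-bipartite} and the independence of $Q'_*,Q''_*$), then check membership in every majoritary halfspace to conclude they are medians, and finally compute $d(c'_*,c''_*)=|E_=|$ and invoke Lemma~\ref{egalitarian_distance}. No gaps.
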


\begin{proof}
  Let $A_*$ be the union of all equivalence classes of $R$ contained
  in $Q'_*$ and let $B_*$ be the union of all equivalence classes of
  $R$ contained in $Q''_*$.  We assert that $c'_*=c^*\cup A_*$ and
  $c''_*=c^*\cup B_*$ are configurations of $\E$. We prove this
  assertion for $c'_*$. Since by Lemma~\ref{Gamma-bipartite} each
  equivalence class of $R$ is conflict-free and since $Q'_*$ is an
  independent set of $\Gamma$, the set $A_*$ is necessarily
  conflict-free. As we noticed above, no event of $c^*$ and of $E_=$
  are in conflict. Consequently, the set $c'_*=c^*\cup A_*$ is
  conflict-free.

  Now we show that $c'_*=c^*\cup A_*$ is downward-closed. Pick any
  event $e\in c'_*$ and any event $e'\leq e$.  If $e\in c^*$, then
  $e'\in c^*$ because we proved that $c^*$ is a configuration. Now
  suppose that $e\in A_*$.  Suppose by way of contradiction that
  $e'\notin c_*' = c_* \cup A_*$, i.e., either $e'$ is minoritary or
  $e'$ is egalitarian but belongs to
  $B_*$.
Since any configuration containing $e$ also contains $e'$, the total
  weight of the configurations containing $e'$ is at least the total
  weight of the configurations containing $e$. Since $e$ is
  egalitarian, either $e'$ is majoritary or $e'$ is egalitarian. In
  the first case, $e' \in c^*$ by the definition of $c^*$. In the
  second case, $e$ and $e'$ belong to the same equivalence class of
  $R$ and thus they both belong to $A_*$ or they both belong to $B_*$.

  Consequently, $c'_*=c^*\cup A_*$ and $c''_*=c^*\cup B_*$ are
  configurations of $\E$. We assert that both $c'_*$ and $c''_*$ are
  median configurations. Clearly, both $c'_*$ and $c''_*$ are
  contained in all halfspaces $H'_e$ for all majoritary events
  $e\in c^*$. On the other hand, $c'_*$ and $c''_*$ are contained in
  the halfspaces $H''_e$ for all minoritary events $e$.  Since in both
  cases those halfspaces have weight strictly larger than one half of
  the total weight, $c'_*$ and $c''_*$ belong to all majoritary
  halfspaces, thus by Proposition~\ref{majority} they are median
  configurations.  Finally, since $d(c'_*,c''_*)=|A_*\cup B_*|=|E_=|$,
  we conclude that $c'_*,c''_*$ is a diametral pair of the set of
  median configurations.
\end{proof} 

From previous results, we obtain the following linear time algorithm
for computing a diametral pair $c'_*,c''_*$ of median
configurations. First we compute the set $E_{\maj}$ of majoritary
events and the set $E_=$ of egalitarian events. This can be done in
total $O(\sum_{i=1}^k |c_i|)$ time by traversing the lists of events
describing the set of configurations $c_1,\ldots,c_k$. Next, compute
the binary relation $R_0$ and its reflexive and transitive closure
$R$. As noticed above, this can be done by computing the connected
components of the subgraph induced by $E_{=}$ of the Hasse diagram of
$\leq$ where we forget the orientation. This can be done in linear
time in the size of $(E,\leq)$. To compute the graph $\Gamma$, one
just need to consider the conflict graph $(E,\#)$ and for any
$e_1, e_2 \in E_{=}$ such that $e_1 \# e_2$, we add an edge between
the equivalence classes of $e_1$ and $e_2$. This can be done in linear
time in the size of $(E,\#)$ and the size of the graph $\Gamma$ is
linear in the size of $\E$.

\begin{proposition}\label{median-interval-event-structure}
  A diametral pair $c'_*=c^*\cup A_*,c''_*=c^*\cup B_*$ of median
  configurations of any set $C=\{ c_1,\ldots,c_k\}$ of configurations
  of an event structure $\E$ can be computed in
  $O(|\E|+\sum_{i=1}^k |c_i|)$ time.
\end{proposition}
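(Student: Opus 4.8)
The plan is to assemble the algorithm promised in the discussion preceding the statement and to certify its correctness through Lemmas~\ref{egalitarian_distance}, \ref{Gamma-bipartite} and~\ref{Gammma-bipartition}. First I would make one pass over the lists of events of $c_1,\dots,c_k$ to compute the total weight $W=\sum_{j=1}^k w_j$ and, for each event $e_i$, the weight $W_i=\sum_{j:\,e_i\in c_j} w_j$ of the configurations containing $e_i$; declaring $e_i\in c^*$ precisely when $W_i>W-W_i$ recovers the canonical median configuration $c^*$ of the previous subsection and simultaneously exhibits the set $E_{\maj}=c^*$ of majoritary events and the set $E_=$ of egalitarian events (those with $W_i=W-W_i$). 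This costs $O(\sum_{j=1}^k|c_j|)$ and, by the result of the previous subsection, $c^*$ is a configuration of $\E$ that minimizes $F_w$.

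Next I would build the equivalence $R$ on $E_=$ as the connected components of the Hasse diagram of $(E,\le)$ restricted to $E_=$ with edge orientations forgotten --- a single graph search, linear in the size of $(E,\le)$ --- and then build the conflict graph $\Gamma$ by scanning $(E,\#)$ once and joining, for each conflicting pair $e_1\#e_2$ with $e_1,e_2\in E_=$, the $R$-classes of $e_1$ and $e_2$. Since $\Gamma$ has at most $|E_=|$ vertices and at most as many edges as there are conflicting pairs of events, $|\Gamma|=O(|\E|)$. By Lemma~\ref{Gamma-bipartite} the graph $\Gamma$ is bipartite, so a partition of $V(\Gamma)$ into two independent sets $Q'_*,Q''_*$ can be produced by a graph search in $O(|\Gamma|)=O(|\E|)$ time. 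Letting $A_*$ and $B_*$ be the unions of the $R$-classes lying in $Q'_*$ and in $Q''_*$ respectively, the algorithm outputs $c'_*=c^*\cup A_*$ and $c''_*=c^*\cup B_*$.

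Correctness is then immediate from Lemma~\ref{Gammma-bipartition}: $c'_*$ and $c''_*$ are configurations of $\E$ and both are median configurations, while $d(c'_*,c''_*)=|A_*\cup B_*|=|E_=|$, which by Lemma~\ref{egalitarian_distance} equals the distance between any diametral pair of the median set; hence $\{c'_*,c''_*\}$ is such a diametral pair. Summing the costs of the three phases yields the bound $O(|\E|+\sum_{j=1}^k|c_j|)$.

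I do not expect a genuine obstacle: all the structural content has already been isolated in Lemmas~\ref{egalitarian_distance}--\ref{Gammma-bipartition}, and the bipartiteness of $\Gamma$ in turn rests only on the existence of \emph{some} diametral pair of median configurations, guaranteed by~\cite{BaBa}. The one point needing care is the complexity bookkeeping --- confirming that forming the $R$-classes and two-colouring $\Gamma$ are graph searches linear in the encoding of $\E$, that $\Gamma$ itself has size $O(|\E|)$, and that the majority-rule computation of $c^*$ and $E_=$ touches each listed event only a constant number of times.
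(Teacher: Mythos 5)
Your proposal is correct and follows essentially the same route as the paper: compute $c^*$, $E_{\maj}$ and $E_=$ by a single pass over the configuration lists, form the equivalence $R$ via connected components of the unoriented Hasse diagram restricted to $E_=$, build the bipartite conflict graph $\Gamma$ from $(E,\#)$, two-colour it, and certify the output via Lemmas~\ref{egalitarian_distance}, \ref{Gamma-bipartite} and~\ref{Gammma-bipartition}. The complexity bookkeeping also matches the paper's.
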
 

\subsubsection{The compact median problem is $\#$P-hard}\label{reduction}

An analogue of compact median problem for 2-SAT formulas was already studied by Feder~\cite{Fe}, who proved the following result:

\begin{proposition}[{\!\!\cite[Lemma 3.54]{Fe}}]\label{feder-median}
  For a 2-SAT formula $\varphi$ without trivial and equivalent
  variables, the problem of finding the median of the median graph
  ${\mathcal S}(\varphi)$ is $\#$P-hard.
\end{proposition}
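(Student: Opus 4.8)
The plan is to give a polynomial-time Turing reduction from a standard $\#$P-complete counting problem to the problem of computing the median of ${\mathcal S}(\varphi)$. A convenient source is counting the independent sets of a graph $H$ (equivalently, counting the solutions of the monotone $2$-SAT formula $\bigwedge_{uv\in E(H)}(\neg z_u\vee\neg z_v)$, which one checks has neither trivial nor equivalent variables as soon as $|V(H)|\ge 2$).

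First I would record what an answer to the median problem on ${\mathcal S}(\varphi)$ actually reveals. By Proposition~\ref{median-2SAT} the solution set is median-closed in the hypercube, so by Proposition~\ref{majority} (applied inside that hypercube) the unweighted all-solutions median problem is governed by the majority rule: for every variable $x_i$ the median lies entirely in $\{x_i=1\}$, entirely in $\{x_i=0\}$, or meets both, according to $\sign(N_i^1-N_i^0)$, where $N_i^b=|\{S\in{\mathcal S}(\varphi):S(x_i)=b\}|$; and the minimum of the median function equals $\sum_i\min(N_i^0,N_i^1)$. Thus a single call to the median oracle hands us, simultaneously for all variables, the comparison bits $\sign(N_i^1-N_i^0)$. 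The reduction uses these bits as its only access to counting data.

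The core of the argument is an amplification step. Given a target formula $\psi$ with $N:=|{\mathcal S}(\psi)|$ unknown, I would attach to it polynomially many small gadgets, producing $2$-SAT formulas $\Psi^{(1)},\Psi^{(2)},\dots$, still without trivial and equivalent variables, each with a marked variable whose median direction encodes the comparison of $N$ (or of a $\psi$-derived count) against a quantity that the gadget controls exactly; a binary search over these quantities then recovers $N$ with polynomially many oracle calls, which establishes $\#$P-hardness. Two points require care. The normalization hypothesis must be maintained: trivial variables and equivalent pairs of the $\Psi^{(j)}$ are detected in polynomial time from the implication graph (forced literals, respectively equalities forced by strongly connected components) and are then eliminated or merged, and one verifies that this does not corrupt the marked bit --- alternatively the gadgets are designed so that the property holds automatically. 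And the gadgets themselves must respect the rigidity of $2$-SAT: because solution sets are median-closed one cannot simply ``activate'' the clauses of $\psi$ with a third literal (that needs a $3$-clause), nor realize arbitrary solution counts by short formulas, so the gadget must extract useful information while using only binary clauses.

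I expect this gadget construction to be the main obstacle; it is exactly the technical content of \cite[Lemma~3.54]{Fe}, and everything around it --- the majority-rule bookkeeping and the normalization post-processing --- is routine. A slightly softer variant, which sidesteps recovering $N$ exactly, is to reduce instead from the $\mathrm{PP}$-complete problem ``is $|{\mathcal S}(\psi)|\ge 2^{n-1}$?'': since $\#\mathrm{P}\subseteq\mathrm{FP}^{\mathrm{PP}}$, a $\mathrm{PP}$-hardness proof for the median problem already yields $\#$P-hardness under Turing reductions, and it only requires a gadget whose median decides one threshold comparison rather than a full binary search.
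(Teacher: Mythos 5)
The paper does not actually prove this proposition: it is imported verbatim as \cite[Lemma~3.54]{Fe}, and the authors even thank Capelli, Creignou and Strozecki for supplying them with a proof before they located it in Feder's memoir. So the only honest comparison is between your proposal and a complete proof of Feder's lemma, and on that score your attempt has a genuine gap that you yourself identify: the gadget construction that turns one oracle call into one useful threshold comparison is never carried out, and it is not ``routine'' --- it is the entire content of the lemma. Everything you do write down (the observation that, by Propositions~\ref{median-2SAT} and~\ref{majority}, the median oracle returns the bits $\sign(N_i^1-N_i^0)$ for all variables, and that the optimum value is $\sum_i\min(N_i^0,N_i^1)$ because ${\mathcal S}(\varphi)$ is median-closed and hence isometric in the hypercube) is correct bookkeeping, but it only tells you what information an oracle call \emph{could} provide, not how to extract a count from it.

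The difficulty you are deferring is real. If the auxiliary gadget lives on variables disjoint from those of $\psi$, the solution set factors as a Cartesian product and every comparison bit of a gadget variable is independent of $N=|{\mathcal S}(\psi)|$; so the gadget must be coupled to $\psi$ through binary clauses only, and the couplings available in $2$-SAT (implications $y\Rightarrow\ell$, conflicts $\neg y\vee\neg\ell$) give you comparisons of the form ``is the number of solutions satisfying a prescribed set of literals more than half of the number satisfying another set,'' i.e.\ ratio comparisons between restricted counts, not comparisons of $N$ against an adjustable absolute threshold. Turning such ratio queries into an exact determination of $N$ (via self-reducibility, or via your PP-threshold variant, which faces the same coupling problem) is precisely where the work lies, and the proposal gives no construction, no verification that the resulting formulas remain free of trivial and equivalent variables after the couplings are added, and no argument that polynomially many calls suffice. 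As written, the proposal is a plausible plan whose central step is an appeal to the very lemma being proved.
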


The literals of any satisfiable 2-SAT formula $\varphi$ can be renamed
to transform $\varphi$ into an equivalent formula not containing
clauses $(x_i\vee x_j)$. Thus Proposition~\ref{feder-median} holds for
2-SAT formulas without trivial and equivalent variables and clauses
$(x_i\vee x_j)$. Since the size of the event structure $\E_{\varphi}$
in Proposition~\ref{2SAT->event-struc} is quadratic in the size of the
2-SAT formula $\varphi$, we obtain the following result from
Propositions~\ref{2SAT->event-struc},~\ref{feder-median}, and
Remark~\ref{median-2sat}:

\begin{proposition}\label{compact-median-problem}
  The compact median problem in an event structure $\E$ is $\#$P-hard.
\end{proposition}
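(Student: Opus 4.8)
The plan is to give a polynomial-time reduction from Feder's $\#$P-hard problem of finding a median of the median graph $\mathcal{S}(\varphi)$ (Proposition~\ref{feder-median}) to the compact median problem in event structures. As observed just before the statement, the literals of any satisfiable 2-SAT formula may be renamed so that the formula contains no clause of the form $(x_i\vee x_j)$; hence Proposition~\ref{feder-median} already applies to 2-SAT formulas $\varphi$ that additionally have no trivial and no equivalent variables and no such clauses. Starting from such a $\varphi$ on variables $x_1,\dots,x_n$, I would build the event structure $\E_\varphi=(E,\le,\#)$ via the construction of Proposition~\ref{2SAT->event-struc}. By that proposition, $\E_\varphi$ is a genuine event structure and $\cD(\E_\varphi)$ coincides with $\mathcal{S}(\varphi)$ under the correspondence $c\mapsto S_c$ between subsets of $E$ and $0/1$-assignments. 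Moreover this construction is polynomial: $(E,\le)$ and $(E,\#)$ are obtained by a transitive closure and a conflict-saturation step, and the size of $\E_\varphi$ is quadratic in the size of $\varphi$.

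Next I would check that, after this transformation, the two optimization problems are the same instance. The compact median problem on $\E_\varphi$ asks for a configuration $c$ minimizing $\sum_{c'\in\cD(\E_\varphi)}d(c,c')$ with $d(c,c')=|c\Delta c'|$. By Theorem~\ref{median_domain} the Hasse diagram of $\cD(\E_\varphi)$ is a median graph in which the Hamming distance equals the graph distance, so this is precisely the problem of finding a (uniformly weighted) median vertex of the median graph $\cD(\E_\varphi)$; and via the identification $c\mapsto S_c$ together with Remark~\ref{median-2sat}, this is the same as finding a median of $\mathcal{S}(\varphi)$. Consequently, composing the polynomial map $\varphi\mapsto\E_\varphi$ with a hypothetical algorithm for the compact median problem and the trivial post-processing $c^*\mapsto S_{c^*}$ would solve Feder's problem, so the compact median problem is $\#$P-hard.

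The heavy lifting is already done in Propositions~\ref{event-struc->2SAT} and~\ref{2SAT->event-struc} and in Theorem~\ref{median_domain}, so the remaining work is light. The two points that require care are: (i) making sure the objective functions genuinely coincide under the bijection --- same metric (Hamming distance equals graph distance) and same uniform weights, so that ``the median of $\mathcal{S}(\varphi)$'' in Feder's statement matches ``all configurations'' in the compact median problem; and (ii) noting that the quadratic blow-up in $|\E_\varphi|$ is harmless, since polynomial-time reductions suffice for $\#$P-hardness. I expect (i) to be the only delicate bookkeeping; there is no genuine combinatorial obstacle, as the correspondence between event structures and 2-SAT formulas has already been established in the preceding subsection.
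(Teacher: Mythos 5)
Your proposal is correct and follows essentially the same route as the paper: reduce from Feder's $\#$P-hard problem via the polynomial-size construction $\varphi\mapsto\E_\varphi$ of Proposition~\ref{2SAT->event-struc}, using the identification of $\cD(\E_\varphi)$ with $\mathcal{S}(\varphi)$ and the agreement of Hamming and graph distances to see that the two objectives coincide. The paper's own argument is exactly this composition of Propositions~\ref{2SAT->event-struc} and~\ref{feder-median} with Remark~\ref{median-2sat}, after the same preliminary renaming of literals.
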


\subsection*{Acknowledgements}
We would like to thank Florent Capelli, Nadia Creignou, and Yann
Strozecki for providing us with the proof of
Proposition~\ref{feder-median} much before we discovered this result
in~\cite{Fe}. The work on this paper was supported by ANR project
DISTANCIA (ANR-17-CE40-0015).

\bibliographystyle{amsplain}

\providecommand{\bysame}{\leavevmode\hbox to3em{\hrulefill}\thinspace}
\providecommand{\MR}{\relax\ifhmode\unskip\space\fi MR }
\providecommand{\MRhref}[2]{%
  \href{http://www.ams.org/mathscinet-getitem?mr=#1}{#2}
}
\providecommand{\href}[2]{#2}

\appendix
\section{Proofs from Section~\ref{sec:properties}}
	
We present the proofs of all auxiliary results from
Section~\ref{sec:properties}. Some of those results are well known to
the people working in metric graph theory, other results were also
known to us, but it is difficult to give the references first
presenting them.
	
\begin{proof}[Proof of Lemma~\ref{quadrangle}]
  Let $x$ be the median of the triplet $u,v,w$. Then $x$ must be
  adjacent to $v,w$ and must have distance $k-1$ to $u$. If there
  exists yet another such vertex $x'$, then the triplet $u,v,w$ will
  have two medians (or alternatively, the vertices $z,v,w,x,x'$ will
  define a $K_{2,3}$).
\end{proof}

\begin{proof}[Proof of Lemma~\ref{cube}]
  Pick any three squares $xutv, utwy$, and $vtwz$ of $G$, pairwise
  intersecting in three edges and all three intersecting in a single
  vertex. Since $G$ is bipartite and $K_{2,3}$-free,
  $d(x,w)=d(y,v)=d(z,u)=3$ and $d(x,y)=d(y,z)=d(z,x)=2$. Therefore,
  the median of vertices $x,y,z$ is a new vertex $r$ adjacent to
  $x,y$, and $z$ and having distance 3 to $t$. Consequently, the 8
  vertices induce an isometric 3-cube of $G$.
\end{proof}
	
\begin{proof}[Proof of Lemma~\ref{convex-gated}]
  That gated sets $S$ are convex holds for all graphs (and metric
  spaces). Indeed, pick $x,y\in S$ and any $z\in I(x,y)$. Let $z'$ be
  the gate of $z$ in $S$.  Then $z'\in I(z,x)\cap I(z,y)$. Since
  $z\in I(x,y)$ this is possible only if $z'=z$, i.e., $z\in
  S$. Conversely, let $S$ be a convex subgraph of a median graph $G$
  and pick any vertex $x\notin S$.  Let $v$ be a closest to $x$ vertex
  of $S$. Pick any vertex $y\in S$ and let $u$ be the median of the
  triplet $x,y,v$. Since $u\in I(y,v)$ and $S$ is convex, we deduce
  that $u\in S$. Since $u\in I(x,v)$, from the choice of $v$ we have
  $u=v$. Thus $v\in I(x,y)$, i.e., $v$ is the gate of $x$ in $S$.
\end{proof}
	
The convexity of halfspaces of median graphs and of their boundaries
was first established in~\cite{Mu}. We give a different and shorter
proof of this result.
	
\begin{proof}[Proof of Lemma~\ref{halfspaces}]
  For an edge $uv$ of a median graph $G$, recall that
  $W(u,v)=\{ x\in V: d(u,x)<d(v,x)\}$ and
  $W(v,u)=\{ x\in V: d(v,x)<d(u,x)\}$. We assert that $W(u,v)$ and
  $W(v,u)$ are convex. We use the local characterization of convexity
  of median graphs (and more general classes of graphs,
  see~\cite{Ch_metric}): a connected subgraph $H$ of a median graph
  $G$ is convex if and only if $I(x,y)\subseteq H$ for any two
  vertices $x,y$ of $H$ with $d_H(x,y)=2$. Since both sets $W(u,v)$
  and $W(v,u)$ induce connected subgraphs, we can use this result.
  Pick $x,y\in W(u,v)$ such that $x$ and $y$ have a common neighbor
  $z$ in $W(u,v)$. Suppose by way of contradiction that there exists a
  vertex $t\sim x,y$ belonging to $W(v,u)$.  Then
  $d(u,x)=d(u,y)=d(v,t)=k$ and $d(v,x)=d(v,y)=d(u,t)=k+1$. Since $G$
  is bipartite, $d(u,z)$ is either $k-1$ or $k+1$. If $d(u,z)=k+1$, by
  the quadrangle condition we will find a vertex $s\sim x,y$ at
  distance $k-1$ from $u$. But then the vertices $x,y,z,s,t$ induce a
  forbidden $K_{2,3}$. Thus $d(u,z)=k-1$, i.e., $d(v,z)=k$. Therefore
  $z,t\in I(x,v)$, $x\sim z,t$, and by the quadrangle condition there
  exists a vertex $r\sim t,z$ with $d(r,v)=k-1$. But then again
  $x,y,z,t,r$ induce a forbidden $K_{2,3}$. This contradiction
  establishes that for each edge $uv$ of $G$, $W(u,v)$ and $W(v,u)$
  induce convex and thus gated subgraphs of $G$.
		
  Next, define another binary relation $\Psi$ on edges of $G$: for two
  edges $uv$ and $xy$ we write $uv\Psi xy$ if and only if
  $x\in W(u,v)$ and $y\in W(v,u)$.  It can be easily seen that the
  relation $\Psi$ is reflexive and symmetric. Next we will prove that
  $\Psi$ is transitive and that $\Psi$ and $\Theta$ coincide. For
  transitivity of $\Psi$ it suffices to show that if $uv\Psi xy$, then
  $W(u,v)=W(x,y)$. Suppose by way of contradiction that there exists a
  vertex $z\in W(u,v)\setminus W(x,y)$. This implies that
  $z\in W(y,x)$, i.e., $y\in I(x,z)$. Since $x,z\in W(u,v)$ and
  $y\in W(v,u)$, this contradicts the convexity of
  $W(u,v)$. Consequently, $\Psi$ is an equivalence relation.
		
  To conclude the proof of the lemma it remains to show that
  $\Theta=\Psi$. It is obvious that $\Theta_0\subseteq \Psi$. Since
  $\Theta$ is the transitive closure of $\Theta_0$, we conclude that
  $\Theta\subseteq \Psi$. To show the converse inclusion
  $\Psi\subseteq \Theta$, pick any two edges $uv$ and $xy$ with
  $uv\Psi xy$. We proceed by induction on $k=d(u,x)=d(v,y)$. Let $x'$
  be a neighbor of $x$ in $I(x,u)\subseteq W(u,v)$. Then
  $d(x',v)=d(y,v)=k$ and $d(x,v)=k+1$. By the quadrangle condition,
  there exists a vertex $y'\sim x',y$ at distance $k-1$ from
  $v$. Since $y'\in I(y,v)\subseteq W(v,u)$, we conclude that
  $uv\Psi x'y'$. Since $d(u,x')=d(v,y')=k-1$, by induction hypothesis
  $uv\Theta x'y'$. Since $x'y'$ and $xy$ are opposite edges of a
  square, $x'y'\Theta_0 xy$, yielding $uv\Theta xy$. This finishes the
  proof of Lemma~\ref{halfspaces}.
\end{proof}

\begin{proof}[Proof of Lemma~\ref{convex-int-halfspaces}]
  Let $S$ be a convex set of $G$ and let $S'$ be the intersection of
  all halfspaces containing $S$. If $S$ is a proper subset of $S'$,
  since $S'$ is convex, we can find two adjacent vertices $u\in S$ and
  $v\in S'\setminus S$.  Let $uv\in E_i$, say $u\in H'_i$ and
  $v\in H''_i$. From Lemma~\ref{halfspaces} it follows that
  $H'_i=W(u,v)$ and $H''_i=W(v,u)$. Since $G$ is bipartite and $S$ is
  convex, all vertices of $S$ must be closer to $u$ than to $v$.
  Consequently, $S\subset W(u,v)=H'_i$.  Since $v\notin H'_i$, we
  obtain a contradiction with the definition of $S'$.
\end{proof}

\begin{proof}[Proof of Lemma~\ref{crossing}]
  One direction is trivial. Consider now a vertex $v$ and neighbors
  $v_1, \ldots, v_k$ of $v$ such that for all distinct
  $1 \leq i,j \leq k$, the respective $\Theta$-classes $E_i$ and $E_j$
  of $vv_i$ and $vv_j$ are crossing. Since all cubes of $G$ are
  locally convex, they are convex and gated, if we prove that $v$ and
  any subset of $k'$ neighbors of $v$ belong to a cube of dimension
  $k'$, then this $k'$-cube is unique.

  We show by induction on $k$ that there exists a cube $Q$ containing
  $v,v_1,v_2, \ldots, v_k$. If $k = 2$, without loss of generality,
  assume that $v \in H''_1 \cap H''_2$. Since $E_1$ and $E_2$ are
  crossing, there exists $u \in H'_1\cap H'_2$. Observe that $v_1$ and
  $v_2$ are respectively the gates of $v$ on $H'_i$ and
  $H'_j$. Consequently, $d(u,v_1) = d(u,v_2) = d(u,v) - 1$ and by the
  quadrangle condition, there exists $v' \sim v_1, v_2$. Therefore
  there is a square $vv_1v'v_2$, establishing the claim.
		
  Suppose now that the assertion holds for any $k ' < k$. By applying
  the lemma when $k = 2$ to $v_1, v_i$, for any $1 <i\leq k$, we know
  that there exists $u_i\sim v_1,v_i$. By induction hypothesis, there
  exists a unique $(k-1)$-cube $R'$ containing $v, v_2, \ldots, v_k$
  and a unique $(k-1)$-cube $R''$ containing $v_1, u_2, \ldots,
  u_k$. We assert that $R' \cup R''$ is a $k$-cube of $G$. For any
  vertex $v_i$, $i=2,\dots k$, since $u_i$ is a neighbor of $v_i$, by
  induction hypothesis, there exists a cube containing $u_i$ and the
  $k-2$ neighbors of $v_i$ in $R'$ distinct from $v$.  This implies
  that there exists an isomorphism between the facet of $R''$
  containing $u_i$ and not $v_1$ and the facet of $R'$ containing
  $v_i$ and not containing $v$. Since $R'$ and $R''$ are convex, this
  defines an isomorphism between $R'$ and $R''$ which maps vertices of
  $R'$ to their neighbors in $R''$. Hence $R'\cup R''$ defines a
  $k$-cube of $G$, finishing the proof of Lemma~\ref{crossing}.
\end{proof}

\begin{proof}[Proof of Lemma~\ref{boundary}]
  The last argument of the proof of Lemma~\ref{halfspaces} implies
  that the boundaries $\partial H'_i, \partial H''_i$ are connected
  subgraphs of $G$.  Thus it suffices to show that they are locally
  convex. Pick $x',y'\in \partial H'_i$ having a common neighbor
  $z'\in \partial H'_i$. Let $x'',y'',z''$ be the neighbors of
  respectively $x',y',z'$ in $\partial H''_i$ (they are unique because
  $H''_i$ is gated). Pick any common neighbor $t$ of $x',y'$ different
  from $z'$. Since $H'_i$ is convex, $t$ belongs to $H'_i$. By the
  cube condition, there exists a vertex $s$ adjacent to
  $t,x'',y''$. But then obviously $s\in H''_i$, whence
  $t\in \partial H'_i$.
		
  Since $\partial H'_i, \partial H''_i$ are convex, any vertex of
  $\partial H'_i$ is adjacent to exactly one vertex of
  $\partial H''_i$ and, vice versa, any vertex of $\partial H''_i$ is
  adjacent to exactly one vertex of $\partial H'_i$. This defines a
  bijection between $\partial H'_i$ and $\partial H''_i$.  Now, if
  $u',v'\in \partial H'_i$ are adjacent to
  $u'',v''\in \partial H''_i$, respectively, and $u'$ and $v'$ are
  adjacent, then $u''$ and $v''$ also must be adjacent. Indeed, since
  $G$ is bipartite, if $u''\nsim v''$, then $d(u'',v'')=3$ and $u',v'$
  belong to $I(u'',v''),$ contrary to the convexity of
  $\partial H''_i$. This shows that $\partial H'_i$ and
  $\partial H''_i$ are isomorphic subgraphs of $G$.
\end{proof}

\begin{proof}[Proof of Lemma~\ref{peripheral}]
  We have to prove that any furthest from the basepoint $v_0$
  halfspace $H'_i$ of $G$ is peripheral.  Let $x$ be the gate of $v_0$
  in $H'_i$. Then $x\in \partial H'_i$ and $d(v_0,x)=d(v_0,H'_i)$.
  Suppose by way of contradiction that the boundary $\partial H'_i$ is
  a proper subset of $H'_i$ and let $v$ be a closest to $v_0$ vertex
  in $H'_i\setminus \partial H'_i$ which is adjacent to a vertex $u$
  of $\partial H'_i$ (such a vertex exists because $H'_i$ is convex
  and thus connected). Let $E_j$ be the $\Theta$-class of the edge
  $uv$. Since $u$ is the gate of $v$ in $\partial H'_i$,
  $u\in I(x,v)$. Since $x\in I(v_0,v)$, we conclude that there exists
  a shortest $(v_0,v)$-path $P(v_0,v)$ passing via $x$ and
  $u$. Consequently, $v_0,u\in H''_j$ and $v\in H'_j$.
		
  Let $y$ be the gate of $v_0$ in $H'_j$. Clearly,
  $y\in \partial H'_j$ and $d(v_0,y)=d(v_0,H'_j)$. From the choice of
  $H'_i$ as a furthest from $v_0$ halfspace, $d(v_0,y)\le d(v_0,x)$.
  Since $x$ is the gate of $v_0$ in $H'_i$, this implies that $y$
  cannot be located in $H'_i$. Therefore $y$ belongs to $H''_i$. Let
  $z$ be the gate of $y$ in $H'_i$ (and $\partial H'_i$) and note that
  $z\in I(y,v)$. Since $u$ is the gate of $v$ in $\partial H'_i$, we
  also have $u\in I(v,z)$. Consequently, $u\in I(v,y)$ and belongs to
  $H'_j$ since $H'_j$ is convex, which is impossible.  This shows that
  $H'_i$ is peripheral and finishes the proof of
  Lemma~\ref{peripheral}.
\end{proof}

\begin{proof}[Proof of Lemma~\ref{descendent_cube}]
  To prove the downward cube property, pick any vertex $v$ and let
  $u_1,\ldots,u_d$ be the parents of $v$, i.e., the neighbors of $v$
  in $I(v_0,v)$. By the quadrangle condition, for any distinct
  $u_i, u_j$, $1 \leq i,j \leq d$, there exists a vertex
  $u_{i,j} \sim u_i,u_j$. This shows that the $\Theta$-classes of
  $vu_i$ and $vu_j$ are crossing. By Lemma~\ref{crossing}, this
  implies that $v, u_1, \ldots, u_d$ belong to a unique cube.
\end{proof}
\end{document}